\documentclass[letterpaper,11pt]{article}
\usepackage{amsfonts}
\usepackage[pdftex]{graphicx}
\usepackage{amsmath,amssymb,amsthm}
\usepackage{natbib}
\usepackage{hyperref}
\usepackage{bm}
\usepackage{fancyhdr}
\usepackage{sectsty}
\usepackage[notcite,notref,final]{showkeys}
\usepackage{paralist} 
\usepackage{tikz}
\usepackage{stmaryrd}
\usepackage{setspace}
\usepackage{todonotes}
\usepackage{multirow}
\usepackage{subcaption}
\usepackage{floatrow}
\setcounter{MaxMatrixCols}{10}
\usepackage{booktabs}
\usepackage{threeparttable}
\usepackage{algorithm}
\usepackage{algpseudocode}
\usepackage{array, tabularx, tabulary, booktabs, arydshln}

\usepackage{comment}
\usepackage{caption}
\usepackage{subcaption}

\definecolor{light-gray}{gray}{0.8}

\newtheoremstyle{myplain}
  {9pt}
  {9pt}
  {\itshape}
  {\parindent}
  {\scshape}
  {:}
  {.4em}
  {}
\newtheoremstyle{mydefinition}
  {9pt}
  {9pt}
  {\itshape}
  {\parindent}
  {\scshape}
  {:}
  {.4em}
  {}
\newtheoremstyle{myremark}
  {9pt}
  {9pt}
  {}
  {\parindent}
  {\scshape}
  {:}
  {.4em}
  {}
\theoremstyle{myplain}
\newtheorem{theorem}{Theorem}
\newtheorem{corollary}{Corollary}
\newtheorem{lemma}{Lemma}
\newtheorem{proposition}{Proposition}
\theoremstyle{mydefinition}
\newtheorem{assumption}{Assumption}
\newtheorem{condition}{Condition}
\newtheorem{definition}{Definition}
\theoremstyle{myremark}
\newtheorem{example}{Example}
\newtheorem{remark}{Remark}
\newtheorem{myalgorithm}{Algorithm}

\usepackage{geometry}                  	   
	\geometry{top=25mm}
	\geometry{bottom=25mm}
	\geometry{left=25mm}
	\geometry{right=25mm}

\usepackage{setspace}   
\onehalfspacing             

\pagestyle{fancy}
\fancyhf{}
 \cfoot{[\thepage]}
\lhead{}
\rhead{}

\renewcommand{\cite}{\citet}
\bibliographystyle{ecta}
\def\argmax{\mathop{\rm arg\,max}}
\def\argmin{\mathop{\rm arg\,min}}

\newcommand{\cA}{\mathcal{A}}
\newcommand{\cC}{\mathcal{C}}
\newcommand{\cD}{\mathcal{D}}

\newcommand{\cI}{\mathcal{I}}

\newcommand{\cM}{\mathcal{M}}

\newcommand{\cR}{\mathcal{R}}

\newcommand{\cW}{\mathcal{W}}
\newcommand{\cX}{\mathcal{X}}
\newcommand{\cY}{\mathcal{Y}}
\newcommand{\cZ}{\mathcal{Z}}
\newcommand{\cV}{\mathcal{V}}

\newcommand{\X}{W}
\newcommand{\x}{w}

\makeatletter
\renewcommand*\env@matrix[1][\arraystretch]{%
	\edef\arraystretch{#1}%
	\hskip -\arraycolsep
	\let\@ifnextchar\new@ifnextchar
	\array{*\c@MaxMatrixCols c}}
\makeatother

\hypersetup{colorlinks=true, linkcolor=blue, citecolor=blue} 

\newcommand{\citeposs}[1]{\citeauthor{#1}'s \citeyearpar{#1}}

\numberwithin{equation}{section}
\title{Testing Exclusion and Shape Restrictions in Potential Outcomes Models\thanks{We thank Isaiah Andrews, Yuehao Bai, Guilherme Duarte, Francesca Molinari, Pepe Montiel Olea, Ismael Mourifi\'e, Max Tabord-Meehan, Alex Torgovitsky, Davide Viviano, and seminar participants at the Bristol Econometrics Studying Group, CEME conference, Cornell, Harvard/MIT, JER/SNSF workshop in Hokkaido, Keio, Kyoto, SMU, U. Toronto, and Washington U. St Louis for comments.
We gratefully acknowledge financial support from NSF grants SES-2520364 (Ponomarev) and SES-2520365 (Kaido).}}
\author{
Hiroaki Kaido \\
Boston University
\and
Kirill Ponomarev\\
University of Chicago
}
\onehalfspacing
\begin{document}
\maketitle

\begin{abstract}

Exclusion and shape restrictions play a central role in defining causal effects and interpreting estimates in potential outcomes models. 
To date, the testable implications of such restrictions have been studied on a case-by-case basis in a limited set of models.
In this paper, we develop a general framework
for characterizing sharp testable implications of general support restrictions on the potential response functions, based on a novel graph-based representation of the model. The framework provides a unified and constructive method for deriving all observable implications of the modeling assumptions. We illustrate the approach in several popular settings, including instrumental variables, treatment selection, mediation, and interference. As an empirical application, we revisit the US Lung Health Study and test for
the presence of spillovers between spouses, specification of exposure maps, and persistence
of treatment effects over time.
\end{abstract}

\clearpage
\section{Introduction}

The potential outcomes framework of \citet{neyman1923applications} and \citet{rubin1974estimating} is a workhorse of empirical economics. Within this framework, exclusion and shape restrictions on the potential response functions are commonly imposed to define causal parameters, interpret estimates, and tighten identified sets. Although such restrictions concern fundamentally unobservable objects, they have non-trivial testable implications. So far, such testable implications have been studied on a case-by-case basis in a limited set of models. Examples include testing instrumental variables (IV) validity and treatment selection monotonicity in \citet{Kitagawa15}, ``judge IV'' designs in \citet{frandsen2023judging}, encouragement designs in \citet{bai2024}, and full mediation in \citet{Kwon:2024aa}, among others.

In this paper, we propose a general framework for deriving sharp testable implications of exclusion and shape restrictions, and for testing these implications against data. The framework allows practitioners to empirically assess commonly imposed modeling assumptions in a broad class of potential outcome models. Formally, the observed data consist of an \textit{exogenous input} $Z \in \mathcal{Z}$, typically representing instruments or exogenous treatments, and an \textit{endogenous response} $R \in \mathcal{R}$, which may include an endogenous treatment, mediator, or outcome. Additional observed covariates can be accommodated but are omitted for now. The model is characterized by the \textit{potential response vector} $R^* = (R^*(x))_{x\in \mathcal{X}} \in \mathcal{R}^{\mathcal{X}}$, indexed by counterfactual inputs $x \in \mathcal{X}$, which typically contain instrumental variables, treatment variables, or mediators, and a known mapping $T: \mathcal{R}^{\mathcal{X}} \times \mathcal{Z} \to \mathcal{R}$ such that the observed response satisfies $R =T(R^*, Z)$. We assume that $R^*$ and $Z$ are independent. In this notation, a wide range of restrictions studied in the literature take the form of \textit{support restrictions}, i.e., $R^* \in \mathcal{R}^*$ almost surely, for a known set $\mathcal{R}^* \subseteq \mathcal{R}^{\mathcal{X}}$; See Table~\ref{tab:support_restrictions} for the examples.

 \begin{table}[htbp]
\renewcommand{\arraystretch}{1.2}
\newcolumntype{L}[1]{>{\raggedright\arraybackslash}m{#1}}
\centering
\resizebox{\textwidth}{!}{%
\begin{threeparttable}
\caption{Examples of Support Restrictions in Potential Outcome Models}
\label{tab:support_restrictions}

\begin{tabular}{@{}L{5cm}L{8cm}L{6cm}@{}}
\toprule
\multicolumn{1}{c}{\textbf{Assumption}} & 
\multicolumn{1}{c}{\textbf{Restriction}} & 
\multicolumn{1}{c}{\textbf{Notes}} \\
\midrule
\multicolumn{3}{l}{\textbf{A: Outcome}} \\[1mm]
Monotone Response & $d\le d' \Rightarrow Y(d)\le Y(d')$ & $d:$ treatment in an ordered set \\[1mm]
Exclusion & $Y(d,z)=Y(d,z'),\forall d,~ z\ne z'$ & $d:$ treatment, $z$: IV \\[1mm]
Semi IV/Partial Exclusion & $Y(0,z_0,z_1)=Y(0,z_0,z_1'),\forall z_0, z_1\ne z_1'$ and  $Y(1,z_0,z_1)=Y(1,z_0',z_1),\forall z_1, z_0\ne z_0'$ & $z_d:$ semi IV relevant for selecting $d$ and excluded from $Y(1-d)$\\[1mm]
Full Mediation & $Y(m,d)=Y(m,d'),\forall d\ne d'$ & $m:$ mediator, $d$: treatment \\[1mm]
No Interference/SUTVA & $z_i=z'_i \Rightarrow Y_i(z)=Y_i(z')$ & $z\in \cZ^N$, $N$: \# of units \\[1mm]

No Anticipation & $d_{i,1:t}=d'_{i,1:t}\Rightarrow Y_{i,t}(d_{i,1:T}) = Y_{i,t}(d'_{i,1:T})$ & $d_{i,1:t}:$ unit $i$'s history of treatement status up to period $t$ \\[1mm]
Exposure Mapping & $D_i(z)=D_i(z') \Rightarrow Y_i(z)=Y_i(z')$ & $z\in \cZ^N:$ treatment across $N$ units, $D_i(z):$ unit $i$'s exposure level \\[1mm]
Exposure Semimonotonicity & $D_i(z)\le D_i(z') \Rightarrow Y_i(z)\le Y_i(z')$ & \\[3mm]

\multicolumn{3}{l}{\textbf{B: Treatment selection}} \\
LATE Monotonicity &  $D(z)\ge D(z')$ or $D(z')\ge D(z), \forall z,z'$ & $D:$ binary treatment, $z$: IV  \\[1mm]
Program Substitution & $D(1)\ne D(0)\Rightarrow D(1)= h$ & $D\in \{n,c,h\}$,  $z:$ binary IV \\[1mm]
Encouragement Design & $D(z)\in \argmax_{j\in\{1,\dots,J\}}\beta_j I\{z=j\}+V_j$ & $D:$ multivalued treatment, $z:$ multivalued IV \\[1mm]
Partial Monotonicity & $D(z_l,z_{-l}) \ge D(z'_l,z_{-l})$ or $D(z_l,z_{-l}) \le D(z'_l,z_{-l})$ & $z:$ vector-valued IV\\[1mm]
Unordered Monotonicity & For any $z,z',d$, $\bm{1}\{D(z)=d\} \ge \bm{1}\{D(z')=d\}$ or $\bm{1}\{D(z)=d\} \le \bm{1}\{D(z')=d\}$ & $D:$ multivalued treatment \\[1mm]
Choice Restriction  & $D(z)\in A\Rightarrow D(z')\in B$ & $A$ and $B$ are determined by a revealed preference analysis. \\[1mm]
Two-way Flow & $D(z)=0$ iff $V_1<Q_1(z)$ \& $V_2<Q_2(z)$, $D(z)=1$ iff $V_1>Q_1(z)$ \& $V_2>Q_2(z)$, $D(z)=2$ otherwise & This class includes the double hurdle model: $D(z)=1\{V_1<Q_1(z),V_2<Q_2(z)\}$.\\[3mm]

\multicolumn{3}{l}{\textbf{C: Others}} \\[1mm]
Sample Selection & $S(1)\ge S(0)$ & $S(d):$ potential sample selection indicator  \\[1mm]
        & $S(1)\ge S(0)$ if $W\in A$,  $S(1)\le S(0)$ if $W\in B$ & $A,B:$ known sets of covariate values  \\
\bottomrule
\end{tabular}
\begin{tablenotes}
\small
\item \textit{Notes:} This table contains examples of assumptions that restrict the support of potential response variables.
\item For assumptions used in outcome models, we refer to the following selected sources: \cite{Manski:1997aa} (monotone response), \cite{bruneelzupanc2025dontfullyexcludeme} (semi IV), \cite{Kwon:2024aa} (mediation), \cite{Abbring:2007aa,Bojinov:2021aa} (no anticipation), \cite{Manski:2013aa,Aronow:2017aa} (exposure-related assumptions).
\item For assumptions used in treatment selection models, we refer to the following selected sources: \cite{Imbens:1994aa} (LATE monotonicity), \cite{Kline:2016aa} (program substitution), \cite{bai2024} (encouragement), \cite{Mogstad:2021aa} (partial monotonicity), \cite{heckman2018unordered} (unordered monotonicity \& choice restriction), \cite{KitamuraStoye2018} (choice restriction), \cite{Lee:2018aa} (two-way flow).
\item For sample selection models, we refer to \cite{Lee:2009aa,heiler2024treatmentevaluationintensiveextensive}.
\end{tablenotes}
\end{threeparttable}
}
\end{table}

To characterize the observable implications of support restrictions, we introduce a graph-based representation that encodes the compatibility structure of the model. This representation allows us to compute the testable implications using efficient graph algorithms and facilitates interpretation of the resulting constraints. Moreover, by exploiting the correspondence between the graph and the convex polytope of distributions of potential outcomes compatible with the model, we establish that the implied restrictions are sharp, i.e., necessary and sufficient for the observed distribution to be consistent with the imposed assumptions.

 Suppose, for the moment, that both $\mathcal{R}$ and $\mathcal{Z}$ are finite sets.  We may represent a restriction $\mathcal{R}^*$ via an undirected graph $G = (V_G, E_G)$, in which every vertex $v_{r, z} \in V_G$ corresponds to a pair $(r, z)$ of values the response and input can take, and two vertices $v_{r, z}$ and $v_{r', z'}$ are connected by a link $(v_{r, z}, v_{r', z'}) \in E_G$ if there exists a support point $r^* \in \mathcal{R}^*$ consistent with both, i.e., $T(r^*, z) = r$ and $T(r^*, z') = r'$. In other words, the vertices $v_{r, z}$ and $v_{r', z'}$ in $G$ are connected if a pair of events ``$R = r$ when $Z = z$'' and ``$R = r'$ when $Z = z'$'' is compatible with the modeling assumptions. 
 
  With this representation, the testable implications of the modeling assumptions become transparent. If two vertices $v_{r, z}$ and $v_{r', z'}$ are disconnected in $G$, the events ``$R = r$ when $Z = z$'' and ``$R = r'$ when $Z = z'$'' cannot occur jointly, under the modeling assumptions. Thus, it must be that $P(R = r \,|\, Z = z) + P(R = r' \,|\,Z = z') \leqslant 1$. More generally, any maximal set $I \subseteq V_G$ of  mutually disconnected vertices (called a \textit{Maximal Independent Set} or \textit{MIS}) yields an inequality
  \begin{equation} \label{eq:intro_MIS}
  \sum_{v_{r, z} \in I} P(R = r \,|\,Z = z) \leqslant 1. 
  \end{equation}
Our first main result (Theorem \ref{thm:testable_implication}) shows that all such MIS inequalities provide valid testable implications. These inequalities can be efficiently computed using existing algorithms available in standard software packages and tested using a wide range of established procedures, including those that accommodate continuous response or control variables when necessary.

Next, we study the sharpness of the testable implications derived in \eqref{eq:intro_MIS}. Our second main result (Theorem~\ref{thm:general_testable_implication}) characterizes a broader class of inequalities, of which the MIS inequalities are generally a strict subset, that together yield sharp testable implications of the model. While this more general class is required in principle, many commonly used support restrictions exhibit additional structure under which the MIS inequalities alone suffice for sharpness. We refer to such restrictions as \textit{regular} and show that regularity can be verified in a straightforward manner. In particular, monotonicity restrictions, encouragement designs, and LATE-type settings, including judge-IV designs, listed in Table~\ref{tab:support_restrictions}, are all regular, as are all restrictions with binary $Z$. Our third main result (Theorem~\ref{thm:sharpness_of_mis}) establishes that, under regularity, the MIS inequalities are sharp. By contrast, certain exclusion restrictions in IV, mediation, and interference are not regular, in which case the inequalities in \eqref{eq:intro_MIS} are insufficient. For such cases, we provide an algorithm that constructs the additional inequalities required to achieve sharpness.

We then extend the analysis to settings in which (part of) the response vector $R$ is continuous. We obtain general analogues of Theorems~\ref{thm:testable_implication} and \ref{thm:sharpness_of_mis}, and specialize them to the canonical IV model with endogenous treatment and instrument monotonicity. This extension generalizes existing results in \citet{Kitagawa15} and \citet{Kwon:2024aa} by allowing for multi-valued treatments and IVs.

Conceptually, our analysis builds on classical results in combinatorial optimization --- most notably the polyhedral characterization of clique polytopes for perfect graphs \citep{chvatal1975certain} and the matroid intersection polytope theorem \citep{edmonds1979matroid}. However, in our setting, the relevant object is not the classical clique polytope of $G$. Since each response type includes exactly one observable response per input $z\in \{z_1,\dots,z_K\}$, every admissible type corresponds to a \emph{$K$-clique} in $G$. Thus, the set of conditional probability vectors $(P(R = r \,|\,Z = z))_{r \in \mathcal{R}, z \in \mathcal{Z}}$ compatible with the support restriction corresponds to the convex hull of incidence vectors of such $K$-cliques. We provide a novel explicit half-space characterization of this set and show how it yields sharp testable implications. Under a regularity
condition, the characterization
simplifies to MIS inequalities, delivering a tractable set of moment inequalities for inference.

Another useful feature of our approach is that adding or relaxing assumptions corresponds to modifying the edges of the underlying graph. This structure enables direct comparison of testable implications across different models. By systematically computing and comparing the resulting sets of testable implications, practitioners can evaluate how changes in modeling assumptions affect the implied restrictions --- an analysis that is often difficult to conduct on a case-by-case basis.

Expressing the model’s testable implications as conditional moment inequalities allows us to draw on a large literature on testing and model selection procedures that accommodate both discrete and continuous covariates. Our primary approach exploits the half-space representation of the set of conditional distributions of observables consistent with the model. We also consider an alternative procedure based on the vertex representation of this set, following \citet{bai2025inference}.
The choice between these representations depends on the structure of the support restriction $\mathcal{R}^*$ and the geometry of the associated polytope; Section~\ref{sec:computation_inference} provides practical guidance. Broadly, the half-space-based procedure is computationally attractive when the number of implied inequalities is moderate, even if the number of latent types is very large. By contrast, the vertex-based approach can be advantageous when the number of inequalities is extremely large, but the number of latent types remains manageable. To assess which regime is relevant in a given application, we recommend using the output-sensitive algorithm of \citet{tsukiyama1977new} to enumerate maximal independent sets as a preprocessing step.

We assess the finite-sample performance of the proposed tests using Monte Carlo experiments and an empirical application. In simulations, we find that conditioning on exogenous covariates and exploiting instruments with rich support can substantially improve power, although some violations of the model are inherently difficult to detect. As an empirical illustration, we revisit the U.S. Lung Health Study, a randomized controlled trial of smoking cessation interventions. We test for spillover effects from treated subjects to their spouses, monotonicity of responses to treatment intensity, and persistence of treatment effects over time. Across these hypotheses, the number of maximal independent sets ranges from 1 to 726, and all are computed within three seconds, demonstrating that the proposed tests are computationally straightforward to implement in practice. 

\medskip 

\noindent 
\textbf{Related Literature} \; This paper contributes to the vast and growing literature on identification of causal effects using instrumental variables. In a setting with binary instrument and treatment, \citet{Imbens:1994aa} demonstrated that the exclusion restriction, combined with treatment selection monotonicity (which we will call IA monotonicity), suffices to identify the Local Average Treatment Effect (LATE) for compliers. This work started a large literature studying identifying power, interpretation, extensions, and testable implications of support-type restrictions in potential outcome models; see, e.g., \citet{mogstad2024instrumental} for a detailed review and Table \ref{tab:support_restrictions} for some notable examples. Below, we discuss in more detail several papers that are most related to the present work. 	

\citet{balke1997bounds} derived the testable implications of the exclusion restriction and instrument independence, both with and without IA monotonicity, in models with binary instruments, treatments, and outcomes. 
\citet{Kitagawa15} extended this analysis to continuous outcomes, showing that the resulting testable implications under IA monotonicity are sharp and proposing a corresponding statistical test.\footnote{\citet{mourifie2017testing} propose an alternative test based on \citet{CLR13}.} 
\citet{Kwon:2024aa} further broadened the framework to accommodate multi-valued treatments while maintaining the same notion of monotonicity. 
As we discuss in Remark~\ref{rem:artstein}, with binary instruments, the sharp testable implications in the above settings (and in many related ones) follow from a theorem of \citet{artstein1983distributions}, which provides necessary and sufficient conditions for the existence of a joint distribution of two random elements given their marginals and joint support. We extend Artstein’s result to a multi-marginal setting, which allows us to handle multi-valued instruments and alternative treatment–selection models.\footnote{
With multi-valued instruments in the absence of IA monotonicity, the inequalities derived by \citet{balke1997bounds} for binary outcomes and treatments are not sharp \citep{Bonet2001Instrumentality}. 
The corresponding sharp testable implications were later established by \citet{KedagniMourifie2020}. Related work also considers relaxations of the random-assignment assumption; see \citet{RichardsonRobins2010BinaryIV} and \citet{KedagniMourifie2020}.} 
 	 
 	Several extensions of IA monotonicity to settings with multi-valued instruments have been proposed in the literature. One example is the so-called encouragement designs, where, for each potential treatment choice, there is a randomly assigned instrument that encourages an individual to take up that treatment; see, e.g., \citet{kline2016evaluating, kirkeboen2016field}. In this context, \citet{bai2024} derive sharp testable implications of the modeling assumptions and discuss available testing procedures. The general approach developed here recovers the inequalities of \citet{bai2024} and applies in many other settings, including those with restrictions on the outcome response beyond exclusion.

 	Another common extension of the IA monotonicity assumption is the so-called ``judge fixed effects'' design, originally motivated by the random assignment of cases to judges in the US penitentiary system \citep{kling2006incarceration} and used for identifying causal effects in a variety of settings \citep[see, e.g., Table 1 in][]{frandsen2023judging}. In addition to the standard instrument exogeneity and exclusion restrictions, the key underlying assumption is that the ``judges'' can be ordered from ``least'' to ``most lenient,'' so that ``judges'' assignment has a weakly monotone effect on each individual's treatment status. In this setting, the Two-Stage Least Squares (TSLS) estimand has a ``weakly causal'' interpretation of a non-negatively weighted average of LATEs for different complier groups. \citet{frandsen2023judging} derived (non-sharp) testable implications of these assumptions and proposed a non-parametric test. Recent work by \citet{coulibaly2024sharptestjudgeleniency} extended the analysis and derived sharp testable implications. The framework developed here allows for the derivation and testing of sharp testable implications under either class of assumptions, possibly in conjunction with other structural restrictions.
    	
In a related setting, \citet{Mogstad:2021aa} argued that IA monotonicity with multiple instruments may be overly restrictive and proposed a weaker notion, termed ``partial monotonicity,'' under which the TSLS estimand retains a weakly causal interpretation. This restriction is our running example discussed in detail throughout the paper. Recently, \citet{blandhol2022tsls} argued that in the presence of covariates, the TSLS estimand retains a weakly causal interpretation only if covariates enter non-parametrically. Our approach accommodates covariates and can be used to test the assumptions required for such causal interpretations.

 	This paper also contributes to the literature on mediation, which deals with testing causal pathways between the treatment and outcomes. We refer to \citet{vanderweele2016mediation} and \citet{huber2019review} for detailed reviews of the literature. \citet{Kwon:2024aa} developed a non-parametric test for the sharp null of full mediation --- an assumption stating that the (binary) treatment can be excluded from the potential outcomes given (multi-valued) mediators, potentially combined with additional restrictions on the effect of treatment on mediators. In this paper, we extend the analysis to accommodate any form of support restrictions on potential outcomes and mediators, which allows for testing other hypotheses of interest, accommodating multivalued treatment and continuous controls (see Example \ref{ex:mediation}).
 			
	Moreover, this paper contributes to the literature on causal inference under interference, which explores violations of the no-interference assumption of \citet{cox1958planning} and stable unit treatment value assumption in \citet{rubin1980randomization}. Following \citet{hudgens2008toward}, \citet{Manski:2013aa}, and \citet{Aronow:2017aa}, 
    researchers commonly specify the so-called exposure maps, which restrict how each individual is affected by treatments of their peers, in order to define, identify, and estimate causal parameters of interest. This requires taking a stance on the underlying structure of interactions, which may not always be known or simple.  If an exposure map is misspecified, the corresponding estimands may not have a clear causal interpretation, standard estimators may have undesirable statistical properties, and inference procedures may be invalid; see, e.g., \citet{sobel2006randomized}, \citet{savje2021average}, \citet{leung2022unconfoundedness}, \citet{savje2024causal}, \citet{auerbach2024discussion}, \citet{leung2024discussion}, \citet{menzel2025fixed}, and \citet{auerbach2021local} for detailed discussions and possible remedies. Existing tests for specification of exposure maps (applicable in a fixed population framework and based on randomization) operate on a case-by-case basis, rely on carefully constructed symmetries of the test statistic under the null hypothesis, and require a non-trivial choice of tuning parameters, including the focal units and test statistic, which greatly affect power \citep{aronow2012general, athey2018exact, basse2019randomization, puelz2022graph}.\footnote{In the same design-based setting, \cite{zhong2025unconditionalrandomizationtestsinterference} proposes an alternative unconditional testing framework to address these issues.} The approach we develop here (applicable in the super-population framework and valid in large samples) can be used to test the specification of any exposure map and many other hypotheses, such as monotonicity of the outcome response to treatment or endogeneity of spillover effects. In many settings, the resulting tests do not require any tuning parameters. We illustrate the capabilities of the proposed approach in a setting with interference in our empirical application. 
    
	Finally, this paper contributes to the literature on inference for linear systems and moment inequalities. With discrete inputs and responses, testing support restrictions amounts to testing whether a vector of conditional probabilities  $\beta(P) = (P(R = r \,|\,Z = z))_{r \in \mathcal{R}, z \in \mathcal{Z}}$ belongs to a convex polytope $\mathbf{B}^* = \{A^*Q: Q \in \Delta(\mathcal{R}^*) \}$, where a matrix $A^* \in \{0, 1\}^{|\mathcal{R}||\mathcal{Z}| \times \mathcal{R}^*}$ encodes the support restriction $\mathcal{R}^*$, so that each column corresponds to a latent response type, and a vector $Q$ represents a distribution of latent types supported on $\mathcal{R}^*$. In this form, the hypothesis $P \in \mathbf{B}^*$ can be tested using existing methods such as \citet{KitamuraStoye2018}, \citet{fang2023inference}, \cite{goff2025inferencevaluelinearprogram}, or \cite{bai2025inference}. However, this formulation does not accommodate continuous control variables, which may be essential in applications. The reason is essentially that the polytope $\mathbf{B}^*$ is given in its vertex ($V$) rather than half-space ($H$) representation, which would correspond to testing conditional moment inequalities. In this paper, we analytically derive an $H$-representation for $\mathbf{B}^*$ (and its continuous analog) that enables using existing approaches, such as \citet{andrews2013inference}, \citet{CLR13}, or \citet{armstrong2015asymptotically}, for inference. Additionally, the $H$-representation enables formal model selection as in \citet{shi2015model} and \citet{hsu2017model}.

\medskip 

\noindent \textbf{Notation}\;\; We will use the following notation thoughout. Let $V$ be a finite set with cardinality $N$. For any subset $S \subseteq V$,  $\bm{1}_S \in \{0, 1\}^N$ denotes a vector with elements $(\bm{1}(v \in S))_{v \in V}$. For any $N \in \mathbb{N}$, $\bm{1}_N \in \mathbb{R}^N$ denotes a vector of ones. The cardinality of a set $S$ is denoted $|S|$. For any set $V$, $\Delta(V)$ denotes the set of all probability distributions supported on $V$. For sets $S_1, \dots, S_K$, their cartesian product is denoted by $\prod_{k = 1}^K S_k$. The subscript $a.s.$ in $\leqslant_{a.s.}$ or $\forall_{a.s.}$ indicates that the corresponding statement holds almost surely. 

\medskip 
\noindent 
\textbf{Outline}\;\; The rest of the paper is organized as follows. Section 2 introduces the formal setup and motivating examples; Section 3 states the main result for regular support restrictions in discrete settings; Section 4 extends the discussion to non-regular settings and continuous response variables; Section 5 discusses computation and inference; Section 6 presents Monte Carlo experiments; and Section 7 contains an empirical illustration.

\section{Basic Setup and Motivating Examples}

Recalling the setup in the introduction, the observed data consists of an exogenous input $Z \in \mathcal{Z}$, an endogenous response $R\in \cR$, and covariates $W \in \mathcal{W}$, which we omit for now. The set $\mathcal{Z} = \{z_1, \dots, z_K\}$ is assumed to be finite, and each $z\in \cZ$ realizes with positive probability. The model specifies a vector of potential responses $R^*=(R(x))_{x\in\cX}$, indexed by $x$ in a finite set $\cX$, and a mapping  $T:\cR^{\cX}\times \cZ\to \cR$ such that
 \begin{equation}
R=T(R^*,Z).\label{eq:obs}
 \end{equation}
 The potential response $R(x)$ represents the outcome value that would be realized if the input is set to $x$; this quantity is generally unobservable. Instead, the researcher observes the pair $(R,Z)$ that satisfies \eqref{eq:obs}.
 We assume that the input is exogenous in the following sense. 
\begin{assumption}[Exogeneity] \label{A:unconf}
	\begin{align}
			R^* \perp Z.
	\end{align}
\end{assumption}
Letting $Q \in \Delta(\mathcal{R}^{\mathcal{X}})$ denote the joint distribution of $R^*$, we consider general restrictions of the following type.
\begin{assumption}[Support Restrictions] \label{A:support} For some known $\mathcal{R}^* \subseteq \cR^\cX$,
	\begin{align}
	Q(R^* \in \mathcal{R}^*) = 1.\label{eq:support}
	\end{align}
\end{assumption}
The following examples illustrate.

\begin{example}[Instrumental Variables]\label{ex:iv}
Let $(Y, D, Z) \in \mathcal{Y} \times \mathcal{D} \times \mathcal{Z}$ denote the observed outcome, treatment, and instrument. Let $Y^* = (Y(d, z))_{d \in \mathcal{D}, z \in \mathcal{Z}}$ and $D^* = (D(z))_{z \in \mathcal{Z}}$ be potential outcomes and  treatments, respectively, and $R^*=(Y^*,D^*)$ be the potential response vector. In the notation of Equation \eqref{eq:obs}, $x = (d, z)$. Suppose that $Z$ is assigned randomly so that it satisfies Assumption \ref{A:unconf}. The mapping $T$ is characterized by 
\begin{align}
\underbrace{(Y,D)}_{R}=\underbrace{(Y(D(Z),Z),D(Z))}_{T(R^*,Z)}.	
\end{align}
Commonly used models restrict the support of $R^*$. For example,  \citet{Imbens:1994aa}(IA, henceforth) assume (i) exclusion: $Y(d,z)=Y(d,z'), \forall (d,z,z')\in \cD\times\cZ^2$ and (ii) IA monotonicity: for any $z,z'\in \cZ$, $D(z)\geqslant D(z')$ or $D(z)\leqslant D(z')$. \citet{machado2019instrumental} study combinations of various exogeneity and monotonicity assumptions. As a special case of this example, some papers focus on selection models. \cite{bai2024} study the encouragement design; \cite{Lee:2018aa} consider selection processes characterized by multiple thresholds; and \cite{Mogstad:2021aa} consider partial monotonicity.
\qed
\end{example}

The next example will be our running example throughout the paper. 

\begin{example}[Partial Monotonicity]\label{ex:partial_monotonicity}
Let $D \in \{0, 1\}$ denote a treatment variable and $Z=(Z_1,Z_2) \in \{0,1\}^2$ instrumental variables. For example, $D$ is college attendance, $Z_1$ indicates whether the opportunity cost of college attendance is low, and $Z_2$ indicates the presence of a four-year college in a nearby area. Let $D(z_1, z_2)$ denote a potential treatment, and suppose the observed college attendance is generated by $D = D(Z_1, Z_2)$, where $Z$ is randomly assigned.
The partial monotonicity assumption of \cite{Mogstad:2021aa}\footnote{For illustration purposes, we assume the direction of the monotonicity assumption is known a priori as in \cite{Mogstad:2021aa} (Section III-B).} states
\begin{align}
D(1,z_2)\geqslant_{a.s.} D(0,z_2), \text{ for }z_2\in \{0, 1\}, \text{ and }D(z_1,1)\geqslant_{a.s.} D(z_1,0), \text{ for }z_1 \in\{0,1\}. \label{eq:pm_assumption}
\end{align} 
Unlike IA monotonicity in Example \ref{ex:iv}, this assumption does not require the ordering of $D(1,0)$ and $D(0,1)$. Denote the potential response vector by $R^*=(D(z_1,z_2))_{(z_1,z_2)\in \{0,1\}^2} \in \{0, 1\}^4.$
\cite{Mogstad:2021aa} show that the support $\cR^*$ consists of six compliance types
\begin{align}
\cR^*&=\{
 \underbrace{(0,0,0,0)}_{nt}
,\underbrace{(0,0,0,1)}_{rc}
,\underbrace{(0,0,1,1)}_{1c}
,\underbrace{(0,1,0,1)}_{2c}
,\underbrace{(0,1,1,1)}_{ec}
,\underbrace{(1,1,1,1)}_{at}\}.\label{eq:pm_support}
\end{align}
The support points are named as never taker (nt), reluctant compiler (rc), $Z_1$-complier (1c), $Z_2$-complier (2c), eager compiler (ec), and always taker (at). For example, the eager compiler takes up the treatment when at least one of the instruments takes the value one.
\qed
\end{example}

Our third example is a model of mediation. 

\begin{example}[Mediation] \label{ex:mediation}
Let $(Y, M, D) \in \mathcal{Y} \times \mathcal{M} \times \mathcal{D}$ denote an observed vector of outcomes, a vector of mediators, and a treatment variable. For example, following \citet{heckman2013understanding}, let $D \in \{0, 1\}$ represent assignment to the treatment group in an early childhood development program, $M = (M_{C}, M_{P})$ include measures of cognitive and personality skills, and $Y = (Y_{H}, Y_{L})$ include health behavior and labor market outcomes in adulthood. Let $Y^*=(Y(m,d))_{m\in\cM,d\in\cD}$ and $M^*=(M(d))_{d\in \cD}$ denote the potential outcomes and mediators. Let $R^*=(Y^*,M^*)$, and suppose the treatment $D$ is randomly assigned, i.e., statistically independent from $R^*$. Similar to Example \ref{ex:iv}, the relationship $(Y,M)=(Y(M(D),D),M(D))$ characterizes the mapping $T$. In this setting, one may be interested in testing different combinations of exclusion and shape restrictions. For example, the restriction $Y(m,d)=Y(m)$, for all $(d, m)\in \cD\times\cM$, in this context means the effect of the preschool program on both health and labor market outcomes is fully mediated by the measured skills, corresponding to the sharp null of \cite{Kwon:2024aa}. Alternatively, one may hypothesize that the labor market outcomes are affected by both cognitive skills and personality skills, while health outcomes are affected only by personality skills and some unobserved mediators. Formally, $Y_{H}(d, m_{C}, m_{P}) = Y_{H}(d, m_P)$, $Y_{L}(d, m_C, m_P) = Y_L(m_C, m_P)$, for all $d\in\mathcal{D}$, $(m_C, m_P) \in \mathcal{M}$. One may also impose that the program increases skills, $M_j(1) \geqslant M_j(0)$, for each $j \in \{C, P\}$, and that skills positively affect outcomes, that is, both $Y_H(d, m_p)$ and $Y_L(m_C, m_P)$ are partially monotone in their inputs, as in Example \ref{ex:partial_monotonicity}. 
\qed 
\end{example}

Our final example is a model of interference.
\begin{example}[Interference]\label{ex:interference}
Let $Y=(Y_1,\dots, Y_N)$ denote the vector of outcomes for $N$ units, and let $Z=(Z_1,\dots, Z_N)$ be the corresponding vector of treatments. Define $Y^*=(Y(z))_{z\in \mathcal Z}$ as the vector of potential outcomes. Each unit’s potential outcome may depend on the entire vector of treatment assignments, which relaxes the stable unit treatment value assumption (SUTVA).
In our empirical application, $Y_i$ indicates individual $i$’s smoking status, which may depend not only on $Z_i$, but also on the treatments received by others, thereby accommodating spillover effects through social interactions. The observed outcome is $Y = Y(Z)$, where $Z$ is randomly assigned. 
We adopt a super-population perspective in which $N$-unit groups (e.g., households) are observed repeatedly.

 One may restrict potential outcomes using \emph{exposure maps} or \emph{effective treatents} \citep{Manski:2013aa,Aronow:2017aa}, defined as $D_i:\mathcal{Z}\to\mathcal{D}$ for $i=1,\dots,N$, where $D_i(z)$ represents individual $i$’s level of treatment exposure under assignment vector $z$.
A common assumption is
\begin{align}
D_i(z)=D_i(z')~\Rightarrow~Y_i(z)=Y_i(z'),	\label{eq:exposure}
\end{align}
which states that if individual $i$ experiences the same level of exposure under two assignment vectors $z$ and $z’$, their potential outcome remains the same.
Another assumption, introduced by \citet{Manski:2013aa}, is semi-monotonicity. When $\mathcal{D}$ is partially ordered, this restriction requires that outcomes be weakly monotone in exposure:
\begin{align}
D_i(z)\geqslant D_i(z')~\Rightarrow~Y_i(z)\geqslant Y_i(z').\label{eq:semimonotone}
\end{align} 
We test restrictions of this type in our empirical application in Section \ref{sec:empirical}.
\qed
\end{example}

\subsection{Graphs and related objects}
Our goal is to systematically derive testable implications of Assumptions \ref{A:unconf}--\ref{A:support}. First, we focus on the case where $R$ is discrete and obtain testable restrictions on the conditional probabilities: 
\begin{align}
	P(R=r\,|\,Z=z),~r\in \cR_z,~ z\in \mathcal{Z},
\end{align}
where $\cR_z$ is the conditional support of $R$ given $Z=z$. The discreteness of $R$ is relaxed in Section \ref{ssec:continuous_outcome}.
To facilitate the derivation, we introduce the following object.

\begin{definition}[Potential response graph] \label{def:graph}
A \emph{potential response graph}  $G=(V_G,E_G)$ associated with $\cR^*$ is an undirected graph with vertices $V_G = \{v_{r,z}: r \in \mathcal R_z, z \in \mathcal Z \}$ and edges $E_G=\{(v_{r,z},v_{r',z'})\in V_G\times V_G:\exists \, r^* \in \mathcal R^*: T(r^*,z) = r, T(r^*,z') = r'\}.$ 
\end{definition}

Each vertex $v_{r,z}$ represents a possible realization of $(R,Z)$, corresponding to the event “$R = r$ when $Z = z$.”  For each $z_k \in \mathcal Z$, let
$$
V_k = \{v_{r,z_k} : r \in \mathcal R_{z_k}\}
$$
collect the vertices corresponding to $Z=z_k$. We write $V_G = \bigcup_k V_k$ throughout. Below, we illustrate the potential response graphs using examples. We will provide more details on the computation of $G$ in Section \ref{sec:computation_inference}.

For illustration, consider a special case of Example~\ref{ex:iv} in which 
$R^* = D^*$ (i.e., treatment selection only) and both $D$ and $Z$ are binary. 
The vertex set
\[
V_G = \{v_{0,0}, v_{1,0}, v_{0,1}, v_{1,1}\}
\]
enumerates all possible values of $(D, Z)$. 
Figure~\ref{fig:graph_ia_monotonicity} displays the potential response graph for the model that imposes the monotonicity restriction $D(1) \geqslant_{a.s.} D(0)$.
An edge connects $v_{0,0}$ and $v_{1,1}$ because they are jointly consistent with the \emph{complier} type:
\[
r^* = (D(0), D(1)) = (0,1)\in \cR^*,
\]
which satisfies monotonicity and generates the observable pairs 
$(D(Z), Z) = (0,0)$ and $(D(Z'), Z') = (1,1)$.
Similarly, the other two edges correspond to the remaining latent types, commonly referred to as the \emph{never-taker} (nt) and \emph{always-taker} (at).

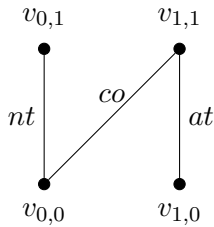
\begin{figure}
\centering
\begin{tikzpicture}[scale=1.8, every node/.style={draw, circle, fill=black, inner sep=1.5pt}]
\node (a) at (0, 0) {};
\node (b) at (1, 0) {};
\node (c) at (0, 1) {};
\node (d) at (1, 1) {};

\node[draw=none, fill=none, below] at (a) {$v_{0,0}$};
\node[draw=none, fill=none, below] at (b) {$v_{1,0}$};
\node[draw=none, fill=none, above] at (c) {$v_{0,1}$};
\node[draw=none, fill=none, above] at (d) {$v_{1,1}$};

\draw (a) -- (c) node[pos=0.5,draw=none, fill=none, left]  {$nt$};
\draw (a) -- (d) node[pos=0.5,draw=none, fill=none, above] {$co$};
\draw (b) -- (d) node[pos=0.5,draw=none, fill=none, right] {$at$};

\end{tikzpicture}
\caption{Potential Response Graph for Example \ref{ex:iv}.}
\label{fig:graph_ia_monotonicity}
\end{figure}

When $Z$ takes more than two values, the potential response graph uses a generalized notion of adjacent vertices to represent each support point. To see this, consider Example \ref{ex:partial_monotonicity}. 
Recall that $Z\in\{0,1\}^2$, and  $\cR_z=supp(D\,|\,Z=z)=\{0,1\}$ for any $k$.  Hence, $V_G=\{v_{r,z}\,|\,r\in \{0,1\},z\in\{0,1\}^2\}$.
Consider $r^*=(0,0,0,0)$, the never taker in \eqref{eq:pm_support}. The presence of this support point ensures the following vertices 
\begin{align}
v_{0,(0,0)}, v_{0,(0,1)},v_{0,(1,0)}, v_{0,(1,1)}
\end{align}
are mutually adjacent to each other. Indeed, each support point $r^*$ can be represented by a \emph{maximal clique} $C$ of $G$ (defined below), a set of mutually adjacent vertices.\footnote{Figure \ref{fig:graph_partial_monotonicity_1} in the Appendix show examples of such maximal cliques} We note that a maximal clique representing a support point takes a vertex from each $V_k$. Collecting such maximal cliques across all support points yields the graph in Figure \ref{fig:graph_partial_monotonicity}-(a). Figure  \ref{fig:graph_partial_monotonicity}-(b) shows the graph's set representation, where each vertex $v_{r,z}$ is identified by the set of maximal cliques that contain $v_{r,z}$.  
\begin{definition}[Maximal Cliques]
A \emph{clique} $C$ is a subset of $V_G$ such that every two vertices in $C$ are adjacent. A \emph{maximal clique} is a clique that cannot be extended by including one more adjacent vertex.
\end{definition}

The following notion plays a central role in deriving testable restrictions.
\begin{definition}[Maximal Independent Sets]
An \emph{independent set} $I$ is a subset of $V_G$ such that every two vertices in $I$ are not adjacent. A \emph{maximal independent set} is an independent set that cannot be extended by including one more independent vertex.
\end{definition}
For any $z$, vertices $v_{r,z}$ and $v_{r',z}$ are not adjacent because $R=r$ and $R=r'$ cannot occur simultaneously given $Z=z$. Hence, $V_G$ can be partitioned into $|\mathcal{Z}| =K$ independent sets $V_k,k=1,\dots,K$. Such a graph is called a \emph{$K$-partite graph}.

\begin{figure}[t]
\centering
\begin{subfigure}[t]{0.49\textwidth}
\centering
\vspace*{2mm}
\begin{tikzpicture}[scale=1.2, every node/.style={draw, circle, fill=black, inner sep=1.5pt}, baseline=(current bounding box.north)]
\node (a0) at (0, 0) {};
\node (a1) at (1, 0) {};
\node (b0) at (2, 1) {};
\node (b1) at (2, 2) {};
\node (c0) at (1, 3) {};
\node (c1) at (0, 3) {};
\node (d0) at (-1, 2) {};
\node (d1) at (-1, 1) {};

\node[draw=none, fill=none, below] at (a0) {$v_{0,(0,0)}$};
\node[draw=none, fill=none, below] at (a1) {$v_{1,(0,0)}$};
\node[draw=none, fill=none, right] at (b0) {\;\;$v_{0,(0,1)}$};
\node[draw=none, fill=none, right] at (b1) {\;\;$v_{1,(0,1)}$};
\node[draw=none, fill=none, above] at (c0) {$v_{0,(1,0)}$};
\node[draw=none, fill=none, above] at (c1) {$v_{1,(1,0)}$};
\node[draw=none, fill=none, left]  at (d0) {$v_{0,(1,1)}$\;\;};
\node[draw=none, fill=none, left]  at (d1) {$v_{1,(1,1)}$\;\;};

\draw (a0) -- (b0) -- (c0) -- (d0) -- (a0);
\draw (a0) -- (c0);
\draw (d0) -- (b0);
\draw (a0) -- (b0) -- (c0) -- (d1) -- (a0);
\draw (a0) -- (c0);
\draw (d1) -- (b0);
\draw (a0) -- (b0) -- (c1) -- (d1) -- (a0);
\draw (a0) -- (c1);
\draw (d1) -- (b0);
\draw (a0) -- (b1) -- (c0) -- (d1) -- (a0);
\draw (a0) -- (c0);
\draw (d1) -- (b1);
\draw (a0) -- (b1) -- (c1) -- (d1) -- (a0);
\draw (a0) -- (c1);
\draw (d1) -- (b1);
\draw (a1) -- (b1) -- (c1) -- (d1) -- (a1);
\draw (a1) -- (c1);
\draw (d1) -- (b1);
\end{tikzpicture} \vspace*{-2.mm}
\caption{Full graph}
\end{subfigure}
\hfill
\begin{subfigure}[t]{0.49\textwidth}
\centering
\begin{tikzpicture}[scale=1.2, every node/.style={draw, circle, fill=black, inner sep=1.5pt}, baseline=(current bounding box.north)]
\node (a0) at (0, 0) {};
\node (a1) at (1, 0) {};
\node (b0) at (2, 1) {};
\node (b1) at (2, 2) {};
\node (c0) at (1, 3) {};
\node (c1) at (0, 3) {};
\node (d0) at (-1, 2) {};
\node (d1) at (-1, 1) {};

\node[draw=none, fill=none] at (-0.45,-0.35) {$\{nt,rc,1c,2c,ec\}$};
\node[draw=none, fill=none] at (1.25,-0.35) {$\{at\}$};

\node[draw=none, fill=none] at (2.85,1) {$\{nt,rc,1c\}$};
\node[draw=none, fill=none] at (2.85,2) {$\{2c,ec,at\}$};

\node[draw=none, fill=none] at (1.25,3.35) {$\{nt,rc,2c\}$};
\node[draw=none, fill=none] at (-0.25,3.35) {$\{1c,ec,at\}$};

\node[draw=none, fill=none] at (-1.5,2) {$\{nt\}$};
\node[draw=none, fill=none] at (-1.8,0.8) {$\begin{array}{c}
    \{rc,1c,2c  \\
     \;\;\;\;\;\;\;\; ec,at\}
\end{array}$};

\draw (a0) -- (b0) -- (c0) -- (d0) -- (a0);
\draw (a0) -- (c0);
\draw (d0) -- (b0);
\draw (a0) -- (b0) -- (c0) -- (d1) -- (a0);
\draw (a0) -- (c0);
\draw (d1) -- (b0);
\draw (a0) -- (b0) -- (c1) -- (d1) -- (a0);
\draw (a0) -- (c1);
\draw (d1) -- (b0);
\draw (a0) -- (b1) -- (c0) -- (d1) -- (a0);
\draw (a0) -- (c0);
\draw (d1) -- (b1);
\draw (a0) -- (b1) -- (c1) -- (d1) -- (a0);
\draw (a0) -- (c1);
\draw (d1) -- (b1);
\draw (a1) -- (b1) -- (c1) -- (d1) -- (a1);
\draw (a1) -- (c1);
\draw (d1) -- (b1); 
\end{tikzpicture}

\vspace{-9mm}
\caption{A set representation}
\end{subfigure}
\caption{Potential Response Graph for Example \ref{ex:partial_monotonicity}.}
\label{fig:graph_partial_monotonicity}
\floatfoot{\textbf{Notes:} A vertex labeled $v_{d, z}$ corresponds to $D = d \,|\,Z = z$. Panel (a) shows the potential response graph. Panel (b) shows its set representation.}
\end{figure}
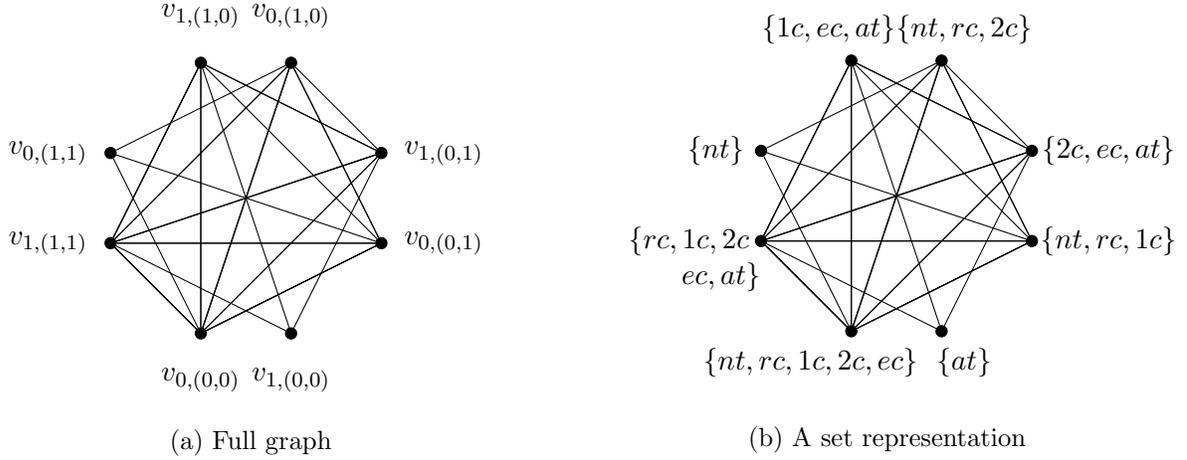

\begin{figure}[t]
\begin{center}
\includegraphics[scale=.9]{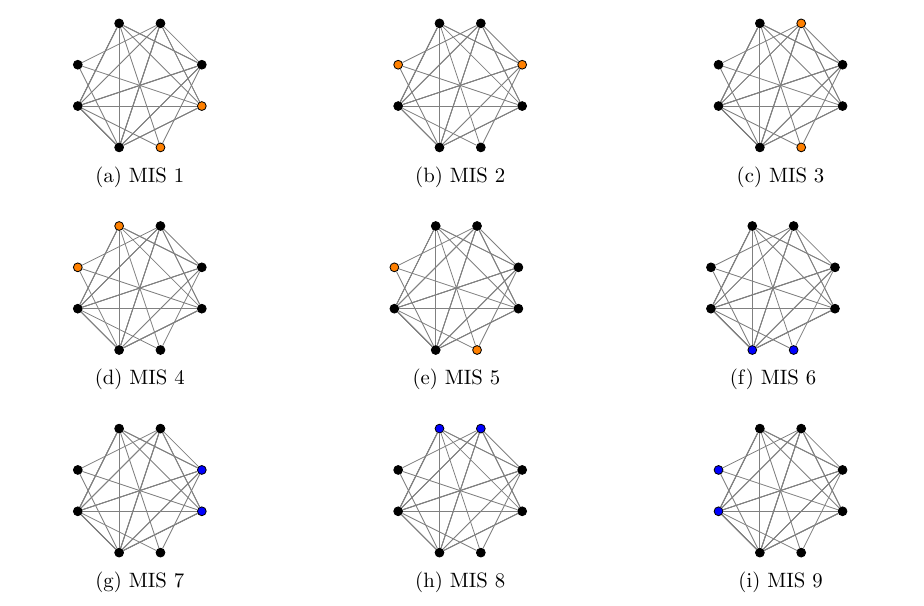}
\end{center}
\caption{Maximal Independent Sets (MISs) of $G$ in Example \ref{ex:partial_monotonicity}.}
\label{fig:pm_MIS}
\floatfoot{\textbf{Notes:}
The maximal independent sets in Panels (a)–(e) provide testable implications. The independent sets $V_{(0,0)},V_{(0,1)},V_{(1,0)},V_{(1,1)}$ in Panels (f)–(i) form a four-way partition of $V_G$.}
\end{figure}

\section{Testable Implications for Discrete Outcomes}
\subsection{Maximal Independent Set Inequalities}

This section presents the first main result: a testable implication of
support restrictions for discrete outcomes. We begin with an intuitive
argument based on Example~\ref{ex:partial_monotonicity}. Consider the set
$\{v_{1,(0,0)}, v_{0,(1,0)}\}$. These two vertices are not adjacent because
no latent type is compatible with both events --- namely, ``$D = 1$ when
$Z = (0,0)$'' and ``$D = 0$ when $Z = (1,0)$.'' The former event is
consistent only with type $at$, whereas the latter is consistent with types
$nt$, $rc$, or $2c$; see Figure~\ref{fig:graph_partial_monotonicity}-(b).\footnote{Another equivalent way to describe this separation is to represent $D(z)$ by a structural random coefficient model and consider the sets of latent variable values consistent with either of the events; See Appendix \ref{ssec:pm_random_coeff}.} Since these two events cannot occur jointly under
the modeling assumptions, it must be that
\[
P(D=1 \mid Z=(0,0)) + P(D=0 \mid Z=(1,0)) \leqslant 1.
\]
The potential response graph provides a systematic way to identify all
combinations of mutually exclusive events—and, equivalently, the sets of
latent types that cannot co-occur. Such mutually exclusive events correspond
to independent sets in the graph, with maximal independent sets generating
the tightest inequalities. Figure~\ref{fig:pm_MIS} lists all maximal
independent sets of $G$ in Example~\ref{ex:partial_monotonicity}.

Theorem \ref{thm:testable_implication} below formalizes this intuition.

\begin{theorem}\label{thm:testable_implication}
Suppose Assumptions \ref{A:unconf}--\ref{A:support} hold. Let $G$ be the potential response graph associated with $\cR^*$.
Let $\mathcal I_G$ be the collection of all maximal independent sets of $G$. Then,
\begin{align} 
\sum_{v_{r, z}\in I}P(R=r\,|\,Z=z)\leqslant 1,\;\; \forall I\in \mathcal I_G \label{eq:MIS_inequalities}
\end{align} 
 is necessary for the existence of the distribution $Q$ of $R^*$ supported on $\cR^*$ that induces $P$.
\end{theorem}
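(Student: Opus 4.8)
The plan is to push each observable conditional probability $P(R=r\mid Z=z)$ back onto the latent-type space, where the graph structure becomes a statement about disjointness of events. Fix $z\in\cZ$ and $r\in\cR_z$. Since $R=T(R^*,Z)$ and $\{Z=z\}$ has positive probability, on that event $R=T(R^*,z)$, so $\{R=r\}\cap\{Z=z\}=\{T(R^*,z)=r\}\cap\{Z=z\}$. The set $\{T(R^*,z)=r\}$ is a function of $R^*$ alone (with $z$ held fixed), so Assumption~\ref{A:unconf} lets us drop the conditioning: $P(R=r\mid Z=z)=P(T(R^*,z)=r)=Q(\{r^*\in\cR^{\cX}:T(r^*,z)=r\})$. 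By Assumption~\ref{A:support}, $Q$ is concentrated on $\cR^*$, so this equals $Q(A_{r,z})$, where I write $A_{r,z}:=\{r^*\in\cR^*:T(r^*,z)=r\}$.

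Next I read off the graph. By Definition~\ref{def:graph}, two distinct vertices $v_{r,z}$ and $v_{r',z'}$ are non-adjacent exactly when no $r^*\in\cR^*$ satisfies both $T(r^*,z)=r$ and $T(r^*,z')=r'$, i.e., exactly when $A_{r,z}\cap A_{r',z'}=\emptyset$. Hence for any independent set $I$ of $G$ (in particular any maximal one), the family $\{A_{r,z}:v_{r,z}\in I\}$ is pairwise disjoint. Combining this with the first step and finite additivity of $Q$,
$$\sum_{v_{r,z}\in I}P(R=r\mid Z=z)=\sum_{v_{r,z}\in I}Q(A_{r,z})=Q\!\left(\bigcup_{v_{r,z}\in I}A_{r,z}\right)\le Q(\cR^*)=1,$$
which is precisely \eqref{eq:MIS_inequalities}.

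As for where the care is needed: there is no deep obstacle — this is the ``easy'' direction of the overall characterization, essentially a bookkeeping translation from observables to latent types. The two points that must be handled cleanly are (i) the use of full independence $R^*\perp Z$, not merely mean independence, to remove the conditioning on $\{Z=z\}$; and (ii) checking that non-adjacency in $G$ yields genuine set-disjointness of the $A_{r,z}$ (rather than only ``no common named type''), which is immediate once $A_{r,z}$ is defined over all of $\cR^*$. Note that maximality of $I$ is not used for validity; it only guarantees the resulting inequalities are not implied by others. The substantive content — that this family of inequalities (suitably enlarged in non-regular cases) is also \emph{sufficient} — is what the later sharpness results establish.
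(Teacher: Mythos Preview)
Your proof is correct. It is precisely the direct probabilistic argument that the paper sketches in prose just before stating the theorem (the ``mutually exclusive events'' heuristic), made rigorous via the sets $A_{r,z}$.

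The paper's \emph{formal} proof, however, takes a different route: it defers Theorems~\ref{thm:testable_implication}--\ref{thm:sharpness_of_mis} jointly to a polytope-duality result (Lemma~\ref{L:graph_polytopes} and Theorem~\ref{T:main}), which characterizes the convex hull of the $K$-clique indicator vectors and shows it coincides with the half-space description~\eqref{eq:general_testable_implication}. The necessity direction in that framework is Step~1.1 of Lemma~\ref{L:polytope_duality}, where one checks $\bm{1}_S'\bm{1}_V\le \ell_G(V)$ for every vertex of the polytope --- the combinatorial shadow of your disjointness argument. That machinery is overkill for necessity alone but is indispensable for the sufficiency direction in Theorems~\ref{thm:general_testable_implication} and~\ref{thm:sharpness_of_mis}; the paper amortizes the investment by proving all three statements at once. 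Your direct argument is shorter and more transparent for Theorem~\ref{thm:testable_implication} in isolation; the paper's route buys a unified treatment and, more importantly, the converse.
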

Theorem \ref{thm:testable_implication} applies to \emph{any} support-type restriction as in Assumption \ref{A:support}, which covers most restrictions considered in the literature. One can use well-established algorithms to find all maximal independent sets of a graph (see Section \ref{ssec:computation}) and compute the inequalities above. Hence, the graph is not merely illustrative. It provides a complete model representation in which latent types correspond to $K$-cliques and testable restrictions correspond to independent sets, yielding an algorithmic route from economic assumptions to sharp moment inequalities.

 Applying Theorem \ref{thm:testable_implication} to Example \ref{ex:partial_monotonicity} gives the following set of testable implications:
\begin{align}
&~~P(D=1|Z=(0,0))+P(D=0|Z=(0,1))\leqslant 1\label{eq:pm_ineq1}\\
&~~P(D=1|Z=(0,1))+P(D=0|Z=(1,1))\leqslant 1\\
&~~P(D=1|Z=(0,0))+P(D=0|Z=(1,0))\leqslant 1\\
&~~P(D=1|Z=(1,0))+P(D=0|Z=(1,1))\leqslant 1\\
&~~P(D=1|Z=(0,0))+P(D=0|Z=(1,1))\leqslant 1\label{eq:pm_ineq5}.
\end{align}
These restrictions correspond to MISs 1--5 in Figure \ref{fig:pm_MIS} Panels (a)--(e).\footnote{The independent sets, which form a $K$-partition of $V_G$, also provide restrictions, but these are trivial because each restriction is based on a single conditioning event of $Z$.  For example, MIS6 in Panel (f) of Figure \ref{fig:pm_MIS} implies that $P(D=1|Z=(0,0))+P(D=0|Z=(0,0))\leqslant 1$, which is trivial by the definition of a conditional probability.}
For this example, the inequalities above can be shown to be \emph{sharp}, meaning that they are also sufficient for the existence of the joint distribution of $D^*$ satisfying the partial monotonicity assumption.

More generally, in settings with discrete responses, Theorem \ref{thm:testable_implication} recovers the inequalities derived in \citet{Kitagawa15} and \citet{bai2024}, and provides a simple characterization of the sharp testable implications for the setting in \citet{Kwon:2024aa}.\footnote{Continuous outcomes are discussed in Section \ref{ssec:continuous_outcome}.} Natural questions are ``Is \eqref{eq:MIS_inequalities} always sharp? If not, what are the sharp testable implications?'' The next section addresses these questions.

\subsection{Sharpness}
 We now provide a general characterization of sharp testable implications and conditions under which \eqref{eq:MIS_inequalities} is sharp. 
Recall that each $r^* \in \mathcal{R}^*$ induces a maximal clique of size $K$ in $G$, which contains exactly one vertex from each part $V_k$. Let $\mathcal{C}^*_G$ denote the set of such maximal cliques. For each $V\subseteq V_G$, let $\ell_G(V)=\max_{C \in \mathcal{C}_G^*}|C \cap V|\in \{1,\dots,K\}$ be the \emph{level} of $V$, which counts the maximum number of components $V$ shares with the elements of $\mathcal{C}^*_G$.  The following theorem characterizes the sharp testable implication of Assumptions  \ref{A:unconf}--\ref{A:support}.

\begin{theorem}\label{thm:general_testable_implication}
Suppose Assumptions \ref{A:unconf}--\ref{A:support} hold. Let $G$ be the potential response graph associated with $\cR^*$.
Let $V_G$  be the collection of all vertices of $G$, respectively. Then,
\begin{align} 
\sum_{v_{r, z}\in V}P(R=r\,|\,Z=z)\leqslant \ell_G(V), \;\; \forall\, V \subseteq V_G \label{eq:general_testable_implication}
\end{align} 
 is necessary and sufficient for the existence of the joint distribution $Q$ of $R^*$ supported on $\cR^*$ that induces $P$.	
\end{theorem}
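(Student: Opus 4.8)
The plan is to reformulate the problem geometrically and invoke a polyhedral characterization from combinatorial optimization. Let $N = |V_G|$ and identify each candidate observable configuration with a vector $\beta(P) = \big(P(R = r \mid Z = z)\big)_{v_{r,z} \in V_G} \in \R^N$. The support restriction and exogeneity say precisely that $P$ is inducible iff there exists $Q \in \Delta(\cR^*)$ with $\beta(P) = \sum_{r^* \in \cR^*} Q(r^*)\, \bm{1}_{C(r^*)}$, where $C(r^*) \in \cC_G^*$ is the $K$-clique associated with $r^*$ (one vertex from each $V_k$, namely $v_{T(r^*,z_k),z_k}$). Hence the set of inducible $\beta$ is exactly the convex hull $\mathbf{B}^* = \operatorname{conv}\{\bm{1}_C : C \in \cC_G^*\}$. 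The theorem therefore reduces to: $\beta \in \mathbf{B}^*$ if and only if $\beta \geqslant 0$ and $\sum_{v \in V} \beta_v \leqslant \ell_G(V)$ for all $V \subseteq V_G$. (Nonnegativity and the ``$=1$'' constraints within each $V_k$ — which are the case $V = V_k$, where $\ell_G(V_k) = 1$ — are already built into $\beta(P)$, so necessity of \eqref{eq:general_testable_implication} is the easy direction, obtained by averaging the incidence-vector inequality $\bm{1}_C \cdot \bm{1}_V = |C \cap V| \leqslant \ell_G(V)$ against $Q$.)

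For sufficiency, I would first observe that $\cC_G^*$ has the structure of the set of bases of a natural combinatorial object: because every $C \in \cC_G^*$ meets each part $V_k$ in exactly one vertex, $\cC_G^*$ is the common independent set / basis family arising from the intersection of the $K$-partition matroid (which forces $|C \cap V_k| \leqslant 1$) with a second matroid or, more precisely, with the ``compatibility'' structure encoded by the edges of $G$ — the columns of $A^*$ restricted to admissible types. The right framework is the matroid intersection polytope theorem of \citet{edmonds1979matroid}, which states that the convex hull of common bases is cut out by the rank inequalities of the two matroids. Concretely, $\mathbf{B}^*$ is the face of the matroid intersection polytope obtained by intersecting with the hyperplanes $\{\sum_{v \in V_k} \beta_v = 1\}$, and the rank function that appears is exactly $V \mapsto \ell_G(V)$, which by construction equals the maximum size of $C \cap V$ over admissible cliques. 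One must check that $\ell_G$ is (sub)modular and that, combined with $K$-partition rank constraints, it yields an integral polytope whose vertices are precisely the $\bm{1}_C$; this is where the perfect-graph / $K$-partite structure and \citet{chvatal1975certain} enter, guaranteeing that no fractional extreme points appear and that the listed inequalities are not merely valid but define the polytope.

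The main obstacle — and the step I would spend the most care on — is verifying that $\cC_G^*$ genuinely has the matroid-intersection structure in the required generality, i.e.\ that for an \emph{arbitrary} support set $\cR^* \subseteq \cR^\cX$ the family of admissible $K$-cliques is a matroid intersection and that its rank function localizes to $\ell_G$. In general $\cR^*$ is an arbitrary subset, so $\cC_G^*$ need not literally be bases of two matroids; the honest route is probably not to force a matroid interpretation but to argue directly. I would do this by LP duality: take $\beta \geqslant 0$ satisfying all inequalities in \eqref{eq:general_testable_implication}, and show the linear program $\max\{\bm{1}^\top Q : A^* Q = \beta,\ Q \geqslant 0\}$ — or its feasibility version — has value $1$; the dual optimum is attained at a $0/1$ vector by total dual integrality / the perfect-graph structure of the $K$-partite graph $G$, and that $0/1$ dual solution is exactly the indicator of some $V \subseteq V_G$ with the constraint reading $\sum_{v \in V}\beta_v \leqslant \ell_G(V)$. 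So the logical skeleton is: (i) set up $\mathbf{B}^* = \operatorname{conv}\{\bm{1}_C\}$; (ii) prove necessity by averaging; (iii) for sufficiency, dualize and show every facet-defining dual vector is the indicator of a vertex subset with right-hand side $\ell_G(V)$, invoking integrality of the relevant polytope. I expect step (iii), specifically pinning down that the only facets are of the stated form, to require the combinatorial input (matroid intersection or perfect-graph clique polytopes) and to be the crux of the argument.
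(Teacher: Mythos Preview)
Your geometric reformulation $\mathbf{B}^* = \mathrm{conv}\{\bm{1}_C : C \in \cC_G^*\}$ and the necessity argument (averaging $|C \cap V| \leqslant \ell_G(V)$ against $Q$) are correct and match the paper. The gap is in your sufficiency sketch. Your fallback route---LP duality with ``total dual integrality / the perfect-graph structure of the $K$-partite graph $G$''---cannot work here: $G$ is \emph{not} assumed perfect in Theorem~\ref{thm:general_testable_implication}. Perfectness is precisely the extra hypothesis (Assumption~\ref{A:support_regularity}(i)) that distinguishes Theorem~\ref{thm:sharpness_of_mis} from Theorem~\ref{thm:general_testable_implication}, and the paper gives explicit examples (the pure exclusion restriction with $K \geqslant 3$) where $G$ is imperfect yet Theorem~\ref{thm:general_testable_implication} still holds. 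So invoking \citet{chvatal1975certain} or any TDI argument that relies on perfectness would at best recover the regular case, not the general theorem.

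The paper's route resolves exactly the worry you raised about $\cC_G^*$ being ``too arbitrary'' via a two-stage decomposition. First, enlarge $\cC_G^*$ to its downward closure $\cS^* = \{S : S \subseteq C \text{ for some } C \in \cC_G^*\}$, which is automatically an independence system for any $\cR^*$; the paper represents $\cS^*$ as a finite intersection of matroids and invokes the matroid-intersection polytope theorem to get $\mathrm{conv}\{\bm{1}_S : S \in \cS^*\} = \{P \geqslant 0 : P'\bm{1}_V \leqslant \ell_G(V)\ \forall V\}$, with $\ell_G$ arising as the rank because the maximal members of $\cS^*$ inside $V$ are exactly the sets $C \cap V$. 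Second, intersect this polytope with the $K$ block-simplex constraints $\Delta^K = \{P : P'\bm{1}_{V_k} = 1\ \forall k\}$: vertices of the intersection remain binary (each hyperplane $\{P'\bm{1}_{V_k}=1\}$ is the equality face of a level-$1$ inequality already present), and a binary point in $\Delta^K$ satisfying all level bounds must be $\bm{1}_C$ for some $C \in \cC_G^*$, since otherwise the selected $K$-tuple $V$ would have $\bm{1}_V'\bm{1}_V = K > \ell_G(V)$, violating its own constraint. This enlarge-then-cut device is the missing idea that lets the matroid machinery apply without any perfectness assumption on $G$.
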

The inequalities in Theorem \ref{thm:general_testable_implication} fully characterize the empirical content of the underlying model. Note that, for any $I\in\cI_G$, $\ell_G(I)=1$ because each clique intersects any independent set in at most one vertex.
Hence, Theorem \ref{thm:testable_implication} provides a subset of the inequalities in \eqref{eq:general_testable_implication}, which may not be sharp in general.
As discussed earlier, the MIS inequalities \eqref{eq:pm_ineq1}–\eqref{eq:pm_ineq5} in Example \ref{ex:partial_monotonicity} are sharp and have far lower cardinality than \eqref{eq:general_testable_implication} --- only five inequalities compared to $2^{|V_G|}=256$. Many potential outcome models (including Example  \ref{ex:partial_monotonicity}) possess additional structure that guarantees this reduced set of inequalities remains sharp.

We now formalize this structure as a regularity condition on $\mathcal{R}^*$, which directly translates into a structural property of $G$. Recall that an (induced) \textit{subgraph} $G'=G[V]$ of $G$ is a subset $V$ of the vertices of $G$ and all edges between them. A graph $G$ is \emph{perfect} if, for every induced subgraph $G'$, the \emph{chromatic number} $\chi(G')$ (the minimum number of independent sets needed to partition $V_{G'}$) equals the \emph{clique number} $\omega(G')$ (the size of the largest clique).

\begin{assumption}[Regular Support Restrictions]\label{A:support_regularity}
	The potential response graph $G$ associated with a restriction $\mathcal{R}^*$ satisfies two conditions:
	\begin{enumerate}
		\item $G$ is perfect.
		\item Every maximal clique in $G$ corresponds to a support point $r^* \in \mathcal{R}^*$. 
	\end{enumerate}
\end{assumption}

The first condition can be verified either by checking any of the known analytical sufficient conditions discussed in the next section or by existing numerical algorithms (see Section \ref{sec:computation_inference}). The second condition ensures a one-to-one correspondence between the model’s support points and the maximal cliques in $G$; it rules out cases where the graph implies more possible types than actually exist in $\mathcal{R}^*$. This condition can also be checked analytically or numerically. Notably, when $Z$ is binary, these regularity conditions are automatically satisfied.

The following is the third main result of the paper, which shows that the regularity of a support restriction can significantly simplify the analysis by limiting the class of constraints that must be checked. 
\begin{theorem}\label{thm:sharpness_of_mis}
Suppose Assumptions \ref{A:unconf}--\ref{A:support_regularity} hold. Then, \eqref{eq:MIS_inequalities}
 is necessary and sufficient for the existence of the joint distribution $Q$ of $R^*$ supported on $\cR^*$ that induces $P$.
\end{theorem}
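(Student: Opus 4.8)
The plan is to deduce Theorem~\ref{thm:sharpness_of_mis} from the general characterization in Theorem~\ref{thm:general_testable_implication} by showing that, under Assumption~\ref{A:support_regularity}, every inequality in \eqref{eq:general_testable_implication} is implied by the MIS inequalities \eqref{eq:MIS_inequalities}. Since Theorem~\ref{thm:general_testable_implication} already gives that \eqref{eq:general_testable_implication} is necessary and sufficient, and Theorem~\ref{thm:testable_implication} gives that \eqref{eq:MIS_inequalities} is necessary, it suffices to prove that the polytope cut out by \eqref{eq:MIS_inequalities} (together with the simplex constraints $P(R=r\mid Z=z)\geqslant 0$ and $\sum_{r\in\cR_z}P(R=r\mid Z=z)=1$) is contained in the polytope cut out by \eqref{eq:general_testable_implication}. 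Concretely, fix $V\subseteq V_G$ with level $\ell=\ell_G(V)$; I want to show $\sum_{v_{r,z}\in V}P(R=r\mid Z=z)\leqslant \ell$ follows from the MIS inequalities.

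The key structural step uses perfection of $G$ applied to the induced subgraph $G[V]$. Because $G$ is perfect, so is $G[V]$, and by the perfect graph property (and the fact that $G$ is $K$-partite, so $G[V]$ contains no clique larger than $K$ and in fact no clique larger than $\ell$ by the second regularity condition) one has $\chi(G[V])=\omega(G[V])$. The clique number $\omega(G[V])$ equals $\ell_G(V)=\ell$: any clique of $G$ lying inside $V$ has size at most $\ell$ by definition of the level, and conversely the maximal clique $C^*\in\cC^*_G$ achieving $|C^*\cap V|=\ell$ gives a clique of size $\ell$ inside $V$ — here condition~2 of Assumption~\ref{A:support_regularity} (every maximal clique corresponds to a support point, i.e.\ lies in $\cC^*_G$) ensures $\omega(G[V])$ is genuinely realized by an element of $\cC^*_G$, so $\omega(G[V])=\ell$. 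Then $\chi(G[V])=\ell$ means $V$ can be partitioned into $\ell$ independent sets $I_1,\dots,I_\ell$ of $G[V]$; each $I_j$ is an independent set of $G$, hence extends to some maximal independent set $\tilde I_j\in\cI_G$. Summing the MIS inequalities for $\tilde I_1,\dots,\tilde I_\ell$ and using nonnegativity of $P(R=r\mid Z=z)$ to drop the terms outside $V$, I obtain $\sum_{v_{r,z}\in V}P(R=r\mid Z=z)=\sum_{j=1}^\ell\sum_{v_{r,z}\in I_j}P(R=r\mid Z=z)\leqslant\sum_{j=1}^\ell\sum_{v_{r,z}\in\tilde I_j}P(R=r\mid Z=z)\leqslant\ell$, which is exactly \eqref{eq:general_testable_implication} for this $V$. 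Since $V$ was arbitrary, the MIS inequalities imply all of \eqref{eq:general_testable_implication}, and combined with Theorem~\ref{thm:general_testable_implication} this gives sufficiency; necessity is Theorem~\ref{thm:testable_implication}.

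The main obstacle I anticipate is making the clique-number identity $\omega(G[V])=\ell_G(V)$ fully rigorous, and in particular pinning down precisely where condition~2 of Assumption~\ref{A:support_regularity} is needed. Without it, a maximal clique of $G$ (hence of $G[V]$) need not be an element of $\cC^*_G$ — it could be a ``spurious'' clique not coming from any support point — so the clique number of $G[V]$ could in principle exceed $\max_{C\in\cC^*_G}|C\cap V|$, breaking the argument; condition~2 closes exactly this gap by identifying the maximal cliques of $G$ with $\cC^*_G$. A secondary technical point is the claim that any independent set of $G[V]$ is an independent set of $G$ and extends to a maximal one in $\cI_G$: the first part is immediate since $G[V]$ is an induced subgraph (non-adjacency in $G$ is inherited), and the second is a standard greedy extension argument, so these are routine. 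I would also remark that when $Z$ is binary the graph is automatically perfect (bipartite graphs are perfect) and every maximal clique is an edge corresponding to some $r^*$, recovering the classical two-marginal results as a special case.
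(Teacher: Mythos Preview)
Your proposal is correct and in fact coincides almost exactly with the argument the paper gives informally in its discussion (Section~\ref{subsec:discussion_main}, third point): under regularity, $\ell_G(V)=\omega(G[V])=\chi(G[V])$, so $V$ partitions into $\ell_G(V)$ independent sets and the level inequality follows by summing MIS inequalities. The formal appendix proof takes a slightly different route: rather than reducing Theorem~\ref{thm:sharpness_of_mis} to Theorem~\ref{thm:general_testable_implication} via the chromatic-number argument, it proves both theorems simultaneously from a polytope duality lemma (Lemma~\ref{L:graph_polytopes}), which in turn invokes Chv\'atal's characterization of the clique polytope of a perfect graph (Lemma~\ref{L:perfect}) together with the matroid intersection polytope theorem. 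Your approach is arguably more elementary for the reduction step, since it uses only the \emph{definition} of perfection ($\chi=\omega$ for all induced subgraphs) rather than Chv\'atal's polytope theorem; but it still relies on Theorem~\ref{thm:general_testable_implication}, whose proof in the paper requires the matroid machinery, so you have not avoided the deeper combinatorics altogether. Your handling of condition~2 is exactly right: it is precisely what guarantees $\omega(G[V])\leqslant\ell_G(V)$, since any clique in $G[V]$ sits inside a maximal clique of $G$, which by condition~2 belongs to $\cC^*_G$.
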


\subsection{Discussion} \label{subsec:discussion_main}
Several comments are in order.  
First, the framework easily accommodates additional control variables. Let $W \in \mathcal{W}$ denote a vector of observed covariates such that $R^* \perp Z \mid W$. This conditional independence assumption leaves the support restriction $\mathcal{R}^*$, and hence the independent sets of the potential response graph $G$, unchanged.
Repeating the argument of Theorems \ref{thm:testable_implication}--\ref{thm:sharpness_of_mis}, and conditioning on $W = w$, we obtain a richer set of testable implications:\footnote{If $\cR^*$ varies with $w$,  let $G_w$ represent the support restriction for a given $w$. Then, $\mathcal I_G$ in \eqref{eq:MIS_cond_inequalities} can be replaced with $\mathcal I_{G_w}$. Such a restriction appears in \cite{heiler2024treatmentevaluationintensiveextensive}.}
\begin{align} 
\sum_{v_{r, z}\in I}P(R=r\,|\,Z=z, W = w)\leqslant 1,\;\; \forall I\in \mathcal I_G, \; \forall_{a.s.} w \in \mathcal{W}. \label{eq:MIS_cond_inequalities}
\end{align} 
When $W$ includes continuous variables, these inequalities provide a computationally tractable way to characterize sharp testable implications of the modeling assumptions. The implementation of corresponding testing procedures is discussed in Section \ref{ssec:inference_cont}. 

Second, verifiable sufficient conditions for Assumption \ref{A:support_regularity} exist. For example, if $Z$ is binary, all support restrictions are regular because the potential response graph is bipartite. Indeed, bipartite graphs can only have even cycles, and each support point (maximal clique) corresponds to a single link in a graph. If $Z$ is multi-valued, a useful sufficient condition for Assumption \ref{A:support_regularity} (i) is as follows.

\begin{condition}[Comparability]\label{cond:comparability}
There exists a strict partial order $\prec$ on $V_G$ such that for any pair $u,v\in V_G$ having an edge, either $u\prec v$ or $v\prec u$.
\end{condition}
This condition can be verified from the underlying restrictions.
For example, consider Example \ref{ex:iv} with multi-valued $Z$ in an ordered set and assume the following exclusion and monotonicity:
	$Y(d, z) =_{a.s.} Y(d, z'), \forall d \in \mathcal{D}, \forall z, z' \in \mathcal{Z}$ and $D(z_1) \leqslant_{a.s.} \dots \leqslant_{a.s.} D(z_K)$. This model satisfies Condition \ref{cond:comparability} with the following order:\footnote{See Section \ref{ssec:multivalued_iv} and Appendix \ref{ssec:testing_exclusion_monotonicity} for further details.}
    \begin{align}
v_{(y,d,z)} \prec v_{(y',d',z')}~ \Leftrightarrow ~
\begin{cases}
\text{(i)}\ z < z' \text{ and } d < d', \\
\text{or } \text{(ii)}\ z < z',\ d = d',\ y = y'.
\end{cases}
\end{align}
More generally, there are many known subclasses of perfect graphs \citep{hougardy06}, some of which could be verified on a case-by-case basis. Otherwise, one can use the numerical algorithm discussed in Section \ref{sec:computation_inference}.  

Next, consider Assumption \ref{A:support_regularity} (ii).
For each $v \in V_G$, let $S_v = \{ r^* \in \mathcal R^\mathcal X : v = (r,z),\; r = T(r^*,z)\}$
be the set of latent types compatible with observing $v$. We say that the family $\{S_v\}$ satisfies the \emph{pairwise incompatibility criterion} if the following condition holds.
\begin{condition}[Pairwise Incompatibility Criterion] \label{cond:pw_incompatibility}
 For every finite tuple $(v_1,\dots,v_K)$,
\[
\bigcap_{k=1}^K S_{v_k} = \emptyset 
\quad \Rightarrow \quad 
S_{v_i} \cap S_{v_j} = \emptyset 
\;\; \text{for some } i \neq j.
\]   
\end{condition}
In words, if no latent type $r^*$ is compatible with observing $v_1,\dots,v_K$ jointly, then at least one pair $(v_i,v_j)$ must already be mutually incompatible with the support restriction. The criterion is equivalent to Assumption~\ref{A:support_regularity} (ii); see Proposition~\ref{prop:pairwise_incompatibility}. Its appeal is that it can often be verified directly, without analyzing higher-order interactions among vertices. In particular, it rules out situations in which a support restriction is violated only through comparisons involving more than two vertices.
In the example above, if $(v_1,\dots,v_K)$ does not correspond to a feasible support point, then there must exist a pair that violates either the exclusion restriction or monotonicity. Hence, the example satisfies the criterion. More generally, the pairwise incompatibility criterion can be verified for commonly used restrictions (see Appendix~\ref{ssec:pairwise_incompatibility}). 

Third, our results on sharp characterization follow from a novel duality theorem for a class of convex polytopes, which may be of independent interest. To elaborate, denote $K = |\mathcal{Z}|$, $N = \sum_{z \in \mathcal{Z}}|\mathcal{R}_z|$ and $M = |\mathcal{R}^*|$. The support restriction $\mathcal{R}^*$ can be represented as a binary matrix $A^* \in \{0, 1\}^{N \times M}$, which we call a \textit{support matrix}. The rows of $A^*$ are indexed by pairs $\{(r, z): r \in \mathcal{R}_z, z \in \mathcal{Z}\}$, the columns are indexed by support points $r^* \in \mathcal{R}^*$, and the entries $\bm{1}(r = T(r^*,z))$ indicate whether a value of the observables $(r,z)$ is consistent with a support point $r^*$. For each $k \in \{1, \dots, K\}$, let $P_k =(P(R = s\,|\,Z = z_k))_{s \in \mathcal{R}_{z_k}}$ denote the observed conditional distributions, and $\beta(P) = (P_1', \dots, P_K')'$. Any joint distribution $Q$ on $\mathcal{R}^*$ induces a vector $\beta(P)$ via $\beta(P) = A^*Q$. Such vectors form a convex polytope
\begin{equation} \label{eq:P}
\mathbf{B}^* = \left\{A^*Q, \text{ for some } Q \in \Delta(\mathcal{R}^*) \right\}.
\end{equation}
given in its vertex ($V$) representation. Theorem \ref{thm:general_testable_implication} provides a half-space ($H$) representation of $\mathbf{B}^*$, establishing that it coincides with the set of vectors $\beta(P)$ satisfying \eqref{eq:general_testable_implication}. 

Under Assumption \ref{A:support_regularity},
$\ell_G(V) = \omega(G[V]) = \chi(G[V])$, for any subset $V \subseteq V_G$.\footnote{The relation
$\ell_G(V) \leqslant \omega(G[V]) \leqslant \chi(G[V])$
always holds because $\mathcal{C}^*_G\subseteq \mathcal C_G$, and coloring the vertices of the largest clique requires at least $\omega(G[V])$ distinct colors. Assumption \ref{A:support_regularity}-2 implies that $\ell_G(V)=\omega(G[V])$. Assumption \ref{A:support_regularity}-1 implies $ \omega(G[V])=\chi(G[V])$.}
This equality permits a partition of $V$ into independent sets $\{I_1, \dots, I_{\chi(G[V])}\}$ and rewrite \eqref{eq:general_testable_implication} as
\begin{align}
\sum_{j=1}^{\chi(G[V])}\sum_{v_{r,z} \in I_j} P(R = r \mid Z = z)\leqslant \chi(G[V]),~V\subseteq V_G.	
\end{align}
The MIS inequalities ensure $\sum_{v_{r,z} \in I_j} P(R = r \mid Z = z)\leqslant 1$ for any $I_j$, which in turn implies the inequality above. Hence, under Assumption \ref{A:support_regularity}, the inequalities in \eqref{eq:MIS_inequalities} fully characterize the sharp testable implication.

Fourth, the question we consider can also be stated as follows: When does there exist a joint distribution $Q$ with a given set of marginals $P_1, \dots, P_K$ and a support $\mathcal{R}^*$? The case of $K = 2$ has been resolved in \citet{artstein1983distributions}. In Theorem \ref{T:main} in the Appendix, we provide an extension to $K > 2$ for finite $\mathcal{R}^*$. We also give examples of support restrictions such that $\mathbf{B}^*$ is a strict subset of a polytope characterized by \eqref{eq:MIS_inequalities} when either one of the two conditions in Assumption \ref{A:support_regularity} is violated. 
To the best of our knowledge, Theorem \ref{T:main} is new and may be of independent interest, e.g., in multi-marginal optimal transport problems; see \citet{pass2015multi}. We discuss extensions to continuous settings in Section \ref{sec:extensions}.

Fifth,  we have found that most support restrictions considered in the literature are regular. A notable exception is the IV model with multi-valued instruments ($K\geqslant 3$), where only exclusion is assumed: $Y(d,z)=Y(d,z'), \forall (d,z,z')\in \cD\times\cZ^2$, without any additional shape restrictions. This support restriction is non-regular because one can show its potential response graph $G$ is imperfect.\footnote{Interestingly, adding a monotonicity assumption to the exclusion restriction leads to a perfect potential response graph. Hence, the joint support restriction of exclusion and monotonicity is regular. See Section 
\ref{ssec:testing_exclusion_monotonicity}.} Hence, while regularity holds for many structured restrictions (e.g., monotonicity/exposure maps), it may fail in some economically important models. In those cases, Theorem \ref{thm:general_testable_implication} still characterizes sharp content, and our computational approach provides a way to generate additional inequalities beyond MIS.
Indeed, for the example above with binary $Y$, one can show that the inequalities in \eqref{eq:general_testable_implication} with $\ell_G(V)=1$ recover those in \cite{Pearl1995Testability}, and the inequalities with $\ell_G(V)=2$ and $\ell_G(V)=3$  recover the additional sharp inequalities found by \cite{KedagniMourifie2020} (see Appendix \ref{ssec:instrumental_inequalities}).
Furthermore, we provide an algorithm that identifies collections of higher-level vertex sets $V$ with $\ell_G(V)=k$ for $k\geqslant 2$, which yield the additional inequalities needed to complete the characterization. 

We conclude the discussion here with two remarks pointing out connections to the literature on partial identification using random sets \citep[see, e.g.,][]{galichon2011set, chesher2017generalized, molchanov2018random}.

 \begin{remark}[Binary Instrument and Artstein's Theorem] \label{rem:artstein}
	If $Z \in \{0, 1\}$, sharp testable implications of any support restriction $\mathcal{R}^*$ follow directly from Artstein's Theorem  \citep{artstein1983distributions}. To elaborate, let $\mathcal{R}_{z}$ denote the support of $R$, conditional on $Z = z$, for $z \in \{0, 1\}$, and $\mathcal{R}^* \subseteq \mathcal{R}_0 \times \mathcal{R}_1$ be any support restriction. Let $P_0(\cdot) = P(R \in \cdot \,|\, Z = 0)$ and $P_1(\cdot) = P(R \in \cdot\,|\,Z = 1)$ denote the observed conditional distributions. Define a correspondence $F: \mathcal{R}_0 \rightrightarrows \mathcal{R}_1$ as $F(r_0) = \{r_1 \in \mathcal{R}_1: (r_0, r_1) \in \mathcal{R}^*\}$, and assume, for simplicity, that $F(r_0)$ is compact-valued.\footnote{For a more general statement, see, e.g., Theorem 2.13 in \citet{molchanov2018random}. } Then, by Theorem 3.1 in \citet{artstein1983distributions}, a joint distribution $Q \in \Delta(\mathcal{R}_0 \times \mathcal{R}_1)$ supported on $\mathcal{R}^*$ with marginals $P_0$ and $P_1$ exists if and only if
	\[
	P_1(A) \leqslant P_0(\{r_0 \in \mathcal{R}_0: F(r_0) \cap A \ne \varnothing\}), 
	\]
	for all compact sets $A \subseteq \mathcal{R}_1$. Re-arranging and returning to the familiar notation,
	\begin{equation} \label{E:artstein}
	P(R \in A \,|\,Z = 1) + P(F(R) \subseteq A^c\,|\,Z = 0) \leqslant 1.
	\end{equation}

	 To illuminate the connection with our Theorem \ref{thm:testable_implication}, suppose $\mathcal{R}$ is finite and consider any support restriction $\mathcal{R}^*$. In the corresponding bi-partite graph $G$, each link corresponds to a maximal clique, so no maximal cliques can be missing from $\mathcal{C}_G$. Moreover, any cycle in $G$ must have even length, meaning that $G$ is perfect. Thus, any $\mathcal{R}^*$ is regular, and it suffices to consider the inequalities in \eqref{eq:MIS_inequalities} corresponding to maximal independent sets. Such sets are precisely $(A, \{r_0 \in \mathcal{R}_0: F(r_0) \in A^c\})$, for all $A \subseteq \mathcal{R}_1$.
     
	With some algebra, sharp testable implications derived in \citet{Kitagawa15} and those implied by \citet{Kwon:2024aa} follow directly from Artstein's theorem via \eqref{E:artstein} (see Appendix \ref{ssec:instrumental_inequalities}).  Our results extend this characterization to multi-marginal setting. \qed
\end{remark}

\begin{remark}[Redundant Inequalities] 
Some of the inequalities derived in Theorem \ref{thm:testable_implication} may be redundant. For binary $Z$, the smallest set of non-redundant inequalities is characterized in \citet{luo_ponomarev_wan}. 
	Developing similar arguments in the present setting is more challenging. A simple implication is as follows. 
	Let $I^*, I_1, \dots, I_S \in \mathcal{I}_G$ be such that 
\[
\sum_{s = 1}^S \bm{1}_{I_s} = \sum_{k = 1}^{S-1} \bm{1}_{V_k^*} + \bm{1}_{I^*},
\] 
for some $V_1^*, \dots, V_{S-1}^*$. Then, $I^*$ is redundant given $I_1, \dots, I_S$, since  
\[
\sum_{s = 1}^S P'\bm{1}_{I_s} \leqslant S \iff S - 1 + P'\bm{1}_{I^*}  \leqslant S \iff  P'\bm{1}_{I^*} \leqslant 1.
\]
 For example, in Figure \ref{fig:pm_MIS}, the inequality for MIS 5 in Panel (d)  is redundant, given MIS 3 and MIS 4 in Panels (c) and (d).  On the other hand, if, in each part $V_k$, there is a vertex that does not belong to any $I \in \mathcal{I}_G$, such implications are impossible. This is the case, for example, in the encouragement design model studied in \citet{bai2024}. \qed 
\end{remark}

\section{Extensions} \label{sec:extensions}
This section provides extensions of the main theoretical results in two directions.  Section \ref{ssec:continuous_outcome} extends Theorems \ref{thm:testable_implication}--\ref{thm:sharpness_of_mis} to accommodate general (e.g., continuous) outcome variables. As an application of this result, we characterize the sharp testable implications of IV validity with multi-valued treatment and instruments. Section \ref{ssec:multiple_models} presents a result that is useful for the practitioner to compare the testable implications of multiple models.

\subsection{Testable Implications for General Outcome Variables}\label{ssec:continuous_outcome}

Let $\cR$ be a subset of a Polish space, and let $\cX$ be a finite set. Let $\cR^*\subseteq \cR^{\cX}$ be the support of the potential response variable $R^*=(R(x))_{x\in\cX}$.
To characterize testable implications, we construct a sequence of partitions of the outcome space. For each $n\in\mathbb N$ and $k\in\{1,\dots,K\}$, let $\cA_{n,k}\equiv\{A_{1,k},\dots,A_{n,k}\}$ be a partition of $\cR_{z_k}$, consisting of $n$ nonempty sets.  For any $m\leqslant n$, let $\cA_{n,k}$ be a refinement of $\cA_{m,k}$. That is, every element $A_{j,k}$ of $\cA_{n,k}$ is contained in some element of $\cA_{m,k}$. One can construct such a sequence by recursively splitting the cells in $\cA_{m,k}$ into smaller disjoint sets. 

For each $n$, we define the potential response graph $G_n=(V_{G_n},E_{G_n})$ associated with $\cR^*$ as an undirected graph with vertices $V_{G_n} = \{v_{s,k}: s\in \{1,\dots,n\}, k \in \{1,\dots,K\} \}$ and edges $E_{G_n}=\{(v_{s,k},v_{s',k'})\in V_{G_n}\times V_{G_n}:\exists \, r^* \in \mathcal R^*: T(r^*,z_k) \in A_{s,k}, T(r^*,z_{k'}) \in A_{s',k'}\}.$ This construction naturally generalizes the definition of $G$ for discrete variables to the current setting.
The following theorem characterizes the sharp testable implication of $\cR^*$.

\begin{theorem}\label{thm:testable_implication_general_text}
Suppose Assumptions \ref{A:unconf}-\ref{A:support} hold. Let $\{G_n\}$ be a sequence of potential response graphs associated with $\cR^*$ and partitions $\{\cA_{n,k}\}$.
Let $\mathcal I_{G_n}$ be the collection of all maximal independent sets of $G_n$. Then,
\begin{align} 
\sum_{v_{s, k}\in I}P(R\in A_{s,k}\,|\,Z=z_k)\leqslant 1,\;\; \forall I\in \mathcal I_{G_n}, \text{and } n\in\mathbb N \label{eq:MIS_inequalities_n_text}
\end{align} 
is necessary for the existence of the joint distribution $Q$ of $R^*$ supported on $\cR^*$ that induces $P$. If Assumption \ref{A:support_regularity} holds for $G=G_n$ for every $n$, \eqref{eq:MIS_inequalities_n_text} is also sufficient. 
\end{theorem}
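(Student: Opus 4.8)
The plan is to reduce the continuous case to the discrete case already handled in Theorems \ref{thm:testable_implication} and \ref{thm:sharpness_of_mis} by passing through the sequence of finite partitions, and then to take a limit using a tightness/weak-compactness argument. For the \emph{necessity} direction, fix $n$ and consider the coarsened response $R_n^{(k)} \equiv$ the index $s$ such that $T(R^*, z_k) \in A_{s,k}$. Because $\cA_{n,k}$ is a genuine partition of $\cR_{z_k}$ and $R^* \perp Z$, the pushforward distribution of the discretized potential responses is a distribution supported on the discretized support set, and $G_n$ is exactly the potential response graph of that discretized model in the sense of Definition \ref{def:graph}. Hence Theorem \ref{thm:testable_implication} applies verbatim to the discretized model and yields \eqref{eq:MIS_inequalities_n_text} for that $n$; since $n$ was arbitrary, necessity follows. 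The only thing to check carefully here is that $\{v_{s,k}, v_{s',k'}\}$ is an edge of $G_n$ precisely when the events ``$R\in A_{s,k}$ given $z_k$'' and ``$R\in A_{s',k'}$ given $z_{k'}$'' are jointly compatible with $\cR^*$, which is immediate from the stated construction of $E_{G_n}$.

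For the \emph{sufficiency} direction under Assumption \ref{A:support_regularity} for every $G_n$, I would argue as follows. Suppose $P$ satisfies \eqref{eq:MIS_inequalities_n_text} for all $n$ and all $I \in \mathcal I_{G_n}$. For each fixed $n$, regularity of $G_n$ together with Theorem \ref{thm:sharpness_of_mis} (applied to the discretized model) gives a distribution $Q_n \in \Delta(\cR^*)$ whose induced discretized conditional distributions match $P$ at resolution $n$, i.e.\ $Q_n(T(R^*,z_k)\in A_{s,k}) = P(R\in A_{s,k}\mid Z=z_k)$ for all $s,k$. Using the nesting $\cA_{n,k}$ refines $\cA_{m,k}$ for $m\le n$, the sequence $\{Q_n\}$ is a family of probability measures on the Polish space $\cR^{\cX}$ all of whose pushforwards under $r^*\mapsto T(r^*,z_k)$ agree on coarser partitions. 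The key step is to extract a weakly convergent subsequence $Q_{n_j}\Rightarrow Q$; for this one needs tightness of $\{Q_n\}$ on $\cR^{\cX}$, which can be obtained because each marginal pushforward is, on the generating algebra of cells, pinned down by the fixed measure $P(\cdot\mid Z=z_k)$ (itself tight as a Borel measure on a Polish space), and $\cR^*$ can be exhausted by compacts. Along the convergent subsequence, the portmanteau theorem and the refinement property let us pass to the limit and conclude that $Q$ is supported on (the closure of) $\cR^*$ and induces exactly $P$: for any cell $A_{s,k}$ appearing at level $n$, all $Q_{n_j}$ with $n_j\ge n$ assign it the value $P(R\in A_{s,k}\mid Z=z_k)$, so the limit does too, and since the cells across all $n$ generate the Borel $\sigma$-algebra of $\cR_{z_k}$, the full conditional law of $T(R^*,z_k)$ under $Q$ equals $P(\cdot \mid Z=z_k)$. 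Finally, $R^*\perp Z$ holds because $Q$ is a fixed law for $R^*$ that does not depend on $z$.

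The main obstacle I anticipate is the limiting argument — specifically, establishing tightness of $\{Q_n\}$ on $\cR^{\cX}$ and ensuring the limit $Q$ is supported on $\cR^*$ itself rather than merely its closure (and handling the boundary carefully if $\cR^*$ is not closed). One clean way to sidestep part of this is to work on the product of one-point compactifications or to note that the relevant pushforwards are tight, hence the coordinate laws are tight, hence (since $\cX$ is finite) the joint law $Q_n$ on the finite product $\cR^{\cX}$ is tight; supportedness on $\cR^*$ then follows because each $Q_n$ is supported on $\cR^*$ and, if $\cR^*$ is closed, the support passes to weak limits by portmanteau, while the general Polish case can be reduced to this by the standard measurable-selection/Borel-isomorphism machinery or by assuming, as is natural in the applications of interest, that $\cR^*$ is a Borel (indeed often closed) subset. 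I would also remark that the finite-$n$ inputs require checking that the discretized support set's potential response graph coincides with $G_n$ and inherits regularity from the hypothesis, which is exactly what the theorem assumes, so no extra work is needed there.
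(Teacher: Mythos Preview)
Your necessity argument is essentially the paper's: discretize via the partition $\cA_{n,k}$, observe that $G_n$ is exactly the potential response graph of the discretized model, and invoke the discrete-case theorem. This is correct and matches the paper.

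For sufficiency, however, your route and the paper's diverge substantially, and your weak-compactness argument has real gaps. The paper does \emph{not} extract a weakly convergent subsequence of $\{Q_n\}$ on $\cR^{\cX}$. Instead, it works entirely on the projective system of \emph{discretized} spaces: each $Q_n$ lives on the finite set $\cR^*_n$ (the image of $\cR^*$ under the discretization map $\pi_n$), the refinement structure of the partitions yields coarsening maps $\pi^n_m:\cR^*_n\to\cR^*_m$ for $m\le n$, and one passes from a consistent family $\{Q_n\}$ to a measure on $\cR^*$ via a Kolmogorov-type extension theorem (specifically, Corollary 8.22 in Kallenberg's \emph{Foundations of Modern Probability}). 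This delivers $Q$ on $\cR^*$ with $Q\circ\pi_n^{-1}=Q_n$ for every $n$, hence the correct marginals, without any weak-limit step.

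Your approach has two concrete problems. First, Theorem \ref{thm:sharpness_of_mis} applied at level $n$ returns a distribution on the \emph{finite} set $\cR^*_n$, not on $\cR^*$; writing ``$Q_n\in\Delta(\cR^*)$'' silently assumes a lift (e.g., a measurable selection of a representative in each partition cell), which you do not construct, and the choice of lift affects both tightness and where the limit mass lands. Second, and more seriously, even granting tightness and a weak limit $Q$, the portmanteau theorem identifies $Q$ only on $Q$-continuity sets; you cannot conclude $Q(T(R^*,z_k)\in A_{s,k})=P(R\in A_{s,k}\mid Z=z_k)$ unless each cell boundary is $Q$-null, which is precisely what is not known in advance. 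Your remark that the cells ``generate the Borel $\sigma$-algebra'' does not rescue this, since weak convergence does not propagate from a generating algebra to the full $\sigma$-algebra. The projective-limit approach sidesteps both issues: there is no lifting and no boundary problem, because the extension theorem operates on the discrete data directly and outputs $Q$ on $\cR^*$ with the prescribed finite-dimensional projections.
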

This theorem shows that sharp testable implications similar to those in the discrete case can be derived by replacing observable outcome values with a sequence of partitions. 
The following section illustrates an application of Theorem \ref{thm:testable_implication_general_text}.

\subsubsection{Exclusion and Monotonicity with Multivalued Treatments and IV}\label{ssec:multivalued_iv}
We now revisit Example \ref{ex:iv}. 
Let $Y$ be a continuous outcome, $D$ be a treatment in a finite, totally ordered set, and let $Z$ take $K$ different values. For example, $Y$ is healthcare utilization, $D$ indicates levels of insurance generosity (e.g., rates of coinsurance), and $Z$ is a random assignment to insurance plans with varying generosity  \citep{AronDine2013RAND}. The setting also includes the judge fixed effects design where $D$ is binary.

Consider testing the assumption that $z$ is excluded from the potential outcome, and $D$ is weakly monotonic in $z$:
\begin{equation} \label{E:exclusion_text}
\begin{array}{c}
	Z \perp (\{Y(d, z)\}_{d \in \mathcal{D}, z \in \mathcal{Z}}, \{D(z)\}_{z \in \mathcal{Z}});\\[3mm]
	Y(d, z) =_{a.s.} Y(d, z'), \;\; \forall d \in \mathcal{D}, \forall z, z' \in \mathcal{Z},
\end{array}
\end{equation}
\begin{equation} \label{E:monotonicity_text}
	D(z_1) \leqslant_{a.s.} \dots \leqslant_{a.s.} D(z_K).
\end{equation}
To our knowledge, sharp testable restrictions for this setting have been unknown outside an important special case where both treatment and instrument are binary (see Remark \ref{rem:existing_results}). 

For any $d \in \mathcal{D}$ and $S_d \subseteq \{1, \dots, K\}$, define $\underline{\ell}_d = \min_{\ell \in S_d}\ell$, $\overline{\ell}_d = \max_{\ell \in S_d} \ell$. Consider any finite set of tuples $\{ \{(B_{\ell, d}, d, z_\ell)\}_{\ell \in S_d} \}_{d \in \mathcal{D}}$ such that: (i) $\{B_{\ell, d}\}_{\ell \in S_d}$ form a partition of $\mathcal{Y}$, for each $d \in \mathcal{D}$; and (ii) $d < d' \implies \underline{\ell}_d \geqslant \overline{\ell}_{d'}$. Figure \ref{fig:exclusion} illustrates with $\mathcal{D} = \{1, 2, 3\}$ and $\mathcal{Z} = \{z_1, z_2, z_3, z_4\}$. Here, different colors represent the values $z_k$ of the instrument, and colored regions correspond to the sets $B_{k,d}, k\in S_d$. On the left side, $S_1 = \{4\}$, $S_2 = \{1, 2, 3, 4\}$, $S_3 = \{1\}$, and on the right side, $S_1 = \{3, 4\}$, $S_2 = \{2, 3\}$, $S_3 = \{1, 2\}$. It is clear that, for each $d$, $\{B_{\ell, d}\}_{\ell \in S_d}$ form a paritition of $\cY$. Furthermore, they satisfy (ii). For example, on the right side of Figure \ref{fig:exclusion}, the index sets $S_d,d\in\cD$ are constructed so that $\underline{\ell}_1=3=\overline{\ell}_2\ge \underline{\ell}_2=2=\overline{\ell}_3.$

Viewing each $(B_{\ell, d}, d, z_\ell)$ as a vertex of a graph, the vertices satisfying (i) and (ii) are independent of each other. This is because, for any pair of such vertices, they are independent in $G$ if either  $z_\ell \ne z_{\ell'}$, $d = d'$ and $B_{\ell,d}\cap B_{\ell',d'} = \varnothing$ (i.e., $Y(d,z_\ell)$ and $Y(d,z_{\ell'})$ take values in disjoint sets despite $d$ being common), violating the exclusion, or $z_\ell \leqslant z_{\ell'}$ and $d_\ell > d_{\ell'}$ (i.e., $D(z_\ell)>D(z_{\ell'})$ despite $z_\ell\leqslant z_{\ell'}$), violating the monotonicity. Forming maximal independent sets from such vertices gives sharp testable restrictions.
The following proposition summarizes the argument above. The restrictions' validity and sharpness follow from Theorem \ref{thm:testable_implication_general_text}.
\begin{figure}
\centering 
		\includegraphics[width = 0.35\linewidth]{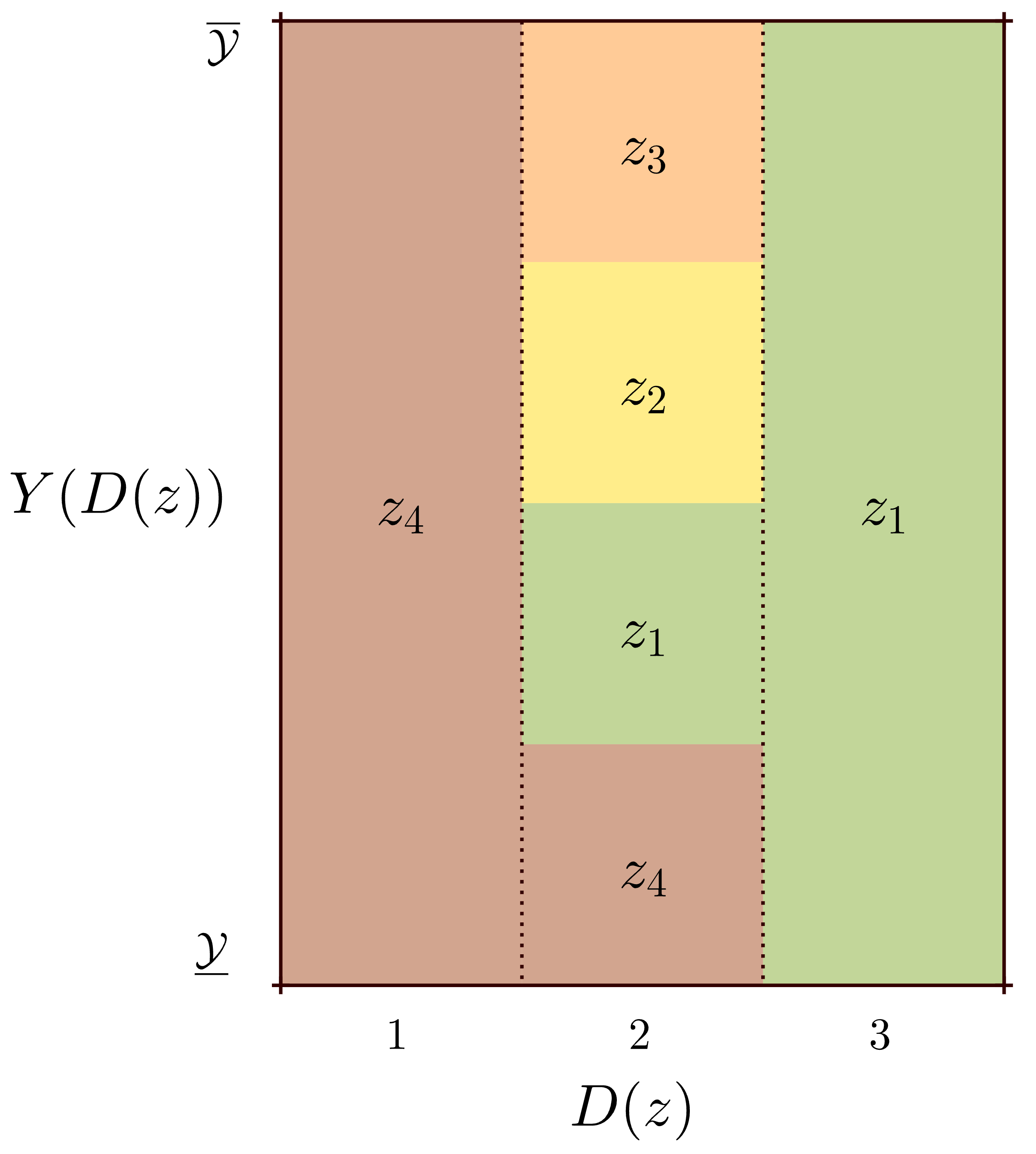} \hspace*{1cm}
		\includegraphics[width = 0.35\linewidth]{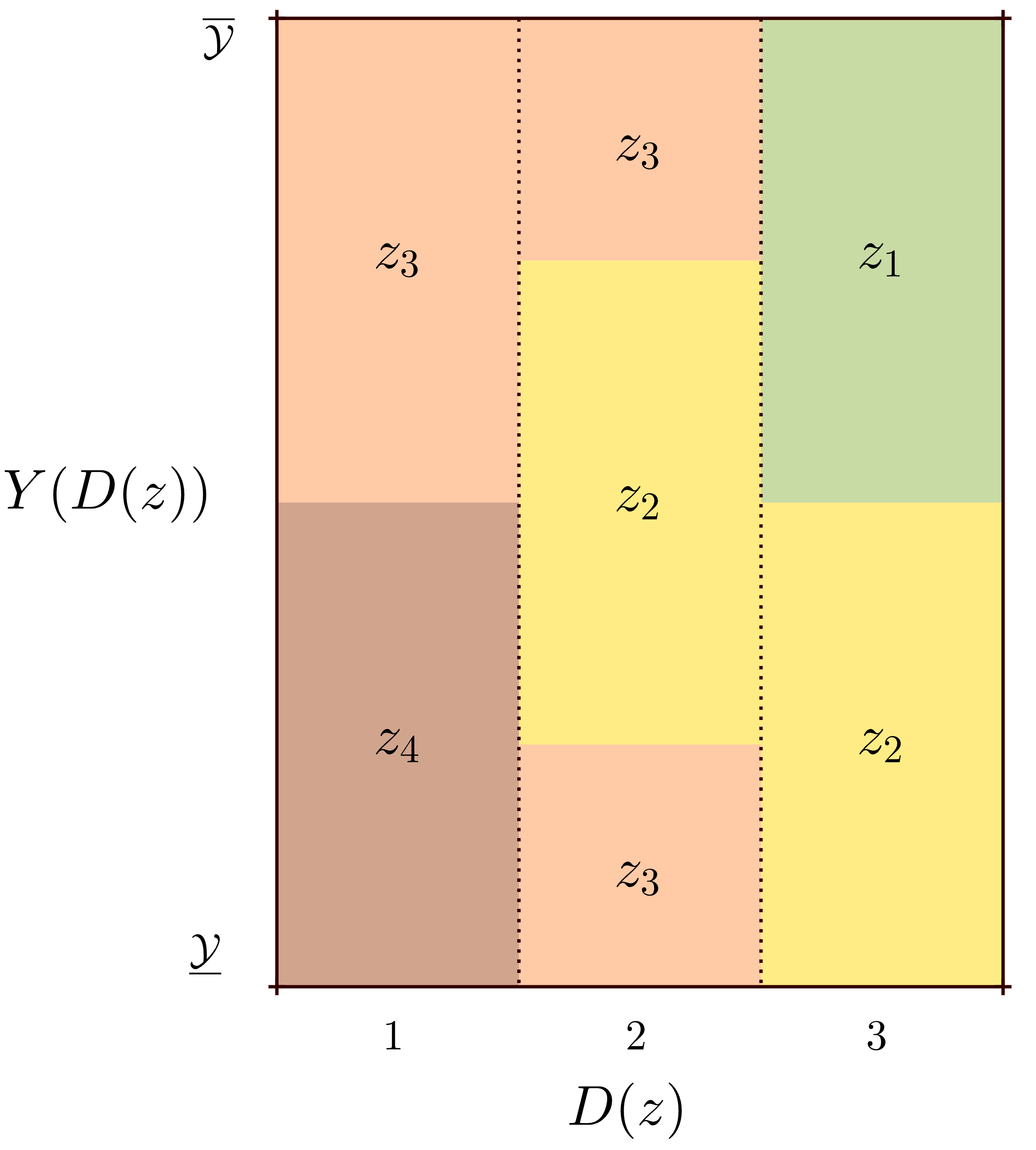}
\caption{Examples of partitions for testing exclusion and monotonicity restrictions.} \label{fig:exclusion}
\end{figure}

\begin{proposition}[Testing Exclusion and Monotonicity]\label{prop:iv_val}
	Conditions \eqref{E:exclusion_text} and \eqref{E:monotonicity_text} jointly hold if and only if
	\begin{align}
	\begin{array}{c}
			\displaystyle \sum_{\ell \in S_d, d \in \mathcal{D}} P(Y \in B_{\ell, d}, D = d \,|\,Z = z_\ell) \leqslant 1,\\[3mm]
			\forall \{ \{(B_{\ell, d}, d, z_\ell)\}_{\ell \in S_d} \}_{d \in \mathcal{D}}: \bigcup_{\ell \in S_d} B_{\ell, d} = \mathcal{Y}, \; \forall d,d' \in \mathcal{D},\; d < d' \implies \underline{\ell}_d \geqslant \overline{\ell}_{d'}.\label{eq:ex_mon_implication_K}
	\end{array}
	\end{align}
\end{proposition}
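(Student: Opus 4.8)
The strategy is to apply Theorem~\ref{thm:testable_implication_general_text} to Example~\ref{ex:iv} with the response $R = (Y, D)$, input $x = (d, z)$, and the support restriction $\mathcal{R}^*$ encoding both the exclusion restriction in \eqref{E:exclusion_text} and the monotonicity \eqref{E:monotonicity_text}. The first task is to set up the appropriate sequence of partitions $\{\mathcal{A}_{n,k}\}$ of the conditional outcome spaces. Since $D$ is discrete and $Y$ is continuous, the conditional support $\mathcal{R}_{z_k}$ is (a subset of) $\mathcal{Y} \times \mathcal{D}$; I would partition it by taking, for each $d \in \mathcal{D}$, a sequence of increasingly fine partitions of $\mathcal{Y}$ (refining as $n$ grows, e.g. dyadic partitions), and crossing these with the singletons of $\mathcal{D}$. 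This yields cells of the form $B \times \{d\}$, and the $n$-th potential response graph $G_n$ has vertices indexed by such cells for each $z_k$.

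\textbf{Characterizing the edges / independent sets of $G_n$.} The key combinatorial step is to show that two vertices $v = (B \times \{d\}, z_\ell)$ and $v' = (B' \times \{d'\}, z_{\ell'})$ (with $\ell \neq \ell'$) are \emph{non-adjacent} in $G_n$ precisely when either (a) $d = d'$ and $B \cap B' = \varnothing$ (no potential type can have $Y(d, z_\ell) \in B$ and $Y(d, z_{\ell'}) = Y(d, z_\ell) \in B'$ simultaneously, by exclusion), or (b) $z_\ell < z_{\ell'}$ but $d > d'$, or $z_\ell > z_{\ell'}$ but $d < d'$ (this would require $D(z_\ell) = d > d' = D(z_{\ell'})$ with $z_\ell < z_{\ell'}$, violating monotonicity). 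In all other configurations one can explicitly construct a compatible $r^* \in \mathcal{R}^*$: pick a monotone treatment path through the prescribed values at the two inputs and an outcome function consistent with exclusion taking the required values in $B$ and $B'$. Then I would verify that the maximal independent sets of $G_n$ are exactly the families $\{(B_{\ell,d} \times \{d\}, z_\ell)\}_{\ell \in S_d, d \in \mathcal{D}}$ satisfying the two stated conditions: (i) for each fixed $d$, the cells $\{B_{\ell,d}\}_{\ell \in S_d}$ partition $\mathcal{Y}$ (this is what makes the set maximal along the ``$d = d'$'' direction — one must cover all of $\mathcal{Y}$), and (ii) $d < d' \Rightarrow \underline{\ell}_d \geqslant \overline{\ell}_{d'}$ (this encodes pairwise non-adjacency via monotonicity across all pairs with $d \neq d'$). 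The translation of ``mutual non-adjacency of all pairs'' into the single order condition (ii) on the index sets $S_d$ is the crux and requires a short argument that pairwise compatibility of the order constraints is equivalent to the nesting $\overline{\ell}_3 \leqslant \underline{\ell}_2 \leqslant \overline{\ell}_2 \leqslant \underline{\ell}_1 \leqslant \cdots$ along the chain $\mathcal{D}$.

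\textbf{Applying the theorem and verifying regularity.} Having identified the MISs, Theorem~\ref{thm:testable_implication_general_text} immediately gives that the inequalities \eqref{eq:ex_mon_implication_K} are necessary. For sufficiency (sharpness), I must check that Assumption~\ref{A:support_regularity} holds for every $G_n$. For condition (i), perfectness, I would invoke the comparability criterion (Condition~\ref{cond:comparability}) with the partial order displayed in the excerpt just before Condition~\ref{cond:comparability} (adapted to cells rather than points: $v_{(B,d,z_\ell)} \prec v_{(B',d',z_{\ell'})}$ iff $z_\ell < z_{\ell'}$ and $d < d'$, or $z_\ell < z_{\ell'}$, $d = d'$, $B = B'$ — with the partition structure ensuring any edge is comparable); one checks every edge of $G_n$ is a comparable pair, so $G_n$ is perfect. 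For condition (ii), that every maximal clique corresponds to a genuine support point, I would appeal to the pairwise incompatibility criterion (Condition~\ref{cond:pw_incompatibility}): if a tuple of vertices (one per $z_k$) is jointly incompatible with $\mathcal{R}^*$, some pair already violates either exclusion or monotonicity, hence is non-adjacent — so no ``phantom'' maximal cliques arise. With both conditions verified, Theorem~\ref{thm:testable_implication_general_text} yields sufficiency, completing the proof.

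\textbf{Anticipated main obstacle.} The delicate part is the exact combinatorial characterization of the maximal independent sets — in particular showing that condition (ii) on the index sets $S_d$ is both necessary and sufficient for the family of vertices to be an independent set, and that requiring $\bigcup_{\ell \in S_d} B_{\ell,d} = \mathcal{Y}$ for each $d$ is exactly the maximality condition (adding any further vertex $(B' \times \{d\}, z_{\ell'})$ would either overlap an existing cell with the same $d$, destroying independence via exclusion, or, if $\ell' \notin S_d$, conflict with the monotonicity ordering against vertices at other treatment levels). Getting the edge cases right — e.g. when some $S_d$ is a singleton, or when $B_{\ell,d} = \varnothing$ is allowed/disallowed, or the behavior at the boundary values $D = \min \mathcal{D}$ and $D = \max \mathcal{D}$ where only one side of the monotonicity bites — is where the argument needs care. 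The construction of explicit compatible types $r^*$ for each non-edge is conceptually routine but must be done carefully to respect both restrictions simultaneously.
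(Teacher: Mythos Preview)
Your proposal is correct and follows essentially the same approach as the paper's proof: set up the sequence of partitions, characterize the edges and maximal independent sets of $G_n$, verify regularity via the comparability partial order for perfectness, and apply Theorem~\ref{thm:testable_implication_general_text}. The only minor differences are that the paper verifies Assumption~\ref{A:support_regularity}(ii) by a direct argument (showing every maximal clique in $G_n$ necessarily has size $K$ and then exhibiting a compatible support point) rather than invoking Condition~\ref{cond:pw_incompatibility}, and that the paper includes an explicit final step---which your sketch elides---bridging from the MIS inequalities indexed by partition cells $A_{\ell,d} \in \mathcal{A}_{n,k}$ to the proposition's inequalities stated for arbitrary Borel sets $B_{\ell,d}$, by taking unions of cells sharing a common $(d,z_\ell)$ and noting the result is tightest when the $\{B_{\ell,d}\}_{\ell \in S_d}$ partition $\mathcal{Y}$.
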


\begin{remark}\label{rem:existing_results}
The testable implication obtained above nests existing restrictions. When the treatment and instrument are binary, \eqref{eq:ex_mon_implication_K} can be expressed as 
\[
	P(Y\in A,D=1\,|\,Z=0) \leqslant P(Y\in A,D=1\,|\,Z=1)\label{eq:BP1}
\]
or 
\[
P(Y\in A,D=0\,|\,Z=1)\leqslant P(Y\in A,D=0\,|\,Z=0)	\label{eq:BP2}
\]
for Borel sets $A\subseteq \cY$ (see Corollary \ref{cor:iv_K2}). These conditions were derived in \cite{balke1997bounds} and \cite{Heckman:2005aa}. \cite{Kitagawa15} showed they were sharp. Proposition \ref{prop:iv_val} generalizes the inequalities above to settings where both $D$ and $Z$ are multi-valued.
\qed 
\end{remark}

\subsection{Comparing and Interpreting Testable Implications of Multiple Models}\label{ssec:multiple_models}
Consider a collection of models characterized by different sets of support restrictions. The framework developed thus far can be used to derive testable implications and to apply existing model selection methods \citep[e.g.,][]{shi2015model,hsu2017model}. When the model restrictions are nested, we can further show that adding or relaxing support restrictions has systematic effects on the implied inequalities. We provide a formal result and illustrative examples below.

Consider two models where Model 1 imposes a stronger support restriction than Model 2. The following proposition shows that every inequality implied by Model 1 is either a (weakly) tightened version of an inequality implied by Model 2 or a new inequality that does not appear among the testable implications of Model 2.

\begin{proposition}\label{prop:comparison}
Suppose Assumption \ref{A:unconf} holds. Let $\cR^*_i,i=1,2$ be the support restrictions of two models with $\cR^*_1\subset \cR^*_2$. Let $G_1$ and $G_2$ be their corresponding potential response graphs. For any  $I_1\in \mathcal I_{G_1}$, let
\begin{align} 
\sum_{v_{s, k}\in I_1}P(R\in A_{s,k}\,|\,Z=z_k)\leqslant 1,~\label{eq:tightened}
\end{align} 
be an inequality that is part of the testable implication for $\cR^*_1$. Then, it satisfies either one of the following conditions:
\begin{enumerate}
	\item[(i)] \eqref{eq:tightened} is a (weakly) tightened version of an inequality implied by $\cR^*_2$:
	\begin{align*}
	\sum_{v_{s, k}\in I_1}P(R\in A_{s,k}\,|\,Z=z_k)=\sum_{v_{s, k}\in I_2}P(R\in A_{s,k}\,|\,Z=z_k)+\sum_{v_{s,k}\in V'}P(R\in A_{s,k}\,|\,Z=z_k)
	\end{align*}
	for some  $I_2\in \cI_{G_2}$ and $V'\subseteq V_{G_1}$ such that  $I_2\cap V'=\emptyset.$
	\item[(ii)] \eqref{eq:tightened} is an inequality that does not appear as a testable implication of $\cR^*_2$. 
\end{enumerate}
\end{proposition}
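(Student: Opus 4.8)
The plan is to work entirely with the combinatorial structure of the two potential response graphs and the containment $\cR_1^* \subseteq \cR_2^*$. The first observation is that because $\cR_1^* \subseteq \cR_2^*$, every latent type compatible with an observable pair under Model~1 is also compatible under Model~2; hence $E_{G_1} \subseteq E_{G_2}$, i.e., $G_2$ has (weakly) more edges than $G_1$ on the same vertex set $V_{G_1} = V_{G_2} = V_G$. Equivalently, $G_1$ is a subgraph of $G_2$ obtained by deleting edges, so independence in $G_1$ is a \emph{weaker} property than independence in $G_2$: every independent set of $G_2$ is independent in $G_1$, but not conversely.

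Next I would fix an MIS $I_1 \in \cI_{G_1}$ giving the inequality \eqref{eq:tightened}. Since $I_1$ is independent in $G_1$ it need not be independent in $G_2$, but I can consider a maximal independent subset (or a maximal independent superset of a maximal independent subset) of $I_1$ inside $G_2$. Concretely, pick any maximal independent set $I_2 \in \cI_{G_2}$ with the property that $I_2 \cap I_1$ is a maximal independent subset of $I_1$ with respect to $G_2$'s edges — such $I_2$ exists by extending $I_2 \cap I_1$ to a maximal independent set of $G_2$. Then set $V' = I_1 \setminus I_2$. By construction $I_2 \cap V' = \emptyset$ and $I_1 \subseteq I_2 \cup V'$; in fact one can arrange $I_1 = (I_1 \cap I_2) \sqcup V'$ so that $\sum_{v \in I_1} = \sum_{v \in I_2} - \sum_{v \in I_2 \setminus I_1} + \sum_{v \in V'}$, which is not quite the stated decomposition. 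To get exactly case~(i), I need $I_1 \cap I_2$ to \emph{exhaust} a genuine MIS-inequality of Model~2, i.e., I want to split into the case where some MIS of $G_2$ contains all of $I_1 \cap I_2$ and "uses up" the inequality, versus the case where no such extension is available. This dichotomy is precisely (i) versus (ii): either the restriction of $I_1$'s support to the vertices still mutually non-adjacent in $G_2$ forms part of an MIS inequality of Model~2 (case (i), with the extra $V'$ terms being the vertices of $I_1$ that $G_2$ has made adjacent), or the only independent sets of $G_2$ that one can build out of pieces of $I_1$ are not among the inequalities $\eqref{eq:MIS_inequalities_n_text}$ generated for $\cR_2^*$ in a way that relates to \eqref{eq:tightened}, which is case (ii).

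The cleanest way to organize this: define $I_1^{(2)} \subseteq I_1$ to be a maximal-in-$I_1$ independent set of $G_2$, extend it to $I_2 \in \cI_{G_2}$, and put $V' = I_1 \setminus I_1^{(2)}$. Then $I_1 = I_1^{(2)} \sqcup V'$ with $I_1^{(2)} \subseteq I_2$ and $V' \cap I_2 = \emptyset$ only if $V'$ happens to avoid all of $I_2$; if it does not, that means some vertex $v \in I_1$ is adjacent in $G_2$ to every vertex of $I_1^{(2)}$ yet lies in an MIS $I_2$ of $G_2$, which can be accommodated by choosing $I_2$ more carefully (take $I_2$ to contain $I_1^{(2)}$ but to be disjoint from $V'$, possible because each $v\in V'$ is $G_2$-adjacent to some vertex of $I_1^{(2)} \subseteq I_2$, so $v \notin I_2$ automatically). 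This gives case~(i) whenever $I_1^{(2)}$ is itself \emph{maximal} in $G_2$ (so that $I_2 = I_1^{(2)}$ up to the freedom above and the decomposition displayed in the proposition holds with $I_2 \supseteq I_1^{(2)}$); and case~(ii) when no MIS of $G_2$ arises this way, i.e., the inequality \eqref{eq:tightened} simply is not dominated by, nor a tightening of, any Model~2 MIS inequality. I expect the main obstacle to be the bookkeeping that disjointness $I_2 \cap V' = \emptyset$ can always be arranged simultaneously with $I_1^{(2)} \subseteq I_2$: this requires the elementary but slightly fiddly observation that every $v \in V' = I_1 \setminus I_1^{(2)}$ is $G_2$-adjacent to at least one vertex of the maximal-in-$I_1$ set $I_1^{(2)}$ (otherwise $I_1^{(2)} \cup \{v\}$ would be a larger independent subset of $I_1$), so any $I_2 \supseteq I_1^{(2)}$ automatically misses $V'$. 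Once this is in place, the two cases of the proposition are exhaustive by construction, and the displayed identity in (i) is immediate from $I_1 = I_1^{(2)} \sqcup V'$ together with $I_1^{(2)} = I_1 \cap I_2$.
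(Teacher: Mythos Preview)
Your approach is correct and essentially the same as the paper's, which proves the proposition via a short combinatorial lemma (Lemma~\ref{lem:MIS_expansion}) about how maximal independent sets behave when edges are deleted from a graph. Like you, the paper starts from a $G_2$-independent piece of $I_1$, extends it to some $I_2\in\cI_{G_2}$, and lets $V'$ be the leftover vertices. Your key observation---that every $v\in V'=I_1\setminus I_1^{(2)}$ is $G_2$-adjacent to some vertex of $I_1^{(2)}$ by maximality of $I_1^{(2)}$ inside $I_1$, so any $I_2\supseteq I_1^{(2)}$ automatically misses $V'$---is exactly the mechanism that delivers $I_2\cap V'=\emptyset$, and your version is if anything a bit cleaner than the paper's, which parametrizes the deleted edges by a vertex subset $H$ and works with $I'\setminus H$ rather than a maximal $G_2$-independent subset.

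One small wobble worth tightening: the displayed identity in case~(i) needs $I_2\subseteq I_1$, i.e.\ $I_2=I_1^{(2)}$, which you note midway but your final sentence slightly blurs by writing ``$I_2\supseteq I_1^{(2)}$''. The clean dichotomy is: if \emph{some} maximal $G_2$-independent subset $I_1^{(2)}\subseteq I_1$ is already maximal in $G_2$, take $I_2=I_1^{(2)}$ and case~(i) holds; otherwise no MIS of $G_2$ sits inside $I_1$, so in particular $I_1\notin\cI_{G_2}$ (indeed, if $I_1$ were $G_2$-independent then $G_1$-maximality plus $E_{G_1}\subseteq E_{G_2}$ would force $I_1\in\cI_{G_2}$), and case~(ii) holds. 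Making this last implication explicit would finish the argument cleanly.
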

The tightening of the inequalities occurs because adding support restrictions deletes some of the edges present in the graph of the baseline model.
This result allows practitioners to evaluate how adding or removing assumptions impacts the resulting restrictions. The following examples illustrate this point.
\setcounter{example}{0}
\begin{example}[Instrumental Variables (continued)]
Suppose $\cY=\mathbb R$, $\cD=\{0,1\}$, and $\cZ=\{0,1\}$. Let $\mathcal R^*_2$ be the set of potential responses satisfying exclusion: $Y(d,z)=Y(d), \forall (d,z)\in \cD\times\cZ$. Its sharp testable implication is (see Appendix \ref{ssec:comparison}):
\begin{align}
\begin{array}{c}
 P(Y \in A, D = d \,|\,Z = 0)+P(Y \in A^c, D = d \,|\,Z = 1) \leqslant 1, \;\;\;  \forall d \in \{0,1\}, ~\forall A\subseteq \mathcal Y.	\\[2mm]
\end{array}\label{eq:nested1}
\end{align}
Now suppose that $\cR^*_1$ collects the potential responses satisfying both exclusion and monotonicity $D(1)\geqslant_{a.s.} D(0)$. Then, the sharp testable implication becomes
\begin{align}
P(Y\in \cY,D=0|Z=1)+P(Y\in A,D=1|Z=0)+P(Y\in A^c,D=1|Z=1)&\leqslant 1\label{eq:nested2}\\
P(Y\in A,D=0|Z=0)+P(Y\in A^c,D=0|Z=1)+P(Y\in \cY,D=1|Z=0)&\leqslant 1,\label{eq:nested3}
\end{align}
for all $A\subseteq \mathcal Y$.
Each of these inequalities results from tightening \eqref{eq:nested1}; e.g., \eqref{eq:nested2} is obtained by adding $P(Y \in \cY,D=0|Z=1)$ to the left side of \eqref{eq:nested1}. This tightening happens because the monotonicity restriction removes support points, thereby expanding the set of maximal independent sets.  \qed 
\end{example}

\begin{example}[Partial Monotonicity (continued)]
 Recall that the sharp testable implication of the partial monotonicity assumption was given by  \eqref{eq:pm_ineq1}-\eqref{eq:pm_ineq5}. If we additionally assume
$D(1,0)\geqslant_{a.s.}D(0,1)$, we obtain the following ordering of the potential treatment:
\begin{align}
D(1,1)\geqslant_{a.s.}D(1,0)\geqslant_{a.s.}	D(0,1)\geqslant_{a.s.}D(0,0),	\label{eq:IAM2}
\end{align}
which satisfies the IA monotonicity. The additional restriction rules out the $Z_2$-complier ($2c$). This leads to a new maximal independent set, and hence a new inequality:
\begin{align}
&~~P(D=1|Z=(0,1))+P(D=0|Z=(1,0))\leqslant 1.\label{eq:pm_iam_new}
\end{align}
The new set of inequalities  \eqref{eq:pm_ineq1}-\eqref{eq:pm_ineq5} and \eqref{eq:pm_iam_new} form a sharp testable implication of \eqref{eq:IAM2}.\footnote{Assumption \ref{A:support_regularity} holds in this example with or without $D(1,0)\geqslant_{a.s.}D(0,1)$.} More details are provided in Appendix \ref{ssec:comparison}. \qed 
\end{example}

\begin{example}[Mediation (continued)]
Let $Y$ be a binary outcome, $D$ be a binary treatment, and $M_1, M_2$ be binary mediators. Suppose the effect of $D$ is fully mediated by either $M_1$ or $M_2$ (or both):
$$Y(m_1,m_2,d)=Y(m_1,m_2,d'),~~\forall (m_1,m_2,d,d')\in \{0,1\}^4.$$ 
The sharp testable implication of this assumption contains eight inequalities, including the following ones (see Appendix \ref{ssec:mediation_comparison}):
\begin{align}
P(Y=0,M_1=0,M_2=0 \mid D=0) + P(Y=1,M_1=0,M_2=0 \mid D=1) &\leqslant 1 \label{eq:d_ex1}\\
P(Y=0,M_1=0,M_2=1 \mid D=0) + P(Y=1,M_1=0,M_2=1 \mid D=1) &\leqslant 1. \label{eq:d_ex3} 
\end{align}
Suppose we strengthen the assumption to $Y(m_1,m_2,d)=Y(m_1,m_2',d'),\forall (m_1,m_2,m_2',d,d')\in \{0,1\}^5$, meaning the second mediation channel does not exist. Then, the sharp testable implication becomes four inequalities, including the following:
\begin{align}
&P(Y=0,M_1=0,M_2=0 \mid D=0) + P(Y=0,M_1=0,M_2=1 \mid D=0) \notag\\
&\quad + P(Y=1,M_1=0,M_2=0 \mid D=1) + P(Y=1,M_1=0,M_2=1 \mid D=1) \leqslant 1. \label{eq:m2d_ex1}
\end{align}
Each inequality is obtained by tightening one or more of the inequalities that define the sharp testable implication of the full mediation assumption; e.g., \eqref{eq:m2d_ex1} is obtained by adding the left-hand side of \eqref{eq:d_ex1} and \eqref{eq:d_ex3}, yielding a tighter restriction. 
By representing different modeling assumptions on mediators through their corresponding graphs, our framework facilitates direct comparisons among them, complementing the approach of \cite{Kwon:2024aa}, who analyze each model individually. \qed 
\end{example}

\section{Computation and Inference} \label{sec:computation_inference}
This section discusses computational aspects of the proposed framework and inference procedures.
Section \ref{ssec:computation} outlines how to construct potential response graphs, compute the testable implication, and check their sharpness. A pseudo-code for the main steps is given in Algorithm \ref{alg:master}.
To facilitate implementation, we provide a \texttt{Python} library for these tasks.\footnote{The library is available at \url{https://github.com/hkaido0718/SupportRestriction}. It contains functions for defining graphs, obtaining maximal independent sets, and checking the regularity condition, along with their applications to the examples in this paper.}  Section \ref{ssec:inference} shows how to conduct inference, and Section \ref{ssec:alternative_approaches} discusses alternative approaches.
\subsection{Computing Testable Implications and Checking Regularity}\label{ssec:computation}
A potential response graph $G$ can be constructed by one of two methods:
1. By checking pairwise incompatibility of the support restriction, or 
2. By enumerating all $r^* \in \mathcal{R}^*$.

To elaborate on Method 1, consider Example~\ref{ex:iv}, in which the model imposes the exclusion restriction:
\[
Y(d,z)=Y(d,z'), \qquad \forall (d,z,z')\in \mathcal{D}\times\mathcal{Z}^2.
\]
Let $v_{y,d,z}$ and $v_{y',d',z'}$ denote a pair of vertices such that the treatment status is identical, $d=d'$, but $z\neq z'$.
If $y\neq y'$, no potential response function can be compatible with this pair, as doing so would violate the exclusion restriction. Hence, no edge exists between $v_{y,d,z}$ and $v_{y',d',z'}$. By contrast, if $y=y'$, the pair is compatible and an edge exists. This observation motivates a first method for graph construction: checking, for each pair of vertices, whether they are jointly compatible with the imposed support restrictions. This method is applicable to any model for the pairwise incompatibility criterion (Condition \ref{cond:pw_incompatibility}) holds.

An alternative approach (Method 2) is to enumerate all $r^* \in \mathcal{R}^*$ and construct the support matrix $A^*$. Given $A^*$, the graph $G$ is constructed by forming an edge between $v_{r,z}$ and $v_{r',z'}$ whenever there exists a column of $A^*$ in which the entries corresponding to rows $(r,z)$ and $(r',z')$ are both equal to 1. The matrix $A^*$ is the \emph{vertex--clique incidence matrix} of $G$, and the associated adjacency matrix $A_G \in \{0,1\}^{N \times N}$ is given by $[A_G]_{kk} = 0$ and
\[
[A_G]_{kl} = \mathbf{1}\!\left( \big[ A^* (A^*)' \big]_{kl} > 0 \right), \qquad k \neq l.
\]
When the number of support points is moderate, this procedure is straightforward to implement. Moreover, it applies to models in which the pairwise incompatibility criterion does not hold.
That said, enumerating all support points can be computationally demanding in some settings. In the example above, the cost of enumerating all support points grows exponentially in $|\mathcal{D}|$ and $|\mathcal{Z}|$, whereas the cost of the pairwise construction grows only polynomially in these quantities; see Appendix \ref{ssec:method1_vs_method2}.

Next, we discuss how to compute the inequalities derived in Theorem \ref{thm:testable_implication}. Since most support restrictions $\mathcal{R}^*$ of interest are regular, we focus on the characterization in \eqref{eq:MIS_inequalities}. A discussion of non-regular restrictions is deferred to Appendix \ref{alg:non_regular}. The potential response graph $G$ has $|\mathcal{Z}| = K$ parts with $|\mathcal{R}_k| = J$ vertices in each, so $|V_G| = KJ$ vertices in total. The number of edges is at most $|E_G| \leqslant J^2K(K - 1)/2$.

First, to compute all maximal independent sets (MIS), we  can use the algorithm of \citet{tsukiyama1977new}. The algorithm is output-sensitive: its time complexity is proportional to the total number of MIS and is of order $O(|V_G| |E_G| |\mathcal{I}_G|)$. Depending on the structure of the graph $G$, the number of MIS may be drastically different, ranging from $O(1)$ to $O(2^{|V(G)|})$, so the computing time will scale accordingly. The algorithm is implemented in the functions \texttt{maximal\_ivs} in an \texttt{R} package \texttt{igraph} and \texttt{maximal\_independent\_vertex\_sets} in the \texttt{Python} interface of the same package.

 Second, to guarantee that the obtained testable implications are sharp, we need to verify that a given support restriction $\mathcal{R}^*$ is regular, that is, (i) $G$ is a perfect graph; and (ii) Every maximal clique of $G$ is listed in $\mathcal{C}_G$. Both conditions can be verified either analytically, as discussed in Section \ref{subsec:discussion_main}, or numerically, as discussed below. Condition (i) can be verified in polynomial time using the fact that a graph $G$ is perfect if and only if none of its induced subgraphs is an odd cycle of length five or more, or a complement of one, a deep combinatorial result known as the Strong Perfect Graph Theorem \citep{chudnovsky2006strong, chudnovsky2020detecting}. An implementation is available in \texttt{SageMath} as a built-in function \texttt{is\_perfect} or, alternatively, using our Algorithm \ref{alg:det_odd_holes} in Appendix \ref{alg:non_regular}. In turn, Condition (ii) can be verified by finding all maximal cliques of size $K$ in $G$ and comparing them with the support points. This step can be performed efficiently using a version of \citet{bron1973algorithm} algorithm as in \citet{eppstein2010listing}, implemented, for example, in the \texttt{max\_cliques} function in an \texttt{R}  package \texttt{igraph}.

\subsection{Inference Based on Moment Inequalities}\label{ssec:inference}
Recall that $N = \sum_{k \in \mathcal{Z}}|\mathcal{R}_k|$ and $M = |\mathcal{R}^*|$. Denote $\beta(r\,|\,z) = P(R = r\,|\,Z = z)$ and $\beta_0(P) = ( (\beta(r \,|\,z))_{r \in \mathcal{R}_z})_{z \in \mathcal{Z}}$. If $R$ contains continuous components, let $\beta(s\,|\,z)=P(R \in  A_{s,z}\,|\,Z = z)$ and define $\beta_0(P)= ( (\beta(s \,|\,z))_{A_{s,z} \in \mathcal{A}_{n,z}})_{z \in \mathcal{Z}}$ as in Section \ref{ssec:continuous_outcome}.  All objects introduced below are understood to be redefined accordingly.

Denoting $\mu(P) = (\bm{1}_I'\beta_0(P) - 1)_{I\in \mathcal{I}_G}$ and given an i.i.d. sample $\{(R_i, Z_i)\}_{i = 1}^n$ from a distribution $P \in \mathbf{P}$, the goal is to test $H_0: P \in \mathbf{P}_0$ against $H_1: P \in \mathbf{P} \backslash \mathbf{P}_0$, where
\begin{equation} \label{E:mim_formulation}
\mathbf{P}_0 = \{P \in \mathbf{P}: \mu(P) \leqslant 0\}.
\end{equation}
The restrictions of this form are known as \emph{moment inequalities} without any nuisance parameters.
 Testing hypotheses of this form is an extensively studied problem; see \citet{canay2017practical} for a technical review and \citet{canay2023user} for a user's guide. 
 In the following two sections, we provide a step-by-step implementation of the tests based on \eqref{E:mim_formulation}, which we employ in our Monte Carlo experiments and empirical application.

\subsubsection{Inference Based on Instruments and Discrete Covariates}\label{ssec:inference_disc}
Denote 
\[
\begin{array}{l}
	p(r, z) = P(R = r, Z = z);\\[2mm]
	\pi(z) = P(Z = z); \\[2mm]
	\beta(r\,|\,z) = \frac{P(R = r, Z = z)}{P(Z = z)}; \\[2mm]
	\beta_0(P) = (\beta(r \,|\,z)_{r \in \mathcal{R}})_{z \in \mathcal{Z}};
\end{array}
\hspace{2cm}
\begin{array}{l}
	\hat{p}_n(r, z) =  \frac{1}{n} \sum_{i = 1}^n \bm{1}(R_i = r, Z_i = z);\\[2mm]
	\hat{\pi}_n(z) = \frac{1}{n} \sum_{i = 1}^n \bm{1}(Z_i = z);\\[2mm]
	\hat{\beta}_n(r\,|\,z) = \frac{\hat{p}_n(r, z)}{\hat{\pi}_n(z)}; \\[2mm]
	\hat{\beta}_{0, n} = \left((\hat{\beta}_n(r \,|\,z))_{r \in \mathcal{R}}\right)_{z \in \mathcal{Z}}.
\end{array}
\]
A straightforward calculation shows that, for each $r \in \mathcal{R}_z$ and $z \in \mathcal{Z}$,
\[
    \sqrt{n}(\hat{\beta}_{n}(r\,|\,z) - \beta(r\,|\,z)) = \frac{1}{\sqrt{n}} \sum_{i = 1}^n \frac{\pi(z) \bm{1}(D_i = r, Z_i = z) - p(d, z) \bm{1}(Z_i = z)}{\pi(z)^2} + o_P(1).
\]
Thus, the asymptotic covariance matrix of $\hat{\beta}_{0, n}$ can be consistently estimated by
\begin{align}
\hat{V}_n = \frac{1}{n}\sum_{i = 1}^n \hat{b}(R_i, Z_i) \hat{b}(R_i, Z_i)',\label{eq:Vn}
\end{align}
where
\[
\hat{b}(R_i, Z_i) = \left(\left(\frac{\hat{\pi}_n(z) \bm{1}(R_i = r, Z_i = z) - \hat{p}_n(r, z) \bm{1}(Z_i = z)}{\hat{\pi}_n(z)^2} \right)_{r \in \mathcal{R}_z} \right)_{z \in \mathcal{Z}}.
\]
Next, let $\bm{A} \in \{0, 1\}^{|\mathcal{I}_G| \times N}$ denote the matrix with rows $a_I' = \bm{1}_I'$. Let $\mu_{I}(P) = a_I'\beta_0(P) - 1$, and
\[
\begin{array}{l}
	\hat{\mu}_n = (\hat{\mu}_{I, n})_{I \in \mathcal{I}_G} = (a_I'\hat{\beta}_{0, n} - 1)_{I\in \mathcal{I}_G};\\[2mm]
	\hat{\Sigma}_n = ( a_{I_1}'\hat{V}_n a_{I_2} )_{I_1, I_2 \in \mathcal{I}_G} = \bm{A} \hat{V}_n \bm{A}';\\[2mm]
	\hat{D}_n = \text{diag}((\hat{\sigma}_{I, n})_{I \in \mathcal{I}_G} ) =\text{diag}(\hat{\Sigma}_{n})^{1/2}; \\[2mm]
	\hat{\Omega}_n = \hat{D}_n^{-1} \hat{\Sigma}_n \hat{D}_n^{-1}.
\end{array}
\]
Consider the test statistic:\footnote{Other choices are possible, see \citet{andrews2010inference}. We focus on the maximum test statistic since theoretical guarantees for the resulting tests are also available in high-dimensional settings; see \citet{chernozhukov2019inference} and \citet{bai2022two}.}
\[
T_n = \max \left\{ \max \limits_{I \in \mathcal{I}_G} \frac{\sqrt{n}\hat{\mu}_{I, n} }{\hat{\sigma}_{I, n}},\; 0\right\}.
\]
The existing tests differ in how they construct the critical value with which $T_n$ is compared. 
The difficulty lies in bounding the slackness parameter $\sqrt{n}\mu_I(P)$, for $P \in \mathbf{P}_0$, which cannot be consistently estimated. Letting  $\hat{u}_{I, n}$ denote a suitable upper bound on $\sqrt{n}\mu_I(P)$, we set $\hat{c}_{n, \alpha}$ to be a consistent estimator of the $(1-\alpha)$-th quantile of the distribution of 
\[
\max \left\{ \max \limits_{I \in \mathcal{I}}  \frac{\sqrt{n}(\hat{\mu}_{I, n} - \mu_{I}(P)) + \hat{u}_{I,n}}{\hat{\sigma}_{I, n}} + , \;0 \right\}.
\]
Such an estimator may be obtained using bootstrap or Normal approximation, conditional on the data. Depending on the number of inequalities in \eqref{E:mim_formulation} relative to the sample size, different testing procedures may be appropriate. One may use, for example, \citet{andrews2010inference} or \citet{romano2014practical} when $|\mathcal{I}_G|$ is small, and \citet{chernozhukov2019inference} or \citet{bai2022two} when $|\mathcal{I}_G|$ is large. Then, under regularity conditions, the test
\[
\phi_n = \bm{1}(T_n > \hat{c}_{n, \alpha})
\]
is uniformly asymptotically valid over a large class of distributions.

If discrete covariates $W$ are present, one can use the same inference procedure, while replacing each conditioning statement $Z=z$ with $Z=z,\X=\x$, e.g., $\beta(r\,|\,z,w)=P(R =r\,|\,Z = z,W=w)$.

\subsubsection{Inference with Continuous Covariates} \label{ssec:inference_cont}
Suppose $W$ contains finitely supported components $W_d$ and continuous components $W_c$, distributed over $[0,1]^{|\x_c|}$. For such settings, one can use methods developed by \citet{andrews2013inference}, \citet{CLR13}, \citet{armstrong2015asymptotically}, \citet{cox2023simple}, and \citet{andrews2023inference}. 

We outline below the test of  \cite{CLR13} (CLR, henceforth).
Let $\x=(\x_d',\x_c')'$. For each $\x\in \mathcal{\X}$, let $\beta_0(\x) = (\beta(r \,|\,z,\x)_{r \in \mathcal{R}})_{z \in \mathcal{Z}}$. 
Let $v=(\x,I)\in \mathcal V=\{(\x,I):\x\in\mathcal \X,I\in\cI\}$ and define $\mu(v)=a_I'\beta_0(\x) - 1$. 
 Then, a testable implication can be formulated as an intersection bound:
\begin{align}
\sup_{v\in \cV}\mu(v)\leqslant 0.
\end{align}
Let $\hat\beta_{0,n}(\x)=((\hat \beta_n(r\,|\,z,\x))_{r\in\cR})_{z\in \cZ}$ be a series estimator of $\beta_0(\x)$ such that
\begin{align}
\hat \beta_n(r\,|\,z,\x)=b_n(\x)'\hat\chi_n(r,z),~r\in\cR,~z\in\cZ,	
\end{align}
where $b_n(\x)=(b_{n1}(\x),\dots,b_{nm_n}(\x))'$ is a vector of  basis functions. For example, $b_{nk}(\x)=\bm{1}(\x_d=\tilde \x)\times \psi_{nh}(\x_c)$, where $\tilde \x$ is a support point of the discrete component, and $\psi_{nh}$ is a tensor-product $B$-spline of a fixed order.
Let $\hat\mu_n=(\hat\mu_n(v),v\in \cV)$ be an estimator of $\mu$ defined pointwise by $\hat \mu_n(v)=	a_I'\hat\beta_{0,n}(\x) - 1.$ Define the precision-corrected bound:
$$S_{n,\alpha}=\sup_{v\in \mathcal{V}}(\hat\mu_n(v) - \hat c_{n,\alpha}\hat \sigma_n(v)),
$$
 where $\hat c_{n,\alpha}$ is a critical value (computed by CLR's Algorithm 1), and $\hat \sigma_n(v)$ is an estimator of the standard error of $\hat\mu_n(v)$. Then, under regularity conditions, the test
 \[
\phi_n = \bm{1}(S_{n,\alpha} > 0)
 \]
is uniformly asymptotically valid (CLR, Theorem 5). The details of how to compute $\hat\sigma_n$ are summarized in Appendix \ref{ssec:clr_details}.

\subsection{Alternative Approaches}\label{ssec:alternative_approaches}
The set of distributions $\mathbf{P}_0$ can alternatively be formulated as follows:
\begin{equation} \label{E:null}
\begin{array}{l}
	\mathbf{P}_0 = \{P \in \mathbf{P}: \beta_0(P) \in \mathbf{B}^* \};  \\[2mm]
	\mathbf{B}^* = \{b \in \mathbb{R}^{N}: b = A^*x \text{ for some } x \in \mathbb{R}^M_{+},\; x'\bm{1}_{M} = 1\},
\end{array}
\end{equation}
where $A^*$ denotes the support matrix. The set $\mathbf{B}^*$ is a convex polytope given in its \textit{vertex representation:} each column of $A^*$ corresponds to a vertex of $\mathbf{B}^*$. The characterization in \eqref{E:mim_formulation} can be viewed as one based on the  \textit{half-space representation} of $\mathbf{B}^*$.
Denoting
\[
A = \begin{bmatrix}
	A^* \\
	\bm{1}_{M}'
\end{bmatrix};
\;\;\;\;
\beta(P) = \begin{bmatrix}
	\beta_0(P)\\
	1
\end{bmatrix},\;\;
\]
the null set of distributions can be written as
\begin{equation} \label{E:null_FSST}
\mathbf{P}_0 = \{P \in \mathbf{P}: Ax = \beta(P) \text{ for some } x \in \mathbb{R}^{M}_{+}\}.
\end{equation}
In this form, the null hypothesis can be tested  using any of the available methods for testing existence of solutions in linear systems with known coefficients, such as \cite{KitamuraStoye2018}, \cite{fang2023inference}, and \cite{goff2025inferencevaluelinearprogram}. For example, \cite{fang2023inference} computes a test statistic by solving linear programs (LP) and compares it to a critical value obtained by repeatedly solving LPs across bootstrap replications.\footnote{If $\hat\beta_{n,0}$ is not in the range of $A^*$, one also needs to additionally solve quadratic programs.} Details of this procedure are discussed in Appendix \ref{sec:fsst}. 

Another approach relies on random set theory \citep[see, e.g.][]{molchanov2018random}. Although it is applicable more broadly, for simplicity, we use the notation appropriate for discrete $(R, X, Z)$. The modeling assumptions can be summarized by $R^* \in F(R, X)$, a.s., for a correspondence $F: \mathcal{R} \times \mathcal{X} \rightrightarrows \mathcal{R}^{\mathcal{X}}$ defined as $F(R, X) = B_X(R) \cap \mathcal{R}^*$, where $B_x(R) = \prod_{x' \in \mathcal{X}} ( \{R\}\bm{1}(x' = x) + \mathcal{R} \bm{1}(x' \ne x) ).$ By the theorem of \citet{artstein1983distributions}, $R^* \in F(R, X)$, a.s., if and only if 
\[
P(R^* \in C \,|\, Z = z) \geqslant P(F(R, X) \subseteq C \,|\,Z = z), \;\;\;\; \forall \,C \subseteq \mathcal{R}^{*}, \;\; \forall \, z \in \mathcal{Z}.
\]
Since $R^*$ and $Z$ are assumed independent, $P(R^* \in C \,|\,Z = z) = Q(C)$, so we can express
\begin{equation} \label{E:sharp_id_artstein}
\mathbf P_0= \{P \in  \mathbf P: Q(C)\geqslant \max_{z \in \mathcal{Z}} P(F(R, X) \subseteq C \,|\,Z = z),\;\; \forall\, C \subseteq \mathcal{R}^*,~\exists Q\in\Delta(\mathcal{R}^*)\}.
\end{equation}
This characterization may involve redundant inequalities and can often be substantially simplified; see \citet{luo_ponomarev_wan}. In some cases (e.g., when testing an exclusion restriction in an IV model), all relevant inequalities in \eqref{E:sharp_id_artstein} are binding, so the characterizations in \eqref{E:sharp_id_artstein} and \eqref{E:null_FSST} coincide. When $Z$ has rich support, \eqref{E:sharp_id_artstein} typically delivers a simpler characterization.

The choice between testing procedures depends on the structure of the support restriction $\cR^*$ and the associated polytope $\mathbf{B}^*$. When $\mathbf{B}^*$ has many vertices (i.e., latent types) but relatively few faces (i.e., inequalities), tests based on \eqref{E:mim_formulation} are typically preferable. As discussed earlier, a broad class of established testing and model selection procedures is available. 
Conversely, when $\mathbf{B}^*$ has few vertices and many faces, tests based on \eqref{E:null_FSST} may be more suitable.
To assess the relative complexity, we suggest applying the output-sensitive algorithm of \citet{tsukiyama1977new}.  If $|I_G|$ is moderate, it will quickly enumerate all MISs, making implementation of tests based on \eqref{E:mim_formulation} straightforward. The hypotheses considered in our empirical application have $|I_G|$ ranging from 1 to 726. For each hypothesis, our library found all MISs within 3 seconds.\footnote{The library was run on Google Colab with Python 3 Google Compute Engine with 12.67 GB RAM.}
 If $|I_G|$ is prohibitively large, we suggest checking the number of vertices of $\mathbf{B}^*$. If this number is moderate, tests based on \eqref{E:null_FSST} are feasible.

Another possibility is to treat the testable implications as moment inequalities involving a nuisance parameter, and to apply inference methods designed for such settings \citep[see][]{andrews2023inference}. For instance, the equality in \eqref{E:null_FSST} can be reinterpreted as a pair of opposing moment inequalities with nuisance parameter $x \in \mathbb{R}^M_+$. Similarly, \eqref{E:sharp_id_artstein} can be expressed as a family of moment inequalities indexed by $(z, C)$, with nuisance parameter $Q \in \Delta(\mathcal{R}^*)$.
In contrast to the approach based on \eqref{E:mim_formulation}, however, this method introduces additional parameters, which can make inference computationally intensive when $M$ is large.\footnote{\citeposs{andrews2023inference} method uses the vertices of a feasibility set defined by a linear program. As they note (p. 2781), “Enumerating all of the vertices is, however, computationally prohibitive when there are many moments or nuisance parameters,” and they offer two computational shortcuts. One applies only to special cases, while the other, a more general method, requires solving a linear program at each step of a bisection algorithm, which becomes increasingly costly as $M$ (or $N$) grows.}

Finally, the nature of the available covariates is also an important consideration. An advantage of the half-space representation of $\mathbf{P}_0$ is its flexibility in accommodating covariates with rich, potentially continuous, support. For discrete covariates with $L$ support points, the computational cost of tests based on \eqref{E:mim_formulation} scales linearly in $L$. Moreover, our framework naturally accommodates continuous covariates through conditional moment inequality tests, as discussed in Section \ref{ssec:inference_cont}.
By contrast, tests based on \eqref{E:null_FSST} require explicit discretization of covariates. While discretization is not inherently problematic, it introduces additional computational complexity. In particular, discretizing a covariate $W$ into $L$ bins expands the linear system in \eqref{E:null_FSST} to dimension $(N+1)L \times ML$. This increase substantially raises the computational burden, especially when the resulting optimization problems must be solved repeatedly across bootstrap replications.\footnote{Although computational shortcuts may be available in specific implementations, the time complexity of the commonly used interior-point method is $O!\left(\sqrt{d},\log!\left(\tfrac{1}{\epsilon}\right)(p^{2}d + d^{2}p)\right)$, where $p=(N+1)L$ and $d=ML$, to achieve accuracy $\epsilon$. This cost is incurred in each bootstrap replication.}

\section{Monte Carlo Experiments}

In this section, we investigate the power of testing procedures based on the testable implications derived in Theorem \ref{thm:testable_implication}.

\subsection{Partial Monotonicity}

Let $D \in \{0, 1\}$ and $Z_1, Z_2 \in \{0, 1\}$, and consider the null hypothesis of partial monotonicity:
\[
D(z_1, z_2) \leqslant_{a.s.} D(z_1', z_2') \iff z_1 \leqslant z_1' \text{ and } z_2 \leqslant z_2'.  
\]
This support restriction is regular (by verifying Conditions \ref{cond:comparability} and \ref{cond:pw_incompatibility} or numerical checks), so the testable implications of Theorem \ref{thm:testable_implication} are sharp. 

Suppose potential treatments are generated from a binary threshold model $D(z_1, z_2) = \bm{1}(B_0 + B_1z_1 + B_2z_2  \geqslant 0)$, 
where $(B_0, B_1, B_2)$ represent individual heterogeneity. Further, suppose $B_j = \beta_jW + U_j,  j \in \{1, 2\}$, where $W \geqslant 0$, $\beta_j \geqslant 0$, $U_j \geqslant 0$, and $(W, U_1, U_2) \perp (Z_1, Z_2)$.  This DGP satisfies full independence, $((D(z_1, z_2))_{z_1, z_2 \in \{0, 1\}}, W) \perp Z$, so we can compare the performance of conditional (on $W$) and unconditional tests. The null hypothesis corresponds to $B_1, B_2 \geqslant_{a.s} 0$, and we construct a family of DGP's under the alternative hypothesis as follows. Fix $\delta, \gamma \geqslant 0$ with $\delta + \gamma \leqslant 1$, and consider a family of DGPs under which $B_1, B_2 \geqslant 0$ are positive for share $h$ of the population, $B_1 \geqslant 0$ and $B_2 < 0$ for share $\gamma(1-h)$, $B_2 < 0$ and $B_2 \geqslant 0$ for share $\delta(1-h)$, and both $B_1, B_2 < 0$ for the remaining share $(1-\gamma-\delta)(1-h)$. We take $Z_1, Z_2, W \sim \text{i.i.d.}\; \text{Bernoulli}(1/2)$, $B_0 \sim N(1, 1)$, $\beta_1 = \beta_2 = 1/2$, and $U_1, U_2 \sim \text{i.i.d.} \; U[0, 1]$. We conduct $5000$ simulations with sample size $200$, four different values of $(\delta, \gamma)$, and $h \in [0, 1]$.

Figure \ref{fig:pm_sim_results} depicts the power functions of two tests based on the inequalities derived in Theorem \ref{thm:testable_implication}. Since $W$ is discrete, we stack the corresponding conditional moment inequalities and use the GMS test of \citet{andrews2010inference}. We find that conditioning on the exogenous covariates $W$ substantially improves power, and that power changes dramatically across different alternatives. For example, in Panel (a), the power of the conditional test reaches 0.5 already when a third of the population violates monotonicity with respect to both $z_1$ and $z_2$. However, in Panel (d), even though 100\% of the population violates partial monotonicity with respect to $z_1$ or $z_2$, the power of the conditional test is still below 0.5. Since the test exhausts all information available in the data, these results imply that certain violations of the support restriction are fundamentally hard to detect.

\begin{figure}[!htbp]
	\centering 
    \begin{subfigure}{0.45\textwidth} \centering
		\includegraphics[width=\textwidth]{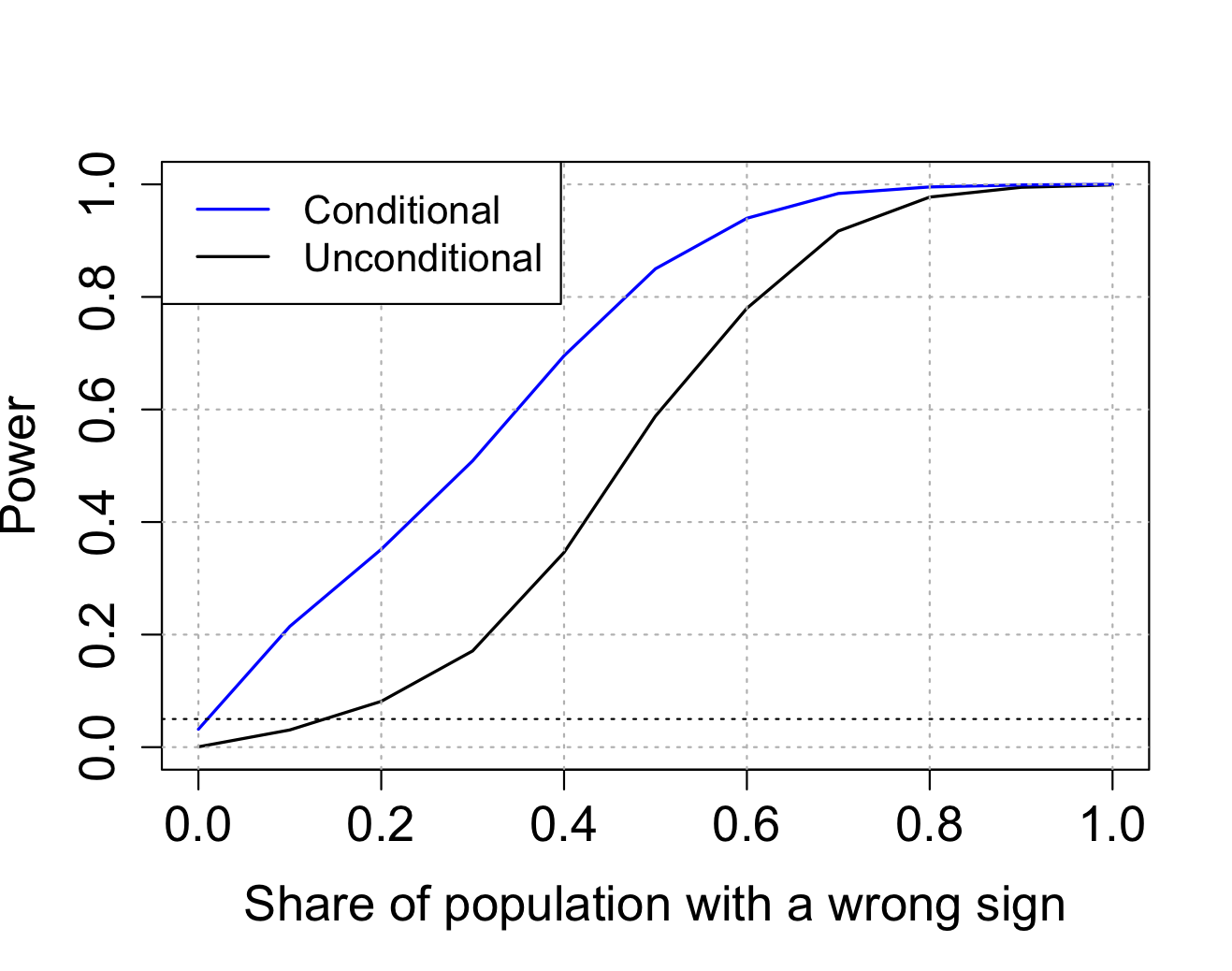}
		\subcaption{$\delta = 0, \gamma = 0$}
	\end{subfigure} \hfill
	\begin{subfigure}{0.45\textwidth} \centering
		\includegraphics[width=\textwidth]{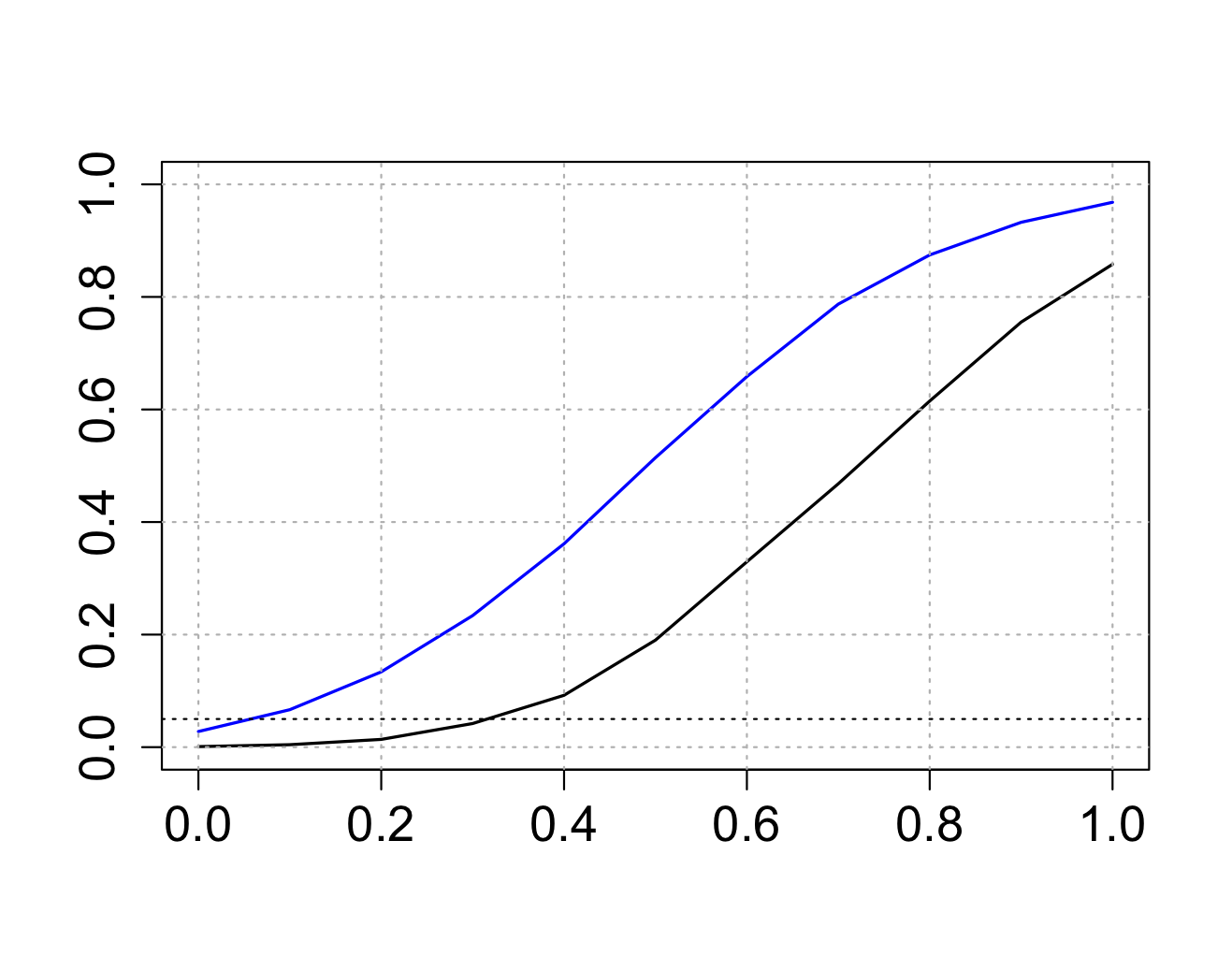}
		\subcaption{$\delta = 1, \gamma = 0$}
	\end{subfigure} 
	\begin{subfigure}{0.45\textwidth} \centering
		\includegraphics[width=\textwidth]{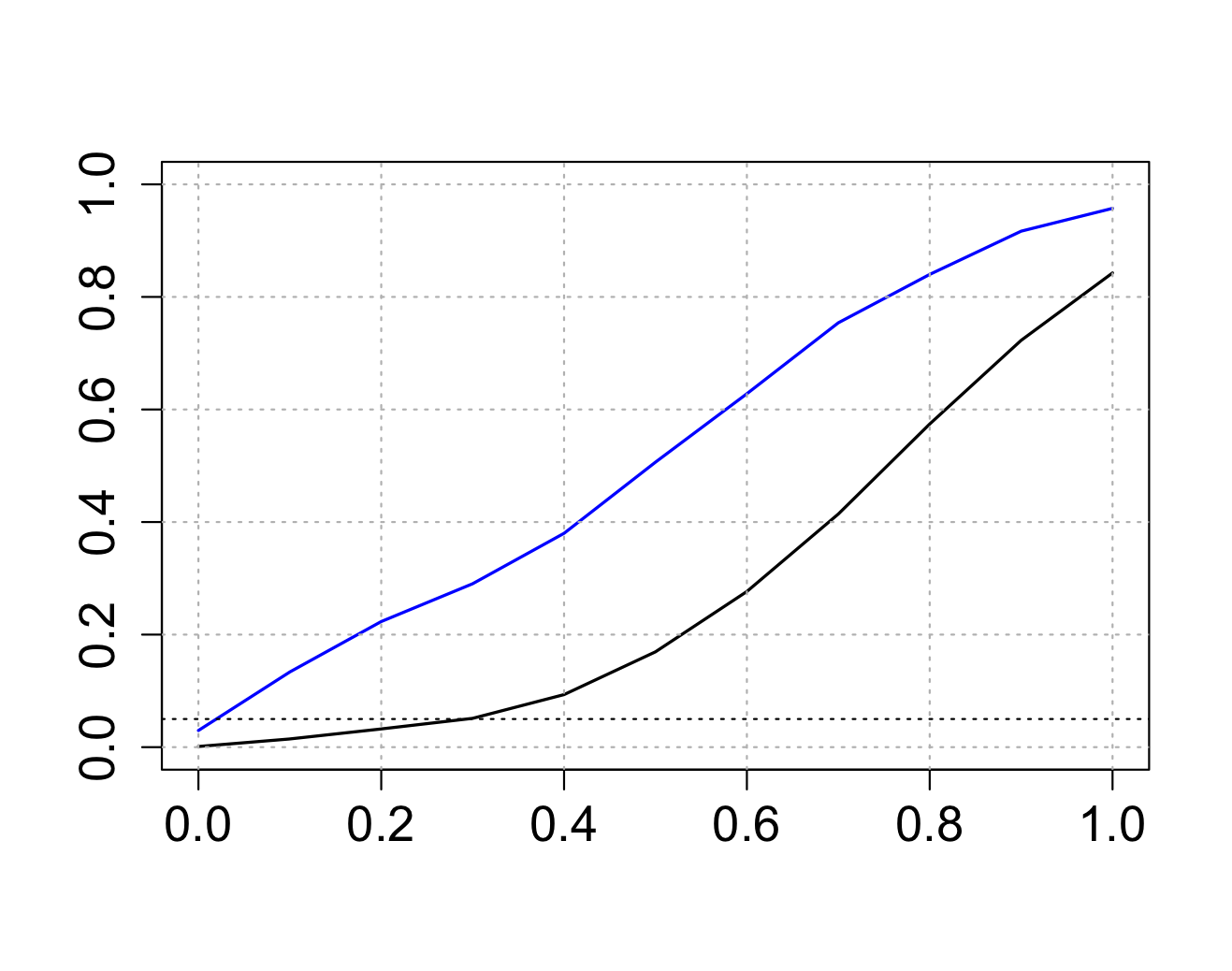}
		\subcaption{$\delta = 0.25, \gamma = 0.25$}
	\end{subfigure} \hfill
	\begin{subfigure}{0.45\textwidth} \centering
		\includegraphics[width=\textwidth]{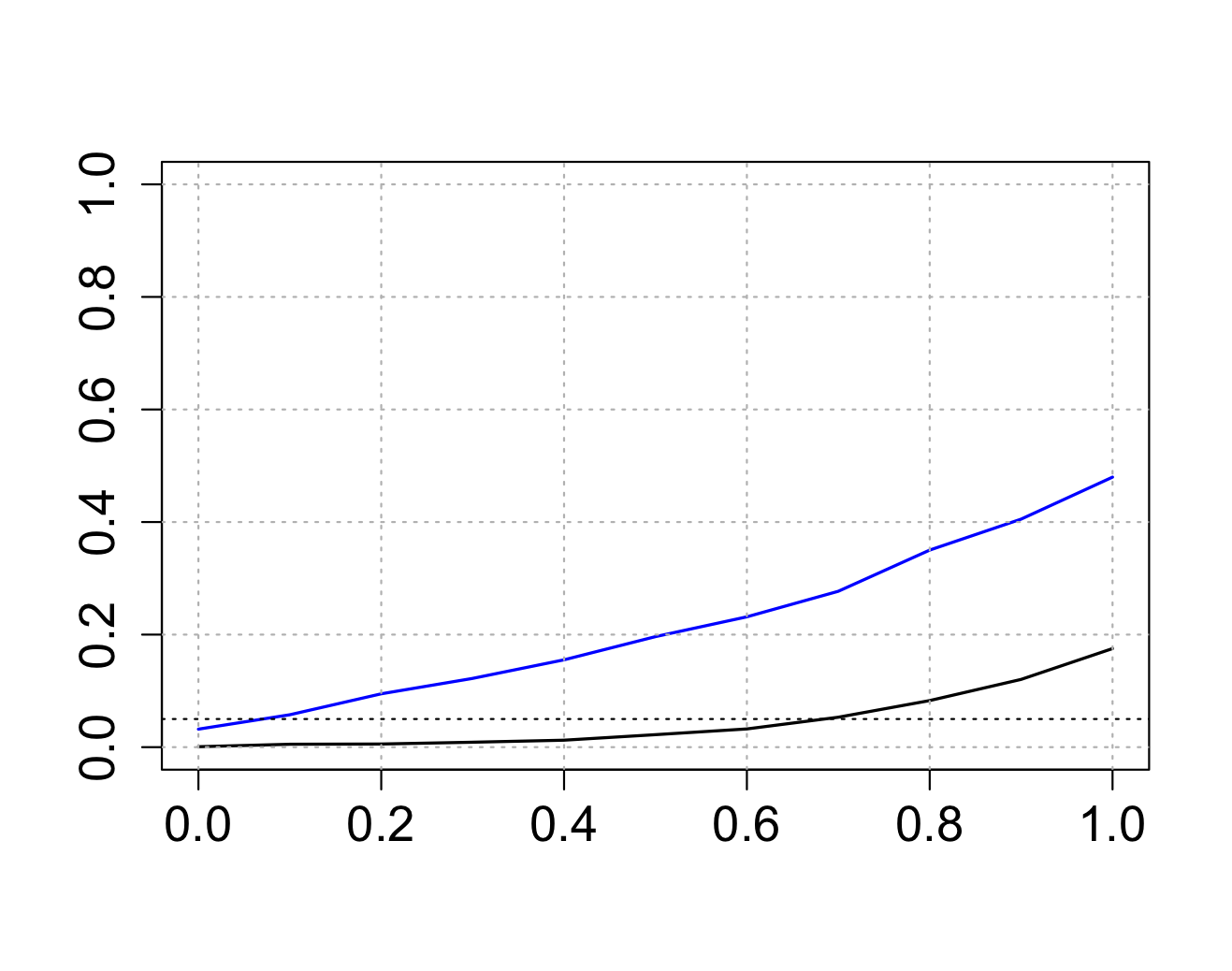}
		\subcaption{$\delta = 0.5, \gamma = 0.5$}
	\end{subfigure} 
	\caption{Power Functions of Conditional and Unconditional Tests in Example \ref{ex:partial_monotonicity}.} \label{fig:pm_sim_results}
\end{figure}
\begin{figure}[!htbp]
	\centering 
    \begin{subfigure}{0.45\textwidth} \centering
		\includegraphics[width=\textwidth]{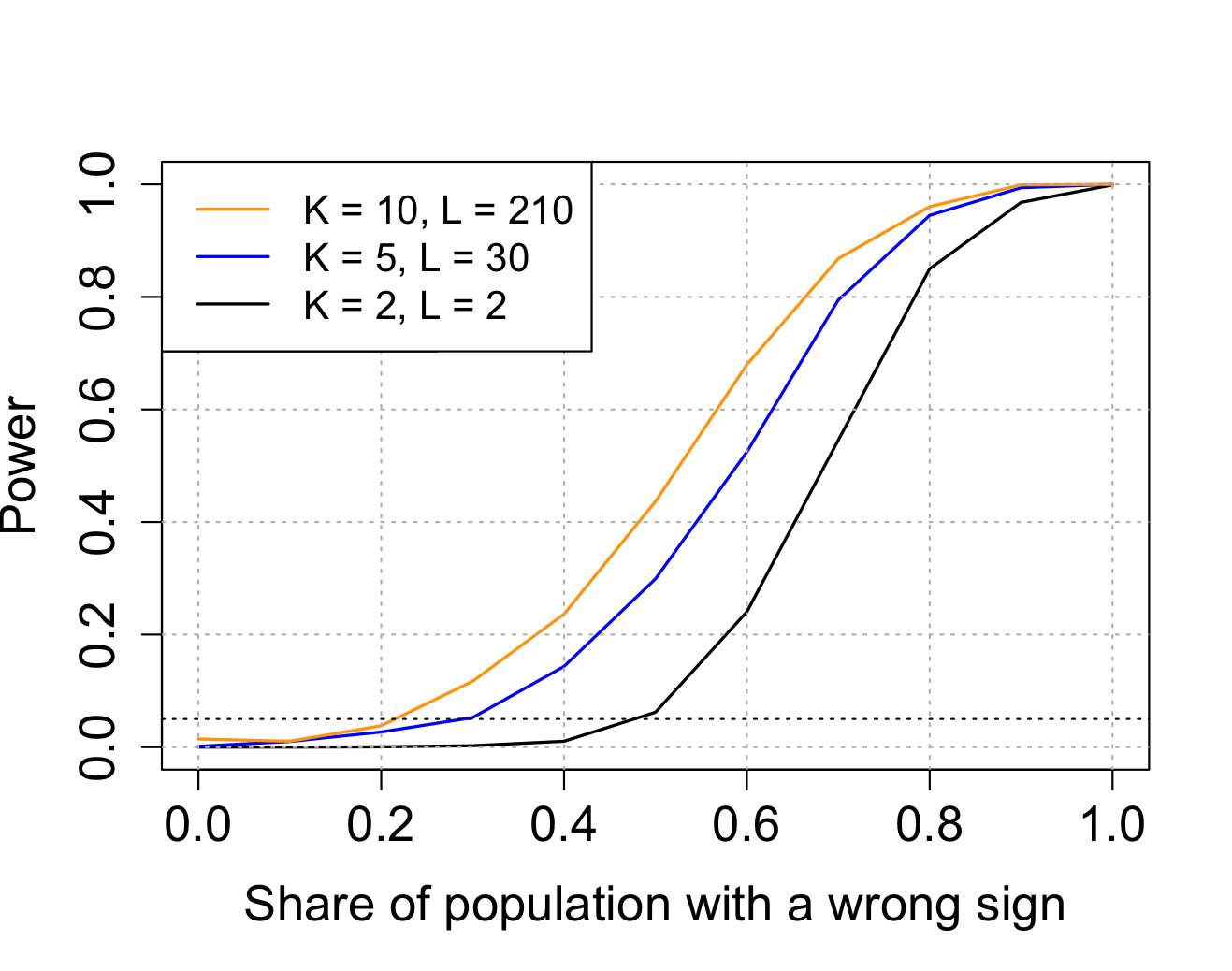}
		\subcaption{$J = 3$}
	\end{subfigure} \hfill
	\begin{subfigure}{0.45\textwidth} \centering
		\includegraphics[width=\textwidth]{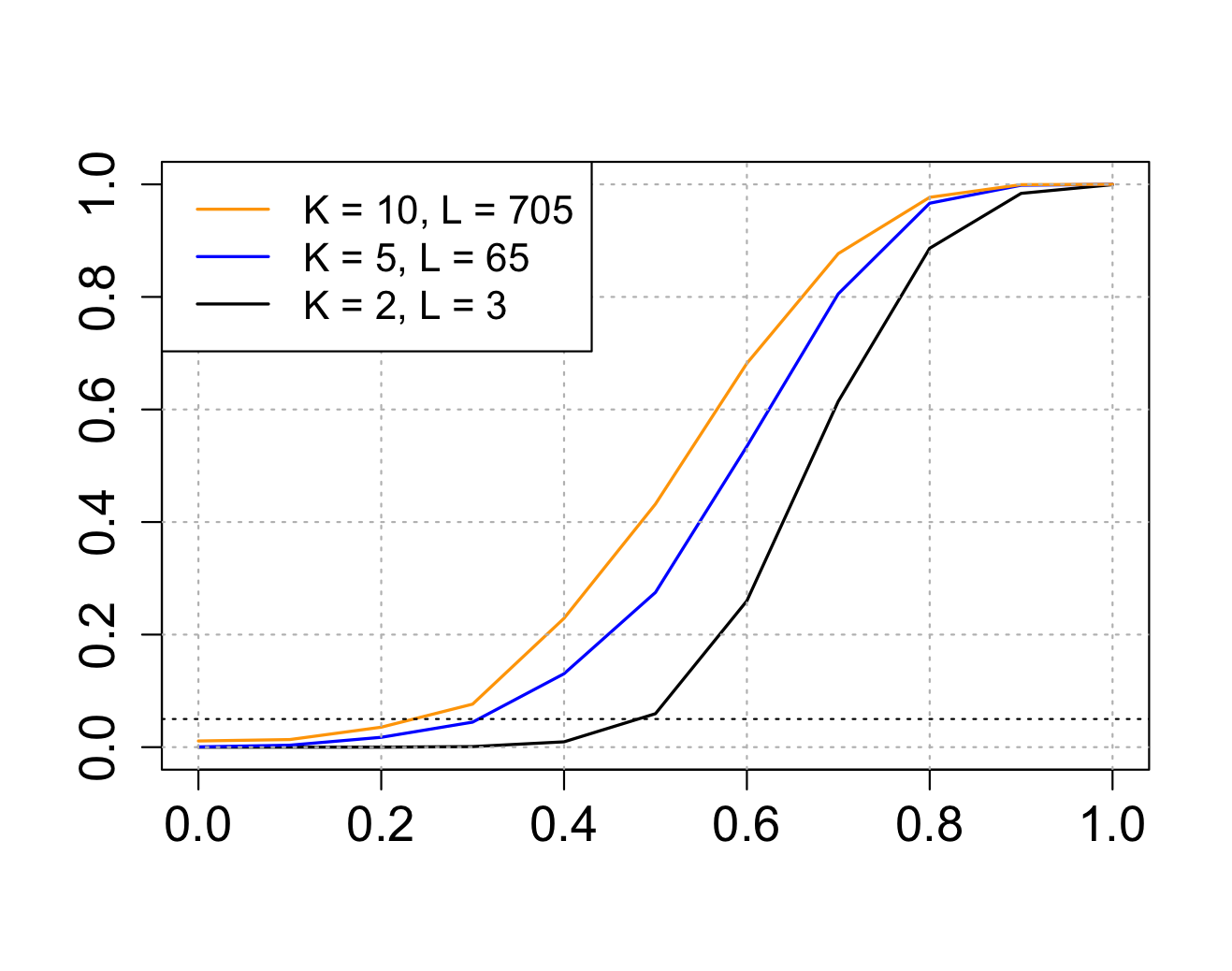}
		\subcaption{$J = 4$}
	\end{subfigure}\\ 
	\caption{Power Functions for Tests of Monotonicity.} \label{fig:m_sim_results}
\end{figure}

\subsection{Monotonicity}
Let $D = \{1, \dots, J\}$  and  $\mathcal{Z} = \{1, \dots, K\}$, and consider the null hypothesis
\[
D(z) \leqslant D(z'), \;\; \forall z \leqslant z'.
\]
This support restriction is regular (by verifying Conditions \ref{cond:comparability} and \ref{cond:pw_incompatibility} or numerical checks), so the testable implications of Theorem \ref{thm:testable_implication} are sharp. 

Suppose the potential treatments are generated from the multiple thresholds model
\[
D(z) = \textstyle \sum_{j = 1}^{J} j\cdot \bm{1}\left( t_{j-1} \leqslant B_0 + \sum_{k = 1}^K B_k \bm{1}(z = k) \leqslant t_j \right)
\]
where $(B_0, B_1, \dots, B_K)$ represent individual heterogeneity, and $t_0, \dots, t_J$ is a set of thresholds. The null hypothesis corresponds to $B_k \leqslant_{a.s.} B_{k+1}$, for all $k \in\{ 1, \dots, K-1\}$. We focus on alternative DGPs such that, for each individual, monotonicity either holds for all $k$ or is violated for all $k$. Specifically, let $U_1, \dots, U_{K} \sim \text{i.i.d. } F_U$  and $U_{1:K}, \dots, U_{K:K}$ denote the order statistics ($U_{1:K}$ is the smallest). Let $U_0 \sim U[0, 1]$ be drawn independently of $U_1, \dots, U_K$, and  
\[
(B_1, \dots, B_K) = (U_{1:K}, \dots, U_{K:K})\cdot \bm{1}(U_0 \leqslant  h) + (U_{K:K}, \dots, U_{1:K})\cdot \bm{1}(U_0 > h).
\]

We set the thresholds $t_0, \dots, t_J$ uniformly on a grid within $[-1, 1]$ and $U_k \sim \exp(1)$. We conduct $S = 2000$ simulations with sample size $n = 200$, for $J \in \{3, 4\}, K \in \{2, 5, 10\}$ and $h \in [0, 1]$. Since the number of inequalities in some specifications exceeds the sample size, we use the test of \citet{bai2022two} with the pre-test level $\beta = \alpha/10$. Figure \ref{fig:m_sim_results} presents the results. We find that instruments with richer support lead to substantially more powerful tests, at the expense of testing many moment inequalities.

\section{Empirical Illustration} \label{sec:empirical}
As an empirical illustration, we study the testable implications of various support restrictions in a potential outcome model of smoking cessation interventions. The purpose of this exercise is threefold: to demonstrate that empirically relevant hypotheses can be formulated as support restrictions, to show that the proposed tests are computationally feasible in practice, and to illustrate how they can yield economically meaningful insights.

We use data from the US Lung Health Study (LHS), a randomized clinical trial that enrolled subjects aged 35 to 59 with early chronic obstructive pulmonary disease. The study assigned subjects to three treatment groups: control ($C$), intensive smoking cessation therapy with an inhaled bronchodilator ($SIA$), and the same therapy with a placebo bronchodilator ($SIP$). The therapy intervention consisted of physician-led counseling on the health consequences of smoking, along with orientation sessions involving family members and friends. The sample size is $n=5,887$.

\subsection{Interference}
The LHS followed subjects and recorded both their smoking status and that of their spouses for up to five years after the intervention. Using a linear regression framework, \citet{Fletcher:2017aa} document evidence of spillover effects of the SIA and SIP interventions on spouses’ smoking behavior. Motivated by these findings, we adopt the following formulation.

Let $i \in \{1,2\}$ index individuals, where $i=1$ denotes the subject and $i=2$ denotes the subject’s spouse. Let $Y_i$ be a binary outcome indicating whether individual $i$ smoked after the intervention. Let $z=(z_1,z_2)$ denote the vector of treatment assignments for individuals~1 and~2, and let $Y_i(z)$ denote individual $i$’s potential outcome under treatment assignment $z$.

For a given treatment assignment $z$, let $D_i(z)$ denote the level of exposure experienced by individual $i$. We define
\begin{align}
D_i(z)=
\begin{cases}
3 & \text{if } z_i = SIA,\\
2 & \text{if } z_i = SIP,\\
1 & \text{if } z_i = C,\; z_{-i} \in \{SIA, SIP\},\\
0 & \text{if } z_i = C,\; z_{-i} = C,
\end{cases}
\label{eq:exposure_map}
\end{align}
where $z_i$ denotes individual $i$’s treatment status and $z_{-i}$ denotes the treatment status of individual $i$’s spouse. Here, $D_i=3$ corresponds to receiving the SIA treatment, $D_i=2$ corresponds to receiving the SIP treatment, $D_i=1$ indicates exposure through the spouse receiving either SIA or SIP, and $D_i=0$ indicates no treatment exposure. For now, we maintain the exposure mapping assumption in \eqref{eq:exposure_map}, and we test this restriction in the next section.

Consider testing for the absence of spillover effects from a subject to their spouse.
 Using the exposure map, the restriction states\footnote{Under \eqref{eq:exposure}, we can define a potential outcome $\tilde Y_i(d)$ such that $\tilde Y_i(d)=Y_i(z)$ for any $z$ with $D_i(z)=d$. The no spillover effect restriction can also be expressed more succinctly as $\tilde Y_i(1)= \tilde Y_i(0), ~a.s.$ Similarly, \eqref{eq:negspill} is equivalent to $\tilde Y_i(1)\le \tilde Y_i(0), ~a.s.,$
which can be viewed as a monotone treatment response assumption on the exposure-based potential outcome. }
\begin{align}
D_i(z)=1,~ D_i(z')=0~\Rightarrow~Y_i(z)= Y_i(z').	\label{eq:nospill}
\end{align}
We also test a weaker restriction that allows non-positive spillover effects:
\begin{align}
D_i(z)=1,~ D_i(z')=0~\Rightarrow~Y_i(z)\leqslant Y_i(z').	\label{eq:negspill}
\end{align}
Applying Theorem \ref{thm:testable_implication} and conditioning on covariates $\X_i$, the testable implications of the no spillover restriction are
\begin{align}
P(Y_i=1\,|\,D_i=0,\X_i)+P(Y_i=0\,|\,D_i=1,\X_i)&\leqslant 1\\
P(Y_i=0\,|\,D_i=0,\X_i)+P(Y_i=1\,|\,D_i=1,\X_i)&\leqslant 1.	
\end{align}
Similarly, the testable implication of the non-positive spillover restriction is as follows.
\begin{align}
P(Y_i=0\,|\,D_i=0,\X_i)+P(Y_i=1\,|\,D_i=1,\X_i)\leqslant 1.	\label{eq:spillover_restriction}
\end{align}
We numerically verify that Assumption~\ref{A:support_regularity} holds for both restrictions. Hence, by Theorem~\ref{thm:sharpness_of_mis}, the model yields no additional testable implications.

We restrict the sample to married units with no missing observations, yielding 3,874 observations. Along with the exposure level $D_i$, we condition on survey wave $W_i \in \{1,2,3,4,5\}$. Since $W_i$ is discrete, we stack the conditional inequalities together and apply the tests of \citet{andrews2010inference} (AS).\footnote{As a comparison, we have also implemented the test of \cite{romano2014practical} and obtained the same conclusion.}
Panel (A) of Table~\ref{tab:empirical} reports the results. The test rejects the null hypothesis of no spillover effects at the 5\% significance level, but it does not reject the null of non-positive spillover effects. The test statistic for the latter restriction is zero, indicating that the sample satisfies the empirical analog of \eqref{eq:spillover_restriction}. This finding suggests that a subject’s participation in the smoking cessation program can reduce a spouse’s likelihood of smoking.  Hence, the smoking cessation programs may generate broader benefits while remaining cost-effective. One possible explanation is the spouse’s attendance at orientation sessions, as well as mutual support within the couple for quitting smoking and participating in the program.

Following \citet{Fletcher:2017aa}, we next include an extended set of covariates: age, sex, education, BMI, survey wave, and an indicator for the spouse’s baseline smoking status, treating age and BMI as continuous. To handle continuous covariates, we use the test of \citet{CLR13} (CLR).
The second row of Panel (A) of Table \ref{tab:empirical} shows that the CLR test rejects the no-spillover null because the precision-corrected intersection bound is strictly positive, but does not reject the non-positive spillover null. These findings confirm that the earlier conclusions are robust to controlling for additional covariates.

\newcolumntype{C}[1]{>{\centering\arraybackslash}p{#1}}

\begin{table}[t] 
\renewcommand{\arraystretch}{1.}
\centering
\begin{threeparttable}
\caption{Tests of spillovers, exposure mapping, and persistence of treatment effects.}
\label{tab:empirical}

\begin{tabular}{c*{5}{C{2.4cm}}}
\toprule
Panel & Test & Test Stat. & Crit. Value & Test Stat. & Crit. Value \\ \midrule

\multirow{4}{*}{(A)} 
 &  & \multicolumn{2}{c}{No Spillover Effect} & \multicolumn{2}{c}{Non-positive Spillover Effect} \\ 
 \cmidrule(lr){3-4} \cmidrule(lr){5-6}
 & AS   & 4.182 & 2.590 & 0.000 & 2.318 \\
 & CLR  & 0.130 & 0     & -0.013 & 0     \\ 
\cmidrule(lr){2-6}
 & \# Inequalities 
     & \multicolumn{2}{c}{2} 
     & \multicolumn{2}{c}{1} \\ \midrule

\multirow{4}{*}{ (B) } 
 &  & \multicolumn{2}{c}{Exposure Mapping (EM)} & \multicolumn{2}{c}{Semimonotonicity \& EM} \\ 
 \cmidrule(lr){3-4} \cmidrule(lr){5-6}
 & AS   & 0.738 & 2.554 & 21.841 & 3.208 \\
 & CLR  & 0.0216 & 0    & 0.347  & 0     \\ 
\cmidrule(lr){2-6}
 & \# Inequalities 
     & \multicolumn{2}{c}{2} 
     & \multicolumn{2}{c}{22} \\ \midrule

\multirow{4}{*}{ (C) } 
 &  & \multicolumn{2}{c}{No Effect on $L$} & \multicolumn{2}{c}{Monotone Effects on $L$} \\ 
 \cmidrule(lr){3-4} \cmidrule(lr){5-6}
 & AS   & 22.853 & 3.292 & 0.675 & 1.989 \\
\cmidrule(lr){2-6}
 & \# Inequalities 
     & \multicolumn{2}{c}{726} 
     & \multicolumn{2}{c}{25} \\
\bottomrule
\end{tabular}

\begin{tablenotes}
\small
\item \textit{Notes:} The table reports test statistics and critical values for AS and CLR tests. All tests are implemented at the 5\% significance level. The \# Inequalities corresponds to the number of conditional moment inequalities implied by each hypothesis.
\end{tablenotes}
\end{threeparttable}
\end{table}

\subsection{Exposure Mapping and Semi-monotonicity}
So far, we have assumed that the exposure level determined the outcome. We now test this assumption.
 Let $Y=(Y_1,Y_2)$. 
By Theorems \ref{thm:testable_implication}-\ref{thm:sharpness_of_mis} (and ensuring Assumption \ref{A:support_regularity} numerically), the following inequalities are the sharp testable implications of the specified exposure mapping (EM) and imposing  \eqref{eq:exposure}:
\begin{multline}
P(Y=(0,1)\,|\,Z=(SIP,C),\X)+P(Y=(1,1)\,|\,Z=(SIP,C),\X)\\
+P(Y=(0,0)\,|\,Z=(SIA,C),\X)+P(Y=(1,0)\,|\,Z=(SIA,C),\X)\leqslant 1,\label{eq:exposure_ineq1}
\end{multline}
and
\begin{multline}
P(Y=(0,0)\,|\,Z=(SIP,C),\X)+P(Y=(1,0)\,|\,Z=(SIP,C),\X)\\
+P(Y=(0,1)\,|\,Z=(SIA,C),\X)+P(Y=(1,1)\,|\,Z=(SIA,C),\X)\leqslant 1.\label{eq:exposure_ineq2}	
\end{multline}
In addition to the exposure mapping assumption, we consider imposing the following semimonotonicity restriction:
\begin{align}
D_i(z)\geqslant D_i(z')~\Rightarrow~Y_i(z)\leqslant Y_i(z').
\end{align} 
This restriction simultaneously imposes (i) administering an inhaled bronchodilator (as opposed to a placebo) weakly reduces $Y_i$, (ii) receiving the therapy as a subject weakly reduces $Y_i$ compared to receiving it as a subject's spouse, and (iii) being a subject's spouse weakly reduces $Y_i$ relative to the baseline. Imposing them adds 20 conditional moment inequalities (for each $\x$) to \eqref{eq:exposure_ineq1}-\eqref{eq:exposure_ineq2}. %

Panel (B) of Table \ref{tab:empirical} summarizes the results of the AS and CLR tests. When implementing the AS test, we used a sample of married units and included only the survey wave as a conditioning variable. The AS test does not reject the exposure mapping assumption. However, when conditioning on the extended set of covariates, the CLR test rejects the specified exposure mapping, indicating that a more refined definition of exposure levels may be needed for some $w$.
The joint hypothesis of semimonotonicity and the exposure mapping is rejected by a large margin by both tests.

\begin{table}[t]
\centering
\begin{tabular}{lcc}
\toprule
Graph Properties        & No Effect on $L$ & Monotone Effects on $L$ \\
\midrule
Number of nodes          & 96 & 96    \\
Number of edges          & 1,026 & 2,049  \\
Number of support points & 4,682 & 12,494 \\
Number of MISs           & 726 & 25    \\
Support restriction regularity & \checkmark & \checkmark \\
\bottomrule
\end{tabular}
\caption{Summary of the Potential Response Graph $G$: Effects on Cessation Length}
\label{tab:graph_summary_cessation_length}
\end{table}

\subsection{Persistence of the Cessation Program}
Finally, we examine the persistence of the cessation program on the subject's smoking status. 
Let $Y(z)=(Y_1(z),\dots,Y_5(z))$ be the subject's potential outcome across the five survey waves at the treatment status $z\in \{C, SIP, SIA\}$. Let $L(z)$ denote the potential length of cessation, defined as the duration of the initial spell where $Y_t(z)=0$. Specifically, $L(z)=\max\{t\in\{0,1,\dots,5\}:Y_1(z)=Y_2(z)=\cdots=Y_t(z)=0\}$. 
We first consider the null hypothesis that treatment has no effect on cessation length:
\begin{align}
L(C)= L(SIP)= L(SIA).	\label{eq:no_L_effect}
\end{align} Next, we examine a monotonicity restriction that assumes more intensive treatments weakly extend cessation length:
\begin{align}
L(C)\leqslant L(SIP)\leqslant L(SIA).	\label{eq:monotone_L_effect}
\end{align}
The restrictions allow many possible support points for $Y(\cdot)$: 4,682 types satisfy \eqref{eq:no_L_effect} and 12,494 satisfy \eqref{eq:monotone_L_effect}, posing a computational challenge for the alternative approaches. Nevertheless, both satisfy Assumption \ref{A:support_regularity}, and verifying this took under 10 seconds using 
despite the graph size.\footnote{The graph perfectness is checked by applying \texttt{is\_perfect()} in \texttt{SageMath}.} Thus, it suffices to check inequalities implied by the maximal independent sets --- 726 for \eqref{eq:no_L_effect} and 25 for \eqref{eq:monotone_L_effect} --- making the AS test computationally feasible. 

Panel (C) of Table \ref{tab:empirical} reports results based on 5,738 subjects. The AS test rejects the null hypothesis of no effect, while failing to reject the monotonicity restriction. These findings are consistent with the interpretation that the intervention prolongs smoking cessation and that its effect is (weakly) larger when combined with a medical component.

\section{Conclusion}
This paper develops a systematic approach for deriving sharp testable implications of exclusion and shape restrictions in potential outcome models. The proposed framework covers a broad class of restrictions studied in the literature and accommodates continuous outcomes as well as control variables. A result of independent interest is a simple necessary and sufficient condition for the existence of a joint distribution with a given finite collection of marginals and support. The class of distributions to which our sharp characterization applies can be further intersected with sets imposing additional distributional assumptions on potential outcomes. Investigating the resulting testable implications in such settings is a promising direction for future research.

\newpage

\appendix

\section{Proofs of the Main Results}
This appendix is organized as follows. The main results in the text (Theorems~\ref{thm:testable_implication}, \ref{thm:general_testable_implication}, and \ref{thm:sharpness_of_mis}) all follow from Lemma~\ref{L:graph_polytopes} and Theorem~\ref{T:main} presented below. To establish these results, we introduce relevant concepts and tools from matroid and graph theory in Section~\ref{AS:preliminaries}. Sections~\ref{ssec:clique_polytope_duality}–\ref{ssec:k_partite_duality} develop key duality results for graph-associated polytopes. Sections~\ref{ssec:main_theorem}–\ref{ssec:proof_main_theorems} present and prove the main theorems. Finally, Section~\ref{ssec:general_outcome} extends these results to settings in which the outcome space is a Polish space.

\subsection{Preliminaries} \label{AS:preliminaries}

\subsubsection{Independence Systems and Matroids}

\begin{definition}[Independence System and Matroid]
	Let $V$ be a finite set and $\mathcal{S}$ be a family of subsets of $V$. A set system $(V, \mathcal{S})$ is an independence system if: (i) $\varnothing \in \mathcal{S}$; and (ii) $S \in \mathcal{S}$ and $S'\subseteq S$ imply $S' \subseteq \mathcal{S}$. An independence system is a matroid if, additionally, (iii) If $S, S' \in \mathcal{S}$ and $|S'| > |S|$, then there is an $s' \in S' \backslash S$ with $S \cup \{s'\} \in \mathcal{S}$. The set $V$ is called the ground set, and the elements of $\mathcal{S}$ are called independence sets.\footnote{The name should not be confused with independent sets in a graph, as defined below. The family of all independent sets in a graph need not be a matroid.}
\end{definition}

\begin{definition}[Rank Function]
	The rank function  $r: 2^{V} \to \{1, \dots, |V|\}$ of a matroid $(V, \mathcal{S})$ is defined as $r(C) = \max\{|S|: S \subseteq C,\; S \in \mathcal{S}\}.$
\end{definition}

Given two matroids $(V, \mathcal{S}_1)$ and $(V, \mathcal{S}_2)$, their \textit{intersection} is defined as $(V, \mathcal{S}_1 \cap \mathcal{S}_2)$. Such an intersection is an independence system but not necessarily a matroid. The following results can be found, e.g., in \citet{korte2011combinatorial}, Proposition 13.25 and Corollary 14.13

\begin{lemma}[Independence System] \label{L:ind_system}
	Any independence system can be represented as a finite intersection of matroids on the same ground set. 
\end{lemma}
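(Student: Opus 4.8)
The idea is to isolate each ``minimal obstruction'' of the independence system inside its own matroid. Call a set $C \subseteq V$ a \emph{circuit} of $(V,\mathcal{S})$ if $C \notin \mathcal{S}$ but every proper subset of $C$ lies in $\mathcal{S}$, and let $\mathcal{C}$ denote the collection of all circuits. Since $V$ is finite, $\mathcal{C}$ is finite; since $\varnothing \in \mathcal{S}$, every circuit is nonempty. If $\mathcal{S} = 2^V$, then $(V,\mathcal S)$ is already the free matroid, so assume otherwise; then $2^V \setminus \mathcal{S}$ is nonempty, and a minimal element of it (with respect to inclusion) is a circuit by downward closure, so $\mathcal{C} \neq \varnothing$. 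The plan is to show that $\mathcal{S} = \bigcap_{C \in \mathcal{C}} \mathcal{S}_C$ where, for each $C \in \mathcal{C}$, we set $\mathcal{S}_C = \{ S \subseteq V : C \not\subseteq S \}$, and that each $(V,\mathcal{S}_C)$ is a matroid.

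The first step is to check that $(V,\mathcal{S}_C)$ is a matroid. Axioms (i) and (ii) are immediate from the definition. For the exchange axiom (iii), take $S, S' \in \mathcal{S}_C$ with $|S'| > |S|$; I must produce $s' \in S' \setminus S$ with $C \not\subseteq S \cup \{s'\}$. If $|S' \setminus S| \geq 2$, then at most one element of $S' \setminus S$ is ``dangerous'' — this can only happen when exactly one element of $C$ lies outside $S$, in which case that element is the sole dangerous choice — so an admissible $s'$ remains. If instead $|S' \setminus S| = 1$, then $|S \cap S'| = |S'| - 1 \geq |S|$, which forces $S \subseteq S'$; writing $S' = S \cup \{s'\}$, the hypothesis $S' \in \mathcal{S}_C$ already gives $C \not\subseteq S \cup \{s'\}$. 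This establishes (iii).

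The second step is to verify $\mathcal{S} = \bigcap_{C \in \mathcal{C}} \mathcal{S}_C$. By downward closure together with finiteness of $V$, a set $S$ fails to lie in $\mathcal{S}$ if and only if it contains some circuit; equivalently, $S \in \mathcal{S}$ iff $C \not\subseteq S$ for every $C \in \mathcal{C}$, i.e.\ iff $S \in \mathcal{S}_C$ for all $C \in \mathcal{C}$. Combining the two steps exhibits $(V,\mathcal{S})$ as a finite intersection of the matroids $(V,\mathcal{S}_C)$, $C \in \mathcal{C}$.

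I expect the only delicate point to be the exchange axiom for $\mathcal{S}_C$: one must observe that the seemingly problematic case $|S' \setminus S| = 1$ is automatic, because there $S \subseteq S'$ and $S' \in \mathcal{S}_C$ holds by assumption. Everything else is routine bookkeeping with the definitions of circuit and of an independence system.
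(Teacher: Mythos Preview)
Your argument is correct. The paper does not give its own proof of this lemma; it simply cites \citet{korte2011combinatorial}, Proposition 13.25. Your construction---one matroid $(V,\mathcal S_C)$ per circuit $C$, with $\mathcal S_C=\{S:C\not\subseteq S\}$---is precisely the standard textbook proof found there (equivalently, $(V,\mathcal S_C)$ is the direct sum of the uniform matroid $U_{|C|-1,|C|}$ on $C$ with the free matroid on $V\setminus C$), and your verification of the exchange axiom, including the handling of the $|S'\setminus S|=1$ case via $S\subseteq S'$, is clean and complete.
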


\begin{lemma}[Matroid Intersection Polytope Theorem] \label{L:matroid_polytope}
Let $(V, \mathcal{S}_1)$ and $(V, \mathcal{S}_2)$ be matroids with rank functions $r_1$ and $r_2$. Then:
\[
\text{CHull}(\{\mathbf{1}_S: S \in \mathcal{S}_1 \cap \mathcal{S}_2\})
 = \{P \in \mathbb{R}^{|V|}: P'\bm{1}_B \leqslant \min(r_1(B), r_2(B)), \; \forall B \subseteq V, \; P \geqslant 0 \}.
\]
\end{lemma}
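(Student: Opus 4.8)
The plan is to prove the two inclusions separately. Write $Q$ for the polyhedron on the right-hand side, $Q = \{P \in \R^{|V|} : P \geqslant 0,\ \bm{1}_B' P \leqslant \min(r_1(B), r_2(B))\ \text{for all } B \subseteq V\}$; taking $B=\{v\}$ shows $0 \leqslant P(v) \leqslant 1$, so $Q$ is a polytope. The inclusion $\text{CHull}(\{\bm{1}_S : S \in \mathcal{S}_1 \cap \mathcal{S}_2\}) \subseteq Q$ is immediate: if $S$ is independent in both matroids, then for every $B \subseteq V$ the set $S \cap B \subseteq S$ is also independent in both, so $\bm{1}_B'\bm{1}_S = |S \cap B| \leqslant r_i(B)$ for $i = 1,2$; with $\bm{1}_S \geqslant 0$ and convexity of $Q$ this gives the claim.

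For the reverse inclusion I would first observe that the integral points of $Q$ are exactly the incidence vectors of common independent sets: if $P \in Q \cap \mathbb{Z}^{|V|}$, then $0 \leqslant P(v) \leqslant 1$ forces $P = \bm{1}_S$ with $S = \{v : P(v) = 1\}$, and taking $B = S$ in the rank inequalities gives $|S| = \bm{1}_S'P \leqslant r_i(S) \leqslant |S|$, so $r_i(S) = |S|$, i.e.\ $S \in \mathcal{S}_1 \cap \mathcal{S}_2$. Hence, once we know $Q$ is an integral polytope (all its vertices lie in $\mathbb{Z}^{|V|}$), we get $Q = \text{CHull}(Q \cap \mathbb{Z}^{|V|}) \subseteq \text{CHull}(\{\bm{1}_S : S \in \mathcal{S}_1 \cap \mathcal{S}_2\})$, which together with the first inclusion proves the lemma. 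The whole problem is thus reduced to showing that every vertex of $Q$ is integral.

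This last step is the substantive one and is precisely Edmonds' matroid intersection polytope theorem \citep{edmonds1979matroid} (see also \citet{korte2011combinatorial}), which it would be reasonable simply to invoke; for a self-contained treatment I would reproduce its uncrossing argument. Fix a vertex $P^*$ and consider the constraints active at $P^*$: the rows $P(v) = 0$ on the complement of the support, and, for each $i$, the family $\mathcal{F}_i = \{B : \bm{1}_B'P^* = r_i(B)\}$ of tight rank inequalities. Submodularity of $r_i$, together with $P^* \geqslant 0$ and the identity $\bm{1}_{B \cup B'} + \bm{1}_{B \cap B'} = \bm{1}_B + \bm{1}_{B'}$, forces $\mathcal{F}_i$ to be closed under union and intersection, so one may uncross and replace the tight constraints of matroid $i$ by ones indexed by a laminar (chain) subfamily spanning the same row space. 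The coefficient matrix of the resulting active subsystem — unit rows stacked with the incidence matrices of two laminar families — is totally unimodular, and since $P^*$ is the unique solution of a nonsingular square subsystem of active constraints whose right-hand sides ($0$ and the integers $r_i(B)$) are integral, $P^*$ is integral. The main obstacle, then, is exactly this uncrossing/total-unimodularity argument; everything else is bookkeeping, and in the paper it is natural to cite the classical result rather than reprove it.
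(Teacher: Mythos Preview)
Your proposal is correct, and indeed you already identify the right disposition: the paper does not prove this lemma at all but simply cites it as a classical result (Edmonds' matroid intersection polytope theorem; in the paper's words, ``The following results can be found, e.g., in \citet{korte2011combinatorial}, Proposition 13.25 and Corollary 14.13''). Your sketch of the uncrossing/total-unimodularity argument is the standard textbook proof and is accurate, but for the purposes of this paper the appropriate move is exactly the one you flag as reasonable---invoke the result rather than reprove it.
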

We rely on Lemmas \ref{L:ind_system} and \ref{L:matroid_polytope} in the proof of our key auxiliary Lemma \ref{L:polytope_duality}.

\subsubsection{Graphs} \label{graph_preliminaries}

Let $G = (V_G, E_G)$ is a finite undirected graph with vertices $V_G$ and edges $E_G$. We adopt the following standard terminology.  The \textit{graph complement} of $G$ is a graph $\overline{G} = (V_G, \overline{E}_G)$ with $\overline{E}_G = \{ij: ij \notin E_G\}$. A graph $G$ is \textit{complete} if $E_G = V_G \times V_G$. The \textit{subgraph of $G$ induced by a set of vertices} $C \subseteq V_G$, denoted $G[S]$, is an undirected graph with vertices $V_{G[C]} = C$ and edges $E_{G[C]} = \{(v, v') \in E_G: v, v' \in C\}$. A subset $C \subseteq V_G$ is a \textit{clique} if $G[C]$ is complete. A \textit{clique} $C$ is \textit{maximal} if for any $j \notin C$, the graph $G[C\cup \{j\}]$ is not complete.  A subset $I \subseteq V_G$ is an \textit{independent set} if $E_{G[I]} = \varnothing$. An \textit{independent set} $I$ is \textit{maximal} if for any $j \notin I$, $E_{G[I \cup \{j\}]} \ne \varnothing$. A graph $G$ is \textit{$K$-partite} if its' vertices can be partitioned as $V_G = \bigcup_{k = 1}^K V_k $ such that $V_k \cap V_l= \varnothing$ for all $k \ne l$ and each $V_k$ is an independent set.

\begin{definition}[Perfect Graph]
	A graph $G$ is perfect if none of its induced subgraphs is an odd cycle of length five or more, or a complement of one. 
\end{definition}

The above definition of perfect graphs is, in fact, a deep combinatorial result known as the Strong Perfect Graph Theorem, see \citet{chudnovsky2006strong}. A distinctive property of perfect graphs is the following; see Theorem 3.1 in \citet{chvatal1975certain}.

\begin{lemma}[Perfect Graph Polytope] \label{L:perfect}
Let $G = (V_G, E_G)$ be a finite undirected graph with $N$ vertices. Let $\mathcal{C}_G$ denote the family of all cliques and $\mathcal{I}_G$ denote the family of all maximal independent sets in $G$. The equality
\[
	\text{\normalfont CHull}(\{\bm{1}_C: C \in \mathcal{C}_G\}) = \{P \in \mathbb{R}^{N}: P'\bm{1}_{I} \leqslant 1, \; \forall I \in \mathcal{I}_G,  \;  P \geqslant 0\}
	\]
holds if and only if $G$ is perfect. 
\end{lemma}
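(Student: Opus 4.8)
The plan is to prove the equivalence by reducing it to the classical Chvát{a}l--Fulkerson--Lov{\'a}sz characterization of perfect graphs, stated on the complement. The inclusion $\text{CHull}(\{\bm 1_C : C \in \mathcal C_G\}) \subseteq \{P \geqslant 0 : P'\bm 1_I \leqslant 1\ \forall I\in\mathcal I_G\}$ holds for \emph{every} graph, since any clique $C$ and independent set $I$ satisfy $\bm 1_C'\bm 1_I = |C\cap I|\leqslant 1$, and these linear inequalities are preserved under convex combinations. Hence all the content is in the reverse inclusion. To exploit complementation, note that cliques of $G$ are exactly independent sets of $\overline G$ and maximal independent sets of $G$ are exactly maximal cliques of $\overline G$, so $\text{CHull}(\{\bm 1_C: C\in\mathcal C_G\}) = \mathrm{STAB}(\overline G)$ and $\{P\geqslant 0: P'\bm 1_I\leqslant 1\ \forall I\in\mathcal I_G\} = \mathrm{QSTAB}(\overline G)$, where $\mathrm{STAB}(H)$ is the convex hull of independent-set indicators of $H$ and $\mathrm{QSTAB}(H) = \{x\geqslant 0:\sum_{v\in K}x_v\leqslant 1\ \forall\text{ cliques }K\}$. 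Since $G$ is perfect iff $\overline G$ is perfect (the weak perfect graph theorem, a consequence of the Strong Perfect Graph Theorem already invoked in the definition of perfectness), it suffices to prove the equivalence $\mathrm{STAB}(H)=\mathrm{QSTAB}(H)\iff H\text{ perfect}$ for an arbitrary graph $H$, then specialize to $H=\overline G$.

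For the substantive direction (perfect $\Rightarrow$ equality), the plan is to show that $\mathrm{QSTAB}(H)$ is an \emph{integral} polytope; since its $0/1$ points are precisely the independent-set indicators, integrality forces $\mathrm{QSTAB}(H)=\text{CHull}(\mathrm{QSTAB}(H)\cap\{0,1\}^N)=\mathrm{STAB}(H)$. By the standard integrality criterion (for a down-monotone rational polytope in the nonnegative orthant it suffices to check nonnegative objectives), it is enough to show that $\max\{w^\top x: x\in\mathrm{QSTAB}(H)\}$ is an integer attained at an integral point for every $w\in\mathbb Z^N_{\geqslant 0}$. I would pass to the vertex blow-up $H_w$, obtained by replacing each vertex $v$ by $w_v$ mutually non-adjacent copies sharing $v$'s neighborhood. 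Lov{\'a}sz's replication lemma (duplication of a vertex preserves perfectness) shows $H_w$ is perfect, so the perfect-graph identity $\alpha=\bar\chi$ (stability number equals clique-cover number, valid on every induced subgraph of a perfect graph) applies to $H_w$. A short bookkeeping argument gives $\alpha(H_w)=\alpha_w(H)$ (the weighted stability number) and $\bar\chi(H_w)=\bar\chi_w(H)$ (the integer weighted clique-cover number), while LP duality identifies the optimum over $\mathrm{QSTAB}(H)$ with the fractional weighted clique cover $\bar\chi^*_w(H)$. Combining these yields
\[
\alpha_w(H)\ \leqslant\ \max\{w^\top x : x\in \mathrm{QSTAB}(H)\}\ =\ \bar\chi^*_w(H)\ \leqslant\ \bar\chi_w(H)\ =\ \bar\chi(H_w)\ =\ \alpha(H_w)\ =\ \alpha_w(H),
\]
so every inequality is an equality; the optimum equals the integer $\alpha_w(H)$ and is attained at an independent-set indicator, establishing integrality.

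For the converse (imperfect $\Rightarrow$ equality fails) I would argue contrapositively and exhibit a point in $\mathrm{QSTAB}(H)\setminus\mathrm{STAB}(H)$. By the Strong Perfect Graph Theorem an imperfect $H$ contains an induced odd hole $C_{2k+1}$ ($k\geqslant 2$) or an odd antihole on a vertex set $S$. Setting $x^*_v=\tfrac12$ on the odd hole (resp. $x^*_v=\tfrac1k$ on the antihole, whose maximum cliques have size $k$) and $x^*_v=0$ elsewhere gives a point satisfying every clique inequality, hence lying in $\mathrm{QSTAB}(H)$: for any clique $K$, $\sum_{v\in K}x^*_v=\sum_{v\in K\cap S}x^*_v\leqslant 1$ because $K\cap S$ is a clique of $H[S]$. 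However, the inequality $\sum_{v\in S}x_v\leqslant\alpha(H[S])$ is valid for $\mathrm{STAB}(H)$ (any independent set meets $S$ in at most $\alpha(H[S])$ vertices), and $x^*$ violates it, since $\sum_{v\in S}x^*_v=\tfrac{2k+1}{2}>k$ (resp. $2+\tfrac1k>2$). Thus the containment is strict.

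The main obstacle is the perfect-to-integral direction: the genuine use of perfectness is concentrated in the replication lemma and in reducing a weighted optimization to an unweighted one on the blow-up $H_w$, where the identity $\alpha=\bar\chi$ can be applied. I would cite Lov{\'a}sz's replication lemma and the perfect graph theorem rather than reprove them; the LP-duality sandwich and the polyhedral integrality step are then routine, and the converse is elementary once the Strong Perfect Graph Theorem supplies the forbidden induced subgraphs.
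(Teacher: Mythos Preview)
Your proposal is correct and traces the classical Chv\'atal--Fulkerson--Lov\'asz argument. Note, however, that the paper does not actually prove this lemma: it is stated as a known result with a direct citation to Theorem~3.1 in \citet{chvatal1975certain}, so there is no ``paper's own proof'' to compare against.

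Your outline---passing to the complement to recast the statement as $\mathrm{STAB}(H)=\mathrm{QSTAB}(H)$, invoking the replication lemma to reduce weighted stable-set optimization to the unweighted case on the blow-up $H_w$, and sandwiching via LP duality---is essentially the original proof. One minor remark: for the converse direction you invoke the Strong Perfect Graph Theorem to locate an odd hole or antihole, which is heavier than necessary (Lov\'asz's 1972 characterization of minimally imperfect graphs, or even a direct LP-duality argument showing $\alpha_w(H)<\bar\chi^*_w(H)$ for some $w$ when $H$ is imperfect, suffices and predates SPGT by decades). But since the paper adopts the SPGT characterization as its \emph{definition} of perfectness, your choice is internally consistent with the surrounding text and arguably the cleanest route given that framing.
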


\subsection{Clique Polytope Duality}\label{ssec:clique_polytope_duality}
Let $G = (V_G, E_G)$ be an undirected graph with $N$ vertices. Let $\mathcal{C}$ denote the collection of all cliques in $G$ (including the empty set), and $\mathcal{C}^* \subseteq \mathcal{C}$ be an arbitrary subset. Let $\mathcal{S}^*$ contain all cliques $\mathcal{C} \in \mathcal{C}^*$ and all of their subcliques (including an empty set). Denote
\[
\begin{array}{l}
	\overline{\mathbb{P}}_{\mathcal{C}^*} = \text{ConvHull}(\{\bm{1}_C: C \in \mathcal{C}^*\});\\[2mm]
	\overline{\mathbb{P}}_{\mathcal{S}^*} = \text{ConvHull}(\{\bm{1}_S: S \in \mathcal{S}^*\});\\[2mm]
	\overline{\mathbb{P}}_{\mathcal{C}} = \text{ConvHull}(\{\bm{1}_C: C \in \mathcal{C}\});\\[2mm]
\end{array}
\]
Note that $\overline{\mathbb{P}}_{\mathcal{C}^*} \subseteq \overline{\mathbb{P}}_{\mathcal{S}^*} \subseteq \overline{\mathbb{P}}_{\mathcal{C}}$ holds by construction, and $\overline{\mathbb{P}}_{\mathcal{S^*}} = \overline{\mathbb{P}}_{\mathcal{C}}$ holds if and only if $\mathcal{C}^*$ includes all maximal cliques in $G$.

Let $\mathcal{V}$ contain all subsets of $V_G$, $\mathcal{I}$ contain all maximal independent sets in $G$, and denote
 \[
 \ell_{G}(V) = \max_{C \in \mathcal{C}^*}|V \cap C|,
 \]
 for $V \in \mathcal{V}$, which we will call the \textit{level} of the set $V$. In particular, $\ell(I) = 1$, for all $I \in \mathcal{I}$
Denote
\[
\begin{array}{l}
	\mathbb{P}_{\mathcal{V} \,|\,\mathcal{C}^*} = \{P \in \mathbb{R}^N_{+}: P'\bm{1}_{V} \leqslant \ell_{G}(V), \; \forall \, V \in \mathcal{V}\};\\[2mm]
	\mathbb{P}_{\mathcal{I}} = \{P \in \mathbb{R}^N_{+}: P'\bm{1}_{I} \leqslant 1, \; \forall \, I \in \mathcal{I}\},
\end{array}
\]
and note that $\mathbb{P}_{\mathcal{V}\,|\, \mathcal{C}^*} \subseteq \mathbb{P}_{\mathcal{I}}$.

\begin{lemma}[Polytope Duality] \label{L:polytope_duality}
The following statements hold.
\begin{enumerate} 
	\item $\overline{\mathbb{P}}_{\mathcal{S}^*} = \mathbb{P}_{\mathcal{V}\,|\,\mathcal{C}^*}.$
	\item $\overline{\mathbb{P}}_{\mathcal{S}^*} = \mathbb{P}_{\mathcal{I}}$ hods if and only if $\mathcal{C}^*$ contains all maximal cliques in $G$ and $G$ is perfect.
\end{enumerate}

	\begin{proof}
	
	\textbf{Claim 1.} By construction, $\mathcal{S}^*$ is an independence system. By Lemma \ref{L:ind_system}, there exist $J<\infty$  matroids $(V_G, \mathcal{S}_{j})$ such that $\mathcal{S}^* = \bigcap_{j = 1}^J \mathcal{S}_{j}$. Let $r_j$ denote the rank function of $\mathcal{S}_{j}$ and $r(V) = \min_{j \in \{1, \dots, J\}} r_j(V)$, for any $V \in \mathcal{V}$. By Lemma \ref{L:matroid_polytope},
\[
\overline{\mathbb{P}}_{\mathcal{S}^*} = \{P \in \mathbb{R}^{N}_{+}: P'\bm{1}_V \leqslant r(V), \; \forall\, V \in \mathcal{V}\}.
\]
Let $\mathbb{P}_r$ denote the set in the right-hand side. We will show that $\overline{\mathbb{P}}_{\mathcal{S}^*} \subseteq \mathbb{P}_{\mathcal{V} \,|\, \mathcal{C}^*} \subseteq \mathbb{P}_r$, which implies that all three polytopes are equal.

\textbf{Step 1.1: $\overline{\mathbb{P}}_{\mathcal{S}^*} \subseteq \mathbb{P}_{\mathcal{V} \,|\, \mathcal{C}^* }$.} Since both sets are convex polytopes, it suffices to show that $\bm{1}_S$ belongs to $\mathbb{P}_{\mathcal{V} \,|\, \mathcal{C}^* }$, for each $S \in \mathcal{S}^*$. Indeed, for any $S \in \mathcal{S}^*$ and $V \in \mathcal{V}$,
\[
\bm{1}_S'\bm{1}_V = |S \cap V| \leqslant \max_{C \in \mathcal{C}^*_G} |C \cap V| = \ell_{G}(V),
\]
where the inequality holds since $S \subseteq C$ for some $C \in \mathcal{C}^*$. Thus, $\overline{\mathbb{P}}_{\mathcal{S}^*} \subseteq \mathbb{P}_{\mathcal{V} \,|\, \mathcal{C}^* }$. 

\textbf{Step 1.2: $\mathbb{P}_{\mathcal{V}\,|\,\mathcal{C}^*} \subseteq \mathbb{P}_r$}. For each $C \in \mathcal{C}^*$, all sets of the form $S = C \cap V$, for all $V \in \mathcal{V}$, are contained in $\mathcal{S}^*$ and therefore in each $\mathcal{S}_{j}$. Thus, for each $j$,
\[
\ell_{\mathcal{C}^*}(V) = \max\{|C \cap V|:C \in \mathcal{C}^*\} \leqslant \max\{|S|: S \subseteq V, S \in \mathcal{S}_{j} \} = r_j(B). 
\]
Therefore $\ell_{\mathcal{C}^*}(V) \leqslant \min_{j \in \{1, \dots, J\}}r_j(V) = r(V)$, which  implies $\mathbb{P}_{\mathcal{V}\,|\,\mathcal{C}^*} \subseteq \mathbb{P}_r$.

\medskip 

\textbf{Claim 2.} Any clique $C \in \mathcal{C}$ and any independent set $I \in \mathcal{I}$ can have at most one vertex in common, so all vectors $\{\bm{1}_C: C \in \mathcal{C}\}$ satisfy $\bm{1}_{C}'\bm{1}_I \leqslant 1$, for all $I \in \mathcal{I}$, and, therefore, $\overline{\mathbb{P}}_{\mathcal{C}} \subseteq \mathbb{P}_{\mathcal{I}}$. As a result, we have a chain of inclusions $ \overline{\mathbb{P}}_{\mathcal{S}^*} \subseteq \overline{\mathbb{P}}_{\mathcal{C}} \subseteq \mathbb{P}_{\mathcal{I}}$. By construction, the first inclusion holds as equality if and only if $\mathcal{C}^*$ contains all maximal cliques in $G$. By Lemma \ref{L:perfect}, the second inclusion holds as equality if and only if $G$ is perfect. Thus, $\overline{\mathbb{P}}_{\mathcal{S}^*} = \mathbb{P}_{\mathcal{I}}$ holds if and only if $\mathcal{C}^*$ contains all maximal cliques in $G$, and $G$ is perfect.
	\end{proof}
\end{lemma}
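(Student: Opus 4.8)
The plan is to treat the two claims separately and, in each, convert an equality between a vertex-presented polytope and a half-space-presented polytope into an application of a result already in hand: the matroid intersection polytope theorem (Lemma~\ref{L:matroid_polytope}) for Claim~1, and the perfect-graph polytope theorem (Lemma~\ref{L:perfect}) for Claim~2. Throughout I would exploit that all the generating vectors are $0/1$-valued, so they are automatically vertices of their convex hulls and comparisons between polytopes reduce to comparisons of generating/constraint families.

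For Claim~1, I would first note that $\mathcal{S}^*$ is an independence system by construction (it contains $\varnothing$ and is closed under subsets). The inclusion $\overline{\mathbb{P}}_{\mathcal{S}^*}\subseteq\mathbb{P}_{\mathcal{V}\,|\,\mathcal{C}^*}$ follows by checking generators: for $S\in\mathcal{S}^*$ there is a $C\in\mathcal{C}^*$ with $S\subseteq C$, so for every $V\subseteq V_G$ one has $\bm{1}_S'\bm{1}_V=|S\cap V|\leqslant|C\cap V|\leqslant\ell_G(V)$. For the reverse inclusion I would apply Lemma~\ref{L:ind_system} to write $\mathcal{S}^*=\bigcap_{j}\mathcal{S}_j$ as a finite intersection of matroids with rank functions $r_j$, set $r(V)=\min_j r_j(V)$, and use Lemma~\ref{L:matroid_polytope} to identify $\overline{\mathbb{P}}_{\mathcal{S}^*}$ with $\{P\geqslant 0:\ P'\bm{1}_V\leqslant r(V)\ \forall V\}$. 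The crucial comparison is $\ell_G(V)\leqslant r(V)$ for all $V$: for any $C\in\mathcal{C}^*$ the set $C\cap V$ is a subclique of $C$, hence lies in $\mathcal{S}^*$ and therefore in each $\mathcal{S}_j$, giving $|C\cap V|\leqslant r_j(V)$; taking the maximum over $C$ and the minimum over $j$ yields $\ell_G(V)\leqslant r(V)$. Thus the $\ell_G$-constraints are no weaker than the $r$-constraints, so $\mathbb{P}_{\mathcal{V}\,|\,\mathcal{C}^*}\subseteq\{P\geqslant 0:\ P'\bm{1}_V\leqslant r(V)\ \forall V\}=\overline{\mathbb{P}}_{\mathcal{S}^*}$, and the two polytopes coincide.

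For Claim~2, I would work along the chain $\overline{\mathbb{P}}_{\mathcal{S}^*}\subseteq\overline{\mathbb{P}}_{\mathcal{C}}\subseteq\mathbb{P}_{\mathcal{I}}$, where the first inclusion holds because $\mathcal{S}^*\subseteq\mathcal{C}$ and the second because a clique and an independent set meet in at most one vertex, so each $\bm{1}_C$ satisfies $\bm{1}_C'\bm{1}_I\leqslant 1$ for every $I\in\mathcal{I}$. Since distinct $0/1$ vectors cannot be written as nontrivial convex combinations of one another, every $\bm{1}_C$ with $C\in\mathcal{C}$ is a vertex of $\overline{\mathbb{P}}_{\mathcal{C}}$; hence the vertex sets of $\overline{\mathbb{P}}_{\mathcal{S}^*}$ and $\overline{\mathbb{P}}_{\mathcal{C}}$ are $\{\bm{1}_S:\ S\in\mathcal{S}^*\}$ and $\{\bm{1}_C:\ C\in\mathcal{C}\}$, so the first inclusion is an equality exactly when $\mathcal{S}^*=\mathcal{C}$, i.e.\ exactly when $\mathcal{C}^*$ contains every maximal clique of $G$. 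The second inclusion is an equality if and only if $G$ is perfect, which is precisely Lemma~\ref{L:perfect}. Since equality of the two ends of the chain forces equality of both intermediate inclusions (and conversely), $\overline{\mathbb{P}}_{\mathcal{S}^*}=\mathbb{P}_{\mathcal{I}}$ holds if and only if $\mathcal{C}^*$ contains all maximal cliques and $G$ is perfect.

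The step I expect to require the most care is the reverse inclusion in Claim~1, namely the half-space description of $\overline{\mathbb{P}}_{\mathcal{S}^*}$ obtained from matroids. Lemma~\ref{L:matroid_polytope} is stated for the intersection of two matroids, where $\{P\geqslant 0:\ P'\bm{1}_B\leqslant\min(r_1(B),r_2(B))\ \forall B\}$ is exactly the convex hull of common independent sets; for intersections of three or more matroids the analogous polytope can be strictly larger, so one must ensure the description invoked is the integral one (for instance by checking that the independence system at hand is realizable as a two-matroid intersection, or by a direct total-dual-integrality argument for this family). The remaining ingredients --- the generator inequalities, the observation that $0/1$ generators are vertices, and the bookkeeping of the inclusion chain in Claim~2 --- are routine.
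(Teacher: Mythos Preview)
Your approach for both claims mirrors the paper's proof essentially line by line: represent $\mathcal{S}^*$ as a finite intersection of matroids via Lemma~\ref{L:ind_system}, invoke Lemma~\ref{L:matroid_polytope} to obtain $\overline{\mathbb{P}}_{\mathcal{S}^*}=\mathbb{P}_r$, and close the sandwich $\overline{\mathbb{P}}_{\mathcal{S}^*}\subseteq\mathbb{P}_{\mathcal{V}\,|\,\mathcal{C}^*}\subseteq\mathbb{P}_r$; for Claim~2, run the chain $\overline{\mathbb{P}}_{\mathcal{S}^*}\subseteq\overline{\mathbb{P}}_{\mathcal{C}}\subseteq\mathbb{P}_{\mathcal{I}}$ and appeal to Lemma~\ref{L:perfect}. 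The concern you flag at the end --- that Lemma~\ref{L:matroid_polytope} is stated only for the intersection of \emph{two} matroids and need not yield $\overline{\mathbb{P}}_{\mathcal{S}^*}=\mathbb{P}_r$ when $J\geqslant 3$ --- is precisely the gap, and it is present in the paper's own argument as well: the equality $\overline{\mathbb{P}}_{\mathcal{S}^*}=\mathbb{P}_r$ is asserted for a $J$-fold intersection without justification, so the sandwich does not close.

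This is not a cosmetic issue; Claim~1 is false at the stated level of generality. Take $G=K_4$ and $\mathcal{C}^*=\{\{1,2\},\{2,3\},\{2,4\},\{1,3,4\}\}$. The point $P=(\tfrac12,1,\tfrac12,\tfrac12)$ satisfies $P(V)\leqslant\ell_G(V)$ for every $V\subseteq\{1,2,3,4\}$ (the binding constraints are $P_2\leqslant 1$ and $P(\{1,2,3\}),P(\{1,2,4\}),P(\{2,3,4\})\leqslant 2$; all others are slack), so $P\in\mathbb{P}_{\mathcal{V}\,|\,\mathcal{C}^*}$. But $P\notin\overline{\mathbb{P}}_{\mathcal{S}^*}$: any $\lambda\geqslant 0$ with $\sum_{C\in\mathcal{C}^*}\lambda_C\leqslant 1$ and $\sum_C\lambda_C\bm{1}_C\geqslant P$ must have $\lambda_{\{1,2\}}+\lambda_{\{2,3\}}+\lambda_{\{2,4\}}\geqslant 1$ (coordinate~$2$), forcing $\lambda_{\{1,3,4\}}=0$; then coordinates $1,3,4$ each force one of the remaining weights to be at least $\tfrac12$, contradicting $\sum_C\lambda_C\leqslant 1$. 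Hence $\overline{\mathbb{P}}_{\mathcal{S}^*}\subsetneq\mathbb{P}_{\mathcal{V}\,|\,\mathcal{C}^*}$ here, and neither your proposal nor the paper's proof can be completed without additional hypotheses on $\mathcal{C}^*$ (such as the $K$-partite, $K$-clique structure used downstream in Lemma~\ref{L:graph_polytopes}). Your handling of Claim~2 is correct and matches the paper's.
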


\subsection{Polytope Duality for $K$-Partite graphs}\label{ssec:k_partite_duality}
Now, suppose additionally that $G = (V_G, E_G)$ is $K$-partite, so that its vertices partitioned into non-overlapping subsets $V_1, \dots, V_K$ such that each $V_k$ forms an independent set. Denote
\[
\Delta^K = \{P \in \mathbb{R}^N_{+}: P'\bm{1}_{V_k} = 1\}.
\]

\begin{lemma}[Polytope Duality for $K$-Partite Graphs] \label{L:graph_polytopes}
Suppose every clique $C \in \mathcal{C}^*$ contains one element of each part $V_k$. Then:
\begin{enumerate}
    \item $ \overline{\mathbb{P}}_{\mathcal{C}^*} = \mathbb{P}_{\mathcal{V} \,|\, \mathcal{C}^*} \cap \Delta ^K$.
    \item If $\mathcal{C}^*$ includes all maximal cliques of $G$, and $G$ is perfect, then $\overline{\mathbb{P}}_{\mathcal{C}^*} = \mathbb{P}_{\mathcal{I}} \cap \Delta ^K$.
\end{enumerate}

\begin{proof}
	\textbf{Claim 1. Step 1.1: $\overline{\mathbb{P}}_{\mathcal{C}^*} \subseteq  \mathbb{P}_{\mathcal{V} \,|\, \mathcal{C}^*} \cap \Delta ^K$}. The inclusion $\overline{\mathbb{P}}_{\mathcal{C}^*} \subseteq \mathbb{P}_{\mathcal{V} \,|\, \mathcal{C}^* }$ follows from the fact that $\overline{\mathbb{P}}_{\mathcal{C}^*} \subseteq \overline{\mathbb{P}}_{\mathcal{S}^*}$ and Lemma \ref{L:polytope_duality}. Each element of $\overline{\mathbb{P}}_{\mathcal{C}^*}$ takes a form $P = \sum_{C \in \mathcal{C}^*} \alpha_C \bm{1}_{C}$ for some $\alpha_1, \dots, \alpha_C \geqslant 0$ with $\sum_{C \in \mathcal{C}^*} \alpha_C = 1$, so
	\[
	P'\bm{1}_{V_k} = \sum_{C \in \mathcal{C}^*} \alpha_C \bm{1}_{C}'\bm{1}_{V_k} = \sum_{C \in \mathcal{C}^*} \alpha_C = 1,
	\]
	where the middle equality holds by assumption that each $C \in \mathcal{C}^*$ contains one element from each part $V_k$. Thus, $\overline{\mathbb{P}}_{\mathcal{C}^*} \subseteq \Delta^K$, and the stated inclusion follows. 
	
	\medskip 
	
	\textbf{Step 1.2: $\mathbb{P}_{\mathcal{V} \,|\, \mathcal{C}^*} \cap \Delta ^K \subseteq \overline{\mathbb{P}}_{\mathcal{C}^*}$.} Denote $\mathbb{P} = \mathbb{P}_{\mathcal{V} \,|\, \mathcal{C}^*} \cap \Delta ^K$ for short. Since both $\mathbb{P}$ and $\overline{\mathbb{P}}_{\mathcal{C}^*}$ are convex polytopes, it suffices to show that $\mathbf{V}(\mathbb{P}) \subseteq \mathbf{V}(\overline{\mathbb{P}}_{\mathcal{C}^*})$. To this end, we first show that all vertices of $\mathbb{P}$ are binary. Note that $\mathbb{P}_{\mathcal{V}\,|\,\mathcal{C}^*}$ is obtained from $\mathbb{P}$ by relaxing the constraints $P'\bm{1}_{V_k} \geqslant 1$. Since $\mathbb{P}$ lies entirely on one side of each of the relaxed constraints, it must be that  $\mathbf{V}(\mathbb{P}) \subseteq \mathbf{V}(\mathbb{P}_{\mathcal{V} \,|\, \mathcal{C}^*} )$. By Lemma \ref{L:polytope_duality}, $\mathbf{V}(\mathbb{P}_{\mathcal{V} \,|\, \mathcal{C}^*}) \subseteq \{0, 1\}^N$, so $\mathbf{V}(\mathbb{P}) \subseteq \{0, 1\}^N$, and all vertices of $\mathbb{P}$ can be represented as $\bm{1}_V$ for some $V \in \mathcal{V}$. 
	
	Now, suppose towards a contradiction that there is a vertex $\bm{1}_V \in \mathbf{V}(\mathbb{P}) \subseteq \{0, 1\}^N$ that does not belong to $\mathbf{V}(\overline{\mathbb{P}}_{\mathcal{C}^*})$, i.e., does not correspond to one of the sets $C \in \mathcal{C}^*$. There are only two possibilities:
 \begin{enumerate}
 	\item[(i)] The subgraph induced by $V$ is not a clique. In this case, there are vertices $i, j \in V$ such that $(i, j) \notin E_G$, so for some independent set $I \in \mathcal{I}$, $\mathbf{1}_V'\mathbf{1}_I = 2 > 1 = \ell_{G}(I),$ contradicting $\mathbf{v} \in \mathbb{P}_{\mathcal{V}\,|\,\mathcal{C}^*}$.
\item[(ii)] The subgraph induced by $V$ is a clique, but it is not in $\mathcal{C}^*$. If such a clique includes less than $K$ vertices, we have $\bm{1}_{V_k}'\mathbf{1}_V = 0$, for some $k$, contradicting $\bm{1}_V \in \Delta^K$. If such a clique includes $K$ vertices but is not listed in $\mathcal{C^*}$, we have $\bm{1}_V'\bm{1}_V = K > K-1  \geqslant \ell_{G}(V)$ contradicting $\mathbf{v} \in \mathbb{P}_{\mathcal{V} \,|\,\mathcal{C}^*}$.
 \end{enumerate}
Thus, $\mathbf{V}(\mathbb{P}) \subseteq \mathbf{V}(\overline{\mathbb{P}}_{\mathcal{C}^*})$, and therefore, $\mathbb{P} \subseteq \overline{\mathbb{P}}_{\mathcal{C}^*}$. Taken together with \textbf{Step 1}, we obtain $\mathbb{P}_{\mathcal{V} \,|\, \mathcal{C}^*} \cap \Delta ^K = \overline{\mathbb{P}}_{\mathcal{C}^*}$. 

\medskip
\textbf{Claim 2.} It follows from the proof of Claim 2 of Lemma \ref{L:polytope_duality} that the inclusion $\overline{\mathbb{P}}_{\mathcal{S}^*} \cap \Delta^K \subseteq \mathbb{P}_{\mathcal{I}} \cap \Delta^K$ holds generally. The equality $\overline{\mathbb{P}}_{\mathcal{C}^*} = \overline{\mathbb{P}}_{\mathcal{S}^*} \cap \Delta^K$ holds if and only if $\mathcal{C}^*$ includes all maximal cliques in $G$. If $G$ is perfect, then $\overline{\mathbb{P}}_{\mathcal{S}^*} = \mathbb{P}_{\mathcal{I}}$ by Lemma \ref{L:polytope_duality}, and thus $ \mathbb{P}_{\mathcal{S}^*}\cap \Delta^K =\mathbb{P}_{\mathcal{I}} \cap \Delta^K$, which concludes the proof.
\end{proof}
\end{lemma}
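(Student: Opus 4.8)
The plan is to prove Claim 1 by establishing its two inclusions separately, and then obtain Claim 2 as a one-line substitution using Lemma~\ref{L:polytope_duality}. Throughout I would lean on the standing hypothesis that each $C \in \mathcal{C}^*$ is a \emph{transversal}, meeting every part $V_k$ in exactly one vertex; in particular $|C| = K$ for all $C \in \mathcal{C}^*$, and consequently $\ell_G(V_k) = \max_{C \in \mathcal{C}^*}|C \cap V_k| = 1$ for each $k$. The forward inclusion $\overline{\mathbb{P}}_{\mathcal{C}^*} \subseteq \mathbb{P}_{\mathcal{V}\,|\,\mathcal{C}^*} \cap \Delta^K$ is routine: as both sides are convex it suffices to check the generators $\bm{1}_C$, $C \in \mathcal{C}^*$. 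Membership in $\mathbb{P}_{\mathcal{V}\,|\,\mathcal{C}^*}$ follows from $\overline{\mathbb{P}}_{\mathcal{C}^*} \subseteq \overline{\mathbb{P}}_{\mathcal{S}^*} = \mathbb{P}_{\mathcal{V}\,|\,\mathcal{C}^*}$ (Lemma~\ref{L:polytope_duality}), while membership in $\Delta^K$ is exactly transversality, $\bm{1}_C'\bm{1}_{V_k} = 1$ for every $k$; since $\Delta^K$ is an affine slice of the nonnegative orthant, convex combinations stay in the intersection.

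The reverse inclusion is the substantive step. Writing $\mathbb{P} = \mathbb{P}_{\mathcal{V}\,|\,\mathcal{C}^*} \cap \Delta^K$, I would first argue that every vertex of $\mathbb{P}$ is binary. The key observation is that $\mathbb{P}_{\mathcal{V}\,|\,\mathcal{C}^*}$ already contains the inequalities $P'\bm{1}_{V_k} \leqslant \ell_G(V_k) = 1$, so adjoining $\Delta^K$ only appends the reverse inequalities $P'\bm{1}_{V_k} \geqslant 1$. Hence $\mathbb{P}$ is the face of $\mathbb{P}_{\mathcal{V}\,|\,\mathcal{C}^*}$ cut out by the supporting hyperplanes $P'\bm{1}_{V_k} = 1$. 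Vertices of a face are vertices of the ambient polytope, so $\mathbf{V}(\mathbb{P}) \subseteq \mathbf{V}(\mathbb{P}_{\mathcal{V}\,|\,\mathcal{C}^*}) \subseteq \{0,1\}^N$, the last inclusion because Lemma~\ref{L:polytope_duality} identifies $\mathbb{P}_{\mathcal{V}\,|\,\mathcal{C}^*}$ with $\overline{\mathbb{P}}_{\mathcal{S}^*}$, whose extreme points are incidence vectors of sets in $\mathcal{S}^*$. Thus each vertex of $\mathbb{P}$ equals $\bm{1}_V$ for some $V \subseteq V_G$.

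The crux, and the step I expect to be the main obstacle, is showing that $\bm{1}_V \in \mathbf{V}(\mathbb{P})$ forces $V \in \mathcal{C}^*$; I would argue by contradiction over the natural cases. If $V$ is not a clique, pick non-adjacent $i,j \in V$; no member of $\mathcal{C}^*$ contains both, so $\ell_G(\{i,j\}) \leqslant 1$ while $\bm{1}_V'\bm{1}_{\{i,j\}} = 2$, violating the level constraint. If $V$ is a clique with $|V| < K$, transversality forces $V$ to miss some part $V_k$, contradicting $\bm{1}_V'\bm{1}_{V_k} = 1$. The delicate case is when $V$ is a $K$-clique not in $\mathcal{C}^*$: here every $C \in \mathcal{C}^*$ is itself a $K$-set, and $|C \cap V| = K$ would give $C = V$, so $|C \cap V| \leqslant K-1$, whence $\ell_G(V) \leqslant K-1 < K = \bm{1}_V'\bm{1}_V$, again contradicting feasibility. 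This is precisely where transversality does its real work: it is what lets the level function $\ell_G$, though indexed only by $\mathcal{C}^*$, distinguish genuine members of $\mathcal{C}^*$ from spurious full-size cliques. With all cases ruled out, $\mathbf{V}(\mathbb{P}) \subseteq \{\bm{1}_C : C \in \mathcal{C}^*\} = \mathbf{V}(\overline{\mathbb{P}}_{\mathcal{C}^*})$, so $\mathbb{P} \subseteq \overline{\mathbb{P}}_{\mathcal{C}^*}$, completing Claim 1.

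For Claim 2 I would simply substitute. Combining Claim 1 with Lemma~\ref{L:polytope_duality} gives the unconditional identity $\overline{\mathbb{P}}_{\mathcal{C}^*} = \mathbb{P}_{\mathcal{V}\,|\,\mathcal{C}^*} \cap \Delta^K = \overline{\mathbb{P}}_{\mathcal{S}^*} \cap \Delta^K$. When $\mathcal{C}^*$ contains all maximal cliques of $G$ and $G$ is perfect, Lemma~\ref{L:polytope_duality} upgrades $\overline{\mathbb{P}}_{\mathcal{S}^*}$ to $\mathbb{P}_{\mathcal{I}}$; intersecting both with $\Delta^K$ then yields $\overline{\mathbb{P}}_{\mathcal{C}^*} = \mathbb{P}_{\mathcal{I}} \cap \Delta^K$, as required.
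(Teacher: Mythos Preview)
Your proposal is correct and follows essentially the same approach as the paper: the forward inclusion via Lemma~\ref{L:polytope_duality} plus transversality, the reverse inclusion by showing $\mathbb{P}$ is a face of $\mathbb{P}_{\mathcal{V}\,|\,\mathcal{C}^*}$ (so its vertices are binary), followed by the same three-case analysis, and Claim~2 by substitution from Lemma~\ref{L:polytope_duality}. The only minor differences are cosmetic---you use the two-element set $\{i,j\}$ directly in the non-clique case where the paper invokes a maximal independent set, and your derivation of Claim~2 routes through Claim~1 more explicitly---but the underlying argument is identical.
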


\begin{remark}[Relation to existing polyhedral results]
Lemma \ref{L:perfect} \citep{chvatal1975certain} provides a classical characterization of the clique polytope of a perfect graph via inequalities indexed by maximal independent sets.
Lemma \ref{L:graph_polytopes} concerns a different (and, to our knowledge, new) object that arises in our setting:
feasible latent types correspond to $K$-cliques that select exactly one vertex from each part $V_k$, and any feasible conditional probability vector must also satisfy the block-simplex constraints $\Delta^{K}$ (i.e., one simplex constraint per $k$).
Lemma \ref{L:graph_polytopes}-1 shows that the convex hull of these feasible $K$-cliques is obtained by intersecting the level-inequality polytope with $\Delta^{K}$.
When, in addition, $G$ is perfect (and $\mathcal C^{*}$ contains all maximal cliques), Lemma
\ref{L:graph_polytopes}-2 shows that the level inequalities reduce to the maximal
independent-set inequalities from Lemma \ref{L:perfect}, again intersected with $\Delta^{K}$. These characterizations are the key step behind the sharp half-space representations in
Theorems \ref{thm:general_testable_implication}--\ref{thm:sharpness_of_mis}.  \qed
\end{remark}

\subsection{Existence of A Joint Distribution with Given Marginals and Support}\label{ssec:main_theorem}

As an immediate implication of Lemma \ref{L:graph_polytopes}, we obtain the following result.

\begin{theorem} \label{T:main}

  Let $\{P_z: z \in \mathcal{Z}\}$ be a finite collection of probability distributions with finite supports $\{\mathcal{R}_z: z \in \mathcal{Z}\}$, and $\mathcal{R}^* \subseteq \prod_{z \in \mathcal{Z}} \mathcal{R}_z$ a non-empty set with a typical element $r^* = (r^*_{z})_{z \in \mathcal{Z}}$.  Define an undirected graph $G = (V_G, E_G)$  with vertices $V_G = \bigcup_{z \in \mathcal{Z}}\{v_{r, z}: r \in \mathcal{R}_z\}$ and edges $E_G = \{(v_{r, z}, v_{r', z'}): \exists \, r^* \in \mathcal{R}^*: r^*_z = r, r^*_{z'} = r' \}$. Each $r^* \in \mathcal{R}^*$ induces a maximal clique of size $|\mathcal{Z}|$ in $G$, which contains exactly one point from each part $V_k$. Let $\mathcal{C}^*_G$ denote the set of such maximal cliques. Additionally, let $\mathcal{I}_G$ denote the set of all maximal independent sets in $G$.  Then, 
 	\begin{equation} \label{E:main1}
 	\sum_{v_{r, z} \in V} P_z(r) \leqslant \max_{C \in \mathcal{C}_G^*}(|V \cap C|), \;\;\; \forall\, V \subseteq V_G
 	\end{equation}
 	is necessary and sufficient for the existence of the joint distribution $Q$ supported on $\mathcal{R}^*$ with marginals $\{P_z: z \in \mathcal{Z}\}$. If $G$ is perfect, and $\mathcal{C}_G^*$ contains all maximal cliques of $G$, 
	\begin{equation} \label{E:main2}
	\sum_{v_{r, z} \in I} P_z(r) \leqslant 1, \;\;\; \forall\, I \in \mathcal{I}_G.
	\end{equation}
 is necessary and sufficient.
\end{theorem}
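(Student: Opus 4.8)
The plan is to recognize \eqref{E:main1} and \eqref{E:main2} as two half-space descriptions of a single polytope --- the convex hull of the clique-incidence vectors of $\mathcal{C}^*_G$ --- and then let Lemma~\ref{L:graph_polytopes} supply the geometry.

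First I would set up the dictionary between admissible joint distributions and that polytope. To each $r^* \in \mathcal{R}^*$ associate the clique $C(r^*) = \{v_{r^*_z,\, z}: z \in \mathcal{Z}\}$; by construction $C(r^*) \in \mathcal{C}^*_G$, and $r^* \mapsto C(r^*)$ is a bijection of $\mathcal{R}^*$ onto $\mathcal{C}^*_G$, because each clique in $\mathcal{C}^*_G$ meets every part in exactly one vertex. Write $\beta = (P_z(r))_{v_{r,z} \in V_G}$. For $Q \in \Delta(\mathcal{R}^*)$, the requirement that $Q$ have $z$-marginal $P_z$ for every $z$ is precisely the linear identity
\[
\sum_{r^* \in \mathcal{R}^*} Q(r^*)\, \bm{1}_{C(r^*)} = \beta,
\]
since the $v_{r,z}$-coordinate of the left-hand side equals $\sum_{r^*:\, r^*_z = r} Q(r^*)$. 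Hence a joint distribution supported on $\mathcal{R}^*$ with the prescribed marginals exists if and only if $\beta$ lies in $\overline{\mathbb{P}}_{\mathcal{C}^*_G}$, the convex hull of the incidence vectors $\{\bm{1}_C: C \in \mathcal{C}^*_G\}$.

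Next I would observe that the block-simplex constraints come for free: since each $P_z$ is a probability distribution, $\beta \geqslant 0$ and $\bm{1}_{V_k}'\beta = 1$ for every $k$, so $\beta \in \Delta^K$ automatically. Because every $C \in \mathcal{C}^*_G$ contains exactly one vertex of each part, Lemma~\ref{L:graph_polytopes}-1 applies and gives $\overline{\mathbb{P}}_{\mathcal{C}^*_G} = \mathbb{P}_{\mathcal{V} \,|\, \mathcal{C}^*_G} \cap \Delta^K$; together with $\beta \in \Delta^K$ this yields $\beta \in \overline{\mathbb{P}}_{\mathcal{C}^*_G}$ if and only if $\beta \in \mathbb{P}_{\mathcal{V} \,|\, \mathcal{C}^*_G}$. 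Unwinding the definition of $\mathbb{P}_{\mathcal{V} \,|\, \mathcal{C}^*_G}$, and using $\bm{1}_V'\beta = \sum_{v_{r,z} \in V} P_z(r)$, $\ell_G(V) = \max_{C \in \mathcal{C}^*_G}|V \cap C|$, and the automatic nonnegativity of $\beta$, the condition $\beta \in \mathbb{P}_{\mathcal{V} \,|\, \mathcal{C}^*_G}$ is exactly \eqref{E:main1}; this proves the first equivalence, with necessity coming from $\overline{\mathbb{P}}_{\mathcal{C}^*_G} \subseteq \mathbb{P}_{\mathcal{V} \,|\, \mathcal{C}^*_G}$ and sufficiency from $\mathbb{P}_{\mathcal{V} \,|\, \mathcal{C}^*_G} \cap \Delta^K \subseteq \overline{\mathbb{P}}_{\mathcal{C}^*_G}$ together with $\beta \in \Delta^K$. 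For the second statement, when $G$ is perfect and $\mathcal{C}^*_G$ contains all maximal cliques of $G$, I would instead invoke Lemma~\ref{L:graph_polytopes}-2, which gives $\overline{\mathbb{P}}_{\mathcal{C}^*_G} = \mathbb{P}_{\mathcal{I}} \cap \Delta^K$; the identical reduction then identifies $\beta \in \overline{\mathbb{P}}_{\mathcal{C}^*_G}$ with $\beta \in \mathbb{P}_{\mathcal{I}}$, i.e.\ with \eqref{E:main2}.

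I do not expect a genuine obstacle here: Lemma~\ref{L:graph_polytopes} already carries all of the combinatorial weight, and the theorem is essentially its translation into probabilistic language. The only points requiring care are verifying that $r^* \mapsto C(r^*)$ is a bijection onto $\mathcal{C}^*_G$, checking that the conjunction of marginal constraints is exactly the identity $\beta = \sum_{r^*} Q(r^*)\, \bm{1}_{C(r^*)}$ (so that nonnegative solutions $Q$ of this system are exactly the admissible joint distributions), and observing that any vector of conditional marginals automatically lies in $\Delta^K$, so that intersecting with $\Delta^K$ neither adds nor removes content relative to \eqref{E:main1}--\eqref{E:main2}.
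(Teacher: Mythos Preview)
Your proposal is correct and takes exactly the paper's approach: the paper's proof consists of the single line ``Follows from Lemma~\ref{L:graph_polytopes},'' and you have simply written out the translation between probabilistic and polytope language that this line leaves implicit. The points you flag as requiring care---the bijection $r^*\mapsto C(r^*)$, the equivalence of the marginal constraints with $\beta=\sum_{r^*}Q(r^*)\bm{1}_{C(r^*)}$, and $\beta\in\Delta^K$ holding automatically---are precisely the details needed to make that one-line proof rigorous.
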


\begin{proof}
Follows from Lemma \ref{L:graph_polytopes}.
	\end{proof}

\subsection{Proofs of Theorems \ref{thm:testable_implication}, \ref{thm:general_testable_implication} and \ref{thm:sharpness_of_mis}}\label{ssec:proof_main_theorems}

All Theorems follow immediately from Theorem \ref{T:main}.

\subsection{Support Restrictions for General Outcome Variables}\label{ssec:general_outcome}

The following theorem characterizes the sharp testable implication of support restrictions in the setting outlined in Section \ref{ssec:continuous_outcome}. Theorem \ref{thm:testable_implication_general_text} in the text corresponds to the second half of this theorem.

\begin{theorem}\label{thm:testable_implication_general}
Suppose Assumptions \ref{A:unconf}-\ref{A:support} hold. Let $\{G_n\}$ be a sequence of potential response graphs associated with $\cR^*$.
For each $n$, let $\cC^*_{G_n}$ and  $\mathcal I_{G_n}$ be the collections of all maximal cliques and all maximal independent sets of $G_n$, respectively.
Then,
\begin{align}
\sum_{v_{s, k}\in B}P(R\in A_{s,k}\,|\,Z=z_k)\leqslant \max(|B\cap C|:C\in \cC^*_{G_n}),\;\; \forall B\subseteq V_{G_n}, \text{and } n\in\mathbb N	\label{eq:clique_inequalities_n}
\end{align}
is necessary and sufficient for the existence of the joint distribution $Q$ of $R^*$ supported on $\cR^*$ that induces $P$. If Assumption \ref{A:support_regularity} holds for $G_n$ for all $n$, then
\begin{align} 
\sum_{v_{s, k}\in I}P(R\in A_{s,k}\,|\,Z=z_k)\leqslant 1,\;\; \forall I\in \mathcal I_{G_n}, \text{and } n\in\mathbb N \label{eq:MIS_inequalities_n}
\end{align} 
is necessary and sufficient. 	
\end{theorem}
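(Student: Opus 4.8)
The plan is to derive Theorem~\ref{thm:testable_implication_general} as a level-by-level consequence of the finite result Theorem~\ref{T:main}, combined with a weak-convergence argument that passes to the limit over the refining partitions $\{\mathcal{A}_{n,k}\}$. Write $\Phi(r^*)=(T(r^*,z_1),\dots,T(r^*,z_K))$, let $\tilde{\mathcal{R}}^*=\Phi(\mathcal{R}^*)\subseteq\prod_k\mathcal{R}_{z_k}$ be the image support, and let $P_{z_k}(\cdot)=P(R\in\cdot\mid Z=z_k)$. By Assumption~\ref{A:unconf} and $R=T(R^*,Z)$, a distribution $Q$ of $R^*$ supported on $\mathcal{R}^*$ induces $P$ if and only if $\Phi_*Q$ is a coupling of $P_{z_1},\dots,P_{z_K}$ supported on $\tilde{\mathcal{R}}^*$; conversely, any such coupling $\nu$ lifts to such a $Q$ via $Q=\sigma_*\nu$, where $\sigma$ is a measurable section of the surjection $\Phi$ (Jankov--von Neumann selection on standard Borel spaces). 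So it suffices to characterize existence of a coupling of $P_{z_1},\dots,P_{z_K}$ supported on $\tilde{\mathcal{R}}^*$. For each $n$ and each tuple $\mathbf{s}=(s_1,\dots,s_K)$, the vertices $v_{s_1,1},\dots,v_{s_K,K}$ span a $K$-clique of $G_n$ exactly when the cell-product $\prod_k A_{s_k,k}$ meets $\tilde{\mathcal{R}}^*$; let $\mathcal{R}^*_n\subseteq\prod_k\{1,\dots,n\}$ be the set of such tuples, so that $G_n$ is precisely the potential response graph of the \emph{finite} model with support $\mathcal{R}^*_n$, and $\mathcal{C}^*_{G_n}$ is its family of $K$-cliques.

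For necessity, suppose $Q$ inducing $P$ exists. Pushing $\Phi_*Q$ through the coordinatewise cell maps yields a probability vector on $\mathcal{R}^*_n$ whose $k$-th marginal is $(P_{z_k}(A_{s,k}))_{A_{s,k}\in\mathcal{A}_{n,k}}$; the necessity half of Theorem~\ref{T:main}, applied to this finite model, gives~\eqref{eq:clique_inequalities_n} at level $n$, and hence for all $n$ since $n$ was arbitrary. The inequalities~\eqref{eq:MIS_inequalities_n} are the special case $B=I\in\mathcal{I}_{G_n}$, for which the right-hand side of~\eqref{eq:clique_inequalities_n} is at most $1$ because a clique and an independent set meet in at most one vertex; so their necessity follows with no use of Assumption~\ref{A:support_regularity}.

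For sufficiency, assume~\eqref{eq:clique_inequalities_n} holds for all $n$ and all $B$. Fix $n$: by the sufficiency half of Theorem~\ref{T:main} there is $Q_n\in\Delta(\mathcal{R}^*_n)$ with the discretized marginals above. Lift $Q_n$ to a Borel measure $\tilde{Q}_n$ on $\prod_k\mathcal{R}_{z_k}$ by the rule ``draw $\mathbf{s}\sim Q_n$, then draw coordinate $k$ from $P_{z_k}(\cdot\mid A_{s_k,k})$, independently across $k$''; a direct computation shows $\tilde{Q}_n$ has marginals $P_{z_1},\dots,P_{z_K}$ and $\tilde{Q}_n\big(\prod_k A_{s_k,k}\big)=Q_n(\mathbf{s})$. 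Since each $P_{z_k}$ is tight ($\mathcal{R}$ Polish), the set $\mathcal{M}$ of couplings of $P_{z_1},\dots,P_{z_K}$ is weakly compact (Prokhorov) and weakly closed under the marginal constraints, so along a subsequence $\tilde{Q}_{n_j}\Rightarrow Q_\infty\in\mathcal{M}$. Because the cell diameters vanish along the sequence, each $\tilde{Q}_n$ is supported in the closed $\varepsilon_n$-neighborhood $C_{\varepsilon_n}$ of $\tilde{\mathcal{R}}^*$ with $\varepsilon_n\downarrow 0$; for any $\delta>0$ and $n_j$ large, $\tilde{Q}_{n_j}(C_\delta)=1$, so the portmanteau inequality $\limsup_j\tilde{Q}_{n_j}(C_\delta)\le Q_\infty(C_\delta)$ forces $Q_\infty(C_\delta)=1$, and letting $\delta\downarrow 0$ (with $\tilde{\mathcal{R}}^*$ closed) gives $Q_\infty(\tilde{\mathcal{R}}^*)=1$. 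Thus $Q_\infty$ is a coupling on $\tilde{\mathcal{R}}^*$, and $Q=\sigma_*Q_\infty$ is the desired distribution. For~\eqref{eq:MIS_inequalities_n} under Assumption~\ref{A:support_regularity}: the second part of Theorem~\ref{T:main}, applied at each level $n$, says that when $G_n$ is perfect and $\mathcal{C}^*_{G_n}$ contains all maximal cliques, \eqref{eq:MIS_inequalities_n} at level $n$ is equivalent to~\eqref{eq:clique_inequalities_n} at level $n$; hence~\eqref{eq:MIS_inequalities_n} for all $n$ implies~\eqref{eq:clique_inequalities_n} for all $n$, and sufficiency follows from the previous step.

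The main obstacle is the sufficiency limit. One must both prevent the weak limit $Q_\infty$ from leaking mass off $\tilde{\mathcal{R}}^*$ — which is what forces the (mild) standing requirements that the partitions refine to the Borel $\sigma$-algebra with vanishing cell diameters and that the support $\mathcal{R}^*$ is closed — and carefully move between the ``coupling on $\tilde{\mathcal{R}}^*$'' picture, in which Theorem~\ref{T:main} is stated, and the ``$Q$ on $\mathcal{R}^*$ inducing $P$'' picture, which is where the measurable-selection step is used. Everything else is bookkeeping: recognizing $G_n$ as the potential response graph of the finite model with support $\mathcal{R}^*_n$ and invoking the polytope duality already established in Lemma~\ref{L:graph_polytopes} and Theorem~\ref{T:main}.
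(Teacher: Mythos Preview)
Your proof is correct under the additional hypotheses you flag (closed $\mathcal{R}^*$, vanishing cell diameters), but it takes a genuinely different route from the paper. The paper's sufficiency argument is a projective-limit construction: since the partitions are nested refinements, the finite supports $\mathcal{R}^*_n$ form an inverse system under the cell-coarsening maps $\pi^n_\ell$, and a consistent family $\{Q_n\}$ extends directly to a measure on $\mathcal{R}^*$ by Kallenberg's Corollary~8.22---there is no lift to the product $\prod_k \mathcal{R}_{z_k}$, no Prokhorov step, and no measurable section. The tradeoffs are as follows. Your weak-limit route dispenses entirely with any consistency requirement on the $Q_n$'s (you take any feasible $Q_n$ at each level and let compactness do the work), whereas the paper must arrange for the $Q_n$'s to cohere across $n$, a point it handles rather tersely. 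In exchange, you need closedness of the support and shrinking cell diameters to prevent the weak limit from leaking mass off $\tilde{\mathcal{R}}^*$; the projective-limit approach nominally avoids both, since the extended measure lives on $\mathcal{R}^*$ by construction. Both arguments implicitly require the partitions to generate the Borel $\sigma$-algebra so that the resulting $Q$ reproduces $P$ on all Borel sets, not merely on partition cells.
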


\begin{proof}
First, we introduce a few objects. Let $N_n=nK$. Fix $r^*\in \cR^*$ and $n$. Let $r^*_n\in\{0,1\}^{N_n}$ denote a support point vector that stacks $(\bm{1}(T(r^*,k)\in A_{s,k})_{s\in \{1,\dots,n\}})_{k\in \cZ}$. Let $\cR^*_n\subset\{0,1\}^{N_n}$ be the collection of all such support points. Let $\pi_n:\cR^*\to\cR^*_n$ denote the function that maps $r^*$ to $r^*_n$.

\textbf{Step 1: Necessity.}  Let $Q$ be a distribution of $R^*=R(\cdot)$ over $\cR^*$. For each $n$, let $Q_n=Q\circ \pi_n^{-1}$ be the induced distribution over $\cR^*_n$. $G_n$ is a potential response graph associated with $\cR^*_n$. By Theorem \ref{thm:general_testable_implication}, the existence of $Q_n$ implies \eqref{eq:clique_inequalities_n} (and \eqref{eq:MIS_inequalities_n}). These implications hold for any $n\in\mathbb N$.

\textbf{Step 2: Sufficiency.} Suppose \eqref{eq:clique_inequalities_n} holds (or Assumption \ref{A:support_regularity} holds for $G_n$ for all $n$ and \eqref{eq:MIS_inequalities_n} holds). Then, Theorem \ref{thm:general_testable_implication} ensures that there exists a sequence $\{Q_n\}$ of distributions such that each $Q_n$ is supported on $\cR^*_n$. Furthermore, one can ensure that such a sequence has a consistency property.
For any $m<n$, let $\pi^n_m:\cR^*_n\to\cR^*_m$ map $r^*_n=(\bm{1}(T(r^*,k)\in A_{s,k})_{s\in \{1,\dots,n\}})_{k\in \cZ}$ to $r^*_m=(\bm{1}(T(r^*,k)\in A_{s,k})_{s\in \{1,\dots,m\}})_{k\in \cZ}$. Then,
\begin{align}
\pi^n_\ell=\pi^m_\ell\circ \pi^n_m,~\ell\leqslant m\leqslant n.	\label{eq:consistency}
\end{align}
Note that the system is consistent in the sense that, for any distribution $Q_{n}$ on $\cR^*_n$, one can construct $Q_\ell$ so that
\begin{align}
Q_n\circ(\pi^n_\ell)^{-1}=Q_\ell,	~\ell\leqslant n\in \mathbb N.
\end{align}
This is because any support point in $\mathcal R^*_\ell$ can be derived from the support points in $\mathcal R^*_n$ since $\cA_{n,k}$ is a refinement of $\cA_{\ell,k}$ for all $k$. For such a system, we may apply an extension theorem. Specifically,
we apply Corollary 8.22 in \cite{kallenberg2021foundations}. To apply this result, we observe that $\cR^*$ is a Borel space, which is ensured by Theorem 1.8 in \cite{kallenberg2021foundations} and $\cR^*$ being a Polish space. Similarly, $\cR^*_1,\cR^*_2,\dots$ are also Borel spaces when each $\cR^*_n$ is equipped with the discrete topology. The maps $\pi^n:\cR^*\to\cR^*_n$ and $\pi^n_\ell:\cR^*_n\to\cR^*_\ell,\ell\le n$ are measurable and satisfy \eqref{eq:consistency}. Then,
Corollary 8.22 in \cite{kallenberg2021foundations} ensures that there exists a distribution $Q$ on $\cR^*$ such that $Q\circ \pi_n^{-1}=Q_n$ for all $n$.
\end{proof}

\section{Extensions and Applications}
\subsection{Exclusion and Monotonicity Restrictions with Multi-valued IVs}\label{ssec:testing_exclusion_monotonicity}
Let $Y \in \mathcal{Y}$, $D \in \mathcal{D}$ and $Z \in \mathcal{Z}$, and consider testing the following assumptions:
\begin{equation} \label{E:exclusion}
\begin{array}{l}
	Z \perp (\{Y(d, z)\}_{d \in \mathcal{D}, z \in \mathcal{Z}}, \{D(z)\}_{z \in \mathcal{Z}});\\[3mm]
	Y(d, z) =_{a.s.} Y(d, z'), \;\; \forall d \in \mathcal{D}, \forall z, z' \in \mathcal{Z}.
\end{array}
\end{equation}

Consider any non-empty disjoint sets $B_1, \dots, B_L \subseteq \mathcal{Y}$, any $d \in \mathcal{D}$, and any $z_1,\dots, z_L \in \mathcal{Z}$. Under the exclusion restriction, it cannot happen that $D(z_\ell) = d$ yet $Y(d) \in B_\ell$, for all $\ell$. Combined with the independence assumption, this implies that the inequality
\begin{equation} 
\sum_{\ell = 1}^L P(Y \in B_\ell, D = d \,|\,Z = z_\ell) \leqslant 1,
\end{equation}
must hold. If some values of $z_\ell$ are repeated, the corresponding sets $B_\ell$ can be combined without changing the inequality. Thus, it suffices to consider distinct values $z_1, \dots, z_L \in \mathcal{Z}$, with $L \leqslant K$.  Also, the inequality is the tightest when $B_1, \dots, B_L$ form a partition of $\mathcal{Y}$. 

The following proposition characterizes a testable implication of \eqref{E:exclusion}.
\begin{proposition}[Testing Exclusion]\label{prop:exclusion_necessity} Suppose Condition \eqref{E:exclusion} holds. Then,
\[
\begin{array}{c}
\sum_{\ell = 1}^L P(Y \in B_\ell, D = d \,|\,Z = z_\ell) \leqslant 1, \;\;\;  \forall d \in \mathcal{D}, 	\\[2mm]
\forall \{B_1, \dots, B_L\}: B_k \cap B_\ell = \varnothing,\, \bigcup_{\ell = 1}^L B_\ell = \mathcal{Y}, \;\; \forall \{z_1, \dots, z_L\} \subseteq \mathcal{Z}, z_k \ne z_\ell, \;\;  \forall L \leqslant K.
\end{array}
\]
\end{proposition}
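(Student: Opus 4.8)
The plan is to formalize the heuristic stated just above the proposition: under the exclusion restriction the $L$ events ``$Y\in B_\ell$, $D=d$ when $Z=z_\ell$'' are incompatible with every single latent response type, and exogeneity then converts the $L$ conditional probabilities into the $Q$-masses of $L$ pairwise disjoint events, whose total cannot exceed one. This is the specialization of Theorem~\ref{thm:testable_implication_general_text} to a suitable sequence of partitions, in which the vertices $v_{(B_\ell,d),z_\ell}$ form an independent set; since only necessity is claimed, one does not even need that independent set to be maximal, nor any perfectness or regularity of the potential response graph. I would nevertheless give the short self-contained argument rather than quoting the general theorem.

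First I would rewrite each joint event in terms of the potential response vector $R^*=(Y^*,D^*)$. Using $(Y,D)=(Y(D(Z),Z),D(Z))$, on the event $\{Z=z_\ell\}\cap\{D(z_\ell)=d\}$ we have $D=d$ and $Y=Y(d,z_\ell)$; and by the exclusion part of \eqref{E:exclusion}, $Y(d,z_\ell)$ equals a $z$-free random variable $\tilde Y(d)$ almost surely. Hence, up to a null set,
\[ \{Y\in B_\ell,\ D=d\}\cap\{Z=z_\ell\}=S_\ell\cap\{Z=z_\ell\},\qquad S_\ell:=\{D(z_\ell)=d\}\cap\{\tilde Y(d)\in B_\ell\}, \]
and $S_\ell$ is measurable with respect to $\sigma(R^*)$. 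The independence part of \eqref{E:exclusion} (Assumption~\ref{A:unconf}) gives $R^*\perp Z$, so $P(Y\in B_\ell,\ D=d\mid Z=z_\ell)=P(S_\ell)$ for each $\ell$.

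The key step is disjointness: for $\ell\neq\ell'$ one has $S_\ell\cap S_{\ell'}\subseteq\{\tilde Y(d)\in B_\ell\cap B_{\ell'}\}=\varnothing$, since the $B_\ell$ are pairwise disjoint by hypothesis. This is exactly where the assumption $B_k\cap B_\ell=\varnothing$ is used; the covering condition $\bigcup_\ell B_\ell=\mathcal{Y}$ and the distinctness of the $z_\ell$ are not needed for validity, only for tightness. Summing over $\ell$ and using that the distribution $Q$ of $R^*$ is a probability measure,
\[ \sum_{\ell=1}^{L} P(Y\in B_\ell,\ D=d\mid Z=z_\ell)=\sum_{\ell=1}^{L}P(S_\ell)=P\Big(\bigcup_{\ell=1}^{L} S_\ell\Big)\le 1, \]
which is the claim.

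I expect no serious obstacle; the argument is essentially bookkeeping. The points requiring care are: (i) correctly propagating the two ``almost surely'' qualifiers (the exclusion equality and the $z$-free representative $\tilde Y(d)$) so that the set identities above hold up to null sets; (ii) keeping the $z_\ell$-dependence in the \emph{selection} event $\{D(z_\ell)=d\}$ while it correctly drops out of the \emph{outcome} event under exclusion --- this is the only place where a careless argument might over-restrict; and (iii) a routine measurability remark if $\mathcal{Y}$ is a general (Polish) space, namely taking the $B_\ell$ Borel, which the companion sufficiency statement (Proposition~\ref{prop:iv_val}) already presumes. It is worth noting that the proof uses only exclusion and exogeneity and never monotonicity, consistent with the proposition asserting necessity only; the sharp ``if and only if'' version (Proposition~\ref{prop:iv_val}) additionally requires the monotonicity restriction~\eqref{E:monotonicity_text}.
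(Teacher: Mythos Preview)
Your argument is correct. You formalize precisely the heuristic that the paper states in the paragraph preceding the proposition: define the $\sigma(R^*)$-measurable events $S_\ell=\{D(z_\ell)=d\}\cap\{\tilde Y(d)\in B_\ell\}$, use exogeneity to replace the conditional probabilities by $P(S_\ell)$, and observe that the $S_\ell$ are pairwise disjoint because the $B_\ell$ are.

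The paper's own proof takes a different route: it instantiates the general graph machinery of Theorem~\ref{thm:testable_implication_general_text} by constructing the sequence of potential response graphs $G_n$, explicitly characterizing all maximal independent sets (showing they must have constant $d$ and pairwise disjoint outcome cells across distinct $z$-values), and then invoking the general necessity result. Your direct probabilistic argument is shorter and entirely self-contained, and it makes transparent that only exclusion and exogeneity are used. The paper's approach, by contrast, exhibits how the proposition fits into the general framework and, in particular, identifies the full MIS structure of the graph; this is what is then reused (with the additional monotonicity edges removed) in the proof of the sharp characterization in Proposition~\ref{prop:iv_val}. So your route is cleaner for necessity alone, while the paper's route does double duty by setting up the objects needed for the subsequent sharpness argument.
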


\begin{proof}
Define the sequence of potential response graphs $G_n$ associated with the exclusion restriction as in Theorem \ref{thm:testable_implication_general}. Index the vertices of $G_n$ by tuples $(A_{i, z}, d, z)$. Let $\mathcal{I}_{G_n}$ be the collection of all maximal independent sets of $G_n$. We start by observing that each  $I\in \mathcal I(G_n)$ (that is not one of the parts of $G_n$) takes the form
\begin{align}
	\{(A_{s_1,z_1}, d, z_1), (A_{s_2,z_2}, d, z_2), \dots, (A_{s_{m_n}, z_{m_n}}, d, z_{m_n})\}, \label{eq:MIS2}
\end{align}
for some $d\in \mathcal{D}$, $m_n\in\mathbb N$, $(z_1, \dots, z_{m_n})\in \cZ^{m_n}$, and mutually disjoint sets $A_{s_i,z_i}, i\in\{1,  \dots, m_n\}$ such that for any $(A_{s, z}, d, z)$ excluded from  \eqref{eq:MIS2}, $A_{s, z} \cap \bigcup_{j=1}^{m_n} A_{s_j, z_j} \ne \varnothing$.

Indeed, consider vertices $(A, d, z)$ and $(A', d', z')$ such that $z \ne z'$. By the exclusion restriction, the two vertices are independent if and only if $A\cap A'=\varnothing$ and $d=d'$. Hence, if an independent set contains vertices with different $z$ values, their $d$ values must coincide. On the other hand, if $(A, d, z)$ and $(A', d', z')$ are independent and $d\ne d'$, then it must be the case that $z=z'$. Thus, all independent sets must take the form \eqref{eq:MIS2}.   To see that the set $I$ in \eqref{eq:MIS2} is maximal, consider adding another vertex $(A_{s, z}, d, z)$ to $I$. Since $A_{s, z} \cap \bigcup_{j=1}^{m_n} A_{s_j, z_j} \ne \varnothing$, this vertex would necessarily have an edge with some vertices already included in $I$. 

By Theorem \ref{thm:testable_implication_general}, the inequalities
\begin{equation} \label{eq:exclusion_implication}
\sum_{i = 1}^{m_n} P(Y \in A_{s_i, z_i}, D = d \,|\,Z = z_i) \leqslant 1
\end{equation}
corresponding to the independent sets in \eqref{eq:MIS2} are necessary for the existence of a joint distribution satisfying condition \eqref{E:exclusion}.

Finally, note that the inequalities in \eqref{eq:exclusion_implication} are implied by those in the statement of the proposition.  Indeed, for each $I\in \mathcal I(G_n)$, let $B_k= \bigcup \{A_{s_i,z_i}:(A_{s_i,z_i},d,z_i)\in I, z_i=z_k\}$, for $k \in \{1, \dots, K\}$, allowing $B_k= \varnothing$. Let $\ell \in \{1, \dots, L\}$, for $L \leqslant K$, index $B_\ell \ne \varnothing$. Then, we can equivalently write \eqref{eq:exclusion_implication} as
\begin{align}
	\sum_{\ell = 1}^L P(Y\in B_\ell, D=d \mid Z = z_\ell)\leqslant 1.
\end{align}
These inequalities are the tightest when $B_1, \dots, B_L$ form a partition of $\mathcal{Y}$, which concludes the proof. 
\end{proof}

As discussed in the text, we may characterize the sharp testable implications of the exclusion and monotonicity restrictions. The proof of Proposition \ref{prop:iv_val} is as follows.
\begin{proof}[Proof of Proposition \ref{prop:iv_val}]
For each $n\in\mathbb N$ and $k\in\{1,\dots,K\}$, let $\cA_{n,k}=\{A_1,\dots,A_n\}$ be a measurable partition of $\cY$. 
Define the sequence of potential response graphs $G_n$ associated with the exclusion and monotonicity restriction as in Theorem \ref{thm:testable_implication_general_text}. Index the vertices of $G_n$ by tuples $(A, d, z)$. Let $\mathcal{I}_{G_n}$ be the collection of all maximal independent sets of $G_n$. We start by observing that each  $I\in \mathcal I(G_n)$ (that is not one of the parts of $G_n$) takes the form
\begin{align}
\{\{(A_{\ell,d},d,z_\ell)\}_{\ell\in S_d}\}_{d\in\cD},~\label{eq:LATE_MIS}
\end{align}
for some mutually disjoint sets $\{A_{\ell,d},~\ell\in S_d\}$ such that $A_{\ell,d}\in \{A_1,\dots,A_n\}$ for all $\ell$.  For any $(A, d, z)$ excluded from  \eqref{eq:LATE_MIS}, $A \cap  \bigcup_{\ell\in S_d}A_{\ell,d}\ne \varnothing$, and $\forall d \in \mathcal{D},\; d > d' \implies \overline{\ell}_d \leqslant \underline{\ell}_{d'}.$ 

Indeed, consider vertices $(A,d,z)$ and $(A',d',z')$. The two vertices are independent if $A\cap A'=\emptyset$ and $d=d'$ due to the exclusion restriction. In addition, the vertices are independent if $z\leqslant z'$ with $d>d'$ due to the violation of the monotonicity. Finally, for $z=z'$, the two distinct vertices are always independent. Therefore, we must have $\overline{\ell}_d \leqslant \underline{\ell}_{d'}$ for any pair such that $d>d'$.  To see that the set in \eqref{eq:LATE_MIS} is maximal, consider adding another vertex $(A,d,z)$ to this set. Such a vertex would generate an edge with some vertex in the set because of $A \cap  \bigcup_{\ell\in S_d}A_{\ell,d}\ne \varnothing$.

By Theorem \ref{thm:testable_implication_general_text}, the inequalities
\begin{equation} 
\sum_{\ell\in S_d,d\in \cD} P(Y \in A_{\ell,d}, D = d \,|\,Z = z_\ell) \leqslant 1,\label{eq:iv_implication}
\end{equation}
corresponding to the independent sets in \eqref{eq:LATE_MIS} are necessary for the existence of a joint distribution satisfying the exclusion and monotonicity conditions. Below we show that, for each $n$, the exclusion and monotonicity restrictions are regular.

First, we verify that, for each $n$, the potential response graph $G_n$ is perfect. This result follows from showing that the exclusion and monotonicity restrictions define a partial order between the vertices of $G_n$ (Lemma \ref{lem:spo}). Proposition \ref{prop:IV_perfectness} then shows the perfectness of $G_n$ by leveraging the fact that $G_n$ is a comparability graph based on this partial order, which is a subclass of the perfect graphs. Then, $G_n$ satisfies Assumption 3 (i).

Second, we verify that every maximal clique in $G_n$ corresponds to a support point $r^*$. 
Consider any such clique $\{(A_\ell, d_\ell, z_\ell)\}_{\ell = 1}^L$, where all $z_\ell$ are distinct, listed in the ascending order, and $L \leqslant K$. Vertices $(A_\ell, d_\ell, z_\ell)$ and $(A_{\ell'}, d_{\ell'}, z_{\ell'})$ with $z_{\ell} \ne z_{\ell'}$ are connected if and only if (i)  $d_{\ell} < d_{\ell'}$ or (ii) $d_{\ell} = d_{\ell'}$ and $A_\ell = A_{\ell'}$.  Since $G_n$ is a comparability graph, $(A_1,d_1,z_1)\prec (A_2,d_2,z_2)\prec\dots\prec (A_L,d_L,z_L)$  where $\prec$ is the partial order in \eqref{eq:def_spo}.  If $L < K$,
one can insert a vertex $v=(A, d, z)$ with $z \notin \{z_1, \dots, z_L\}$ to the clique and achieve either (a) $v\prec (A_1,d_1,z_1) \prec\dots\prec (A_L,d_L,z_L)$, (b) $(A_1,d_1,z_1)\prec \dots \prec v \prec\dots\prec (A_L,d_L,z_L)$, or (c) $ (A_1,d_1,z_1) \prec\dots\prec (A_L,d_L,z_L)\prec v$. Hence, one can form a larger clique by this operation.
Thus, every maximal clique must be of size $K$. Moreover, the intersection of events $ \bigcap_{k = 1}^K\{Y(d_k)\in A_k, D(z_k) = d_k\}$ does not contradict the exclusion and monotonicity restrictions, meaning that there exists a support point corresponding to every such clique. 

Finally, note that the inequalities in \eqref{eq:iv_implication} are implied by those in the statement of the proposition.  Indeed, for each $I\in \mathcal I(G_n)$, let $B_{\ell,d}= \bigcup \{A:(A,d,z)\in I, z=z_\ell\}$ for $\ell \in \{1, \dots, K\}$, allowing $B_{\ell,d}= \varnothing$. Then, we can equivalently write \eqref{eq:iv_implication} as
\begin{align}
	\sum_{\ell \in S_d, d \in \mathcal{D}} P(Y \in B_{\ell, d}, D = d \,|\,Z = z_\ell) \leqslant 1.
\end{align}
These inequalities are the tightest when $B_{\ell,d}, \ell\in S_d$ form a partition of $\mathcal{Y}$, which concludes the proof. 
\end{proof}

\begin{lemma}[Strict Partial Order on $V_G$]\label{lem:spo}
Let $\mathcal{A} = \{A_1, \dots, A_n\}$ be a measurable partition of $\mathcal{Y}$.
Let $\cD$ be a finite totally ordered set, and let $\cZ=\{z_1,\dots,z_K\}$.
Define $V_G = \mathcal{A} \times \mathcal{D} \times\cZ.$
Define a binary relation $\prec$ on $V_G$ by:
\begin{align}
(A,d,z) \prec (A',d',z')~ \Leftrightarrow ~
\begin{cases}
\text{(i)}\ z < z' \text{ and } d < d', \\
\text{or } \text{(ii)}\ z < z',\ d = d',\ A = A'.
\end{cases}\label{eq:def_spo}
\end{align}
Then $\prec$ is a strict partial order on $V_G$.
\end{lemma}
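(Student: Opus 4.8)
The plan is to verify directly the three defining properties of a strict partial order: irreflexivity, asymmetry, and transitivity. Since asymmetry follows formally from irreflexivity together with transitivity, it suffices to establish those two, and then deduce asymmetry at the end.

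Irreflexivity is immediate: for any $v=(A,d,z)\in V_G$, neither clause (i) nor clause (ii) of \eqref{eq:def_spo} can hold for $v\prec v$, since both require $z<z$, which is impossible in the total order on $\cZ$. Hence $v\not\prec v$ for all $v\in V_G$.

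For transitivity, I would take $(A,d,z)\prec(A',d',z')$ and $(A',d',z')\prec(A'',d'',z'')$ and split into four cases according to which clause produces each relation, checking in each case that clause (i) or (ii) holds for the pair $\bigl((A,d,z),(A'',d'',z'')\bigr)$. In every case one has $z<z'<z''$, hence $z<z''$ by transitivity of $<$ on $\cZ$, so only the $\cD$- and $\cA$-coordinates require attention. If both relations come from (i), then $d<d'<d''$ yields $d<d''$ and (i) holds. If the first comes from (i) and the second from (ii), then $d<d'=d''$, so (i) holds; symmetrically, if the first comes from (ii) and the second from (i), then $d=d'<d''$ and (i) holds. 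Finally, if both come from (ii), then $d=d'=d''$ and $A=A'=A''$, so (ii) holds. This exhausts the cases, so $\prec$ is transitive. Asymmetry then follows at once: $u\prec v$ and $v\prec u$ would give $u\prec u$ by transitivity, contradicting irreflexivity. Hence $\prec$ is a strict partial order on $V_G$.

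I do not anticipate a genuine obstacle here; the argument is routine. The only point requiring mild care is making sure the four-way case split for transitivity is genuinely exhaustive and that the two ``mixed'' cases (one relation of type (i), the other of type (ii)) are both treated, in each order, since clause (ii) is not symmetric in how it constrains the $\cD$- and $\cA$-coordinates.
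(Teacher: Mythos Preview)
Your proposal is correct and follows essentially the same approach as the paper: both verify irreflexivity from the impossibility of $z<z$, and transitivity via the same four-way case split on which clause produces each relation. The only cosmetic difference is that the paper proves asymmetry directly (either clause forces $z<z'$, which precludes $z'<z$), whereas you deduce it from irreflexivity plus transitivity; both are equally valid.
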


\begin{proof}
We verify the three properties:

\begin{enumerate}
    \item[(i)] Irreflexivity: For any $(A,d,z)$, neither condition (i) nor (ii) can hold with $z = z'$, so $(A,d,z) \not\prec (A,d,z)$.

    \item[(ii)] Asymmetry: If $(A,d,z) \prec (A',d',z')$, then $z < z'$. It cannot be that $z' < z$. Thus, $(A',d',z') \prec (A,d,z)$ cannot hold.

    \item[(iii)] Transitivity:
Let $u=(A_1,d_1,z_1)$, $v=(A_2,d_2,z_2)$, and $w=(A_3,d_3,z_3)$. Suppose $u\prec v$ and $v\prec w$. We analyze the four possible combinations:
    \begin{itemize}
        \item[(a)] Both comparisons use (i): $z_1 < z_2 < z_3$ and $d_1 < d_2 < d_3\Rightarrow z_1 < z_3$, $d_1 < d_3\Rightarrow u \prec w$ by (i).

        \item[(b)] Both comparisons use (ii): $z_1 < z_2 < z_3$, $d_1 = d_2 = d_3$, and $A_1 = A_2 = A_3\Rightarrow u \prec w$ by (ii).

        \item[(c)] The first comparison uses (ii), and the second one uses (i): $z_1 < z_2$, $d_1 = d_2$, $A_1 = A_2$, and $z_2 < z_3$, $d_2 < d_3$. This implies $d_1 < d_3$ and $z_1 < z_3$, Hence, $u \prec w$ by (i).

        \item[(d)] The first comparison uses (i), and the second one uses (ii): This case is essentially the same as case (c).
    \end{itemize}
\end{enumerate}
Thus, $\prec$ is a strict partial order.
\end{proof}

\begin{proposition}[Perfectness of $G$ associated with Exclusion and Monotonicity restrictions]\label{prop:IV_perfectness}
Let $G=(V_G,E_G)$ be a potential response graph such that $V_G = \mathcal{A} \times \mathcal{D} \times \{z_1,\dots,z_K\}$ and 
$E_G$ consists of edges between $u=(A,d,z)$ and $v=(A',d',z')$ such that either of the following conditions hold:
\begin{itemize}
	\item[(a)] $z < z'$, $d = d'$, $A = A'$, or $z > z'$, $d = d'$, $A = A'$;
	\item[(b)] $z < z'$, $d < d'$ or $z > z'$ and $d > d'$.
\end{itemize}
Then, $G$ is a perfect graph.
\end{proposition}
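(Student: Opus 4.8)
The plan is to recognize $G$ as the \emph{comparability graph} of the strict partial order $\prec$ constructed in Lemma~\ref{lem:spo}, and then invoke the classical fact that every comparability graph is perfect. The only real content is the identification of the edge set $E_G$ with the comparability relation of $\prec$; the rest follows from standard results.

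First I would verify that, for any two distinct vertices $u=(A,d,z)$ and $v=(A',d',z')$ of $G$, one has $uv\in E_G$ if and only if $u\prec v$ or $v\prec u$. This is a short case check. If $z=z'$, neither of the edge conditions (a) or (b) can be satisfied (each requires $z\neq z'$), and neither clause (i) nor clause (ii) of \eqref{eq:def_spo} can hold, so $u$ and $v$ are both non-adjacent and $\prec$-incomparable. If $z\neq z'$, relabel so that $z<z'$; then condition (a) holds exactly when $d=d'$ and $A=A'$, which is precisely clause (ii) of \eqref{eq:def_spo} giving $u\prec v$, while condition (b) holds exactly when $d<d'$, which is precisely clause (i) giving $u\prec v$. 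Since $v\prec u$ would force $z'<z$, we conclude $uv\in E_G\iff u\prec v$; the case $z>z'$ is symmetric. Hence $E_G$ coincides with the comparability relation induced by $\prec$.

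By Lemma~\ref{lem:spo}, $\prec$ is a strict partial order on $V_G$, so the previous step shows that $G$ admits a transitive orientation, i.e., $G$ is a comparability graph. It then remains to invoke the classical theorem that comparability graphs are perfect. If a self-contained argument is preferred, I would spell it out as follows: for any $S\subseteq V_G$, the induced subgraph $G[S]$ is the comparability graph of $\prec$ restricted to $S$; in a comparability graph the cliques are exactly the chains of the underlying poset and the independent sets are exactly the antichains, so $\omega(G[S])$ equals the length of a longest chain in $(S,\prec)$; by Mirsky's theorem $S$ can be partitioned into $\omega(G[S])$ antichains, which is a proper coloring of $G[S]$ with $\omega(G[S])$ colors, so $\chi(G[S])\le\omega(G[S])$, and the reverse inequality is trivial. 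Since this holds for every induced subgraph, $G$ is perfect.

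I do not anticipate a substantive obstacle: the step requiring the most care is the bookkeeping in the second paragraph, where one must check that conditions (a) and (b) of the Proposition, taken over both orientations of a pair of vertices, are exactly the union of clauses (i) and (ii) of $\prec$. Once that equivalence is in place, perfectness is immediate from Lemma~\ref{lem:spo} together with the comparability-graph theorem (or its Mirsky-theorem proof sketched above).
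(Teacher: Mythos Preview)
Your proposal is correct and follows essentially the same approach as the paper: the paper's proof also identifies $G$ as the comparability graph of the strict partial order $\prec$ from Lemma~\ref{lem:spo} and then cites the standard result that comparability graphs are perfect. Your version is simply more explicit in verifying the edge--comparability correspondence and in offering the Mirsky-theorem argument as a self-contained alternative to the citation.
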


\begin{proof}
 By the definition of $G$, an edge exists between $u,v\in V_G$ if and only if:
\begin{align}
u \prec v \quad \text{or} \quad v \prec u,
\end{align}
where $\prec$ is the binary relation in \eqref{eq:def_spo}. By  Lemma \ref{lem:spo}, $\prec$ is a strict partial order. Hence, $G$ is the comparability graph of the strict partial order $(V_G, \prec)$. Comparability graphs of strict partial orders are perfect \citep[][Theorem 5.34]{golumbic2004algorithmic}, which ensures the claim.
\end{proof}

\begin{corollary}\label{cor:iv_K2}
Suppose $\cD=\{0,1\}$ and $\cZ=\{0,1\}$. Let $A\subset\cY$. Then, the inequalities
\begin{align}
	P(Y\in A,D=1|Z=0)&\leqslant P(Y\in A,D=1|Z=1)\label{eq:bphvk1}\\
P(Y\in A,D=0|Z=1)&\leqslant P(Y\in A,D=0|Z=0)\label{eq:bphvk2}	
\end{align}
follow from \eqref{eq:ex_mon_implication_K}. 
\end{corollary}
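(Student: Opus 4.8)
The plan is to obtain each of the two inequalities directly from the general implication \eqref{eq:ex_mon_implication_K} of Proposition~\ref{prop:iv_val}, by instantiating it with suitable partitions and index sets and then rearranging. Throughout, write $\cZ = \{z_1, z_2\}$ with $z_1 = 0$ and $z_2 = 1$, consistent with the monotonicity direction $D(z_1) \leqslant D(z_2)$, so that the index $\ell = 1$ corresponds to $Z = 0$ and $\ell = 2$ to $Z = 1$; and $\cD = \{0, 1\}$. The only elementary fact used at the end is that, conditionally on $Z = z$, the three events $\{D = 1, Y \in A\}$, $\{D = 1, Y \in A^c\}$ and $\{D = 0\}$ partition the probability space, so their conditional probabilities sum to $1$.

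For \eqref{eq:bphvk1}, I would apply \eqref{eq:ex_mon_implication_K} with $S_1 = \{1, 2\}$ and the partition $B_{1,1} = A$, $B_{2,1} = A^c$ of $\cY$ at the treatment value $d = 1$, together with the singleton $S_0 = \{2\}$ and the trivial partition $B_{2,0} = \cY$ at $d = 0$. Then $\underline{\ell}_0 = \overline{\ell}_0 = 2$ and $\overline{\ell}_1 = 2$, so the admissibility requirement $d < d' \Rightarrow \underline{\ell}_d \geqslant \overline{\ell}_{d'}$ holds since $\underline{\ell}_0 = 2 \geqslant 2 = \overline{\ell}_1$, and \eqref{eq:ex_mon_implication_K} reduces to
\[
P(Y \in A, D = 1 \mid Z = 0) + P(Y \in A^c, D = 1 \mid Z = 1) + P(D = 0 \mid Z = 1) \leqslant 1 .
\]
Because the last two summands equal $1 - P(Y \in A, D = 1 \mid Z = 1)$ by the partition fact, this is precisely \eqref{eq:bphvk1}. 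For \eqref{eq:bphvk2}, the symmetric choice $S_0 = \{1, 2\}$ with $B_{1,0} = A^c$, $B_{2,0} = A$ at $d = 0$, and $S_1 = \{1\}$ with $B_{1,1} = \cY$ at $d = 1$, is again admissible ($\underline{\ell}_0 = 1 \geqslant 1 = \overline{\ell}_1$), and \eqref{eq:ex_mon_implication_K} becomes
\[
P(Y \in A, D = 0 \mid Z = 1) + P(Y \in A^c, D = 0 \mid Z = 0) + P(D = 1 \mid Z = 0) \leqslant 1 ,
\]
which rearranges to \eqref{eq:bphvk2} since the last two summands equal $1 - P(Y \in A, D = 0 \mid Z = 0)$.

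The whole argument lives in the bookkeeping, so the only point requiring care --- and the main, mild, obstacle --- is the selection of the families $\{\{(B_{\ell,d}, d, z_\ell)\}_{\ell \in S_d}\}_{d \in \cD}$: one must confirm that each $\{B_{\ell,d}\}_{\ell \in S_d}$ is a partition of $\cY$ and that the interlacing condition $\underline{\ell}_0 \geqslant \overline{\ell}_1$ on the index sets is met, and, crucially, recognize that the ``missing'' term needed to close the rearrangement, namely $P(D = 1 - d \mid Z = z)$, must be fed into the sum as a block $B_{\ell, 1-d} = \cY$. Once those choices are fixed, everything else is the trivial identity $P(D = 1, Y \in A \mid Z = z) + P(D = 1, Y \in A^c \mid Z = z) + P(D = 0 \mid Z = z) = 1$ together with algebraic rearrangement.
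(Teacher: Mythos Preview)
Your proposal is correct and follows essentially the same approach as the paper's own proof: instantiate \eqref{eq:ex_mon_implication_K} with the index sets $S_0=\{2\}$, $S_1=\{1,2\}$ (respectively $S_0=\{1,2\}$, $S_1=\{1\}$), verify the admissibility condition $\underline{\ell}_0\geqslant\overline{\ell}_1$, and rearrange using the partition identity. The only cosmetic difference is that for \eqref{eq:bphvk2} you assign $B_{1,0}=A^c$, $B_{2,0}=A$ whereas the paper uses $B_{1,0}=A$, $B_{2,0}=A^c$; both choices lead to the same inequality after rearrangement.
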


\begin{proof}
 Let $\cZ=\{z_1,z_2\}=\{0,1\}.$ We specialize Proposition \ref{prop:iv_val} as follows. Let $S_0=\{2\},S_1=\{1,2\}$.  Let $A\subseteq \cY$ and define the tuple $\{(\cY,0,z_2),(A,1,z_1),(A^c,1,z_2)\}$. This construction satisfies $\cup_{\ell\in S_d}B_{\ell,d}=\cY$ for $d\in \{0,1\}$, and $\underline{\ell}_0=2= \overline{\ell}_1$. Now,  \eqref{eq:ex_mon_implication_K} can be expressed as
\begin{align}
P(Y\in \cY,D=0|Z=z_2)+P(Y\in A,D=1|Z=z_1)+P(Y\in A^c,D=1|Z=z_2)\leqslant 1,	
\end{align}
which is equivalent to \eqref{eq:bphvk1}. One can obtain \eqref{eq:bphvk2} from \eqref{eq:ex_mon_implication_K} by a similar argument by taking $S_0=\{1,2\},S_1=\{1\}$ and letting $\{(A,0,z_1),(A^c,0,z_2),(\cY,1,z_1)\}$ be a tuple in Proposition \ref{prop:iv_val}.
\end{proof}

\subsection{Comparing Testable Implications of Multiple Models}\label{ssec:comparison}
We start with the following lemma. 
\begin{lemma}\label{lem:MIS_expansion}
Let $G=(V_G,E_G)$ be an undirected graph. Let $H\subseteq V_G.$ Let $G'$ be a graph that is obtained by removing all edges between the vertices belonging to $H$, i.e., $E_{G'} = E_G \setminus \{(u,v) \in E_G:u,v\in H\}.$
Then,  for any $I’ \in \mathcal{I}_{G’}$, exactly one of the following holds: 
\begin{enumerate}
	\item[(i)] (Expansion) There exists $I\in \cI_G$ and $V'\subseteq H\setminus I$ such that $I'=I\cup V'$;
	\item[(ii)] (New MIS) $I'\subseteq H$.
\end{enumerate} 
\end{lemma}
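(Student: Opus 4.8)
The strategy is to take an arbitrary maximal independent set $I'$ of $G'$ and split it according to its interaction with $H$. Write $I' = (I' \setminus H) \cup (I' \cap H)$. The key observation is that removing edges only inside $H$ does not change which pairs of vertices with at least one endpoint outside $H$ are adjacent; that is, $G$ and $G'$ agree on all edges except those internal to $H$. The plan is first to dispose of the case $I' \subseteq H$, which is exactly alternative (ii), and then to show that whenever $I' \not\subseteq H$, alternative (i) holds with $I = I' \setminus H$ plus possibly some vertices of $I' \cap H$ that are non-adjacent in $G$ to $I'\setminus H$; and finally to check the two alternatives are mutually exclusive.

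First I would handle (ii): if $I' \subseteq H$, there is nothing to prove — this is literally the statement. So assume $I' \not\subseteq H$, i.e., $J := I' \setminus H \neq \varnothing$. I claim $J$ is an independent set in $G$. Indeed $J$ is independent in $G'$ (as a subset of the independent set $I'$), and since every pair of vertices in $J$ has both endpoints outside $H$, the edge sets of $G$ and $G'$ coincide on $J \times J$; hence $J$ is independent in $G$ as well. Next I would extend $J$ to a \emph{maximal} independent set $I$ of $G$ with $I \supseteq J$. The content of alternative (i) is then that $I' = I \cup V'$ for some $V' \subseteq H \setminus I$. To get this I need to argue two things: (a) $I \subseteq I'$, i.e., the maximal extension of $J$ in $G$ stays inside $I'$; and (b) the leftover $I' \setminus I$ is contained in $H$. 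Part (b) is immediate once (a) holds, since $I' \setminus I \subseteq I' \setminus J = I' \cap H$, and then $V' := I' \setminus I \subseteq (I'\cap H)\setminus I \subseteq H \setminus I$. The real work is (a).

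For (a): I want to choose the extension $I$ of $J$ carefully. Every vertex of $I \setminus J$ lies in $H$ — because if $v \in I \setminus J$ with $v \notin H$, then $v$ is non-adjacent in $G$ to all of $J$, hence (edges outside $H$ being the same) non-adjacent in $G'$ to all of $J$; but $v \notin I'$ would contradict maximality of $I'$ in $G'$ unless $v \in I'$, in which case $v \in I' \setminus H = J$, a contradiction. Wait — this shows any vertex outside $H$ that can be added to $J$ in $G$ is already forced into $I'$; so in fact $I' \setminus H = J$ is already maximal among "extensions of $J$ by vertices outside $H$." Thus I should build $I$ as follows: keep $J$, do not add any vertex outside $H$ (there are none available that preserve independence in $G$, by the argument just given — more precisely, any addable-in-$G$ vertex outside $H$ would already be in $J$), and greedily add vertices of $H$ that are non-adjacent in $G$ to everything currently in $I$. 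Since within $H$ edges were deleted to form $G'$, any $h \in H$ non-adjacent in $G$ to all of $J$ is automatically non-adjacent in $G'$ to all of $J$; combined with $H$-independence in $G'$ after deletion, such $h$ can be checked against $I'$. I would argue that a vertex $h \in H$ added to $I$ must already be in $I'$: $h$ is non-adjacent in $G$ (hence in $G'$, since $G'$ has fewer edges) to all previously chosen vertices of $I \cap H \subseteq I' \cap H$... and to $J$; so $I' \cup \{h\}$ is independent in $G'$, forcing $h \in I'$ by maximality. Hence every vertex ever added lies in $I'$, giving $I \subseteq I'$, which is (a). The main obstacle — and the step to write most carefully — is precisely this inductive/greedy construction of $I$ and the verification that the greedy choices remain inside $I'$; one must be careful that "non-adjacent in $G$" versus "non-adjacent in $G'$" is used in the correct direction at each point ($E_{G'} \subseteq E_G$, so non-adjacency in $G$ implies non-adjacency in $G'$, but not conversely). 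Finally, mutual exclusivity: if both (i) and (ii) held, then $I' \subseteq H$ and $I' = I \cup V'$ with $I$ maximal in $G$ and $V' \subseteq H \setminus I$; but then $J = I' \setminus H = \varnothing$, so $I$ is a maximal independent set of $G$ with $I \cap H^c = \varnothing$, i.e. $I \subseteq H$ — and a maximal independent set contained entirely in $H$ together with $I' \subseteq I \cup V' \subseteq H$ is consistent only in the degenerate situation $H = V_G$; in general one excludes this by noting $I' \subseteq H$ already names the case, so (i) is understood to apply only when $I' \not\subseteq H$. I would state the dichotomy as: exactly one of (i), (ii) holds, with (ii) being the case $I' \subseteq H$ and (i) the case $I' \not\subseteq H$, which makes mutual exclusivity trivial.
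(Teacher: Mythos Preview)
Your overall strategy matches the paper's: write $J = I' \setminus H$, note that $J$ is independent in $G$, extend it to some $I \in \mathcal{I}_G$, and set $V' = I' \setminus I$. The paper simply asserts $I' = I \cup V'$ without checking that $I \subseteq I'$; you correctly flag this containment as ``the real work'' and attempt a greedy construction. Your argument, however, breaks at the claim that any vertex outside $H$ which can be added to $J$ while preserving $G$-independence must already lie in $J$. You reason that if $v \notin H$ is non-adjacent in $G$ to all of $J$, then maximality of $I'$ in $G'$ forces $v \in I'$. But maximality only guarantees that $v$ is adjacent in $G'$ to \emph{some} vertex of $I'$, and that vertex may lie in $I' \cap H$ rather than in $J$. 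Since $v \notin H$, such an edge is present in both $G$ and $G'$, so $v \notin I'$ is entirely consistent with $I'$ being maximal. The subsequent greedy step inherits this error: once a vertex outside $H$ can be legitimately appended, the resulting $I$ need not stay inside $I'$.

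This gap cannot be repaired, because the lemma as stated is false. Take $V_G = \{a, h_1, h_2, v_1, v_2\}$, $H = \{h_1, h_2\}$, and $E_G = \{h_1 h_2,\, v_1 h_2,\, v_2 h_1\}$; that is, $G$ is the path with vertex sequence $v_1, h_2, h_1, v_2$ together with an isolated vertex $a$. Then $E_{G'} = \{v_1 h_2,\, v_2 h_1\}$, and $I' = \{a, h_1, h_2\}$ is a maximal independent set of $G'$. The maximal independent sets of $G$ are $\{a, h_1, v_1\}$, $\{a, h_2, v_2\}$, and $\{a, v_1, v_2\}$; each contains $v_1$ or $v_2$, so none is contained in $I'$, and alternative~(i) fails. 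Since $a \in I' \setminus H$, alternative~(ii) fails as well. Hence neither alternative holds, and no choice of extension --- yours or the paper's --- can make the argument go through.
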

\begin{proof}
Let $I' \in \mathcal{I}_{G'}$ be an arbitrary maximal independent set in $G'$. Define
\[
I_1 := I' \cap (V_G \setminus H), \quad I_2 := I' \cap H,
\]
so that $I' = I_1 \cup I_2$. Here, $I_1$ is the part of $I'$ outside $H$ and $I_2$ is the part of $I'$ inside $H$.

\noindent
\textbf{Case (i):} Suppose $I_1 \neq \emptyset$.
Then, $I_1$ is a subset of $I'$ consisting of vertices outside $H$, and since $G'$ and $G$ agree on edges outside $H$, $I_1$ is an independent set in $G$.
If $I_1$ is not a maximal independent set of $G$, extend $I_1$ to a maximal independent set $I \in \mathcal{I}_G$. 

Now define
\[
V' := I' \setminus I = (I_1 \cup I_2) \setminus I = I_2 \setminus I.
\]
Then $V' \subseteq H \setminus I$, and $I' = I \cup V'$. Since $I' \in \mathcal{I}_{G'}$ by assumption, this shows that $I'$ is a maximal independent set in $G'$ obtained by adding vertices from $H$ to an MIS $I$ of $G$.

\noindent
\textbf{Case (ii):} Suppose $I_1 = \emptyset$. Then $I' \subseteq H$, and $I'$ is a maximal independent set of $G'$.
\end{proof}

\begin{proof}[Proof of Proposition \ref{prop:comparison}]
 The result follows immediately from Lemma \ref{lem:MIS_expansion}.
\end{proof}

\subsubsection{Adding IA Monotonicity to the Exclusion Restriction}
As an illustration, we consider the canonical setting in which $\cY=[\underline\cY,\overline\cY]$, and the treatment and instrument are binary, a special case of Example \ref{ex:iv}. First, consider testing the exogeneity and exclusion restrictions:
\begin{equation} \label{E:exclusion1}
\begin{array}{l}
	Z \perp (\{Y(d, z)\}_{d \in \mathcal{D}, z \in \mathcal{Z}}, \{D(z)\}_{z \in \mathcal{Z}});\\[3mm]
	Y(d, z) =_{a.s.} Y(d, z'), \;\; \forall d \in \mathcal{D}, \forall z, z' \in \mathcal{Z}.
\end{array}
\end{equation}

The following is the sharp testable implication of this assumption, which follows from Proposition \ref{prop:exclusion_necessity} with $K=2$, and noting that $\cR^*$ is regular since $Z$ is binary.\footnote{For discrete outcomes, this inequality is equivalent to \citeposs{Pearl1995Testability} instrumental inequality.}
\begin{align}
 P(Y \in A, D = d \,|\,Z = 0)+P(Y \in A^c, D = d \,|\,Z = 1) \leqslant 1, \;\;\;  \forall d \in \{0,1\}, ~\forall A\subseteq \mathcal Y.	\label{eq:exclusion_only_K2}
\end{align}
Now we add the IA monotonicity 
\begin{equation} \label{E:exclusion_monotonicity1}
\begin{array}{l}
	D(1)\geqslant_{a.s.}D(0).
\end{array}
\end{equation}
Then, the testable implication becomes
\[
\begin{array}{c}
P(Y\in \cY,D=0|Z=1)+P(Y\in A,D=1|Z=0)+P(Y\in A^c,D=1|Z=1)\leqslant 1\\
P(Y\in A,D=0|Z=0)+P(Y\in A^c,D=0|Z=1)+P(Y\in \cY,D=1|Z=0)\leqslant 1.
\end{array}
\]
The inequalities above are equivalent to \eqref{eq:bphvk1}-\eqref{eq:bphvk2} (see Remark \ref{rem:existing_results} and Corollary \ref{cor:iv_K2}). Hence, they are sharp. Each of these inequalities results from tightening the original inequality under the exclusion restriction. For example, the first inequality is obtained by adding $P(Y \in \cY,D=0|Z=1)$ to $P(Y \in A, D=1|Z=0)+P(Y \in A^c, D=1 | Z=1)$, which is the left side of \eqref{eq:exclusion_only_K2} when $d=1$. The tightening occurs because the monotonicity restriction removes the edges $(v_{(A,1),0}, v_{(A,0),1})$ and $(v_{(A,1),0}, v_{(A^c,0),1})$ from the graph (see Figure \ref{fig:ex_mon}).

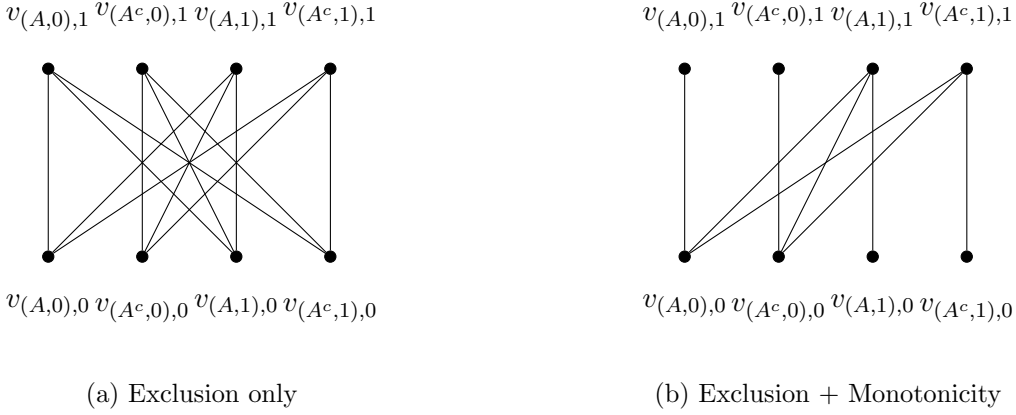
\begin{figure}[t]
\centering

\begin{subfigure}[t]{0.49\textwidth}
\centering
\begin{tikzpicture}[scale=1.25, every node/.style={circle, draw, fill=black, inner sep=1.5pt}]

\node (za00) at (0, -1) {};
\node (zc00) at (1, -1) {};
\node (za10) at (2, -1) {};
\node (zc10) at (3, -1) {};

\node (za01) at (0, 1) {};
\node (zc01) at (1, 1) {};
\node (za11) at (2, 1) {};
\node (zc11) at (3, 1) {};

\node[draw=none, fill=none, below] at (za00) {$v_{(A,0),0  }$};
\node[draw=none, fill=none, below] at (zc00) {$v_{(A^c,0),0}$};
\node[draw=none, fill=none, below] at (za10) {$v_{(A,1),0  }$};
\node[draw=none, fill=none, below] at (zc10) {$v_{(A^c,1),0}$};

\node[draw=none, fill=none, above] at (za01) {$v_{(A,0),1  }$};
\node[draw=none, fill=none, above] at (zc01) {$v_{(A^c,0),1}$};
\node[draw=none, fill=none, above] at (za11) {$v_{(A,1),1  }$};
\node[draw=none, fill=none, above] at (zc11) {$v_{(A^c,1),1}$};


\draw (za00) -- (za01);
\draw (za00) -- (za11);
\draw (za00) -- (zc11);
\draw (zc00) -- (zc01);
\draw (zc00) -- (za11);
\draw (zc00) -- (zc11);
\draw (za10) -- (za11);
\draw (za10) -- (za01);
\draw (za10) -- (zc01);
\draw (zc10) -- (zc11);
\draw (zc10) -- (za01);
\draw (zc10) -- (zc01);

\end{tikzpicture}
\caption{Exclusion only}
\end{subfigure}
\hfill
\begin{subfigure}[t]{0.49\textwidth}
\centering
\begin{tikzpicture}[scale=1.25, every node/.style={circle, draw, fill=black, inner sep=1.5pt}]

\node (za00) at (0, -1) {};
\node (zc00) at (1, -1) {};
\node (za10) at (2, -1) {};
\node (zc10) at (3, -1) {};

\node (za01) at (0, 1) {};
\node (zc01) at (1, 1) {};
\node (za11) at (2, 1) {};
\node (zc11) at (3, 1) {};

\node[draw=none, fill=none, below] at (za00) {$v_{(A,0),0  }$};
\node[draw=none, fill=none, below] at (zc00) {$v_{(A^c,0),0}$};
\node[draw=none, fill=none, below] at (za10) {$v_{(A,1),0  }$};
\node[draw=none, fill=none, below] at (zc10) {$v_{(A^c,1),0}$};

\node[draw=none, fill=none, above] at (za01) {$v_{(A,0),1  }$};
\node[draw=none, fill=none, above] at (zc01) {$v_{(A^c,0),1}$};
\node[draw=none, fill=none, above] at (za11) {$v_{(A,1),1  }$};
\node[draw=none, fill=none, above] at (zc11) {$v_{(A^c,1),1}$};


\draw (za00) -- (za01);
\draw (za00) -- (za11);
\draw (za00) -- (zc11);
\draw (zc00) -- (zc01);
\draw (zc00) -- (za11);
\draw (zc00) -- (zc11);
\draw (za10) -- (za11);
\draw (zc10) -- (zc11);

\end{tikzpicture}
\caption{Exclusion + Monotonicity}
\end{subfigure}

\caption{Potential Response Graph for Exclusion and Monotonicity }
\label{fig:ex_mon}
\end{figure}

\subsubsection{Partial Monotonicity and IA Motnotonicity}
As discussed in the text, removing the $Z_2$-complier results in the removal of an edge in the potential response graph. This is illustrated in Figure \ref{fig:pm_iam}.  Under the PM assumption, the latent types compatible with $v_{1,(0,1)}$ (i.e., $D=1$ when $Z=(0,1)$) include $2c,ec,$ and $at$. Similarly, those compatible with $v_{0,(1,0)}$ (i.e., $D=0$ when $Z=(1,0)$) are $nt,rc,$ and $2c$ (Panel (a)). Because they share the $Z_2$-complier, the sum of the conditional probabilities is not necessarily bounded by 1. Once we exclude the $Z_2$-complier, the latent types compatible with the two vertices become mutually exclusive (Panel (b)), resulting in \eqref{eq:pm_iam_new}.

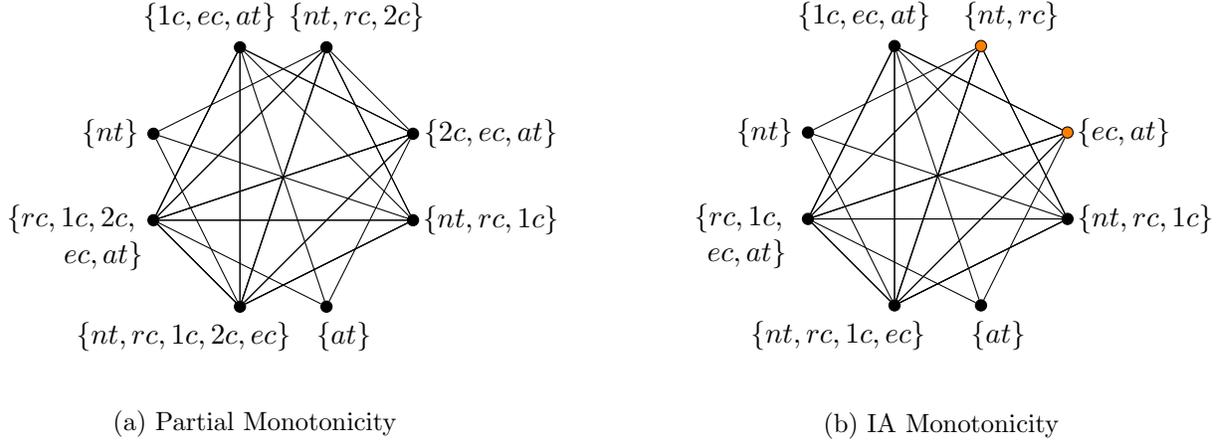
\begin{figure}[t]
\centering
\begin{subfigure}[t]{0.45\textwidth}
\centering
\begin{tikzpicture}[scale=1.15, every node/.style={draw, circle, fill=black, inner sep=1.5pt}, baseline=(current bounding box.north)]
\node (a0) at (0, 0) {};
\node (a1) at (1, 0) {};
\node (b0) at (2, 1) {};
\node (b1) at (2, 2) {};
\node (c0) at (1, 3) {};
\node (c1) at (0, 3) {};
\node (d0) at (-1, 2) {};
\node (d1) at (-1, 1) {};

\node[draw=none, fill=none] at (-0.65,-0.35) {$\{nt,rc,1c, 2c,ec\}$};
\node[draw=none, fill=none] at (1.2,-0.35) {$\{at\}$};

\node[draw=none, fill=none] at (2.9,1) {$\{nt,rc,1c\}$};
\node[draw=none, fill=none] at (2.9,2) {$\{2c,ec,at\}$};

\node[draw=none, fill=none] at (1.35,3.35) {$\{nt,rc,2c\}$};
\node[draw=none, fill=none] at (-0.35,3.35) {$\{1c,ec,at\}$};

\node[draw=none, fill=none] at (-1.5,2) {$\{nt\}$};
\node[draw=none, fill=none] at (-1.95,0.8) {$\begin{array}{c}
    \{rc,1c,2c, \\
     \;\;\;\;\;\;\;\;ec,at\} 
\end{array}$};


\draw (a0) -- (b0) -- (c0) -- (d0) -- (a0);
\draw (a0) -- (c0);
\draw (d0) -- (b0);
\draw (a0) -- (b0) -- (c0) -- (d1) -- (a0);
\draw (a0) -- (c0);
\draw (d1) -- (b0);
\draw (a0) -- (b0) -- (c1) -- (d1) -- (a0);
\draw (a0) -- (c1);
\draw (d1) -- (b0);
\draw (a0) -- (b1) -- (c0) -- (d1) -- (a0);
\draw (a0) -- (c0);
\draw (d1) -- (b1);
\draw (a0) -- (b1) -- (c1) -- (d1) -- (a0);
\draw (a0) -- (c1);
\draw (d1) -- (b1);
\draw (a1) -- (b1) -- (c1) -- (d1) -- (a1);
\draw (a1) -- (c1);
\draw (d1) -- (b1);
\end{tikzpicture} \vspace*{-8mm}
\caption{Partial Monotonicity}
\end{subfigure}
\hfill
\begin{subfigure}[t]{0.45\textwidth}
\centering
\begin{tikzpicture}[scale=1.15, every node/.style={draw, circle, fill=black, inner sep=1.5pt}, baseline=(current bounding box.north)]
\node (a0) at (0, 0) {};
\node (a1) at (1, 0) {};
\node (b0) at (2, 1) {};
\node [fill=orange](b1) at (2, 2) {};
\node [fill=orange](c0) at (1, 3) {};
\node (c1) at (0, 3) {};
\node (d0) at (-1, 2) {};
\node (d1) at (-1, 1) {};

\node[draw=none, fill=none] at (-0.65,-0.35) {$\{nt,rc,1c,ec\}$};
\node[draw=none, fill=none] at (1.2,-0.35) {$\{at\}$};

\node[draw=none, fill=none] at (2.9,1) {$\{nt,rc,1c\}$};
\node[draw=none, fill=none] at (2.65,2) {$\{ec,at\}$};

\node[draw=none, fill=none] at (1.35,3.35) {$\{nt,rc\}$};
\node[draw=none, fill=none] at (-0.35,3.35) {$\{1c,ec,at\}$};

\node[draw=none, fill=none] at (-1.5,2) {$\{nt\}$};
\node[draw=none, fill=none] at (-1.8,0.8) {$\begin{array}{c}
     \{rc,1c,  \\
     \;\;ec,at\}
\end{array}$};


\draw (a0) -- (b0) -- (c0) -- (d0) -- (a0);
\draw (a0) -- (c0);
\draw (d0) -- (b0);
\draw (a0) -- (b0) -- (c0) -- (d1) -- (a0);
\draw (a0) -- (c0);
\draw (d1) -- (b0);
\draw (a0) -- (b0) -- (c1) -- (d1) -- (a0);
\draw (a0) -- (c1);
\draw (d1) -- (b0);
\draw (a0) -- (b1); 
\draw (c0) -- (d1) -- (a0);
\draw (a0) -- (c0);
\draw (d1) -- (b1);
\draw (a0) -- (b1) -- (c1) -- (d1) -- (a0);
\draw (a0) -- (c1);
\draw (d1) -- (b1);
\draw (a1) -- (b1) -- (c1) -- (d1) -- (a1);
\draw (a1) -- (c1);
\draw (d1) -- (b1);
\end{tikzpicture} \vspace*{-5mm}
\caption{IA Monotonicity}
\end{subfigure}
\caption{Potential Response Graphs for Partial Monotonicity and IA Monotonicity.}
\label{fig:pm_iam}
\floatfoot{Note: The set representations of the two graphs. Each node is represented by the set of compatible latent types.
The orange vertices in Panel (b) indicate the new MIS ($\{v_{1,(0,1)},v_{0,(1,0)}\}$) under the IAM assumption.}
\end{figure}

\subsubsection{Restricting Mediation Channels}\label{ssec:mediation_comparison}
The sharp testable implication for the full mediation assumption (via $M_1$ and $M_2$) is as follows:
\begin{align}
P(Y=0,M_1=0,M_2=0 \mid D=0) + P(Y=1,M_1=0,M_2=0 \mid D=1) &\leqslant 1 \\
P(Y=0,M_1=0,M_2=0 \mid D=1) + P(Y=1,M_1=0,M_2=0 \mid D=0) &\leqslant 1 \\
P(Y=0,M_1=0,M_2=1 \mid D=0) + P(Y=1,M_1=0,M_2=1 \mid D=1) &\leqslant 1 \\
P(Y=0,M_1=0,M_2=1 \mid D=1) + P(Y=1,M_1=0,M_2=1 \mid D=0) &\leqslant 1 \\
P(Y=0,M_1=1,M_2=0 \mid D=0) + P(Y=1,M_1=1,M_2=0 \mid D=1) &\leqslant 1 \\
P(Y=0,M_1=1,M_2=0 \mid D=1) + P(Y=1,M_1=1,M_2=0 \mid D=0) &\leqslant 1 \\
P(Y=0,M_1=1,M_2=1 \mid D=0) + P(Y=1,M_1=1,M_2=1 \mid D=1) &\leqslant 1 \\
P(Y=0,M_1=1,M_2=1 \mid D=1) + P(Y=1,M_1=1,M_2=1 \mid D=0) &\leqslant 1.
\end{align}
If we strengthen the assumption to the mediation only through $M_1$, the sharp testable implication becomes
\begin{align}
&P(Y=0,M_1=0,M_2=0 \mid D=0) + P(Y=0,M_1=0,M_2=1 \mid D=0) \notag\\
&\quad + P(Y=1,M_1=0,M_2=0 \mid D=1) + P(Y=1,M_1=0,M_2=1 \mid D=1) \leqslant 1\\
&P(Y=0,M_1=0,M_2=0 \mid D=1) + P(Y=0,M_1=0,M_2=1 \mid D=1) \notag\\
&\quad + P(Y=1,M_1=0,M_2=0 \mid D=0) + P(Y=1,M_1=0,M_2=1 \mid D=0) \leqslant 1 \\
&P(Y=0,M_1=1,M_2=0 \mid D=0) + P(Y=0,M_1=1,M_2=1 \mid D=0) \notag\\
&\quad + P(Y=1,M_1=1,M_2=0 \mid D=1) + P(Y=1,M_1=1,M_2=1 \mid D=1) \leqslant 1 \\
&P(Y=0,M_1=1,M_2=0 \mid D=1) + P(Y=0,M_1=1,M_2=1 \mid D=1) \notag\\
&\quad + P(Y=1,M_1=1,M_2=0 \mid D=0) + P(Y=1,M_1=1,M_2=1 \mid D=0) \leqslant 1.
\end{align}

\subsection{Partial Monotonicity, MIS Inequalities and a Random Coefficient Model}\label{ssec:pm_random_coeff}
Consider the following selection model satisfying partial monotonicity:
\[
D(z_1, z_2) = \bm{1}(B_0 + B_1 z_1 + z_2 \geqslant 0),
\]
where $(B_0, B_1)$ are individual-specific random variables with $B_1 \geqslant_{a.s.} 0$.
Panel~(a) of Figure~\ref{fig:partial_monotonicity} partitions the $(B_0, B_1)$-space into regions defined by 
\[
\{(b_0,b_1)\in \mathbb R\times\mathbb R_+:\bm{1}(b_0 + b_1 z_1 + z_2 \geqslant 0) = d\}, \quad d \in \{0,1\}, \; z \in \{0,1\}^2.
\]
The conditional distribution of $D$ reflects the probability mass assigned to each such region. 
For instance, $P(D=1 \mid Z=(0,0)) = Q(B_0 \geqslant 0) = Q(at)$, while 
$P(D=0 \mid Z=(1,0)) = Q(B_0 + B_1 \leqslant 0) = Q(nt) + Q(rc) + Q(2c)$.
Events lying on opposite sides of a dashed line in the figure are mutually exclusive. 
Accordingly, the two observable events represented by $v_{1,(0,0)}$ and $v_{0,(1,0)}$ correspond to the regions 
$\{B_0 \geqslant 0\}$ and $\{B_0 + B_1 \leqslant 0\}$, respectively, as shown in Panel~(d) of Figure~\ref{fig:partial_monotonicity}. 
Since these regions are disjoint, the sum of their probabilities must not exceed one—a testable implication of the model. Hence, the MIS inequalities are equivalent to the testable implication derived from the random coefficient model above. We note that our approach does not require the existence of a structural model that is consistent with the support restriction.

\subsection{Relationships to Existing Inequalities}
This section illustrates how the inequalities in this paper are related to the existing ones. 
\subsubsection{Binary Instruments and Artstein's Ienquality}\label{ssec:IV_artstein}
As discussed in Remark \ref{rem:artstein}, the inequality in Theorem \ref{thm:testable_implication} recovers Artstein's theorem when $Z$ is binary. 
Below, we show that Artestein's theorem can be used to derive existing sharp testable implications for settings with binary instruments.

Suppose $Z \in \{0, 1\}, D \in \mathcal{D}$ and $Y \in \mathcal{Y}$. Suppose that 
\begin{align}\label{E:star} 
\begin{array}{l}
	Y(d, z) = Y(d), \text{ for } z \in \{0, 1\}, \text{ for all }d \in \mathcal{D}\\[2mm]
	D(0) \leqslant_{a.s.} D(1).
\end{array}
\end{align}
Let $\cR^*$ be the support of $R^*=(Y(\cdot,\cdot),D(\cdot))$ satisfying the conditions above.
Let $\mathcal{R}_z = \mathcal{Y} \times \mathcal{D}$ for $z \in \{0, 1\}$, and consider random vectors $	(Y(D(0)), D(0)) \in \mathcal{R}_0$ and $	(Y(D(1)), D(1)) \in \mathcal{S}_1$. Define a correspondence $F: \mathcal{R}_0 \rightrightarrows \mathcal{R}_1$ via
\begin{align*}
F(y, d) &=\{r_1=(y_1,d_1) \in \mathcal{R}_1:(y,d,y_1,d_1)\in\cR^*\} \\
&= (\{y\} \times \{d\}) \cup \bigcup_{d' > d} (\mathcal{Y} \times \{d'\}).
\end{align*}
The set $F(y, d)$ contains all values of $(Y(D(1)), D(1))$ compatible with $(Y(D(0)), D(0)) = (y, d)$ under the modeling assumptions \eqref{E:star}.

For every closed set $A \subseteq \mathcal{S}_0$, denote
\[
 	 F^{-}(A) = \{(y, d) \in \mathcal{R}_0: F(y, d) \subseteq A\}.
\]
By Corollary 2.14 in \citet{molchanov2018random}, Assumption \eqref{E:star} holds if and only if
\begin{equation} \label{E:containment}
P((Y, D) \in A \,|\,Z = 1) \geqslant P((Y, D) \in F^-(A) \,|\,Z = 0), \text{ for all closed } A \subseteq \mathcal{R}_1.
\end{equation}
We remark that although checking the inequalities for all closed sets $A \subseteq \mathcal{R}_1$ is sufficient, they remain well-defined and should hold for all measurable subsets $A \subseteq \mathcal{R}_1$. 

Generic closed subsets of $\mathcal{R}_1$ take the form $A = \bigcup_{d \in C} B_d \times \{d\}$ for some closed sets $B_d \subseteq \mathcal{Y}$ and $C \subseteq \mathcal{D}$. It is easy to check that $F^-(A) = \varnothing$ for all $A$ except sets of the form
\[
A_{B,d} = (B \times \{d\}) \cup \bigcup_{d' > d} (\mathcal{Y} \times \{d'\}).
\]
Hence, it suffices to check \eqref{E:containment} for sets belonging to $\cA=\{A_{B,d},B\subseteq \cY,d\in \cD\}$. For this class, it holds that $F^-(A_{B,d})=A_{B,d},~\forall (B,d)$ which we show below. Hence, \eqref{E:containment} with $A=A_{B,d}$ can be written as
\begin{equation} \label{E:containment1}
P((Y, D) \in A_{B,d} \,|\,Z = 1) \geqslant P((Y, D) \in A_{B,d} \,|\,Z = 0).
\end{equation}

We now show $A_{B,d}\subseteq F^-(A_{B,d})$ and $F^-(A_{B,d})\subseteq A_{B,d}$.
First, take any $(y_0,d_0)\in A_{B,d}$, if $d_0>d$, then
$$F(y_0,d_0)=\{(y_0,d_0)\}\cup\bigcup_{d'>d_0}(\mathcal Y\times\{d'\})
\subseteq \bigcup_{d'>d}(\mathcal Y\times\{d'\})\subseteq A_{B,d}.$$
Hence, $A_{B,d}\subseteq F^-(A_{B,d})$ in this case. If $d_0=d$ and $y_0\in B$, then
$$F(y_0,d)=\{(y_0,d)\}\cup\bigcup_{d'>d}(\mathcal Y\times\{d'\})
\subseteq (B\times\{d\})\cup\bigcup_{d'>d}(\mathcal Y\times\{d'\})=A_{B,d},$$
so $(y_0,d_0)\in F^{-}(A_{B,d})$. Therefore, $A_{B,d}\subseteq F^-(A_{B,d})$. 

Next, take any $(y_0,d_0)\in F^{-}(A_{B,d})$. Then, $F(y_0,d_0)\subseteq A_{B,d}$. But $F(y_0,d_0)$ always contains $(y_0,d_0)$. Hence, $(y_0,d_0)\in A_{B,d}$.

We further rewrite \eqref{E:containment1}  as
\begin{multline*}
	P(D > d \,|\,Z = 1) + P(Y \in B, D = d \,|\,Z = 1) \geqslant\\ P(D > d \,|\,Z = 0) + P(Y \in B, D = d \,|\,Z = 0). 
\end{multline*}
Thus, denoting
$\Delta_d(B) = P(Y \in B, D = d \,|\,Z = 1) - P(Y \in B, D = d \,|\,Z = 0)$,
we obtain the following sharp testable implications
\begin{equation} \label{E:KIT1}
\Delta_d(B) \geqslant P(D > d \,|\,Z = 0) - P(D > d \,|\,Z = 1), \;\; \text{ for all closed $B \subseteq \mathcal{Y}$}.
\end{equation}
To relate these to the existing literature, notice that
\[
\Delta_d(B) + \Delta_d(B^c) = P(D = d \,|\,Z = 1) - P(D = d \,|\,Z = 0).
\]
Thus, evaluating \eqref{E:KIT1} at $B^c$ and using the above equality, we obtain 
\begin{equation} \label{E:KIT2}
\Delta_d(B) \leqslant P(D < d \,|\,Z = 0) - P(D < d \,|\, Z = 1),  \;\; \text{ for all closed $B \subseteq \mathcal{Y}$}.
\end{equation}
Now, consider a special case, $\mathcal{D} =  \{0, 1\}$, as in \cite{Kitagawa15}. Setting $d = 1$ in \eqref{E:KIT1} and $d = 0$ in \eqref{E:KIT2}, the right-hand sides are equal to zero, so we obtain
\[
\begin{array}{c}
	P(Y \in B, D = 1 \,|\,Z = 1) \geqslant P(Y \in B, D = 1 \,|\,Z = 0);\\[2mm]
	P(Y \in B, D = 0 \,|\,Z = 0) \geqslant P(Y \in B, D = 0 \,|\,Z = 1),
\end{array}
\]
which are precisely the inequalities in \cite{Kitagawa15}. 

\cite{Kwon:2024aa} consider the same setup with $|\mathcal{D}| > 2$. With some algebra, the inequalities obtained in their Remark 1 (Proposition B.1) are 
\[ 
\sup_{A}\Delta_{d}(A) \leqslant \min \left\{P(D < d \,|\, Z = 0) - P(D < d \,|\, Z = 1), \;0 \right\}. 
\]
Equation \eqref{E:KIT1} with $B = \mathcal{Y}$ implies $P(D < d \,|\,Z = 0) \leqslant P(D < d \,|\,Z = 1)$, meaning that the minimum in the above display can actually be removed.

\subsubsection{Generalized Instrumental Inequalities}\label{ssec:instrumental_inequalities}
Suppose $\cY=\{0,1\}$, $D=\{0,1\}$, and $\cZ$ is unrestricted. 
Consider the following inequalities:
\begin{align}
\max_{d\in\{0,1\}}
\left\{
\sup_{z\in\cZ} P(Y=1, D=d \mid Z=z)+
\sup_{z\in\cZ} P(Y=0, D=d \mid Z=z)
\right\}\leqslant 1,
\label{eq:ineq9}	
\end{align}
\begin{align}
&\begin{aligned}
\max_{d \in \{0,1\}} \max_{y \in \{0,1\}}&
\Big[
\sup_{z\in\cZ} P(Y=y, D=d \mid Z=z)\\
&-
\inf_{z\in\cZ} \!\left\{ P(Y=y, D=0 \mid Z=z) + P(Y=y, D=1 \mid Z=z) \right\}
\\[3pt]
&-
\inf_{z\in\cZ} \!\left\{ P(Y=y, D=d \mid Z=z) + P(Y=1-y, D=1-d \mid Z=z) \right\}
\Big]
\leqslant 0,
\end{aligned}
\label{eq:ineq10}
\end{align}
and
\begin{align}
\begin{aligned}
&
\inf_{z\in\cZ} \!\left\{ P(Y=1,D=0 \mid Z=z) + P(Y=1,D=1 \mid Z=z) \right\}
\\[3pt]
&+ \inf_{z\in\cZ} \!\left\{ P(Y=1,D=0 \mid Z=z) + P(Y=0,D=1 \mid Z=z) \right\}
\\[3pt]
&+ \inf_{z\in\cZ} \!\left\{ P(Y=0,D=0 \mid Z=z) + P(Y=0,D=1 \mid Z=z) \right\}
\\[3pt]
&+ \inf_{z\in\cZ} \!\left\{ P(Y=0,D=0 \mid Z=z) + P(Y=1,D=1 \mid Z=z) \right\}\geqslant 1.
\end{aligned}
\label{eq:ineq11}	
\end{align}
\cite{KedagniMourifie2020} show \eqref{eq:ineq9}-\eqref{eq:ineq11} are sharp for the potential outcome model satisfying the exclusion restriction and joint statistical independence $(Y(0),Y(1))\perp Z$.
The inequality in \eqref{eq:ineq9} is due to \cite{Pearl1995Testability}. The remaining inequalities are due to \cite{KedagniMourifie2020}.

To connect the inequalities above to Theorem \ref{thm:general_testable_implication}, 
let us define a graph $G=(V_G,E_G)$ with $V_G=\{(y,d,z)\}_{y\in\cY,d\in\cD,z\in\cZ}$, where $\cZ=\{z_1,\dots,z_K\}.$ An edge between $v=(y,d,z)$ and  $v'=(y',d',z')$ exists if and only if $(y,d)=(y',d')$ for $z\ne z'$ or $d\ne d'$ and $z\ne z'$, due to the exclusion restriction.

Let $V\subseteq V_G$. Consider the inequality
\begin{align}
\sum_{v_{r, z}\in V}P(R=r\,|\,Z=z)\leqslant \ell_G(V),~\label{eq:levelineq}
\end{align} 
as in Theorem \ref{thm:general_testable_implication}. Assumption \ref{A:support_regularity} (ii) holds for this model by Propositions \ref{prop:pairwise_incompatibility}-\ref{prop:exclusion_pairwise_direct}.
Hence, $\ell_G(V)=\omega(G[V])$ for any $V\subseteq V_G$. However, Assumption \ref{A:support_regularity} (i) is violated when $K\geqslant 3$ because $G$ admits induced odd cycles of length five  (see \eqref{eq:oddhole1} and \eqref{eq:oddhole2} below). By the Strong Perfect Graph Theorem \citep{chudnovsky2006strong}, $G$ is imperfect.

Let $V=\{(1,d,z),(0,d,z')\}$ for $d\in\{0,1\}$ and $z\in\cZ$. Any $C\in\cC^*_G$ can share at most one vertex with $V$ because it violates the exclusion restriction otherwise. Hence, $\ell_G(V)=1$. Substituting this set into \eqref{eq:levelineq} yields
\begin{align}
	P(Y=1,D=d|Z=z)+P(Y=0,D=d|Z=z')\leqslant 1.
\end{align}
Taking the supremum over $d\in \{0,1\}$ and $z,z'\in \cZ$ recovers Pearl's inequality \eqref{eq:ineq9}.

Similarly, let \begin{align}
V=\{(y,0,z),(1-y,1,z'),(1-y,0,z''),(1-y,0,z'),(y,1,z'')\}\label{eq:oddhole1}
\end{align}
or
\begin{align}
V=\{(y,1,z),(1-y,0,z'),(1-y,1,z''),(1-y,1,z'),(y,0,z'')\},\label{eq:oddhole2}
\end{align}
for $y\in \{0,1\}, (z,z',z'')\in\cZ^3$.
Observe that $G[V]$ forms a cycle of length 5, which has $\omega(G[V])=2$.   With some rearranging, both sets can be written as
\begin{align}
V=\{(y,d,z),(1-y,1,z'),(1-y,d,z''),(1-y,0,z'),(y,1-d,z'')\},	\label{eq:oddhole3}
\end{align}
for either $d=0$ or $d=1$.
Substituting \eqref{eq:oddhole3} into \eqref{eq:levelineq} and recalling that $\ell_G(V)=\omega(G[V])=2$,
\begin{multline*}
P(Y=y,D=d|Z=z)+P(Y=1-y,D=1|Z=z')+P(Y=1-y,D=d|Z=z'')\\
+P(Y=1-y,D=0|Z=z')+P(Y=y,D=1-d|Z=z'')\leqslant 2.
\end{multline*}
By taking complements and re-arranging them,
\begin{multline*}
P(Y=y,D=d|Z=z)\leqslant P(Y=y,D=0|Z=z')+P(Y=y,D=1|Z=z')\\
+P(Y=y,D=d|Z=z'')+P(Y=1-y,D=1-d|Z=z'')	
\end{multline*}
Taking sup over $z$ on the left-hand side and inf over $z'$ and $z''$ on the right-hand side of the expression above yields
\begin{multline*}
\sup_{z\in\cZ}P(Y=y,D=d|Z=z)\leqslant \inf_{z'\in\cZ}\{P(Y=y,D=0|Z=z')+P(Y=y,D=1|Z=z')\}\\
+\inf_{z''\in \cZ}\{P(Y=y,D=d|Z=z'')+P(Y=1-y,D=1-d|Z=z'')\}.
\end{multline*}
Taking the difference between both sides of the inequality and maximizing over $(y,d)\in \{0,1\}^2$ recovers \eqref{eq:ineq10}.

 Finally, consider
\begin{multline}
	V=\{(y,d,z),(y,1-d,z),(y,d,z'),(1-y,1-d,z'),\\
	(1-y,d,z''),(1-y,1-d,z''),(y,1-d,z'''),(1-y,d,z''')\}
\end{multline}
for $(y,d)\in\{0,1\}^2, (z,z',z'',z''')\in\cZ^4$. 

One can show numerically $\ell_G(V)=\omega(G[V])=3$. Substituting this into \eqref{eq:levelineq}
yields
\begin{equation}
\begin{aligned}
&P(Y=0,D=0 | Z=z) + P(Y=0,D=1 | Z=z)
\\[3pt]
&+ P(Y=0,D=0 | Z=z') + P(Y=1,D=1 | Z=z') 
\\[3pt]
&+  P(Y=1,D=0 | Z=z'') + P(Y=1,D=1 | Z=z'')
\\[3pt]
&+ P(Y=0,D=1 | Z=z''') + P(Y=1,D=0 | Z=z''') \leqslant 3,~(z,z',z'',z''')\in\mathcal Z^4.
\end{aligned}
\end{equation}
Following an argument similar to the previous case recovers \eqref{eq:ineq11}.
Hence, the sharp instrumental inequalities can be 
interpreted as those in Theorem \ref{thm:general_testable_implication} for specific choices of $V$.

\section{Regularity of $\cR^*$}
This section elaborates on Assumption \ref{A:support_regularity} (ii) from an analytical perspective.
\subsection{Pairwise Incompatibility and Assumption \ref{A:support_regularity} (ii)}\label{ssec:pairwise_incompatibility}
Below, we show Assumption \ref{A:support_regularity} (ii) is equivalent to the pairwise incompatibility criterion. 
\begin{proposition}\label{prop:pairwise_incompatibility}
 Let $G$ be a potential response graph.  The following are equivalent
\begin{enumerate}
    \item Every maximal clique $C$ in $G$ corresponds to a support point $r^*\in\cR^*$.
    \item $\{S_v\}$ induced by $\cR^*$ satisfies the pairwise incompatibility criterion.
\end{enumerate}
\end{proposition}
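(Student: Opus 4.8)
\textbf{Proof plan for Proposition~\ref{prop:pairwise_incompatibility}.}

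The plan is to prove the two implications separately, using the elementary observation that a subset $C = \{v_1,\dots,v_m\} \subseteq V_G$ is a clique of $G$ precisely when every pair $v_i,v_j$ is adjacent, i.e.\ when $S_{v_i}\cap S_{v_j}\neq\varnothing$ for all $i\neq j$, whereas $C$ \emph{corresponds to a support point} precisely when $\bigcap_{k} S_{v_k}\neq\varnothing$ and $C$ picks exactly one vertex from each part $V_k$ (so $|C|=K$). I would first dispose of the structural fact that every \emph{maximal} clique of the $K$-partite graph $G$ is contained in at most one vertex per part and, if it meets fewer than $K$ parts, can always be extended by a vertex compatible with a support point; more precisely, if $C$ is a maximal clique then it meets some number $m\le K$ of parts, and any $r^*\in S_v$ for $v\in C$ forces $C$ to live inside the $K$-clique determined by $r^*$ — so in fact $|C|=K$ whenever $C$ contains any vertex, and $C$ corresponds to a support point iff $\bigcap_{v\in C}S_v\neq\varnothing$.

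For (1)~$\Rightarrow$~(2): take any finite tuple $(v_1,\dots,v_K)$ with $\bigcap_{k=1}^K S_{v_k}=\varnothing$, and suppose towards a contradiction that $S_{v_i}\cap S_{v_j}\neq\varnothing$ for every $i\neq j$. Then $C=\{v_1,\dots,v_K\}$ is a clique; extend it to a maximal clique $C'\supseteq C$. By (1), $C'$ corresponds to some $r^*\in\cR^*$, so $r^*\in\bigcap_{v\in C'}S_v\subseteq\bigcap_{k=1}^K S_{v_k}$, contradicting emptiness. Hence some pair is already incompatible, which is the pairwise incompatibility criterion.

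For (2)~$\Rightarrow$~(1): let $C$ be a maximal clique of $G$. By the structural fact above, $C$ consists of $K$ pairwise-adjacent vertices, one from each part; write $C=\{v_1,\dots,v_K\}$ with $v_k\in V_k$. Pairwise adjacency gives $S_{v_i}\cap S_{v_j}\neq\varnothing$ for all $i\neq j$. Applying the pairwise incompatibility criterion in its contrapositive form to the tuple $(v_1,\dots,v_K)$: since no pair is incompatible, we must have $\bigcap_{k=1}^K S_{v_k}\neq\varnothing$, i.e.\ there is a support point $r^*\in\cR^*$ with $r^*$ compatible with each $v_k$; this $r^*$ is exactly the support point corresponding to $C$. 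The main obstacle, and the step worth writing carefully, is the structural claim that maximal cliques of $G$ have size exactly $K$ and that a clique sharing a vertex's compatible type is forced into that type's $K$-clique — this is what makes "maximal clique" and "clique with nonempty total intersection of one vertex per part" line up, and it uses the $K$-partite structure of $G$ together with the fact that each $r^*\in\cR^*$ determines a unique value $T(r^*,z_k)$ in each part.
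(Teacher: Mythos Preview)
Your overall architecture matches the paper's: prove (1)$\Rightarrow$(2) by extending a pairwise-compatible tuple to a maximal clique and invoking (1), and prove (2)$\Rightarrow$(1) by applying the contrapositive of the criterion to a maximal clique. That part is fine.

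The gap is your ``structural fact.'' The claim ``any $r^*\in S_v$ for $v\in C$ forces $C$ to live inside the $K$-clique determined by $r^*$'' is false: a vertex $v$ may lie in many support points' cliques, and there is no reason the rest of $C$ must sit inside any particular one. More importantly, the conclusion you draw from it --- that every maximal clique of $G$ has size exactly $K$ --- is not a general property of potential response graphs. With $K=4$ one can pick three support points $r_1^*,r_2^*,r_3^*$ whose induced $4$-cliques pairwise overlap so that $\{v_1,v_2,v_3\}$ (one vertex from each of the first three parts) is a clique, yet each $r_i^*$ places a \emph{different} vertex in $V_4$, none of which is adjacent to all of $v_1,v_2,v_3$; then $\{v_1,v_2,v_3\}$ is maximal of size $3<K$. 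So you cannot establish $|C|=K$ as a preliminary, independent of (2).

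The fix is to reverse the order in your (2)$\Rightarrow$(1) argument, which is what the paper does. Let $C=\{v_1,\dots,v_m\}$ be any maximal clique ($m\le K$). All pairwise intersections $S_{v_i}\cap S_{v_j}$ are nonempty, so the contrapositive of the criterion (applied to the $m$-tuple) gives $r^*\in\bigcap_{k=1}^m S_{v_k}$. Now $r^*$ determines a $K$-clique $C_{r^*}\supseteq C$; maximality of $C$ forces $C=C_{r^*}$, whence $|C|=K$ and $C$ corresponds to $r^*$. The size-$K$ conclusion is thus a \emph{consequence} of (2), not a prerequisite.
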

\begin{proof}

[2. $\Rightarrow$ 1.]   The contrapositive of the pairwise incompatibility criterion states,
for every finite collection $\{S_{v_1},\dots,S_{v_K}\}$,
\begin{align}
\left(S_{v_i} \cap S_{v_j} \neq \emptyset \;\; \text{for all } i\neq j\right)
\quad \Rightarrow \quad 
\bigcap_{k=1}^K S_{v_k} \neq \emptyset,\label{eq:helly2}
\end{align}
In a maximal clique $C=\{v_1,\dots,v_K\}$, all pairwise intersections are 
nonempty. The above implication ensures that there is $r^*\in \bigcap_{k=1}^K S_{v_k}$ compatible with $\{v_1,\dots,v_{K}\}$, ensuring 1.

[1. $\Rightarrow$ 2.] Now suppose 1. holds. Then, for any clique (not necessarily maximal) $C'=\{v_1,\dots,v_{K'}\}$, $\bigcap_{k=1}^{K'}S_{v_k} \neq \emptyset$ because $C'$ is contained in a maximal clique, and 1. holds. Hence, \eqref{eq:helly2} holds, which is the contrapositive of the pairwise incompatibility criterion.
\end{proof}

\bigskip
In some examples, checking the pairwise incompatibility criterion is straightforward.
In the following proposition and its proof, we use Example \ref{ex:iv} with the exclusion restriction to demonstrate verification of the pairwise impcompatibility criterion.
\begin{proposition}[Exclusion and pairwise incompatibility]
\label{prop:exclusion_pairwise_direct}
Suppose $\mathcal R^*\subseteq \mathcal R^\mathcal X$ consists of
all support points of $R(x)=(Y(d,z),D(z))$ satisfying the exclusion restriction
$Y(d,z)=Y(d,z')$ for all $d\in\mathcal D$ and all $z,z'\in\mathcal Z$. For each $v=(y,d,z)$, define
\[
S_v = \{R^*\in\mathcal R^*: Y(d,z)=y,\ D(z)=d\}.
\]
Then, the family $\{S_v\}$ satisfies the pairwise incompatibility criterion.
\end{proposition}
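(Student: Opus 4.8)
The plan is to verify the pairwise incompatibility criterion (Condition~\ref{cond:pw_incompatibility}) directly for the exclusion restriction, exploiting the special structure that the exclusion constraint only ever couples two vertices sharing the same treatment value and differing in the instrument. Fix a finite tuple $(v_1,\dots,v_K)$ of vertices with $v_k=(y_k,d_k,z_k)$, and suppose $\bigcap_{k=1}^K S_{v_k}=\emptyset$. I want to exhibit a pair $i\neq j$ with $S_{v_i}\cap S_{v_j}=\emptyset$. First I would observe that without loss of generality the $z_k$ are distinct: if $z_i=z_j$ with $v_i\neq v_j$, then either $d_i=d_j$ and $y_i\neq y_j$, or $d_i\neq d_j$; in the first case the events ``$Y(d_i,z_i)=y_i$'' and ``$Y(d_j,z_j)=y_j$'' already conflict, and in the second the events ``$D(z_i)=d_i$'' and ``$D(z_j)=d_j$'' conflict, so $S_{v_i}\cap S_{v_j}=\emptyset$ and we are done. (If $v_i=v_j$, simply drop the duplicate.)

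So assume the $z_1,\dots,z_K$ are pairwise distinct. The key claim is then: if every pairwise intersection $S_{v_i}\cap S_{v_j}$ is nonempty, then $\bigcap_k S_{v_k}\neq\emptyset$, contradicting the hypothesis. To prove this, I would construct an explicit $R^*\in\mathcal R^*$ lying in all the $S_{v_k}$. Set $D(z_k)=d_k$ for each $k$ (this is unambiguous since the $z_k$ are distinct) and set $D(z)$ arbitrarily for $z\notin\{z_1,\dots,z_K\}$. For the potential outcomes, the only constraint imposed by exclusion is that $Y(d,z)$ depend on $d$ alone, so it suffices to define a function $d\mapsto \tilde Y(d)$ and set $Y(d,z)=\tilde Y(d)$. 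We need $\tilde Y(d_k)=y_k$ for each $k$; this assignment is well-defined provided that whenever $d_i=d_j$ we have $y_i=y_j$. But that is exactly what pairwise compatibility guarantees: if $d_i=d_j$ but $y_i\neq y_j$, then ``$Y(d_i,z_i)=y_i$'' and ``$Y(d_j,z_j)=y_j$'' are incompatible under exclusion, so $S_{v_i}\cap S_{v_j}=\emptyset$, contrary to assumption. Hence $\tilde Y$ is well-defined on $\{d_1,\dots,d_K\}$, extended arbitrarily elsewhere, and the resulting $R^*$ satisfies exclusion, so $R^*\in\mathcal R^*$, and by construction $R^*\in S_{v_k}$ for every $k$. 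This shows $\bigcap_k S_{v_k}\neq\emptyset$.

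Putting the two steps together: given $\bigcap_k S_{v_k}=\emptyset$, either some pair shares an instrument value (and then that pair is incompatible, as in the first paragraph), or all instrument values are distinct, in which case the contrapositive just established forces some pair $(v_i,v_j)$ with $d_i=d_j$ and $y_i\neq y_j$, which is incompatible. Either way the conclusion of Condition~\ref{cond:pw_incompatibility} holds, so the family $\{S_v\}$ satisfies the pairwise incompatibility criterion. Combined with Proposition~\ref{prop:pairwise_incompatibility}, this re-proves that Assumption~\ref{A:support_regularity}~(ii) holds for the exclusion restriction. The only mildly delicate point is the bookkeeping around repeated or coincident vertices in the tuple; the mathematical content is the elementary observation that a family of ``coordinate-wise'' constraints (one value per instrument for $D$, one value per treatment level for $\tilde Y$) is globally satisfiable as soon as it is pairwise consistent, because exclusion imposes no higher-order coupling.
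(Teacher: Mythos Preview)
Your proof is correct and follows essentially the same approach as the paper's: both identify that infeasibility of the global system of constraints $\{D(z_k)=d_k,\ Y(d_k)=y_k\}_k$ must stem from either a conflict in $D$ (two vertices sharing $z$ but disagreeing on $d$) or a conflict in $Y$ under exclusion (two vertices sharing $d$ but disagreeing on $y$), and that either conflict already yields a pairwise empty intersection. The paper argues this directly by case analysis on the source of the conflict, whereas you organize it as a contrapositive in the distinct-$z$ case by explicitly constructing the witness $R^*$; the content is the same.
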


\begin{proof}
We write $v_k=(y_k,d_k,z_k)$ for $k=1,\dots,K$. The condition $\bigcap_{k=1}^K S_{v_k} = \emptyset$ means 
 there is no pair of functions
\[
Y:\mathcal D\to\mathcal Y,\qquad D:\mathcal Z\to\mathcal D
\]
such that, for all $k$,
\begin{equation}
Y(d_k)=y_k,\qquad D(z_k)=d_k, 
\label{eq:assignment_constraints}
\end{equation}
where we used the exclusion restriction $Y(d,z)=Y(d)$. Since no pair $(Y,D)$ can satisfy all these constraints, there must be a conflict. There are only two possible ways a
conflict can arise:

\medskip
\noindent\emph{Case 1: Conflict in $D(z)$.}
There exists some $z\in\mathcal Z$ and indices $i\neq j$ such that
\[
z_i=z_j=z,\qquad d_i\neq d_j.
\]
Then no $D:\mathcal Z\to\mathcal D$ can satisfy $D(z)=d_i$ and $D(z)=d_j$
simultaneously. Hence no $R^*\in\mathcal R^*$ can satisfy both
$D(z_i)=d_i$ and $D(z_j)=d_j$, so
\[
S_{v_i}\cap S_{v_j}=\emptyset.
\]

\medskip
\noindent\emph{Case 2: Conflict in $Y(d)$ under exclusion.}
If there is no conflict in $D(z)$, then for each $z$ all indices $k$ with
$z_k=z$ must have the same $d_k$, and thus the $D$-constraints are mutually
compatible. Since the system as a whole is still infeasible, there must instead
be a conflict in the $Y$-constraints. That is, there exists some $d\in\mathcal D$ and
indices $i\neq j$ such that
\[
d_i=d_j=d,\qquad y_i\neq y_j.
\]
Then no function $Y:\mathcal D\to\mathcal Y$ can satisfy $Y(d)=y_i$ and
$Y(d)=y_j$ at the same time. Under the exclusion restriction, $Y(d,z)=Y(d)$
for all $z$, so no $R^*\in\mathcal R^*$ can satisfy both
$Y(d_i,z_i)=y_i$ and $Y(d_j,z_j)=y_j$ when $d_i=d_j$ but $y_i\neq y_j$.
Hence
\[
S_{v_i}\cap S_{v_j}=\emptyset.
\]

\medskip
In either case, we have found a pair $i\neq j$ such that
$S_{v_i}\cap S_{v_j}=\emptyset$. This proves the desired implication.
\end{proof}

\section{Computation and Inference}
\subsection{Master Algorithm}

A pseudo-code for constructing a potential response graph, computing MISs, and checking sharpness is given in Algorithm \ref{alg:master}.

\begin{myalgorithm}[Graph-Based Procedure for Testing Support Restrictions]\rm
\label{alg:master}
\hspace{1mm}\\
\textbf{Inputs:} Support restriction $\mathcal R^*$; The set of vertices: $V_G = \{ v_{r,z} : r \in \mathcal R_z,\ z \in \mathcal Z \}$

\medskip \noindent
\textbf{Step 1: Graph construction}

\begin{itemize}
\item \textbf{If} Pairwise Incompatibility holds (Method 1):
\begin{itemize}
    \item \textbf{For each} pair $(v_{r,z}, v_{r',z'}) \in V_G \times V_G$:
    \begin{itemize}
        \item \textbf{If} $(r,z)$ and $(r',z')$ are consistent under $\mathcal R^*$, add an edge between them.
    \end{itemize}
\end{itemize}

\item \textbf{Otherwise} (Method 2):
\begin{itemize}
    \item Enumerate all $r^* \in \mathcal R^*$ and construct $A^*$ as described in Sec. 3.3.
    \item \textbf{For each} pair $(v_{r,z}, v_{r',z'}) \in V_G \times V_G$:
    \begin{itemize}
        \item \textbf{If} there exists a column of $A^*$ with entries equal to $1$ in rows $(r,z)$ and $(r',z')$, add an edge between them.
    \end{itemize}
\end{itemize}
\end{itemize}

\medskip \noindent
\textbf{Step 2: Enumerate Maximal Independent Sets}

\begin{itemize}
\item Compute the collection
\[
\mathcal I_G = \{ I \subseteq V_G : I \text{ is a maximal independent set of } G \}
\]
using, for example, the algorithm of \citet{tsukiyama1977new}.
\end{itemize}

\medskip \noindent
\textbf{Step 3: Check Sharpness (optional)}

\noindent Initialize indicators $\mathbb{I}_1 = 0$ and $\mathbb{I}_2 = 0$; 

\begin{itemize}
\item Check whether $G$ is a perfect graph using, e.g., Algorithm \ref{alg:det_odd_holes}; \textbf{If TRUE}: Set $\mathbb{I}_1 = 1$.
\item \textbf{If} Method 1 was used, set $\mathbb{I}_2 = 1$.

\textbf{Otherwise}:
\begin{itemize}
    \item Enumerate all maximal cliques $C \in \mathcal C_G$.
    \item \textbf{For each} $C \in \mathcal C_G$: Check whether $C$ corresponds to $r^* \in \mathcal R^*$.
    
 \textbf{If} every $C$ corresponds to some support point:  Set $\mathbb{I}_2 = 1$.
\end{itemize}
\end{itemize}

\begin{itemize}
\item \textbf{If} $\mathbb{I}_1 = 1$ and $\mathbb{I}_2 = 1$, claim the sharpness of the MIS inequalities.

\textbf{Otherwise}: Apply Algorithm \ref{alg:all_nr_sets} to obtain additional inequalities. \qed 
\end{itemize}
\end{myalgorithm}

\subsection{Computational costs of Methods 1 and 2}\label{ssec:method1_vs_method2}
As discussed in Section \ref{ssec:computation}, $G$ can be constructed either by checking pairwise (in)consistency of vertices or by enumerating all support points. We compare the computational cost of the two approaches using Example \ref{ex:iv} that imposes the exclusion restriction. 

Method 1 constructs $G$ by checking compatibility for each pair of nodes. 
Recall that
$|V_G| = |\mathcal Y|\, |\mathcal D|\, |\mathcal Z|.$
Checking the compatibility of two vertices $v=(y,d,z)$ and $v'=(y',d',z')$ 
 can be performed in $O(1)$ time. To construct $G$, we check every unordered pair of vertices in $V_G$.
There are
\begin{align*}
\binom{|V_G|}{2}
= 
\binom{|\mathcal Y|\,|\mathcal D|\,|\mathcal Z|}{2}
=
O\left(|\mathcal Y|^2\, |\mathcal D|^2\, |\mathcal Z|^2\right)
\end{align*}
such pairs. Hence, the total time complexity is
\begin{align*}
T_{\mathrm{graph}}
=
O\left(|\mathcal Y|^2\, |\mathcal D|^2\, |\mathcal Z|^2\right).
\end{align*}

Method 2 enumerates all types. Under exclusion, a latent type $R^*\in\mathcal R^*$ is fully determined by
\begin{align*}
\{Y(d)\}_{d\in\mathcal D} \in \mathcal Y^{\mathcal D},
\qquad
\{D(z)\}_{z\in\mathcal Z} \in \mathcal D^{\mathcal Z}.
\end{align*}
There are $|\mathcal Y|$ possible values for each $Y(d)$ and 
$|\mathcal D|$ possible values for each $D(z)$.
Thus, the total number of latent types is $M
= |\mathcal Y|^{|\mathcal D|}\, |\mathcal D|^{|\mathcal Z|}.$
Any algorithm that enumerates all latent types must output each element of
$\mathcal R^*$, so the time required is
\begin{align*}
T_{\mathrm{enumerate}}
= O\left(|\mathcal Y|^{|\mathcal D|}\, |\mathcal D|^{|\mathcal Z|}\right).
\end{align*}

Thus, $T_{\mathrm{enumerate}}$ grows exponentially in 
$|\mathcal D|$ and $|\mathcal Z|$, while $T_{\mathrm{graph}}$ grows only
polynomially.  
In sum, when the support restriction satisfies the 
pairwise incompatibility criterion, the graph 
$G$ can be constructed without enumerating latent types, even in settings where 
$M=|\mathcal R^*|$ is intractably large.

\subsection{Computation with Non-Regular Restrictions}
 \label{alg:non_regular}
It suffices to consider sets $V$ satisfying the following conditions
\begin{enumerate}
	\item[(1)] $V$ induces an imperfect subgraph of $G$, i.e. contains an odd hole or antihole. 
 That is, the inequality for $V$ is not implied by the MIS inequalities. 	\item[(2)] $V$ is a maximal level-$\ell_G(V)$ set, i.e., it cannot be expanded without increasing its level. That is, the inequality cannot be tightened.
	\item[(3)] $\ell_G(V) = \ell_G(V \backslash V_{k})$, for all $k \in \{1, \dots, K\}$, i.e., $V$ is not implied by $V\backslash V_k$. 
\end{enumerate}
The idea of our approach is to first find all odd holes and antiholes in $G$, thus ensuring that (1) is satisfied, and gradually expand them into more complex sets satisfying properties (2)--(3). We break the algorithm into four sub-algorithms and describe each of them separately.

For any vertex $v \in V_G$, we denote the set of its neighbors by $N(v) \subseteq V_G$. For an arbitrary set of vertices $S \subseteq V_G$, we denote $N_{S}(v) = N(v) \cap S$ and $\overline{N}_{S}(v) = N_{S}(v) \cup \{v\}$. Similarly, for a set of vertices $C \subseteq V_G$, we denote $N_S(C) = \bigcup_{v \in C}N_S(v)$ and $\overline{N}_S(C) = N_S(C) \cup C$. For any subgraph $G' \subseteq G$, let $\omega(G')$ denote its clique number (size of the maximal clique).  Besides the potential response graph $G$ (and its graph complement), the two important inputs are: (i) An adjacency list: $\{N(v): v \in V_G\}$; and (ii) A list of parts, $\{V_k: k \in \{1, \dots, K\}\}$. 

\medskip

Odd holes of length five or more consist of wedges, i.e., subgraphs with three vertices and two edges, so the first step is to list all of them. 
\begin{myalgorithm}[Wedges] \label{alg:triads} \hspace{1mm}\\
\textbf{Inputs:} An adjacency list of $G$; A list of parts $V_k$ of $G$. \\
Initialize $\mathcal{O} = \varnothing$. \textbf{For each} $k \in \{1, \dots, K\}$, \textbf{for each} $v \in V_k$:
\begin{itemize}
 		\item Compute $N_{V_l}(v)$, for $l \ne k$. 
 		\item \textbf{For each} $l \ne k$: \textbf{for each} pair $v', v'' \in N_{V_l}(v)$, add $(v', v, v'')$ to $\mathcal{O}$.
 		\item \textbf{For each} $l< m \ne k$: \textbf{for each} pair of vertices $(v', v'') \in  N_{V_l}(v) \times (N_{V_m}(v) \backslash N_{V_m}(v'))$, add $(v', v, v'')$ to $\mathcal{O}$. 
\end{itemize}
\textbf{Return} $\mathcal{O}$.  \qed 	\\[2mm]
A simple calculation suggests that the worst-case complexity of Algorithm 1 is is $O(N^3)$.  For future reference, we let $N_{W}$ denote the number of wedges in $G$.  

\end{myalgorithm}

Let $(v', v, v'')$ be a wedge. Consider a pair of vertices $\tilde{v}' \in N(v') \backslash N(v'')\backslash N(v)$ and $\tilde{v}'' \in N(v'') \backslash N(v') \backslash N(v)$. If $\tilde{v}'$ and $\tilde{v}''$ are disconnected, the set of vertices $(\tilde{v}', v', v, v'', \tilde{v}'')$ induces a line graph with five vertices linked by four edges, which can potentially be expanded into a longer odd hole by further adding some neighbors of $\tilde{v}'$ and $\tilde{v}''$. If $\tilde{v}'$ and $\tilde{v}''$ are connected, this subgraph is an odd hole of length five, which cannot be expanded into a longer odd hole. Next, suppose a set of vertices $(v', V, v'')$ with $|V|$ odd induces a line graph with $|V| + 2$ vertices and $|V| + 1$ edges. Consider the vertices $\tilde{v}' \in N(v') \backslash N(v'')\backslash N(V)$ and $\tilde{v}'' \in N(v'') \backslash N(v') \backslash N(V)$. Similar to the above, if $\tilde{v}'$ and $\tilde{v}''$ are disconnected, the set of vertices $(\tilde{v}', v', V, v'', \tilde{v}'')$ induces a line graph with $|V| + 4$ vertices and $|V| + 3$ edges, which can potentially be expanded further, and if $\tilde{v}'$ and $\tilde{v}''$ are connected, it induces an odd hole which cannot be expanded further. These observations suggest the following algorithm.

\begin{myalgorithm}[Odd Holes from a Wedge]\label{alg:odd_holes} \hspace{1mm} \\
\textbf{Inputs:} An wedge $(v', v, v'')$; An adjacency list of  $G$.\\[2mm]
  Initialize $\mathcal{O} = \varnothing$ and a stack $\mathcal{P} = \{(v', V, v'')\}$ with $V = \{v\}$. \textbf{While} $\mathcal{P} \ne \varnothing$, do:

	\begin{itemize}
	\item Pop the first element from $\mathcal{P}$, denoted $(v', V, v'')$.
	\item \textbf{If} $N(v') \backslash N(v'') \backslash N(V) = \varnothing$ or $N(v'') \backslash N(v') \backslash N(V) = \varnothing$: \textbf{Next.} 
	\item \textbf{For each} $\tilde{v}' \in N(v') \backslash N(v'') \backslash N(V)$ and $\tilde{v}'' \in N(v'') \backslash N(v') \backslash N(V)$:
	\begin{itemize}
		\item \textbf{If} $\tilde{v}'$ and $\tilde{v}''$ are connected: add $(\tilde{v}', v', V, v'', \tilde{v}'')$ to $\mathcal{O}$; 
		\item \textbf{Else}: add $(\tilde{v}', \tilde{V}, \tilde{v}'')$, where $\tilde{V} = V \cup \{v', v''\}$, to $\mathcal{P}$.
	\end{itemize}
		\end{itemize}
		\item \textbf{Return} $\mathcal{O}$. \qed 
\end{myalgorithm}

To describe the complexity of this algorithm, let $\mathcal{L}_G^{\text{odd}}$ denote the class of all line subgraphs of $G$ with an odd number of vertices, and
\[
M = \max \limits_{(v', V, v'') \subseteq V_G} \left\{\frac{}{} \max(N(v') \backslash N(v'') \backslash N(V),\; N(v'') \backslash N(v') \backslash N(V)  ) \,\bigg | \, G[(v', V, v'')] \in \mathcal{L}_G^{\text{odd}} \frac{}{}\right\} 
\]
denote the maximum size of the ``exclusive'' neighborhood of one of the endpoints of a line subgraph in $G$. One can bound $M$ by the maximum degree, but it is typically smaller. If the longest odd hole in $G$ has length $2L + 1$, the worst-case complexity is  $O( M^{2L} )$.  Notably, the number of even holes in a graph does not matter. 

\medskip 
To find all odd holes in $G$, we may run Algorithm \ref{alg:odd_holes} starting from each wedge independently. The downside of this approach is that it recovers each odd hole of length $2l + 1$ exactly $2l + 1$ times; The upside is that it allows for parallelization. Let $N_{OH}$ denote the number of odd holes. An upper bound on the resulting complexity is $O(N_{W}M^{2L} + N_{OH}L)$, where the first summand is due to running Algorithm \ref{alg:odd_holes} for each wedge and the second one is due to removing duplicates from the resulting list. Although there is room for improvement, it is reassuring that the worst-case complexity scales linearly with the number of odd holes, so the algorithm is output-sensitive.

Moreover, the above algorithm can be adapted to detect an odd hole. 

\begin{myalgorithm}[Detecting an Odd Hole]\label{alg:det_odd_holes} \hspace{1mm} \\
\textbf{Inputs:} An adjacency list of  $G$.

\begin{enumerate}
	\item Find all wedges in $G$ using Algorithm 1 and add them to a stack $\mathcal{P}$. 
	\item \textbf{While} $\mathcal{P} \ne \varnothing$, do:
	\begin{itemize}
	\item Pop the first element from $\mathcal{P}$, denoted $(v', V, v'')$.
	\item \textbf{If} $N(v') \backslash N(v'') \backslash N(V) = \varnothing$ or $N(v'') \backslash N(v') \backslash N(V) = \varnothing$: \textbf{Next.} 
	\item \textbf{For each} $\tilde{v}' \in N(v') \backslash N(v'') \backslash N(V)$ and $\tilde{v}'' \in N(v'') \backslash N(v') \backslash N(V)$:
	\begin{itemize}
		\item \textbf{If} $\tilde{v}'$ and $\tilde{v}''$ are connected: \textbf{Return: TRUE} 
		\item \textbf{Else}: add $(\tilde{v}', \tilde{V}, \tilde{v}'')$, where $\tilde{V} = V \cup \{v', v''\}$, to $\mathcal{P}$.
	\end{itemize} 
		\end{itemize} 
	\item If no odd holes have been found, \textbf{Return: FALSE}\qed 
\end{enumerate}

\end{myalgorithm}
To verify that $G$ is perfect, one may run Algorithm \ref{alg:det_odd_holes} starting from $G$ and its graph complement $\overline{G}$.

\medskip 

Given an odd hole or antihole, the next task is to construct all its supersets satisfying conditions (2) and (3) above. We will do so by gradually adding vertices to such basic sets. For a set $S \subseteq V_G$, we denote by
\[
C_k(S) = \{v \in V_G \backslash S: \ell_G(\overline{N}_S(v)) = k\}
\]
the set of vertices adding which to $S$ creates a clique of size $k$.\footnote{ If $\mathcal{C}^*$ is large, the level can be computed more efficiently as $\ell_G(\overline{N}_S(v)) = \omega(G[\overline{N}_S(v)])$.} Given a set $S$ and a list of candidates $C$, for each $v \in C$, define the set of ``$k$-movers'' as 
\[
M_{C|S}^k(v) = \{\tilde{v} \in N_{C}(v): |N_{S}(\tilde{v}) \cap N_S(v)| \geqslant k-1\}.
\]
Different sets of movers will be computed many times in what follows. This can be done efficiently by storing the adjacency list of $G$, i.e., a list of lists of all neighbors of each vertex. 

\medskip 

Given a set $S$ of level $\ell_G(S)$, the following algorithm produces all of its maximal extensions of level $\max(\ell_G(S), k)$.

\begin{myalgorithm}[Maximal Level-$k$ Extensions] \label{alg:nr_extensions} \hspace{1mm}\\
	\textbf{Inputs:} $S$ --- a set, $\ell_G(S)$ --- level of $S$; $C_k$ --- a set of candidate vertices; $k$ --- index of $C_k$; An adjacency list of $G$.
	\begin{enumerate}
		\item Initialize an output list $\mathcal{O} = \varnothing$ and a stack $\mathcal{P} = \{(S, C, M) = (S, C_k, \varnothing)\}$.

	\item \textbf{While} $\mathcal{P} \ne \varnothing$, pop the first element, denoted $(S, C, M)$, and do:
	\begin{itemize}
			\item \textbf{If} $C = \varnothing$: Add $(S, M)$ to $\mathcal{O}$; \textbf{Next;} 
			
			\textbf{Else}: Proceed as stated below.
			\item Initialize $H = \varnothing$; \textbf{For each} $v \in C$: 
		\begin{itemize}
			\item Compute $S' = S \cup \{v\}$;\;\; $C' = C \,\backslash\, (M^k_{C|S}(v) \cup \{v\}) \,\backslash\, H$; \;\;$M' = M \cup M^k_{C|S}(v)$;
			\item Push $(S', C', M')$ to $\mathcal{P}$;
			\item Add $v$ to $H$;
		\end{itemize}
		\end{itemize}

\item \textbf{Return} $\mathcal{O}$. \qed 
	\end{enumerate} 
\end{myalgorithm}

The algorithm avoids duplication by keeping track of the state-specific histories. Its complexity depends on the number of maximal sets  $A \subseteq C_{k}$ such that $\ell_G(S \cup A) = \max(\ell_G(S), k)$, as well as the sizes of those sets. Letting $N_{S, k}$ denote the number of sets $A$ satisfying the above condition, a simple upper bound on complexity is $N_{S, k} \times |V_G \backslash S|$. Each of the resulting sets can potentially lead to a non-redundant inequality, so, in this sense, the algorithm is output-sensitive. 

Iterating on Algorithm \ref{alg:nr_extensions} starting from each odd hole or anti-hole allows to enumerate all of its maximal level-$k$ supersets, for each $k \in \{2, \dots, K-1\}$.

\begin{myalgorithm}[Maximal Level-$k$ Sets from an Odd Hole or Anti-hole] \hspace{1mm} \\ \label{alg:max_level_k}
\textbf{Inputs:} $S$ --- an odd hole or anti-hole; $G$ --- graph.
	\begin{enumerate}
		\item Compute $C_k = \{v \in V_G \backslash S: \ell_G(\overline{N}_S(v)) = k \},$ for $k \in \{1, \dots, K-1\}$ and $\ell_G(S)$. Initialize a stack $\mathcal{P} = \{(S, C_1, \dots, C_{K-1}, \ell_G(S))\},$ and a list $\mathcal{O} = \varnothing$.
		\item \textbf{While} $\mathcal{P} \ne \varnothing$, pop the first element, denoted $(S, C_1, \dots, C_{K-1}, \ell_G(S))$, and do:	
		\begin{itemize}
			
			\item \textbf{If} $C_l = \varnothing$, for all $l \leqslant \ell_{G}(S)$: Add $S$ to $\mathcal{O}$. \textbf{If} $C_l = \varnothing$, for all $l \leqslant K-1$: \textbf{Next.} 
			\textbf{Else:} Proceed as stated below. 
			\item Initialize $H = \varnothing$. For $k \in \{1, \dots, K-1\}$:
			\begin{itemize} 
				\item Run Alrogithm \ref{alg:nr_extensions} starting from $S$ to obtain a list of extensions, denoted $\mathcal{E}_k(S)$. 
				\item   \textbf{For each} $(V, M_k(V)) \in \mathcal{E}_{k}(S)$:
			\begin{itemize}
				\item Compute $S' = S \cup V;$ $C'_{l} = \varnothing, \forall\, l \leqslant k;$ $C'_{k+1} = C_{k+1} \cup M_k(V);$  $C'_{l} = C_l, \;\; \forall l > k+1;$
				\item Add $(S', C'_1, \dots, C_{K-1}', \ell(S'))$ to $\mathcal{P}$.
			
			\end{itemize}
			\end{itemize}
		\end{itemize}
			
	\item \textbf{For each} $S \in \mathcal{O}$:
	\begin{itemize}
		\item \textbf{If} $\ell_G(S) = \ell_G(S \backslash V_k) + 1$, for some $k \in \{1, \dots, K\}$: Remove $S$ from $\mathcal{O}$.
	\end{itemize} 	
	\item Return $\mathcal{O}$. 
	\end{enumerate}
\end{myalgorithm}

The algorithm finds all maximal level-$k$ extensions of an odd hole $S$ and then filters them according to a simple criterion. In practice, such filtering dramatically reduces the size of the output. Extensive numerical experiments suggest that the sets $S$ that fail to satisfy $\ell_G(S) = \ell_G(S \backslash V_k)$, for each $k \in \{1, \dots, K\}$, can already be filtered out at the end of Algorithm \ref{alg:nr_extensions}, without loss. Doing so avoids building up redundant branches in Algorithm \ref{alg:max_level_k} and substantially speeds up the computation. However, we do not have a formal proof that such filtering is without loss, in general.

Finally, we compute all plausibly non-redundant sets satisfying conditions 1--3 as follows.

\begin{myalgorithm}[Plausibly Non-Redundant Sets] \label{alg:all_nr_sets}\hspace{1mm} \\
	\textbf{Inputs}: An adjacency list of $G$; An Adjacency list of $\overline{G}$; A list of parts $V_k$ of $G$. 
	\begin{enumerate}
		\item Run Algorithm \ref{alg:triads} on $G$ and $\overline{G}$ to compute all wedges.
		\item Run Algorithm \ref{alg:odd_holes} starting from each wedge in $G$ and in $\overline{G}$ to find all odd holes in both graphs and remove the duplicates. Each hole in $\overline{G}$ is an antihole in $G$. 
		\item Run Algorithm \ref{alg:max_level_k} starting from each odd hole and antihole of $G$ and remove the duplicates. \qed 
	\end{enumerate} 
\end{myalgorithm}

\subsubsection{An Illustration}
To illustrate the proposed algorithms in practice, we consider testing the standard IV exclusion restriction in the context of Example \ref{ex:iv}. This restriction is non-regular (unless $Z$ is binary) because the corresponding potential response graph is imperfect. As a result, checking MIS inequalities is not sufficient, so we have to compute general level-$k$ inequalities. Table \ref{tab:exclusion_tab} presents a summary of the results for different cardinalities of the support of $Y$, $D$, and $Z$.

\newcolumntype{C}[1]{>{\centering\arraybackslash}p{#1}}

\begin{table}[h] 
	\renewcommand{\arraystretch}{1.}
		\centering 
	\begin{tabular}{c*{6}{C{2.35cm}}} 
	\toprule
	\multicolumn{6}{c}{$|\mathcal{Y}| = 2$, $|\mathcal{D}| = 2$}\\ \midrule 
	& Vertices & Edges & Supp. Points  &  MIS & Level-$k$\\ \hline
	$|\mathcal{Z}| = 2$ & 8 & 12 & 12 & 6 & $-$ \\
	$|\mathcal{Z}| = 3$ & 12 & 36 & 28 & 15 & 24 \\
	$|\mathcal{Z}| = 4$ & 16 & 72 & 60 & 28 & 288 \\
	$|\mathcal{Z}| = 5$ & 20 & 120 & 124 & 45 & 2160 \\ 
	$|\mathcal{Z}| = 6$ & 24 & 180 & 252 & 66 &  9120 \\
	\midrule 
	\multicolumn{6}{c}{$|\mathcal{Y}| = 2$, $|\mathcal{D}| = 3$}\\ \midrule
	& Vertices & Edges & Supp. Points  &  MIS & Level-$k$\\ \hline
	$|\mathcal{Z}| = 2$ & 12 & 30 & 30 & 8 & $-$ \\
	$|\mathcal{Z}| = 3$ & 18 & 90 & 126 & 21 & 72 \\
	$|\mathcal{Z}| = 4$ & 24 & 180 & 462 & 40 & 2784 \\
	$|\mathcal{Z}| = 5$ & 30 & 300 & 1566 & 65  & 47280 \\
	\midrule
	\multicolumn{6}{c}{$|\mathcal{Y}| = 3$, $|\mathcal{D}| = 2$}\\ \midrule
	& Vertices & Edges & Supp. Points  &  MIS & Level-$k$\\ \hline
	$|\mathcal{Z}| = 2$ & 12 & 24 & 24 &  14 & $-$ \\
	$|\mathcal{Z}| = 3$ & 18 & 72 & 60 & 51 & 1764 \\
	$|\mathcal{Z}| = 4$ & 24 & 144 & 132 & 124 & 393282\\
	\bottomrule	
	\end{tabular}
	\caption{Sharp Testable Implications of Exclusion Restriction} \label{tab:exclusion_tab}
\floatfoot{\textbf{Notes}: The column ``Level-$k$'' presents the total number of additional inequalities required to claim sharpness, although some of them may be redundant. As discussed in Section \ref{ssec:instrumental_inequalities}, for $|\mathcal{Y}| = 2, |\mathcal{D}| = 2$, the results of \citet{KedagniMourifie2020} imply that verifying level-2 and level-3 inequalities suffices, so we count those.}

\end{table}

\subsection{Inference with Continuous Covariates}\label{ssec:clr_details}
An estimator of $\sigma_n(v)$ can be obtained as follows. Let $K_n=|m_n||\cR||\cZ|$. Let $\bm{b}_n$ be a $K_n$-dimensional vector such that 
\begin{align}
\bm{b}_n(\x,r,z)=[\bm{0}'_{m_n},\dots,\bm{0}'_{m_n},b'_n(\x),\bm{0}'_{m_n},\dots,\bm{0}'_{m_n}]'	
\end{align}
with $b_n(\x)'$ appearing in the block corresponding to the specific $(r,z)$ value, and $\bm{0}'_{m_n}$ is an $m_n$-dimensional vector of zeroes. 
Let  $\bm{\chi}_n=(\chi_n(r,z)',r\in\cR,z\in\cZ)'$ be a $K_n$-dimensional vector.
We write
  \begin{align}
	\beta(r|z,\x)=b_n(\x)'\chi_n(r,z)+A_n(\x,r,z)=\bm{b}_n(\x,r,z)'\bm{\chi}_n+A_n(\x,r,z),
\end{align}
where $A_n(\x,r,z)$ is represents an approximation error.
Let $Q_n=E[b_n(\X_i)b_n(\X_i)']$ and 
\begin{align}
	\epsilon_i=\left(\left(\frac{\pi(z) \bm{1}(R_i = r, Z_i = z) - p(r, z) \bm{1}(Z_i = z)}{\pi(z)^2} \right)_{r \in \mathcal{R}} \right)_{z \in \mathcal{Z}}.
\end{align}
Then, the following asymptotic linear representation holds \citep[see][Section 4.2]{CLR13}:
  $$\sqrt n(\hat{\bm{\chi}}_n-\bm{\chi}_n)=(I_{|\cR||\cZ|}\otimes Q_n)^{-1}\frac{1}{\sqrt n}\sum_i \underbrace{(I_{|\cR||\cZ|}\otimes b_n(\X_i))\epsilon_i}_{u_i}+o_{P_n}(1/\ln n).$$ 
  Let $\bm{B}_n(\x)$ be a $|\cR||\cZ|$-by-$K_n$ matrix that stacks $\bm{b}_n(\x,r,z)'$ as its rows.
Let $\Omega_n=(I_{|\cR||\cZ|}\otimes Q_n)^{-1}E[u_iu_i'](I_{|\cR||\cZ|}\otimes Q_n)^{-1}$ and define
  \begin{align}
  \sigma_n(v)=\|n^{-1/2}a_I' \bm{B}_n(\x)\Omega_n^{1/2}\|.\label{eq:sigma_n}
\end{align}
Let  $\hat g(v)=a_I' \bm{B}_n(\x)\hat\Omega_n^{1/2}$, where $\hat\Omega_n$ is a consistent estimator of $\Omega_n$. The standard error $\sigma_n(v)$ can be estimated by $\hat\sigma_n(v)=\|\hat g(v)\|/\sqrt n$.
Given these objects, we may also compute CLR's critical value $\hat c_{n,\alpha}$ using their Algorithm 1, where we use $\hat g(v)=a_I' \bm{B}_n(\x)\hat\Omega_n^{1/2}$. 

\subsection{A Test Based on the Vertex Representation}\label{sec:fsst}
This section discusses how to conduct a test based on the vertex representation of $\mathbf B^*$ in \eqref{E:null}. Such a test is considered in \cite{fang2023inference} for the model of \cite{Imbens:1994aa} in their Example 2.2 and has recently beenextended to a more general setting by \cite{bai2025inference}.
Below, we outline how to apply the test to general potential outcome models, possibly with discrete covariates.

We first consider the setting without any covariates. Let $p=(N+1)$ and $d=M$, and construct a $p\times d$ matrix $A$ and a $d$-dimensional vector $\beta(P)$ as follows:
\begin{align}
A = \begin{bmatrix}
	A^* \\
	\bm{1}_{d}'
\end{bmatrix};
\;\;\;\;
\beta(P) = \begin{bmatrix}
	\beta_0(P)\\
	1
\end{bmatrix},\;\;\label{eq:A_beta}
\end{align}
where $A^*$ is the support matrix.\footnote{One can also accommodate continuous outcomes by introducing partitions of $\cY$ and redefining $A^*$ \citep[see][Sec. 5]{bai2025inference}.}
 We let $\hat\beta_n=(\hat\beta_{0,n})',1)'$ and let $\hat\Omega^e_n$ be 
\[
\hat\Omega^e_n=\begin{bmatrix}
	\hat V_n^{1/2} & 0\\
	0&0
\end{bmatrix},	
\]
where $\hat V_n$ is defined as in \eqref{eq:Vn}.  Let $\hat x^\star_n$ be a solution to the following quadratic program:
\begin{align}
\min_{x\in \mathbb R^d}\big(\hat\beta_{0,n}-A^*x\big)'\hat V_n^{-1}	\big(\hat\beta_{0,n}-A^*x\big)  \;\; s.t.\;\; \mathbf 1_d'x=1.
\end{align}
Following the derivation in Appendix M of \cite{fang2023inference}, let
\begin{align}
T_n^{\text{e}}=\big\|\sqrt n \hat V_n^{-1/2}\big(\hat\beta_{0,n}-A^*\hat x^\star_n\big)\big\|_\infty.
\end{align}
This statistic is zero if $\hat\beta_{0,n}$ is in the range of $A^*$. 
Again, following the derivation in Appendix M of \cite{fang2023inference}, let
\begin{align}
T_n^{\text{i}} = \sup_{\substack{s \in \mathbb{R}^p \\ x \in \mathbb{R}^d \\ \phi^+ \in \mathbb{R}^p_{+} \\ \phi^{-} \in \mathbb{R}^p_{+} }} \langle s, \sqrt{n} \hat{\beta}_n \rangle \;\; s.t.\;\; \left\{
\begin{array}{ll}
	Ax = s; &  \langle \bm{1}, \phi^{+} \rangle + \langle \bm{1}, \phi^{-} \rangle \leqslant 1;\\
	A's \leqslant 0;   &
	\phi^{+} - \phi^{-} = \hat{\Omega}_n^{\text{i}} s,
\end{array} \right.
\end{align}
where Let $\hat\Omega^i_n$ be the sample standard deviation of $\sqrt n A\hat x^\star_n$.
This is a linear program. Let $T_n=\max\{T_n^{\text{e}},T_n^{\text{i}}\}$. 

Computing the restricted estimator is delicate. We suggest
\[
\hat{\beta}_n^{r} = \argmin \limits_{\substack{s \in \mathbb{R}^p \\ x \in \mathbb{R}^d}}(\hat{\beta}_{0, n} - s)' \hat{W}_n(\hat{\beta}_{0, n} - s) \;\; s.t.\;\; \left\{ 
\begin{array}{c}
A_0x = s; \\
	x\geqslant 0;\\
	\langle \bm{1}, x\rangle = 1.
\end{array}
\right.
\]
Natural options of $\hat W_n$ include
\[
\hat{W}_n = I_p; \;\;\;\;\;\;\;\;\;\;\; \hat{W}_n = \text{diag}(\{\hat{V}^{-1}_{jj}\}_{j = 1}^p); \;\;\;\;\;\;\;\; \hat{W}_n = (\hat{V}_n + \epsilon I_{p})^{-1}. 
\]

Finally, we compute a critical value.
Let $(\hat\beta_{b,n},\hat x^\star_{b,n}),b=1,\dots,B$ be nonparametric bootstrap analogs of $(\hat\beta_n,\hat x^\star_n).$ Let
\begin{align}
\hat{\mathbb G}^e_n=\sqrt n\{(\hat\beta_{b,n}-A \hat x^\star_{b,n})-(\hat\beta_n-A\hat x^\star_n)\},~\hat{\mathbb G}^i_n=\sqrt nA(\hat x^\star_{b,n}-\hat x^\star_{n}).	
\end{align}
Define
\begin{align}
\hat T^{*,e}_n=	\big\|\sqrt n \hat V_n^{-1/2}\hat{\mathbb G}^e_n\big\|_\infty,
\end{align}
and 
\begin{align}
\hat T^{*,i}_n=	\sup_{\substack{s \in \mathbb{R}^p \\ x \in \mathbb{R}^d \\ \phi^+ \in \mathbb{R}^p_{+} \\ \phi^{-} \in \mathbb{R}^p_{+} }} \langle s, \hat{\mathbb G}^i_n+\sqrt{n} \lambda_n\hat{\beta}^r_n \rangle \;\; s.t.\;\; \left\{
\begin{array}{ll}
	Ax = s; &  \langle \bm{1}, \phi^{+} \rangle + \langle \bm{1}, \phi^{-} \rangle \leqslant 1;\\
	A's \leqslant 0;   &
	\phi^{+} - \phi^{-} = \hat{\Omega}_n^{\text{i}} s.
\end{array} \right.
\end{align}
The regularization parameter $\lambda_n$ can be chosen following Remark 4.2 of \cite{fang2023inference}. The critical value $\hat c_n(1-\alpha)$ is then defined as the $1-\alpha$ quantile of $\hat T^*_n=\max\{\hat T^{*,e}_n,\hat T^{*,i}_n\}$. The test rejects the null hypothesis when $T_n>\hat c_n(1-\alpha)$.

Now suppose one conditions on discrete covariates $W\in \cW=\{\x_1,\dots,\x_L\}.$ For simplicity, suppose $\cR^*$ does not depend on $W$. Let  $\beta_0(P,\x) = (\beta(r \,|\,z,\x)_{r \in \mathcal{R}})_{z \in \mathcal{Z}}\in \mathbb R^{N}$, and let $\beta_0(P)=(\beta_0(P,\x_1)',\dots,\beta_0(P,\x_L)')'\in\mathbb R^{NL}$. Let  $p=(N+1)L$, $d=ML$, and define $A\in\{0,1\}^{p\times d}$ and $\beta(P)$ by
\[
A = \begin{bmatrix}[0.8]
	A^*&0&\cdots&0 \\
	0&A^*&\cdots&0 \\
	\vdots&\cdots&\ddots&\vdots\\
	0&\cdots&\cdots&A^*\\
	\bm{1}_{M}'&0&\cdots&0\\
	\vdots&\vdots&\vdots&\vdots\\
	0&0&\cdots&\bm{1}_{M}'
\end{bmatrix};
\;\;\;\;
\beta(P) = \begin{bmatrix}[0.8]
	\beta_0(P)\\
	\bm{1}_{L}
\end{bmatrix}.\;\;
\]
If $\cR^*$ varies with $w$, let $A^*_{w}$ be the support matrix of $\cR^*_{w}$ and define
\[
A = \begin{bmatrix}[0.8]
	A^*_{w_1}&0&\cdots&0 \\
	0&A^*_{w_2}&\cdots&0 \\
	\vdots&\cdots&\ddots&\vdots\\
	0&\cdots&\cdots&A^*_{w_L}\\
	\bm{1}_{M}'&0&\cdots&0\\
	\vdots&\vdots&\vdots&\vdots\\
	0&0&\cdots&\bm{1}_{M}'
\end{bmatrix};
\;\;\;\;
\beta(P) = \begin{bmatrix}[0.8]
	\beta_0(P)\\
	\bm{1}_{L}
\end{bmatrix}.\;\;
\]
The rest of the analysis is then the same as before. When $W$ contains continuous components, the test of \cite{fang2023inference} does not readily apply. A possible way forward is to discretize $W$ into $L$ bins and apply the argument above.

\section{Additional Figures}

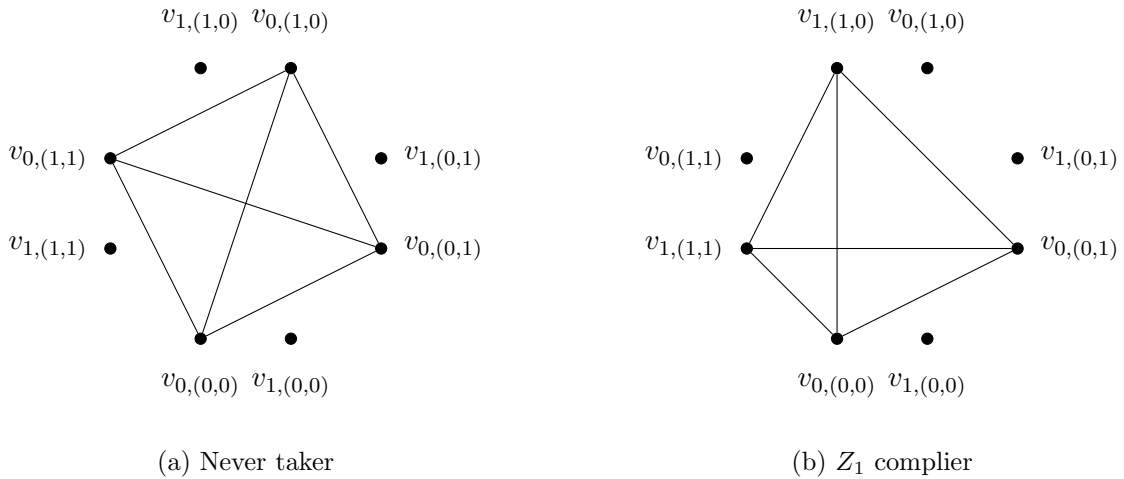
\begin{figure}[h]
\centering
\begin{subfigure}[t]{0.49\textwidth}
\centering
\begin{tikzpicture}[scale=1.2, every node/.style={draw, circle, fill=black, inner sep=1.5pt}]
\node (a0) at (0, 0) {};
\node (a1) at (1, 0) {};
\node (b0) at (2, 1) {};
\node (b1) at (2, 2) {};
\node (c0) at (1, 3) {};
\node (c1) at (0, 3) {};
\node (d0) at (-1, 2) {};
\node (d1) at (-1, 1) {};

\node[draw=none, fill=none, below] at (a0) {$v_{0,(0,0)}$};
\node[draw=none, fill=none, below] at (a1) {$v_{1,(0,0)}$};
\node[draw=none, fill=none, right] at (b0) {\;\;$v_{0,(0,1)}$};
\node[draw=none, fill=none, right] at (b1) {\;\;$v_{1,(0,1)}$};
\node[draw=none, fill=none, above] at (c0) {$v_{0,(1,0)}$};
\node[draw=none, fill=none, above] at (c1) {$v_{1,(1,0)}$};
\node[draw=none, fill=none, left]  at (d0) {$v_{0,(1,1)}$\;\;};
\node[draw=none, fill=none, left]  at (d1) {$v_{1,(1,1)}$\;\;};

\draw (a0) -- (b0) -- (c0) -- (d0) -- (a0);
\draw (a0) -- (c0);
\draw (d0) -- (b0);
\end{tikzpicture}
\caption{Never taker}
\end{subfigure}
\hfill
\centering
\begin{subfigure}[t]{0.49\textwidth}
\centering
\begin{tikzpicture}[scale=1.2, every node/.style={draw, circle, fill=black, inner sep=1.5pt}]
\node (a0) at (0, 0) {};
\node (a1) at (1, 0) {};
\node (b0) at (2, 1) {};
\node (b1) at (2, 2) {};
\node (c0) at (1, 3) {};
\node (c1) at (0, 3) {};
\node (d0) at (-1, 2) {};
\node (d1) at (-1, 1) {};

\node[draw=none, fill=none, below] at (a0) {$v_{0,(0,0)}$};
\node[draw=none, fill=none, below] at (a1) {$v_{1,(0,0)}$};
\node[draw=none, fill=none, right] at (b0) {\;\;$v_{0,(0,1)}$};
\node[draw=none, fill=none, right] at (b1) {\;\;$v_{1,(0,1)}$};
\node[draw=none, fill=none, above] at (c0) {$v_{0,(1,0)}$};
\node[draw=none, fill=none, above] at (c1) {$v_{1,(1,0)}$};
\node[draw=none, fill=none, left]  at (d0) {$v_{0,(1,1)}$\;\;};
\node[draw=none, fill=none, left]  at (d1) {$v_{1,(1,1)}$\;\;};

\draw (a0) -- (b0) -- (c1) -- (d1) -- (a0);
\draw (a0) -- (c1);
\draw (d1) -- (b0);
\end{tikzpicture}
\caption{$Z_1$ complier}
\end{subfigure}
\caption{Examples of Maximal Cliques representing Support Points in  Example \ref{ex:partial_monotonicity}.}
\label{fig:graph_partial_monotonicity_1}
\end{figure}

\begin{figure}[htbp]
	\centering 
    \begin{subfigure}{0.35\textwidth} \centering
		\includegraphics[width=\textwidth]{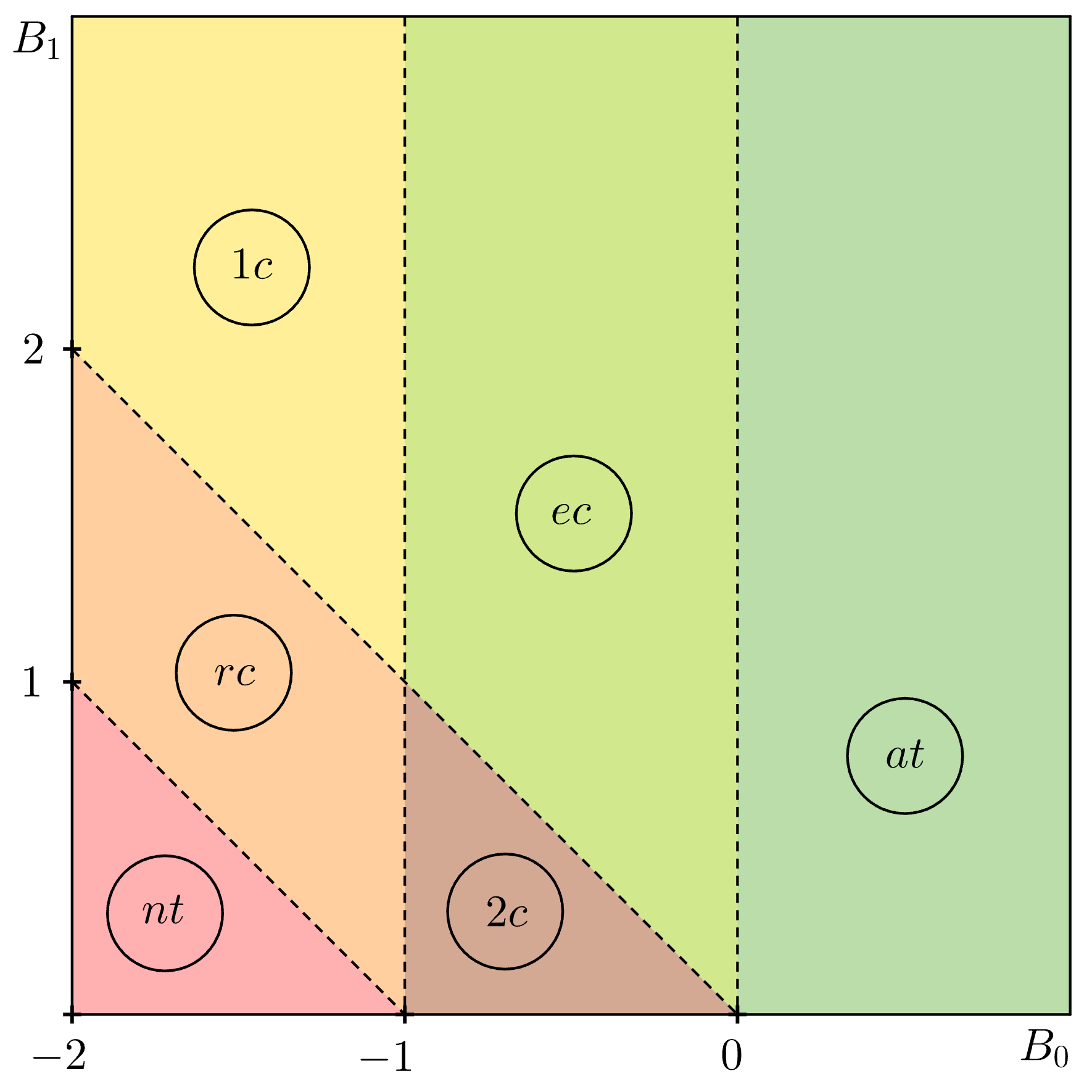}
		\subcaption{Original Partition}
	\end{subfigure} \hspace{1cm}
	\begin{subfigure}{0.35\textwidth} \centering
		\includegraphics[width=\textwidth]{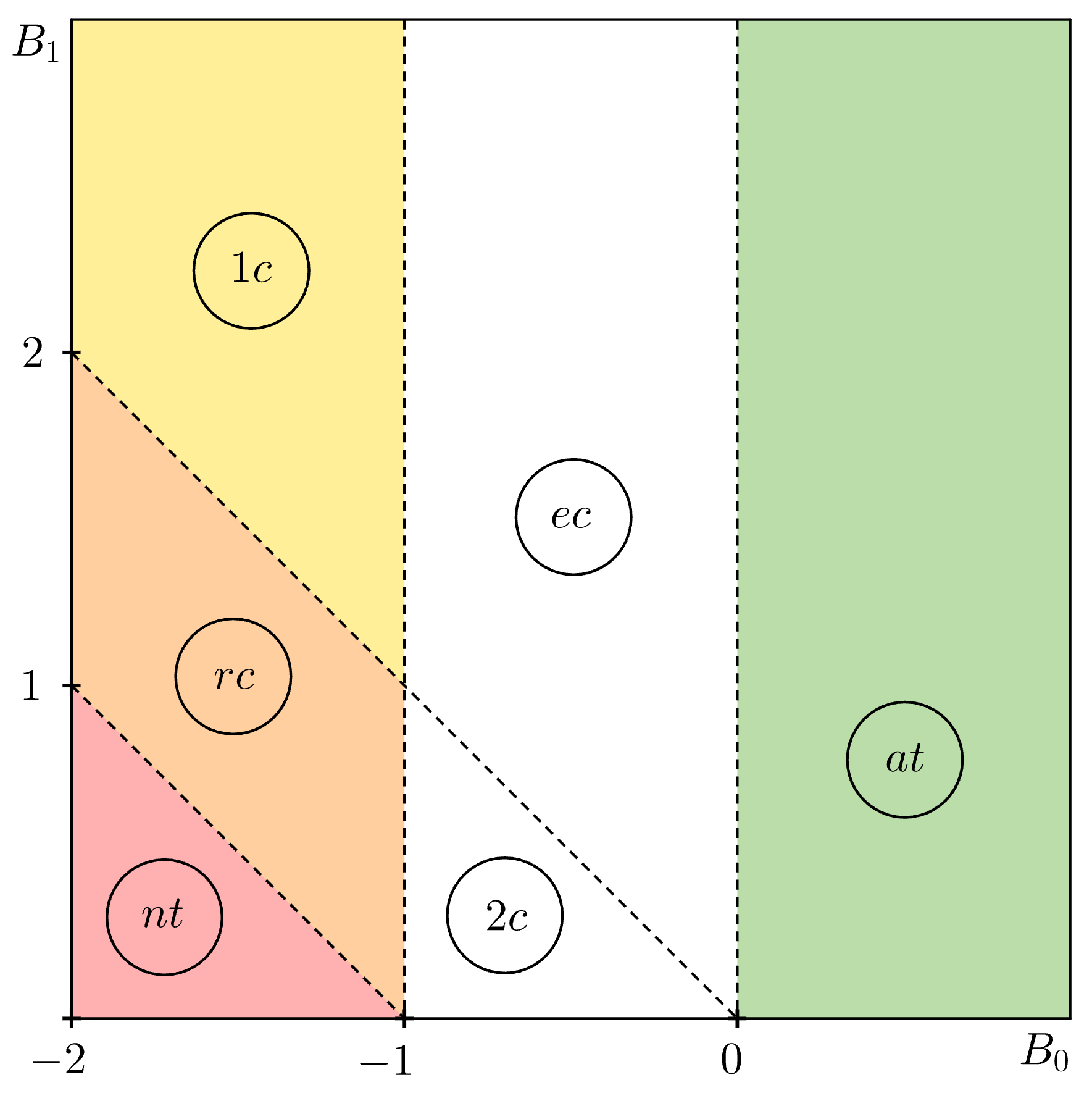}
		\subcaption{MIS 1}
	\end{subfigure}\\ \vspace{5mm}
	\begin{subfigure}{0.35\textwidth} \centering
		\includegraphics[width=\textwidth]{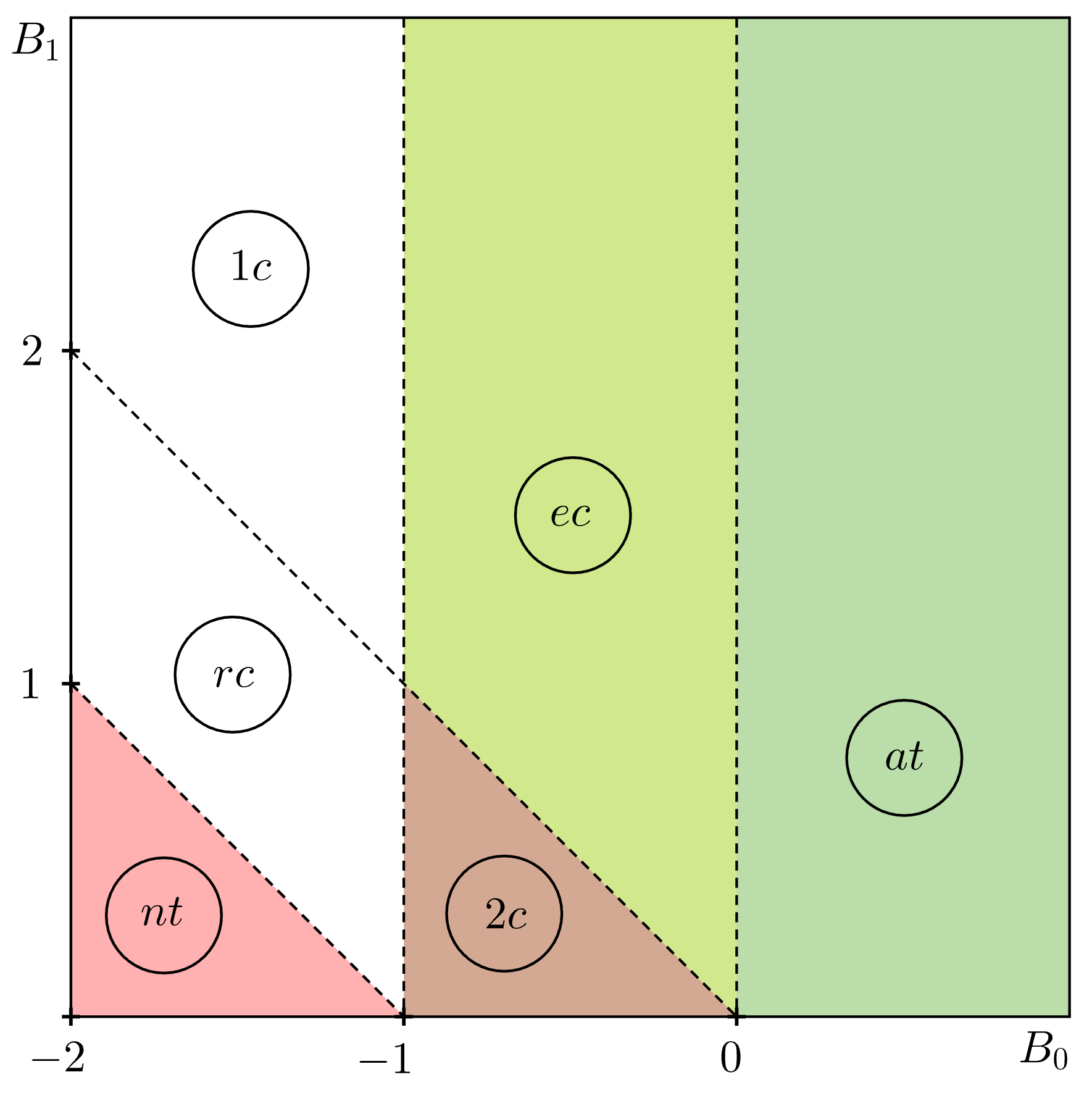}
		\subcaption{MIS 2}
	\end{subfigure} \hspace{1cm}
	\begin{subfigure}{0.35\textwidth} \centering
		\includegraphics[width=\textwidth]{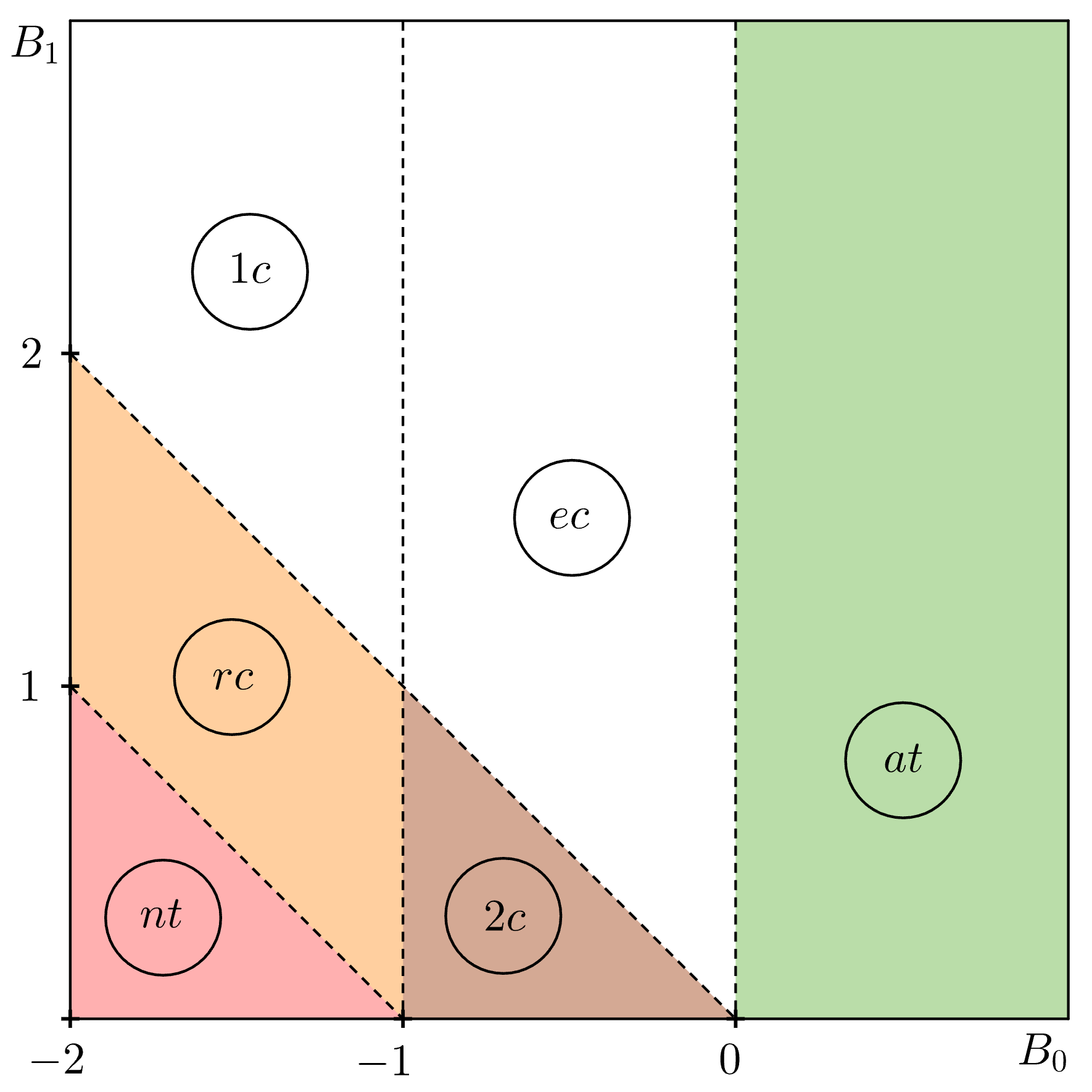}
		\subcaption{MIS 3}
	\end{subfigure} \\
    \vspace{5mm}
	\begin{subfigure}{0.35\textwidth} \centering
		\includegraphics[width=\textwidth]{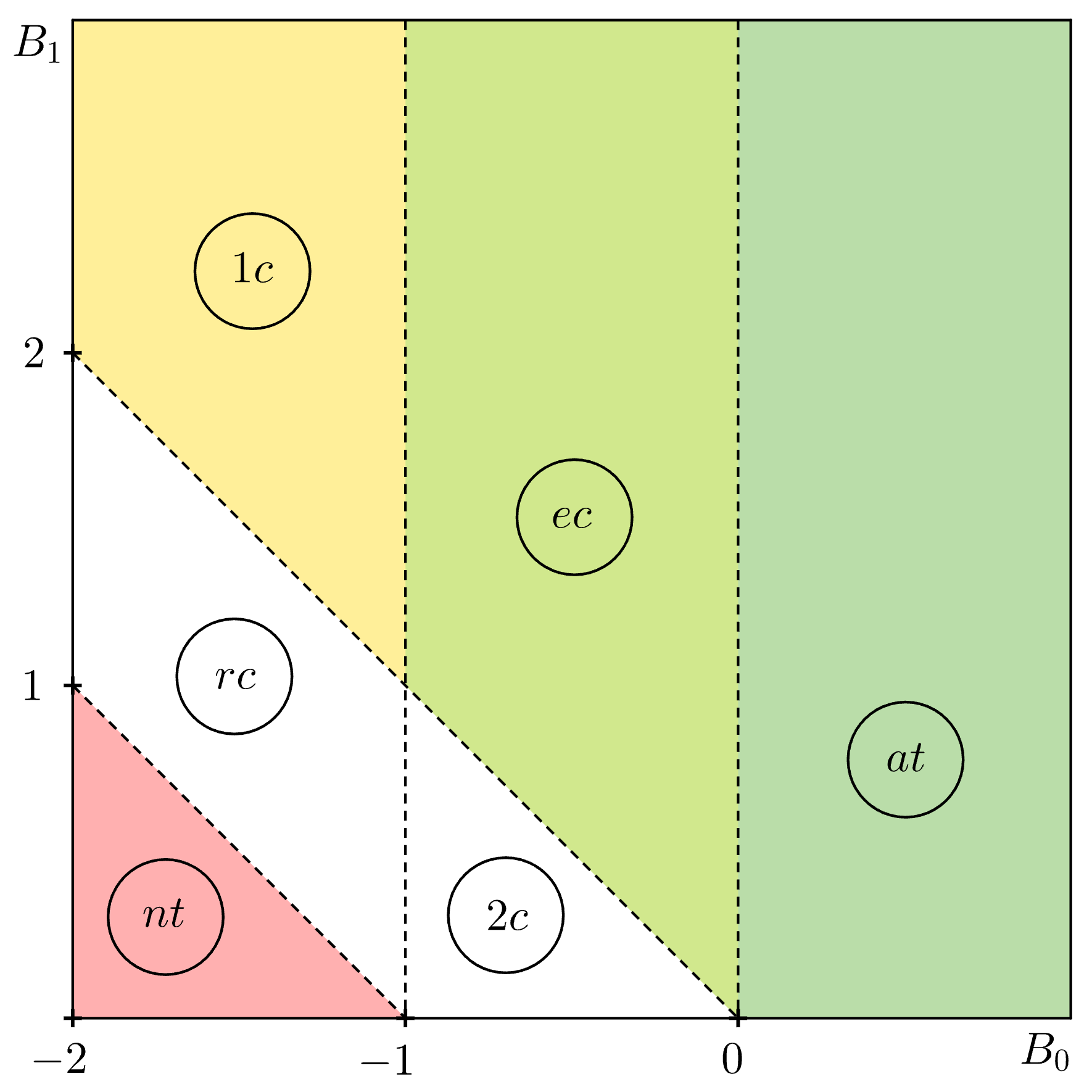}
		\subcaption{MIS 4}
	\end{subfigure} \hspace{1cm}
	\begin{subfigure}{0.35\textwidth} \centering
		\includegraphics[width=\textwidth]{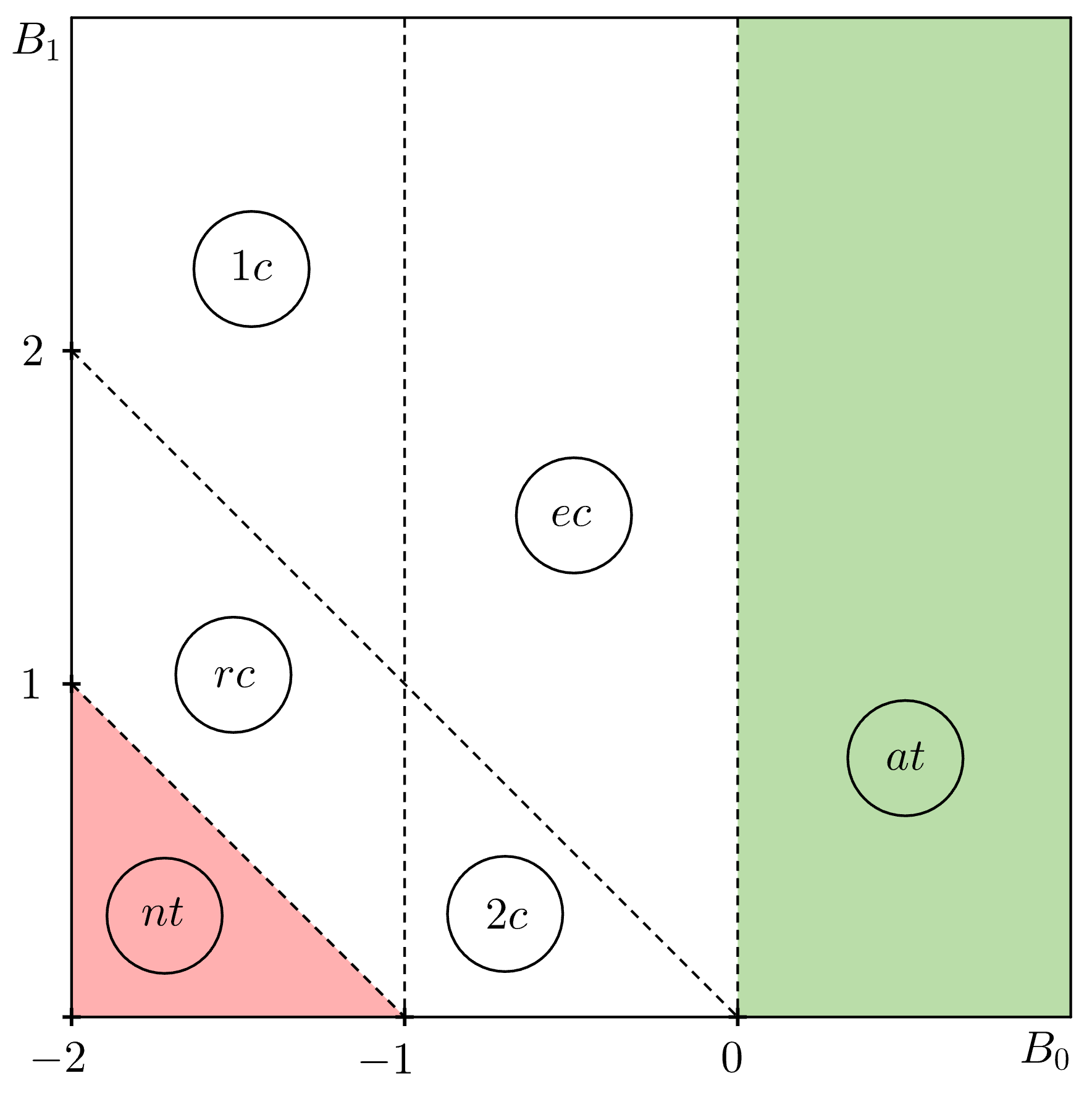}
		\subcaption{MIS 5}
	\end{subfigure}
	
	\caption{Inequalities for non-redundant Maximal Independent Sets in Example \ref{ex:partial_monotonicity}.} \label{fig:partial_monotonicity}
\end{figure}

\newpage 
\bibliography{HK_references}

\end{document}